\newcommand{\cT}{\mathcal{T}}
\newcommand{\cP}{\mathcal{P}}
\newcommand{\poly}{\mathrm{poly}}
\newcommand{\polylog}{\mathrm{polylog}}
\newtheorem{theorem}{Theorem}[section]
\newtheorem{lemma}[theorem]{Lemma}
\newtheorem{corollary}[theorem]{Corollary}
\newcommand{\UnderArrow}[4][->]{%
  \begin{tikzpicture}[remember picture,overlay]
    \draw[#1]
      ($(pic cs:#2)+(0,-0.9ex)$) to[bend right=18]
      node[below]{\scriptsize #4}
      ($(pic cs:#3)+(0,-0.9ex)$);
  \end{tikzpicture}%
}
\theoremstyle{plain}
\theoremstyle{plain}
\newtheorem{rem}{\protect\remarkname}
\theoremstyle{plain}
\newtheorem*{lem*}{\protect\lemmaname}
\theoremstyle{plain}
\newtheorem*{thm*}{\protect\theoremname}
\theoremstyle{plain}
\newtheorem{prop}{\protect\propositionname}
\theoremstyle{plain}
\theoremstyle{plain}
\newtheorem*{cor*}{\protect\corollaryname}
\newtheorem{defn}{Definition}
\newtheorem*{defn*}{Definition}
  \providecommand{\corollaryname}{Corollary}
  \providecommand{\lemmaname}{Lemma}
  \providecommand{\propositionname}{Proposition}
  \providecommand{\remarkname}{Remark}
\providecommand{\theoremname}{Theorem}
\newcommand{\RR}{\mathbb{R}}
\newcommand{\CC}{\mathbb{C}}
\newcommand{\wt}{\widetilde}
\newcommand{\dd}{\mathrm{d}}
\newcommand{\Tr}{\mathrm{Tr}}
\newcommand{\Or}{\mathcal{O}}
\renewcommand{\Re}{\operatorname{Re}}
\renewcommand{\Im}{\operatorname{Im}}
\title{Convergence of the Cumulant Expansion and Polynomial-Time Algorithm for Weakly Interacting Fermions}
\author[4]{Hongrui Chen}
\author[6]{Cambyse Rouz{\'e}}
\author[9,10]{Jielun Chen}
\author[8]{Jiaqing Jiang}
\author[11]{Samuel O. Scalet}
\author[9,10]{Yongtao Zhan}
\author[7]{Garnet Kin-Lic Chan}
\author[4,5]{Lexing Ying}
\author[1,2,3]{Yu Tong}
\affil[1]{Department of Mathematics, Duke University, Durham, North Carolina 27708, USA }
\affil[2]{Department of Electrical and Computer Engineering, Duke University, Durham, North Carolina 27708, USA}
\affil[3]{Duke Quantum Center, Duke University, Durham, North Carolina 27708, USA}
\affil[4]{Department of Mathematics, Stanford University, Stanford, CA 94305, USA}
\affil[5]{Institute for Computational and Mathematical Engineering, Stanford University, Stanford, CA 94305, USA}
\affil[6]{Inria, Télécom Paris - LTCI, Institut Polytechnique de Paris, France}
\affil[7]{Division of Chemistry and Chemical Engineering, California Institute of Technology,
Pasadena, California 91125, USA}
\affil[8]{Simons Institute for the Theory of Computing,
University of California, Berkeley, Berkeley, CA 94720, USA}
\affil[9]{Division of Physics, Mathematics, and Astronomy, California Institute of Technology, Pasadena CA 91125, USA} 
\affil[10]{Institute for Quantum Information and Matter, Pasadena CA 91125, USA}
\affil[11]{Department of Computer Science, University of California, Davis, CA, 95616, USA}
\date{\today}
\begin{document}

\maketitle

\begin{abstract}
    We propose a randomized algorithm to compute the log-partition function of weakly interacting fermions with polynomial runtime in both the system size and precision.
    Although weakly interacting fermionic systems are considered tractable for many computational methods such as the diagrammatic quantum Monte Carlo, a mathematically rigorous proof of polynomial runtime has been lacking. 
    In this work we first extend the proof techniques developed in previous works for proving the convergence of the cumulant expansion in periodic systems to the non-periodic case. 
    A key equation used to analyze the sum of connected Feynman diagrams, which we call the tree-determinant expansion, reveals an underlying tree structure in the summation.
    This enables us to design a new randomized algorithm to compute the log-partition function through importance sampling augmented by belief propagation. 
    This approach differs from the traditional method based on Markov chain Monte Carlo, whose efficiency is hard to guarantee, and enables us to obtain a algorithm with provable polynomial runtime.
\end{abstract}

\tableofcontents

\section{Introduction}

The computation of interacting fermionic quantum systems is a central challenge in computational physics. Over the past several decades, great effort has gone into developing numerical methods for this problem, with applications ranging from electronic structure calculations and models of high-temperature superconductivity to holographic models of black holes \cite{Martin2020electronic,Dagotto1994correlated,SachdevYe1993gapless,KitaevTalkKITP}.
While a great number of numerical methods have significantly advanced our ability for such computation in practice, they are typically heuristic algorithms that do not have a mathematically rigorous efficiency guarantee.
Despite their practical success, obtaining polynomial runtime guarantees with mathematical rigor remains a fundamental open problem. 

In fact, from a complexity theory perspective, it is known that a polynomial runtime guarantee for an electronic structure problem in the worst-case is impossible, unless one can solve the hardest problem in the computational class 
\textsf{QMA} \cite{o2022intractability,kempe2006complexity}--generally considered too hard even for a quantum computer to handle efficiently--in polynomial time on a classical computer. This makes it necessary to restrict our attention to certain problems with more structures that can be exploited. In this work, we consider the \textit{weakly interacting fermionic systems}. { This regime is practically relevant because dielectric screening and orbital delocalization often lead to effective weak coupling \cite{hybertsen1986electron, aryasetiawan1998gw, onida2002electronic}. } The Hamiltonian of such a system consists of a non-interacting part that is quadratic in the fermionic creation and annihilation operators, and an interaction part containing higher-order terms. We restrict to the \emph{weakly interacting} setting where the strength of the interaction terms is below a certain threshold.

We focus on the finite-temperature scenario and aim to compute the log-partition function, which allows us to compute the expectation values of local observables.
Specifically, we consider a weakly interacting fermionic system consisting of $N$ fermionic modes arranged on a $d$-dimensional lattice $\Lambda$, described by a Hamiltonian $H$, with inverse temperature $\beta$. We are interested in computing the log-partition function $\log \Tr[ e^{-\beta H}]$ to within additive error $\epsilon$.
For a precise definition of the problem, see Section~\ref{sec.2}. While such a quantum system is typically considered easy to handle using some heuristic algorithms in practice, a mathematically rigorous analysis to establish a polynomial runtime has remained elusive, as we will demonstrate below.

\paragraph{Quasipolynomial-Time Algorithms From Previous Works}
The most common approach to address weak interactions is through diagrammatic expansion. Specifically, Feynman diagrams provide a method for systematically computing all the terms in the Taylor expansion of the log-partition function. Assuming exponential convergence of the Taylor expansion, which has been proved for certain cases, such as for the Fermi-Hubbard model in certain parameter regimes \cite{BenfattoGiulianiMastropietro2003low,benfatto2006fermi,giuliani2009rigorous}, one can compute the log-partition function to arbitrarily high accuracy. However, this does not lead to a polynomial-time algorithm. To ensure an error of at most { $N\epsilon$ }(hence an error of $\epsilon$ per mode), one needs to truncate the Taylor expansion at the $s$th order where $s={\Or(\log(1/\epsilon))}$. The number of labeled connected Feynman diagrams at the $s$th order for $s$ 4-local interactions terms (a \textit{configuration} using the term in \cite{LiWallerbergerGull2020diagrammatic}) is on the order of $(2s)!$ \cite{castro2018equivalence}, and there are asymptotically $N^s$ different ways to choose these interaction terms. 
Therefore, to compute all labeled Feynman diagrams needed to achieve the desired accuracy, one needs 
\[
\Or(N^s (2s)!) = N^{\Or(\log(1/\epsilon))}2^{\Or(\log(1/\epsilon)\log\log(1/\epsilon))}
\]
runtime in this brute-force approach. 

Techniques in the diagrammatic quantum Monte Carlo (QMC) method \cite{Prokofev1998DiagMC,Prokofev2007BoldDiagMC,VanHoucke2010DiagMC,Kozik2010DiagMCFermions} are available to circumvent the inefficiencies in the above brute-force approach. Importantly, Feynman diagrams are not evaluated one by one. Instead, contributions from a large number of Feynman diagrams are evaluated together in the form of determinants, as shown in \cite{Rossi2017determinant}. Moreover, the summation over interaction terms is evaluated through importance sampling aided by Markov-chain Monte Carlo (MCMC). The efficiency of the method then depends crucially on the mixing time of the random walk in the MCMC procedure, i.e., the time it takes for the random walk to relax to the target distribution. Assuming the efficiency of the MCMC sampling, a polynomial runtime can be achieved \cite{RossiProkofevEtAl2017polynomial}. However, due to the complexity of the interactions present in a fermionic quantum system, it is unclear how one can quantify this mixing time. In fact, even ensuring the ergodicity of the random walk is a non-trivial task \cite{LiWallerbergerGull2020diagrammatic}. This difficulty has so far prevented a rigorous proof that the diagrammatic QMC algorithm runs in time $\mathrm{poly}(N,1/\epsilon)$ even in the weak interaction setting.

Another approach that brings us close to a polynomial time algorithm is quasiadiabatic continuation \cite{HastingsWen2005quasiadiabatic,Hastings2007quasi,Hastings2010quasi,BachmannMichalakisEtAl2012automorphic,BravyiHastingsMichalakis2010topological}. In this approach, one starts from a Hamiltonian whose ground state is easy to prepare. In our setup we can start from the ground state of the non-interacting (quadratic) Hamiltonian. From this state a quasiadiabatic evolution can be generated using a quasi-local Hamiltonian. If the initial Hamiltonian is adiabatically connected to the target Hamiltonian, i.e., a spectral gap that is independent of the system size exists for a path linking these two Hamiltonians, then the quasiadiabatic dynamics will take the initial ground state to the ground state of the target Hamiltonian in constant time. Therefore computing observable expectation values of the target ground state is reduced to implementing the constant-time quasiadiabatic dynamics in the Heisenberg picture, which can be done in time $\Or(2^{\mathrm{polylog}(1/\epsilon)})$. 
Though originally developed for the ground state, this approach can be extended to the thermal state using a parent Hamiltonian \cite{FirankoGoldsteinArad2024area,chen2023efficient}. Moreover, the log-partition function can be computed using thermodynamic integration \cite{kirkwood1935statistical} in time $\Or(2^{\mathrm{polylog}(N/\epsilon)})$, where $N$ appears in the exponent because one needs to estimate the energy (Hamiltonian expectation value), which is an extensive quantity, to precision $\Or(\epsilon)$ for multiple points in time.

Uniform correlation decay of Gibbs states also brings us closer to a polynomial-time classical algorithm. For Gibbs states, uniform correlation decay implies local indistinguishability \cite{brandao2019finite,rouze2024efficient}, meaning that the expectation value of any local observable can be well approximated by its expectation value in the Gibbs state of the Hamiltonian restricted to a local patch. The approximation error decays exponentially with the radius of this patch.
On a $d$-dimensional lattice , this property yields a classical algorithm for estimating local observable expectation values with runtime $\mathcal{O}(2^{\log^d(1/\epsilon)})$. Consequently, the log-partition function can be computed in time $\mathcal{O}(2^{\log^d(N/\epsilon)})$. Moreover, uniform correlation decay has been rigorously established for Gibbs states of short-range Hamiltonians at sufficiently high temperatures in arbitrary dimensions, as well as for all temperatures in one-dimensional systems \cite{capel2025decay, kimura2025clustering, kliesch2014locality}.

\paragraph{Main Results} In this work we provide a convergence proof of the Feynman diagrams in the cumulant expansion \eqref{eq:cumulant_expansion}, which is the Taylor expansion of the log-partition function in the coefficients of the interaction terms, and propose a {randomized} polynomial-time algorithm for computing the log-partition function, which also allows us to efficiently compute local observable expectation values. 
This proof is based on a combination of techniques developed in \cite{GentileMastropietro2001renormalization,PedraSalmhofer2008determinant}.
Specifically, we show
\begin{theorem}[Convergence of the cumulant expansion, informal version of Theorem~\ref{thm-convergence}]
\label{thm:convergence_informal}
    Let $H=H_0+V$, where $H_0$ is the non-interacting Hamiltonian that is quadratic in the Fermionic creation and annihilation operators, and $V$ is the interacting part that is parity-preserving, whose degree is bounded by $2M$. We assume that the interactions in $H_0$ and $V$ decay sufficiently fast in the distance between sites, and that the interaction in $V$ experienced by any site is at most $L_V$. Then there exists a constant $C'$ such that the $s$th order term in the cumulant expansion of $\log\Tr[e^{-\beta H}]$ \eqref{eq:cumulant_expansion} is upper bounded by $N(C'\beta L_V)^s$. Here the constant $C'$ depends only on {$M$ and the decay of the non-interacting Green's function.} 
\end{theorem}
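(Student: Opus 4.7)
The plan is to combine three classical ingredients: the Brydges-Kennedy-Abdesselam-Rivasseau (BKAR) tree formula to generate a spanning-tree expansion of the connected cumulant, a Gram bound in the spirit of \cite{PedraSalmhofer2008determinant} to control the remaining determinant of contractions, and spatial decay of the non-interacting Green's function $G_0$ to sum over vertex positions. The $1/s!$ from the Taylor coefficient will be absorbed by Cayley's count of labeled trees, and the quantitative factor $\beta L_V$ will come from the per-site interaction budget combined with the imaginary-time integrations.

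First, I would write the $s$th order cumulant as an imaginary-time integral of the truncated expectation $\langle V(\tau_1)\cdots V(\tau_s)\rangle_{0,c}$ against the free Gaussian state $e^{-\beta H_0}/Z_0$. Wick's theorem reduces each such expectation to a signed sum over pairings of the at most $2Ms$ fermion legs, with each pairing giving a product of two-point functions of $H_0$. The connected part retains only those pairings whose vertex graph is connected, which is precisely the object the BKAR formula is designed to reorganize.

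The core analytical step is the tree-determinant expansion itself. Applying the BKAR forest formula to the truncated expectation rewrites it as
\[
\mathcal{E}_s \;=\; \sum_{T\in\cT_s}\int_{[0,1]^{s-1}}\dd\mathbf{u}\ \Bigl(\prod_{e\in T} G_0^{(e)}(\mathbf{u})\Bigr)\,\det M(\mathbf{u}),
\]
where $\cT_s$ is the set of labeled trees on the $s$ vertices, $G_0^{(e)}(\mathbf{u})$ is an interpolated two-point function assigned to each of the $s-1$ tree edges, and $M(\mathbf{u})$ is the matrix of remaining contractions between unpaired legs. The key feature is that all non-tree contractions are packaged into a single determinant whose size is $\Or(Ms)$. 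I would bound this determinant by exhibiting the interpolated $G_0$ as a Gram matrix $G_0(x,y)=\langle f_x,g_y\rangle$ with $\sup_x\|f_x\|\|g_x\|\le C_0$; Hadamard's inequality then gives $|\det M(\mathbf{u})|\le C_0^{\Or(Ms)}$ uniformly in $\mathbf{u}$, which is what replaces the $(2Ms)!$ worth of Wick pairings by a purely exponential factor.

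The step I expect to be the main obstacle is constructing this Gram representation in the \emph{non-periodic} setting: the Fourier-diagonalization shortcut used in \cite{GentileMastropietro2001renormalization,PedraSalmhofer2008determinant} for translation-invariant systems is unavailable, so one must diagonalize the single-particle Hamiltonian directly and verify, via the decay assumptions on $H_0$, that the associated $f_x,g_x$ have controlled norms and that the interpolation preserves the Gram structure. Once this is in place, the final summation is straightforward: for each tree $T$, root it arbitrarily and sum outward along its edges; each tree propagator is summable against its further endpoint using the decay of $G_0$, yielding a factor $C_1$ per edge, while each vertex contributes at most $L_V$ by the per-site interaction bound, and the ordered-time integration produces $\beta^s/s!$. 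Summing the root position freely yields the overall factor $N$. Since the number of labeled trees on $s$ vertices is $s^{s-2}\le e^s s!/s^2$, the $1/s!$ is absorbed, leaving the advertised geometric bound $N(C'\beta L_V)^s$ with $C'$ depending only on $M$ and on the decay/Gram constant of $G_0$.
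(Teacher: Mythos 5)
Your overall route is the same as the paper's: the tree-graph (BKAR) identity to reorganize the connected correlator into a sum over the $s^{s-2}$ labeled trees, a Gram-type determinant bound for the leftover contractions, a growing-path summation of the tree propagators using summability of $V$ and of the Green's function, and Cayley's formula to absorb the $1/s!$. The tree summation, the $N L_V$ factor from the free root, the $\beta^s/s!$ bookkeeping, and the $e^s$ from $s^{s-2}\le e^s s!/s^2$ all match the paper's Lemmas \ref{lem:tree-sum-iterate} and \ref{lem:tree-sum}.

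There is, however, one step that would fail as written. You propose to bound $\det M(\mathbf{u})$ by ``exhibiting the interpolated $G_0$ as a Gram matrix $G_0(x,y)=\langle f_x,g_y\rangle$ with bounded norms'' and then invoking the Gram--Hadamard inequality. No such single Gram representation of the \emph{time-ordered} Green's function exists: $g_{\tau-\tau'}(x,y)$ jumps (with a sign flip of the thermal occupation factor) across $\tau=\tau'$, and a kernel of the form $\langle f_{x,\tau},g_{y,\tau'}\rangle$ is necessarily continuous in $(\tau,\tau')$, so it cannot reproduce the two branches $-e^{-\tau h}(1+e^{\beta h})^{-1}$ and $e^{-\tau h}(1+e^{-\beta h})^{-1}$ simultaneously. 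You correctly flag the Gram construction as the main obstacle, but you attribute the difficulty to the loss of Fourier diagonalization in the non-periodic setting; the real obstruction is the equal-time discontinuity, which is present even for translation-invariant systems. The fix (following \cite{PedraSalmhofer2008determinant}, and carried out in Section \ref{sec:detbound}) is to split $\mathbf{G}=\mathbf{G}_{<0}+\mathbf{G}_{\ge 0}$, prove a \emph{generalized} Gram inequality for matrices of the form $\langle v_k,w_l\rangle\,\mathbf{1}_{j(k)<j'(l)}$ (Lemma \ref{lem:genegram}), construct separate bounded embeddings for each branch via the spectral decomposition of $h$ and the Fourier representation of $e^{-a|x|}$ (Lemma \ref{lem.embedding}), and recombine the two determinant bounds through a Laplace expansion (Lemma \ref{lemma:M1+M2}), which is where the $2^{2Ms}$ rather than $C_0^{Ms}$ arises. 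A second, more minor imprecision: in the tree-graph identity the interpolation parameters $\mathbf{t}$ weight the entries of the \emph{remaining} determinant (via the weights $a_\omega(\mathbf{t})_{jk}$, which must themselves be given a Gram representation as in Proposition \ref{gram} so as not to spoil the determinant bound), while the tree edges carry the bare propagators; your sketch attaches the interpolation to the tree edges instead. Neither issue changes the architecture of the proof, but the determinant bound cannot be completed by the single-embedding argument you describe.
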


The above theorem establishes the exponential convergence of the cumulant expansion when the interaction is locally weak, as measured by $L_V< 1/(C'\beta)$, for any $H_0$, $\beta$, and degree $2M$. Because there is no restriction on $\beta$, this result goes beyond the high-temperature regime that has been studied in \cite{YinLucas2023polynomial,BakshiLiuMoitraTang2024high,RamkumarCaiTongJiang2025high,Mann2021efficient,helmuth2023efficient}. Note that $L_V$ is the maximal interaction strength on a single site, and is therefore an intensive rather than extensive quantity.

Theorem \ref{thm:convergence_informal} implies that, to compute the log-partition function to precision $\epsilon$ { per site}, one can truncate the cumulant expansion at order $S=\Or(\log(1/\epsilon))$. This is an important step that leads to our efficient algorithm:

\begin{theorem}[Informal version of Corollary~\ref{cor-finite-range-ham}]
    \label{thm:algorithm_informal}
    In addition to the assumptions in Theorem~\ref{thm-algorithm-general}, 
    {we assume that both the non-interacting part $H_0$ and the potential $V$ are finite-ranged}. Then there exists a constant $C''>0$ (which does not depend on $N$ but may depend on $\beta$), such that whenever $L_V\leq 1/(C''\beta)$, one can find an $\mathcal{O}(N \epsilon^{-2}{\mathrm{\polylog}}(N/\epsilon)) $-time 
 classical algorithm for computing the log-partition function within $N\epsilon$ error with probability at least $2/3$, whose runtime can be reduced to $\mathcal{O}(\epsilon^{-2}\mathrm{\polylog}(N/\epsilon))$ for translation-invariant systems.
 There is a { $\mathcal{O}(\epsilon^{-2}\mathrm{\polylog}(N/\epsilon))$}-time classical algorithm for computing local observables within $\epsilon$ error with probability at least $2/3$.
\end{theorem}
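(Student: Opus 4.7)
The plan is to use Theorem~\ref{thm:convergence_informal} to truncate the cumulant expansion at logarithmic order and then estimate each remaining term by importance sampling that exploits the tree-determinant representation. Writing $\log\Tr[e^{-\beta H}]=\sum_{s\geq 0}\kappa_s$, Theorem~\ref{thm:convergence_informal} bounds $\bigl|\sum_{s>S}\kappa_s\bigr|\leq N\sum_{s>S}(C'\beta L_V)^s$. Choosing $C''>C'$ so that $C'\beta L_V\leq 1/2$, this geometric tail is $\Or(N\,2^{-S})$, so the choice $S=\Or(\log(N/\epsilon))$ incurs truncation error at most $N\epsilon/2$.

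For each order $s\leq S$, the tree-determinant expansion underlying the proof of Theorem~\ref{thm:convergence_informal} rewrites $\kappa_s$ as $\sum_{x}\sum_{\mathbf v}\sum_{T}D(x,\mathbf v,T)$, where $x$ is a root site, $\mathbf v$ is an ordered $(s-1)$-tuple of further interaction vertices in $\Lambda$, $T$ is a spanning tree on $\{x\}\cup\mathbf v$, and $D(x,\mathbf v,T)$ factorises into a product of non-interacting Green's function entries along the edges of $T$ times a residual determinant on the remaining fields. Because $H_0$ and $V$ are finite-ranged, these edge factors decay rapidly with lattice distance, so I would sample $(\mathbf v,T)$ conditional on $x$ from an importance distribution $p_x(\mathbf v,T)$ proportional to the absolute values of the tree edge factors. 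This is a tree graphical model: its normalising constant is computed exactly by \emph{belief propagation}, and a sample can be drawn in time $\Or(s\,\polylog N)$ per draw. The estimator $\widehat\kappa_s(x)=K^{-1}\sum_i D(x,\mathbf v^{(i)},T^{(i)})/p_x(\mathbf v^{(i)},T^{(i)})$ is unbiased for the per-site contribution $\kappa_s(x)$, and a Gram-type determinant bound already used to prove Theorem~\ref{thm:convergence_informal} controls $|D/p_x|$ uniformly by $\poly(s)$.

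Summing $\widehat\kappa_s(x)$ over $x\in\Lambda$ and $s\leq S$, taking $K=\Or(\epsilon^{-2}\polylog(N/\epsilon))$ samples per root, and applying Chebyshev together with a union bound over the $NS$ sub-estimators produces additive error $N\epsilon/2$ with probability at least $2/3$, for total runtime $\Or(N\epsilon^{-2}\polylog(N/\epsilon))$. In the translation-invariant case $\kappa_s(x)$ is independent of $x$, so a single root suffices and the factor of $N$ drops. For a local observable $O$ I would derive the analogous tree-determinant expansion for $\Tr[Oe^{-\beta H}]/\Tr[e^{-\beta H}]$ (e.g.\ by differentiating an auxiliary log-partition function with a source term coupled to $O$); locality anchors every surviving diagram at $\mathrm{supp}(O)$, so no sum over sites is required. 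The main obstacle is quantitative variance control: one must show that $p_x$ dominates $|D|$ tightly enough that the $(2s)!$ combinatorics of labelled connected diagrams, which are absorbed by the tree-determinant reorganisation in the proof of Theorem~\ref{thm:convergence_informal}, stay absorbed inside the sampler rather than reappearing as a blown-up ratio $|D/p_x|$.
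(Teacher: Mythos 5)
Your architecture is essentially the paper's: truncate the cumulant expansion at logarithmic order via the convergence theorem, rewrite each order through the tree-determinant expansion, importance-sample the tree/vertex data with belief propagation on the tree, control the estimator by the Gram determinant bound, exploit translation invariance to drop the factor of $N$, and obtain observables by differentiating a source-coupled log-partition function. However, the concentration step as you state it would fail to deliver the claimed runtime. Chebyshev's inequality has failure probability scaling like $1/K$ in the number of samples $K$, so a union bound over the $NS$ per-root sub-estimators forces $K=\Omega(NS\epsilon^{-2})$ samples \emph{per root} and a total cost of $\Omega(N^2\epsilon^{-2})$. The fix is standard but must be made: either apply Chebyshev to the \emph{aggregate} estimator (the per-root estimators are independent, so variances add, and the target error $N\epsilon$ gains a factor $N^2$ in the denominator), or use Hoeffding --- legitimate here because the determinant bound makes each weight uniformly bounded ($O(1)$ per root, or $O(N)$ for the paper's single global estimator, in the weak-coupling regime) --- which is what the paper does, needing only $\Or(\epsilon^{-2}\log(1/\delta))$ samples in total.

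Two further ingredients you leave implicit are load-bearing. First, your importance distribution $p_x(\mathbf{v},T)$ ``proportional to the tree edge factors'' jointly over vertices \emph{and} trees has a normalizing constant that sums over all $s^{s-2}$ labeled trees; that is not a tree graphical model and BP alone does not compute it. The paper instead samples $T$ \emph{uniformly} (via the Pr\"ufer code, in $O(s)$ time), lets the $s^{s-2}$ count cancel the $1/s!$ prefactor up to $e^s$, and only then runs BP for the vertices conditional on the fixed $T$; the contraction assignment $\chi$, the growing path $\omega$, and the interpolation variables $\mathbf{t}$ appearing inside the residual determinant must also be sampled (with rejection of invalid $\chi$). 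Second, the statement is an end-to-end runtime claim, not a query-complexity one: you must also show that individual entries $g_\tau(a,b)$ of the non-interacting Green's function can be computed to accuracy $\epsilon_g$ in $\Or(\polylog(1/\epsilon_g))$ time for finite-ranged $H_0$ (Chebyshev expansion of $e^{-\tau h}(1+e^{\mp\beta h})^{-1}$ exploiting sparsity), and that the estimator is stable under these entry-wise errors, since a naive error propagation requires $\epsilon_g$ superpolynomially small and only the polylogarithmic cost dependence on $1/\epsilon_g$ rescues the runtime.
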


Note that our algorithm works in a more general setting where $H_0$ and $V$ may involve long-range interaction. Here we choose to present the version whose assumptions are the easiest to verify. More general results are stated in Theorems~\ref{thm-algorithm-general}, \ref{thm-algorithm-geo-local} {and Corollary \ref{cor-finite-range-ham}}. An extension of our algorithm also allows computing the local observable expectation values (see Section~\ref{sec:alg-observable}).

To help readers better comprehend the above result, we use the Fermi-Hubbard model on a $d$-dimensional cubic lattice $\Lambda$ as an example:
\begin{equation}
    \label{eq:fermi_hubbard_ham}
         H = -\sum_{\substack{\braket{i,j}\\\varsigma\in\{\uparrow,\downarrow\}}}(c_{i,\varsigma}^\dag c_{j,\varsigma}+c_{j,\varsigma}^\dag c_{i,\varsigma}) - \mu\sum_i (n_{i,\uparrow}+n_{i,\downarrow}) + U\sum_i \left(n_{i,\uparrow}-\frac{1}{2}\right)\left(n_{i,\downarrow}-\frac{1}{2}\right),
\end{equation}
where $c_{i,\varsigma}$ is the fermionic annihilation operator on site $i\in\Lambda$ with spin $\varsigma\in\{\uparrow,\downarrow\}$, $n_{i,\varsigma}=c_{i,\varsigma}^\dag c_{i,\varsigma}$ is the corresponding number operator, $U$ is the interaction strength, and $\mu$ is the chemical potential.

In this example the cumulant expansion is simply the Taylor expansion of $\log \Tr[e^{-\beta H}]$ in $U$.
Theorem~\ref{thm:convergence_informal} tells us that this Taylor series converges exponentially fast, which also implies that $\log \Tr[e^{-\beta H}]$ is an analytic function of $U$ within a disk of radius on the complex plane that is independent of the system size. Theorem~\ref{thm:algorithm_informal} tells us that when $|U|$ is within this radius, an algorithm that runs in time $\wt{\Or}(N\epsilon^{-2})$ can compute the { log-partition function to within error $N\epsilon$, which is equivalent to computing the free energy per site to within error $\Or(\epsilon)$.} 
We note that our results are much more general than the above example, since the techniques we use allow us to deal with models that are not periodic (e.g., the algorithm still works if one adds disordered potential in the Hubbard model) and have fast-decaying long-range interactions.

While the convergence proof in Theorem~\ref{thm:convergence_informal} is an important part in the algorithm, it does not directly lead to Theorem~\ref{thm:algorithm_informal}. Below we will introduce the techniques we used that ultimately led to this polynomial-time algorithm.

\paragraph{Techniques} 
Before we introduce the proof technique for the convergence result, we would like to explain why bounding Feynman diagrams one-by-one is not sufficient. As observed previously, the number of connected Feynman diagrams for 4-local interactions at the $s$th order grows like $(2s)!$. The cumulant expansion introduces a $1/s!$ factor that mitigates this growth, but an overall scaling of $s!$ (up to a $2^{\Or(s)}$ factor) still remains. This $s!$ factor cannot be canceled out by any exponential decay as a result of weak interaction.

To address this problem, we follow the approach in the seminal works \cite{Lesniewski1987effective,abdesselam1997explicitfermionictreeexpansions,Salmhofer2000,BenfattoGallavottiEtAl1994beta,GentileMastropietro2001renormalization} based on the cluster expansion.
This approach has been used in \cite{BenfattoGiulianiMastropietro2003low,benfatto2006fermi,giuliani2009rigorous,Metzner2012functional,Giuliani2021gentle,Husemann2012frequency,Giuliani_2009,MastropietroPorta2022multi,Salmhofer2019renormalization,PortaMastropietroGiuliani2023response,Mastropietro2014weyl,Giuliani2011ground,GiulianiMastropietroPora2010anomalous,Mastropietro2008luttinger,Benfatto2009ultraviolet,GentileMastropietro2001renormalization} to prove the convergence of the cumulant expansion by taking into account the cancellation due to the signs in fermionic Feynman diagrams. Specifically, we introduce a \textit{{tree-}determinant expansion} \eqref{eq:determinant_expansion}, where the cumulant is expressed as a sum over trees with $s$ vertices (also called ``anchored tree graphs'' in \cite{benfatto2006fermi}). This tree-determinant expansion, originally coming from the tree-graph identity studied in cluster expansions \cite{brydges1978new,Brydges1984short,Battle1984ANO}, can be seen as a resummation of the Feynman diagrams so that their cancellation is captured in the form of determinants. Since the number of labeled trees with $s$ vertices is $s^{s-2}$ according to Cayley's formula, it cancels out with the $1/s!$ factor in the cumulant expansion up to a $2^{\Or(s)}$ factor. As a result, a sufficiently small (but constant in system size) interaction strength leads to exponential convergence of the cumulant expansion as described in Theorem~\ref{thm-convergence}.

A crucial step in the above analysis is bounding the determinant. As can be seen from the tree-determinant expansion \eqref{eq:determinant_expansion}, to obtain the bound in Theorem~\ref{thm-convergence} we need to ensure that the determinant of an $\Or(s)\times \Or(s)$ square matrix in \eqref{eq:determinant_integral} is at most $\exp(\Or(s))$. This is not generically true for matrices with an entry-wise bound: a Hadamard matrix of size $s$ has a determinant as large as $s^{s/2}$, which far exceeds the bound we need. However, since the matrix in \eqref{eq:determinant_integral} comes from the non-interacting Green's function of the Hamiltonian, a natural Gram matrix structure is present in different parts of the matrix. This allows an application of the Gram-Hadamard inequality as done in Section \ref{sec:detbound}. Specifically, we generalize the proof in \cite{PedraSalmhofer2008determinant} to a non-periodic setting and obtain an exponential determinant bound that suffices for our purpose.

As mentioned above, the convergence result does not imply an efficient algorithm. If we evaluate all Feynman diagrams at the $s$th order, even using the more efficient determinant expression in \cite{LiWallerbergerGull2020diagrammatic}, the cost is still $N^{s}2^{\Or(s)}$. Consequently, the final cost of the resulting algorithm is ${N^{\Or( \log(1/\epsilon))}}$. The crucial observation that enables us to obtain an efficient algorithm is the tree structure in the tree-determinant expansion \eqref{eq:determinant_expansion}, which helps us perform importance sampling without using MCMC. More specifically, to evaluate the cumulant through the
tree-determinant expansion \eqref{eq:determinant_expansion}, we first uniformly sample a tree (which is done efficiently using the Pr\"ufer code). Conditional on the tree, we then perform importance sampling using the tree structure. In particular, we will show that sampling from a Markov random field (MRF) over the tree reduces the variance of our estimator to an acceptable level. Due to the tree structure, this MRF sampling can then be done efficiently using belief propagation (BP). An analysis of the cost of BP and the variance provides us the runtimes given in Theorem~\ref{thm:algorithm_informal} (and also those in Theorems~\ref{thm-algorithm-general} and \ref{thm-algorithm-geo-local}).

\paragraph{Related Works} Providing rigorous guarantees for algorithms for quantum many-body systems is a very difficult task. Previous works have produced rigorous result in the non-interacting limit, where the interaction strength infinitely approaches zero. As an example, it has been shown that the density-matrix embedding theory (DMET) achieves first-order accuracy in the the interaction strength in this limit \cite{CancesFaulstichEtAl2025analysis}. In the case where the Hamiltonian is close to being decoupled, i.e., becoming a sum of non-overlapping terms, a rigorous polynomial time algorithm has been found to compute ground state properties for quantum spin systems \cite{BravyiDiVincenzoLoss2008polynomial}. Their algorithm does not apply to our setting because, in addition to the obvious differences in the problem setup, it relies crucially on the fact that in their setting the unperturbed Hamiltonian has a product ground state, which is not true in our setting. As mentioned above, the diagrammatic QMC and quasiadiabatic continuation only result in quasi-polynomial time algorithms (here we restrict to results that have been proved with mathematical rigor). To the best of our knowledge, our algorithm is the only one to achieve polynomial runtime for interacting fermion systems with weak but constant (not asymptotically zero in the large system size limit) interaction strength.

Recent works on Lindbladian-based quantum algorithms \cite{chen2023efficient,ChenKastoryanoBrandaoGilyen2023quantum} have resulted in efficient quantum algorithms for the high-temperature \cite{rouze2024efficient,RouzeFrancaAlhambra2024optimal} setting and for weakly interacting fermions \cite{TongZhan2025fast,DingZhanPreskillLin2025end,SmidMeisterBertaBondesan2025polynomial,Smíd2025rapidmixingquantumgibbs}. 

The results in this work show that we should also not expect significant quantum advantage in the weakly interacting setting. However, given that the Lindbladian-based quantum algorithms used in these works apply to a much broader setting, this in no way rules out possible quantum advantage based on these quantum algorithms.

The convergence of Feynman diagrams in weakly interacting fermionic systems has been extensively studied, but previous works have mostly focused on the periodic case \cite{abdesselam1997explicitfermionictreeexpansions,Salmhofer2000,BenfattoGiulianiMastropietro2003low,benfatto2006fermi,giuliani2009rigorous,Metzner2012functional,Giuliani2021gentle,Husemann2012frequency,Giuliani_2009,MastropietroPorta2022multi,Salmhofer2019renormalization,PortaMastropietroGiuliani2023response,Mastropietro2014weyl,Giuliani2011ground,GiulianiMastropietroPora2010anomalous,RivasseauWang2023honeycomb,Mastropietro2008luttinger,Kashima2014renormalization,Kashima2015zero,Benfatto2009ultraviolet,Wang2019nonfermi,GentileMastropietro2001renormalization}. Because of the interest in addressing non-periodic systems in computational problems, we modify the determinant bound in \cite{Salmhofer2019renormalization} to adapt to this more general setting. While the convergence proof in this work does not significantly go beyond what has been done in previous works, it provides the key insight to the polynomial-time algorithm, which is the main contribution of this work. 

The cluster expansion is a powerful tool that provides a crucial idea for this work. It has also been used in many other scenarios, in particular for computing the log-partition function of   classical and quantum systems at high-temperature \cite{Penrose1967,Ruelle1963,Ruelle1969,Dobrushin1968,KoteckyPreiss1986,Brydges1984short,FernandezProcacci2007,FernandezProcacciScoppola2007,Procacci1998,BricmontKupiainenLeplaideur1996,Mann2021efficient,BakshiLiuMoitraTang2024high,HaahKothariTang2022optimal,RamkumarCaiTongJiang2025high,SanchezSegoviaSchneiderAlhambra2025}. Our work goes beyond this setting since, with a fixed quadratic part, our algorithm can handle arbitrarily low temperature with sufficiently weak interaction. 

\paragraph{Open Problems} Our results and techniques open up many possible future research directions. One surprising feature of our algorithm is that it not only runs in polynomial time, but dependence on $N$ is in fact linear. This is in contrast with the polynomial time algorithm for the high-temperature spin case, where the polynomial degree of the system size dependence is only known to be $\Or(1)$ \cite{Mann2021efficient}. This raises the question whether this good scaling is due to some structure that is inherent  for fermionic systems or whether the techniques used in this work can be used to improve the algorithm for the high-temperature spin setting as well.

For our current setting, we anticipate that further generalization is possible using the renormalization techniques used in pervious works such as \cite{Giuliani2011ground}. In particular, currently we are unable to deal with arbitrarily low temperature if the beyond-quadratic interaction strength is fixed. In contrast, analyticity of the log-partition function has been proven with fixed interaction even in the zero temperature limit \cite{Giuliani2011ground}. With the help of the more advanced renormalization technique we should be able to extend our efficient algorithm to a wider parameter regime.

Our convergence result (Theorem~\ref{thm-convergence}) can be equivalently stated as the absence of a complex zero of the partition function in the vicinity of the origin. This has been studied in the context of quantum spin systems \cite{LeeYang1952I,LeeYang1952II} and has resulted in quasi-polynomial time classical algorithm for certain Hamiltonians \cite{HarrowMehrabanSoleimanifar2020classical}. It is of interest to explore whether the tree-graph identity used in this work can lead a polynomial time algorithm for this type of quantum spin systems.

\paragraph{Organization} The rest of the paper is organized as follows. In Section~\ref{sec.2} we introduce the notations and the problem setup. In Section~\ref{sec:main_results} we summarize the main results of this paper and then present the proofs. Specifically, we present the tree-determinant expansion in Section~\ref{sec:determinant-expansion}, prove the convergence of the cumulant expansion (Theorem~\ref{thm-convergence}) in Section~\ref{sec:convergence}, describe polynomial-time algorithm and bound its runtime in Section~\ref{sec:alg-partition}, extend the result to computing observable expectation values in Section~\ref{sec:alg-observable}, and discuss the computation of the non-interacting Green's function in Section~\ref{sec:compute_non_int_greens_func}. Important technical details, such as the derivation of the tree-determinant expansion, the determinant bound, the decay of the non-interacting Green's function, and the sampling algorithms used in Theorem~\ref{thm:algorithm_informal} (BP and tree sampling) are included in Section~\ref{sec:technicaldetails}.

\section{Problem Setup and Notations}\label{sec.2}

We consider a fermionic system on a lattice $\Lambda \subset \mathbb{Z}^d$. The single-particle state space is given by $\Omega = \Lambda \times \{\uparrow, \downarrow \}$, where $\{\uparrow, \downarrow \}$ denote the spin degrees of freedom. The total number of single-particle states is $|\Omega| = N$. The system is described by a unital algebra generated by the identity operator $I$ and fermionic creation ($\psi_a^\dag$) and annihilation ($\psi_a$) operators for each state $a \in \Omega$. These operators satisfy the canonical anticommutation relations: for any two $a,b\in\Omega$,
$$
\{\psi_a, \psi_b^\dag\} = \delta_{ab}, \quad \{\psi_a, \psi_b\} = 0, \quad \{\psi_a^\dag, \psi_b^\dag\} = 0,
$$
The total Hamiltonian of the system is decomposed as
$$ H = H_0 + V, $$
where $H_0$ is the non-interacting part and $V$ represents the interaction potential. The free Hamiltonian $H_0$ is quadratic in the fermion operators and is given by
\begin{align}\label{eq:hdef}
 H_0 = \sum_{a,b \in \Omega} h_{ab} \psi_a^\dag \psi_b, 
 \end{align}
where $h = (h_{ab})_{a,b\in \Omega}$ is a Hermitian matrix. The interaction potential $V$ is given by
$$ V = \sum_{P \in \mathcal{P}} v_P \Psi_P. $$
Here, $\mathcal{P}$ is a list of multi-indices enumerating the allowed interactions. Each $P \in \mathcal{P}$ defines a term involving an equal number, $m_P$, of creation and annihilation operators, where $1 \le m_P \le {M}$. The index $P$ specifies an ordered list of sites for the creation operators, $P^+ = (P^+(1), \dots, P^+(m_P)){\in \Omega^{m_P}}$, and a corresponding list for the annihilation operators, $P^- = (P^-(1), \dots, P^-(m_P)){\in \Omega^{m_P}}$. The operator $\Psi_P$ is the product of these operators in normal order:
$$ \Psi_P := \left(\prod_{j=1}^{m_P} \psi_{P^+(j)}^\dag\right) \left(\prod_{j=1}^{m_P} \psi_{P^-(j)}\right). $$

\begin{defn}[Maximal Interaction Size]
\label{defn:max_int_size}
    We define the \emph{maximal interaction size} of $V$ to be $M=\max_{P\in \mathcal{P}}m_P$.
\end{defn}
Although our framework is developed for general many-body interactions, the physically motivated case of primary interest is that of four-fermion interactions ($m_P=M=2$), which can describe the Coulomb interaction between electrons. The general formulation is maintained for two reasons: first, the general case may be of interest to the theoretical computer science community and is useful for comparison with related literature in quantum or classical algorithm for quantum system; and second, it provides the necessary structure for analyzing arbitrary-size observables, even for a thermal state whose Hamiltonian contains only four-body interactions, see Section \ref{sec:alg-observable}. The partition function of the system is defined by \[Z = \Tr(e^{-\beta H})\,,\] where $\beta\in [0,\infty)$ is the inverse temperature. The non-interacting partition function is $Z_0 = \Tr(e^{-\beta H_0})$.

\subsection{Expansion of the Partition Function}
The ratio of the interacting and non-interacting partition functions can be formally expanded using the interaction picture, yielding the Dyson series (see Lemma \ref{lem:dyson}):

\begin{align}\label{eq:Dyson}
\frac{Z}{Z_0} = \sum_{s=0}^\infty \frac{(-1)^s}{s!}\sum_{P_1,\cdots,P_s \in \mathcal{P}} v_{P_1}\cdots v_{P_s} \int_{[0,\beta]^s} \dd \tau_1 \cdots \dd \tau_s \langle \cT (\Psi_{P_1}(\tau_1) \cdots \Psi_{P_s}(\tau_s)) \rangle_0.
\end{align}
Here, $\Psi_P(\tau) = e^{\tau H_0} \Psi_P e^{-\tau H_0}$ is the operator in the imaginary-time interaction picture. The operator $\cT$ enforces time-ordering: if the permutation $\sigma$ satisfies ${ \tau_{\sigma(1)} \ge \tau_{\sigma(2)}\ge \cdots \ge \tau_{\sigma(s)}}$, then 
$$
\cT (\Psi_{P_1}(\tau_1) \cdots \Psi_{P_s}(\tau_s)) = \Psi_{P_{\sigma(1)}}(\tau_{\sigma(1)}) \cdots \Psi_{P_{\sigma(s)}}(\tau_{\sigma(s)}).
$$
The expectation $\langle \cdot \rangle_0$ is taken with respect to the non-interacting thermal state $\frac{1}{Z_0}{\mathrm{Tr}(\cdot \, e^{-\beta H_0})}$.

The central objects in this expansion are the time-ordered multi-point correlation functions, which we denote by $\mathcal{E}$:
$$
\mathcal{E}(\{P_i,\tau_i\}_{i\in[s]}) := \langle \cT (\Psi_{P_1}(\tau_1) \cdots \Psi_{P_s}(\tau_s)) \rangle_0.
$$
Since the expectation is over a Gaussian state, these correlators can be evaluated via Wick's theorem, which expresses them in terms of two-point correlators corresponding to a time-ordered Green's function.

\begin{defn}[Non-Interacting Green's Function]\label{def:Green}
The (anti-symmetric) time-ordered Green's function\footnote{Since we will not use the interacting Green's function in this work, we sometimes refer to the non-interacting Green's function simply as Green's function.} is given by
{ 
\begin{align*}
g_\tau(a,b)  =\begin{cases} -\langle  \psi_b^\dag \psi_a(\tau)  \rangle_0, & \tau < 0 \\\langle \psi_a(\tau)\psi_b^\dag \rangle_0, & \tau \ge0 \end{cases} ,\quad \forall \,a,b \in \Omega.
\end{align*}
}
It also admits a matrix representation (see Lemma \ref{lem:Greenfunction}):

$$ g_\tau :={-} \mathbf{1}_{\tau {<} 0}e^{-\tau h}(1+e^{\beta h})^{-1} {+} \mathbf{1}_{\tau{\ge} 0} e^{-\tau h}(1+e^{{-}\beta h})^{-1},  $$ 
where we denoted the identity matrix over $\mathbb{C}^{\Omega\times\Omega}$ by $1$. Here, $g_\tau(a,b)$ is the $(a,b)$-th element of the matrix $g_\tau$, and $h$ is the matrix of coefficients encoding the interactions in the unperturbed Hamiltonian $H_0$ in \eqref{eq:hdef}.
\end{defn}

\noindent To apply Wick's theorem to the correlator $\mathcal{E}(\{P_i,\tau_i\}_{i\in[s]})$, all required pairwise contractions are organized into the determinant of a matrix built from the Green's function. More precisely, we define the Green's function matrix $\mathbf{G}$ by the following block form:
\begin{align}\label{green-matrix}
\mathbf{G}(\{P_i,\tau_i\}_{i \in [s]}):= \begin{bmatrix}
G_0(P_1^-,P_1^+) & G_{\tau_1-\tau_2}(P_1^-,P_2^+) & \cdots & G_{\tau_1-\tau_s}(P_1^-,P_s^+) \\
G_{\tau_2-\tau_1}(P_2^-,P_1^+) & G_{0}(P_2^-,P_2^+) & \cdots & G_{\tau_2-\tau_s}(P_2^-,P_s^+) \\
\vdots & \vdots & \ddots & \vdots \\
G_{\tau_s-\tau_1}(P_s^-,P_1^+)& G_{\tau_s-\tau_2}(P_s^-,P_2^+) & \cdots & G_{0}(P_s^-,P_s^+)
\end{bmatrix},
\end{align}
where each block $G_{\tau}(P^{{-}}_i,P^{{+}}_j)$ is a $m_{P_i} \times m_{P_j}$ matrix corresponding to interactions between the annihilation operators in $P_i$ and the creation operators in $P_j$. Its entries are given by: 
\begin{align}\label{eq:Gtog}
\big(G_{\tau_i-\tau_j}(P^{{-}}_i,P^{{+}}_j)\big)_{kl} := g_{\tau_i-\tau_j}(P_i^-(k), P_j^+(l) ), \quad \text{for } 1\le k \le m_{P_i},\, 1\le l \le m_{P_j}.
\end{align}
The application of Wick's theorem then yields the identity (see Lemma \ref{lemm:Wick}):

\begin{align}\label{Wickidentity}
\mathcal{E}(\{P_i,\tau_i\}_{i \in [s]}) = (-1)^{\sum_{i=1}^sm_{P_i}(m_{P_i} - 1)/2}\det \mathbf{G}(\{P_i,\tau_i\}_{i \in [s]}).
\end{align}
The significance of this identity lies in expressing the many-body correlator as a determinant, which induces sign cancellations.

\subsection{Expansion of the Log-Partition Function}
By the linked-cluster theorem, the log-partition function can be written as the following series expansion (see Lemma \ref{lem:dysonlog}).
 \begin{align} \label{eq:cumulant_expansion}
 \log (Z/Z_0)= \sum_{s=1}^\infty \frac{(-1)^{s}}{s!} \sum_{P_1,\cdots,P_s \in \mathcal{P}} v_{P_1}\cdots v_{P_s} \int_{[0,\beta]^s} \dd \tau_1\cdots \dd \tau_s  \,\mathcal{E}_{c}(\{P_i,\tau_i\}_{i\in[s]})
 \end{align}
  where the cumulant function $\mathcal{E}_c$ is defined as
\begin{align}\label{eq:defcumulanttext}
\mathcal{E}_{c}(\{P_i,\tau_i\}_{i\in[s]}):=\sum_{\Pi\in\mathbf{P}_s}(-1)^{|\Pi|-1}(|\Pi|-1)!\prod_{B=\{j_1,\cdots , j_{|B|}\}\in\Pi}\langle \mathcal{T}(\Psi_{P_{j_1}}(\tau_{j_1})\cdots \Psi_{P_{j_{|B|}}}(\tau_{j_{|B|}}) )\rangle_0,
\end{align}
where $\mathbf{P}_s$ 
stands for the set of partitions of $\{1,\cdots ,s\}$ and $|\Pi|$ denotes the number of sets in a partition $\Pi$. Moreover, $\mathcal{E}_c$ is the connected part of the moment function $\mathcal{E}$: for instance, given a potential $V=\Psi_{P_1}+\Psi_{P_2}$ with $\mathcal{E}(\{P_i,\tau_i\}_{i\in[2]}):=\langle\mathcal{T}( \Psi_{P_1}(\tau_1)\Psi_{P_2}(\tau_2))\rangle_0$, 
\begin{align*}
\mathcal{E}_c(\{P_i,\tau_i\}_{i\in[2]})=\mathcal{E}(\{P_i,\tau_i\}_{i\in[2]})-\langle \Psi_{P_1}(\tau_1)\rangle_0\langle \Psi_{P_2}(\tau_2)\rangle_0.
\end{align*}
In general, $\mathcal{E}_c$ satisfies:
\begin{align}\label{eq:connectedtext}
\mathcal{E}(\{P_i,\tau_i\}_{i\in[s]}) = \sum_{\Pi\in \mathbf{P}_s} \prod_{B\in\Pi} \mathcal{E}_c(\{P_i,\tau_i\}_{i\in B}).
\end{align}

\subsection{Locality and Decay of the Non-Interacting Green's Function}

Our analysis relies on a set of assumptions addressing the locality of the interaction potential $V$ and the {decay of correlations for the non-interacting Green's function $g$}:
First, we assume the size of all interaction terms is uniformly bounded, i.e., $m_P \le M$ for all $P \in \mathcal{P}$ and $M=\Or(1)$ (i.e., the maximal interaction size as defined in Definition~\ref{defn:max_int_size} is a constant). We also impose conditions to control the range and strength of the interactions. The following definitions cover two main cases: a strict, finite-range locality and a more general summability condition that can accommodate long-range interactions provided their strength decays sufficiently fast.

\begin{defn}(Distance on the Lattice)
 For $a,b\in \Omega$ with $a=(x_a,\sigma_a)$, $b=(x_b,\sigma_b)$ with $x_a,x_b\in\Lambda$, $\sigma_a,\sigma_b\in\{\uparrow,\downarrow\}$, we define the distance $\mathrm{dist}(a,b)$ as the graph distance between $x_a$ and $x_b$ on the lattice $\Lambda$. For two subsets or list $A,B \subset \Omega$, we define 
 $$
 \mathrm{dist}(A,B) = \min_{a \in A ,b \in B} \mathrm{dist}(a,b)
 $$
 and $\mathrm{dist}(a,B)=\mathrm{dist}(\{a\},B)$ for $a\in\Lambda$ and $B\subseteq \Lambda$. With a slight abuse of notations, we also apply the notation $\mathrm{dist}(\cdot, \cdot)$ to lists of elements of $\Omega$, where the distance is taken over the associated sets.
\end{defn}

 \begin{defn}(Conditions on $V$)
\begin{enumerate}
    \item We say the potential $V$ is \textbf{$(\mathsf{n},r_0)$-geometrically local} if each interaction is supported on a ball of radius $r_0$, and each site $x \in \Omega$ is involved in at most $\mathsf{n}$ distinct interaction terms.

    \item We say the potential $V$ is \textbf{$L_V$-summable} if the total interaction strength experienced by any single site is bounded by $L_V$. Formally, we require
    $$
    \max_{x \in \Omega} \sum_{P \in \mathcal{P},\,P^-\owns x} |v_P| \,{\le \frac{L_V}{2}},\quad  \max_{y\in \Omega} \sum_{P \in \mathcal{P},\,P^+\owns y}|v_P| {\le \frac{L_V}{2}}.
    $$
\end{enumerate}

\end{defn}

We note that the geometrically local condition implies the summable condition when the interaction strength is uniformly bounded.  Specifically, if $V$ is $(\mathsf{n},r_0)$-geometrically local and the interaction strengths are bounded by $|v_P| \le U$, then $V$ is $L_V$-summable with $L_V \le {2U\mathsf{n}}$. Throughout our analysis, the $L_V$-summable property serves as the general assumption for proving the convergence bounds and designing polynomial-time algorithms. However, when the stronger geometrically local condition holds, the algorithm can be simplified, leading to a more favorable computational complexity.  

 {
Next, we impose a condition on the non-interacting part $H_0$. Again, we consider two primary cases: 
a finite-range locality and a decay condition on the non-interacting Green's function.
\begin{defn} (Conditions on $H_0$)
\begin{itemize}
\item We say $H_0$ is $r_1$-finite-ranged, if  $|h_{ab}| \le 1$ and $h_{ab} = 0$ unless $\mathrm{dist}(a,b) < r_1$.
\item We say that the non-interacting Green's function is \textbf{$(K,\xi)$-exponentially decaying} if
 $$ \max_{\tau \in [-\beta,\beta]}|g_\tau(a,b)| \le Ke^{-\mathrm{dist}(a,b)/\xi} .$$ 
 We say the Green's function is \textbf{$L_g$-summable} if 
 $$ \max_{\tau \in [-\beta,\beta]}\,\max_{a \in \Omega} \,\sum_{b \in \Omega}|g_{\tau}(a,b)| \le L_g.$$
 \end{itemize}
\end{defn}
}
\noindent  Note that the $(K,\xi)$-exponential decay condition for the Green's function implies it is also $L_g$-summable, with $L_g \le C_d K\xi^d$ for some constant $C_d$ that only depends on the lattice dimension. While the general summability condition is sufficient to prove convergence and establish a polynomial-time algorithm, the stronger exponential decay condition allows for a significant simplification of the algorithm.

 The decay properties of the Green's function are a direct consequence of the locality structure of the non-interacting Hamiltonian $H_0$. The decay of $g$ arises if the matrix elements $h_{ab}$ decay sufficiently fast \cite{Hastings2004decay,lin2016decayestimatesdiscretizedgreens}.  The following lemma provides two concrete sufficient conditions on $H_0$ that guarantee decay, one for strictly finite-range Hamiltonians and a more general condition for Hamiltonians with long-range, but rapidly decaying, matrix elements (see Corollaries \ref{lemma-green-decay1} and \ref{secondpartlemmadecay} below for a proof):
 \begin{lemma} \label{lemma-green-decay}
If the non-interacting part is $r_1$-finite-ranged, then the Green's function is $(K,\xi)$-exponentially decaying with 
$$
 K = O(\beta r_1^d)\qquad\text{ and } \qquad  \xi =O(\beta r_1^{d+1}).
$$
More generally, if $H_0$ satisfies 
$$
 \max_{a \in \Omega}\sum_{b\in \Omega} |h_{ab}| (e^{\theta \mathrm{dist}(a,b)} -1) \le \frac{\pi}{4\beta\|h\|},
$$
for some constant $\theta > 0$, the Green's function is $(K,\xi)$-exponentially decaying with
$$
K = O(\beta\|h\|^2)\qquad\text{ and }\qquad  \xi = \theta^{-1}.
$$
\end{lemma}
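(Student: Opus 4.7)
The plan is to bound $|g_\tau(a,b)|$ by combining a Combes--Thomas estimate on the resolvent $(z-h)^{-1}$ with Cauchy's integral formula applied to the matrix function $g_\tau=f_\tau(h)$, where $f_\tau(z)=e^{-\tau z}/(1+e^{-\beta z})$ for $\tau\in[0,\beta]$. By the KMS antiperiodicity $g_{\tau-\beta}=-g_\tau$, it suffices to treat $\tau\in[0,\beta]$, and on that range $f_\tau$ is analytic in the strip $|\Im z|<\pi/\beta$ since its poles lie at $z=\pm i(2k+1)\pi/\beta$.

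For the Combes--Thomas step, fix a site $y\in\Omega$ and introduce the similarity transform $h^{(\mu)}:=M_\mu h M_{-\mu}$, where $M_\mu$ is the diagonal multiplication operator with entries $e^{\mu\,\mathrm{dist}(a,y)}$. The identity $M_\mu(z-h)^{-1}M_{-\mu}=(z-h^{(\mu)})^{-1}$ combined with the Schur-test bound $\|h^{(\mu)}-h\|\le\max_a\sum_b|h_{ab}|(e^{\mu\mathrm{dist}(a,b)}-1)$ yields
\begin{equation*}
|(z-h)^{-1}(a,y)|\le \frac{e^{-\mu\,\mathrm{dist}(a,y)}}{\mathrm{dist}(z,\sigma(h))-\|h^{(\mu)}-h\|}
\end{equation*}
whenever the denominator is positive. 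For case~(1) the $r_1$-finite-range property with $|h_{ab}|\le 1$ and $O(r_1^d)$ nonzero entries per row gives $\|h^{(\mu)}-h\|=O(\mu r_1^{d+1})$ for $\mu r_1\le 1$; for case~(2) the hypothesis directly bounds $\|h^{(\theta)}-h\|\le\pi/(4\beta\|h\|)$ when $\mu=\theta$.

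I then write $g_\tau=(2\pi i)^{-1}\oint_\Gamma f_\tau(z)(z-h)^{-1}\,dz$ along a rectangular contour $\Gamma$ with corners $\pm(\|h\|+1)\pm i\pi/(2\beta)$, which encloses $\sigma(h)\subset[-\|h\|,\|h\|]$ while remaining inside the analyticity strip. Choosing $\mu$ so that $\|h^{(\mu)}-h\|\le\pi/(4\beta)$ keeps the resolvent factor uniformly $O(\beta)$ on $\Gamma$. On the horizontal sides, the explicit formula $|f_\tau(x\pm i\pi/(2\beta))|=e^{-\tau x}/\sqrt{1+e^{-2\beta x}}$ yields $\int_\Gamma|f_\tau(z)|\,|dz|=O(\|h\|+1/\beta)$ uniformly in $\tau\in[0,\beta]$, producing the bound $|g_\tau(a,y)|\le O(\beta\|h\|)\,e^{-\mu\,\mathrm{dist}(a,y)}$. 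Specializing, case~(1) uses $\mu=\Theta(1/(\beta r_1^{d+1}))$ and $\|h\|=O(r_1^d)$ from the Schur test, giving $\xi=O(\beta r_1^{d+1})$ and $K=O(\beta r_1^d)$; case~(2) uses $\mu=\theta$, giving $\xi=\theta^{-1}$ and $K=O(\beta\|h\|^2)$.

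The main technical obstacle I anticipate lies in controlling the integral $\int_\Gamma|f_\tau|$ at the endpoint values $\tau\in\{0,\beta\}$: along unbounded horizontal lines the integrand fails to decay as $\Re z\to+\infty$ (resp.\ $-\infty$), reflecting the jump discontinuity of $g_\tau$ at $\tau=0$, so an infinite-strip contour would produce a divergent bound. The remedy is the finite rectangle of length $O(\|h\|)$ used above, whose horizontal extent is precisely what makes the resulting prefactor $K$ linear in $\|h\|$ (and hence polynomial in $r_1^d$ for case~(1)); making this choice uniform in $\tau$ and checking the vertical contributions of order $O(1/\beta)$ are the routine but unavoidable bookkeeping steps.
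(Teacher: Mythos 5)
Your argument is correct in substance, but for the finite-range case it takes a genuinely different route from the paper. The paper treats the two hypotheses by two different methods: for $r_1$-finite-ranged $H_0$ it uses Chebyshev polynomial approximation --- the $(a,b)$ entry of any degree-$n$ polynomial of $h$ vanishes when $\mathrm{dist}(a,b)>nr_1$, and the Bernstein-ellipse error bound (Lemma \thref{lem:cheb}) with $\rho\approx 1+\pi/(2\beta\|h\|)$ yields a $\rho^{-\lfloor\mathrm{dist}(a,b)/r_1\rfloor}$ decay, hence $\xi=r_1/\log\rho=O(\beta r_1^{d+1})$ after invoking $\|h\|\le C_d r_1^d$ --- while for the long-range hypothesis it uses precisely your contour-integral plus similarity-transform (Combes--Thomas) argument. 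You instead run Combes--Thomas for both cases, feeding in the Schur-test estimate $\|h^{(\mu)}-h\|=O(\mu r_1^{d+1})$ (via $|\mathrm{dist}(a,y)-\mathrm{dist}(b,y)|\le\mathrm{dist}(a,b)<r_1$) to choose $\mu=\Theta(1/(\beta r_1^{d+1}))$; this recovers the same $K$ and $\xi$ and is a clean unification of the two cases. What the paper's Chebyshev route buys is reuse: Lemma \thref{poly-approx} does double duty in Section \ref{sec:compute_non_int_greens_func}, where the same truncated Chebyshev expansion \emph{is} the algorithm for evaluating $g_\tau(a,b)$ to precision $\epsilon_g$ in $\polylog(1/\epsilon_g)$ time, whereas your route is the more standard and self-contained decay estimate. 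Two pieces of bookkeeping you should make explicit: (i) in the long-range case the hypothesis only gives $\|h^{(\theta)}-h\|\le\pi/(4\beta\|h\|)$, so keeping the denominator positive on a contour at height $\pi/(2\beta)$ implicitly assumes $\|h\|\gtrsim 1$ (otherwise lower the horizontal sides to height $\sim\pi/(2\beta\|h\|)$ as the paper does); and (ii) your resulting prefactor in that case is $O(\beta\|h\|)$ rather than $O(\beta\|h\|^2)$, which is a stronger conclusion in the relevant regime and so still establishes the lemma. Neither point is a gap.
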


\paragraph{Scaling considerations} Throughout our analysis, the model-defining parameters, including the maximum interaction size $M$, the lattice dimension $d$, and others introduced for specific conditions (whenever we assume the corresponding condition holds), namely, $L_V, L_g, \mathsf{n}, r_0, r_1, K, \xi$ are all assumed to be constants, independent of the system size $N$. 
We use the following asymptotic notations allowing us to distinguish between different dependencies:
\begin{itemize}
\item $O(\cdot),\,\Omega(\cdot)$: These denote bounds that hold up to a universal constant.
\item $O_s(\cdot)$: Here, the subscript $s$ indicates that the implicit constant may depend on the specified parameter (e.g., the lattice dimension $d$).
\item $\mathcal{O}(\cdot)$: This denotes a bound where the implicit constant may depend on $M,\,d$, and any other model parameters ($L_V, L_g, \mathsf{n}, r_0, r_1, K, \xi$) that are assumed to hold for the specific statement being made.
\end{itemize}

\section{Main Results}
\label{sec:main_results}

We are now ready to state our main results, namely:
\begin{itemize}
    \item We prove that under general summability conditions for both the potential and the Green's function, the cumulant expansion converges exponentially fast within a parameter regime where the product $\beta L_V L_g < C_M$ for certain constant $C_M$ that only depends on $M$.
    \item 
    In addition, in the regime where the convergence of cumulant expansion is proved, we develop an importance sampling algorithm to approximate the log-partition function and the expectation of local observables. We prove that:
\begin{itemize}
\item Under the general summability conditions, there exists a classical algorithm for computing the log-partition function to within $N\epsilon$ error (equivalent to computing the free energy per site to within $\Or(\epsilon)$ error) or expectations of local observables within $\epsilon$ error in $\mathcal{O}(|\mathcal{P}|^2N^2\polylog(1/\epsilon)/\epsilon^2)$ time.
\item Under the strictly local condition on both the potential $V$ and the non-interacting part $H_0$, the algorithm's complexity can be improved significantly. The log-partition function can be computed in $\mathcal{O}(N \epsilon^{-2}{\mathrm{polylog}(N}/\epsilon))$ time in general or $\mathcal{O}(\epsilon^{-2}{\mathrm{polylog}(N}/\epsilon))$ for translation invariant systems, to within $N\epsilon$ error. The local observables can be computed in {$\mathcal{O}(\epsilon^{-2}{\mathrm{polylog}}(N/\epsilon))$ time.} 
\end{itemize}
\end{itemize}
The formal statements are given below, and their proofs are presented in the remainder of this section. 

\begin{theorem} \label{thm-convergence}
Suppose the non-interacting Green's function is $L_g$-summable, the potential $V$ is $L_V$-{summable} and the interacting terms are at most $M$-body (i.e., the maximal interaction size is $M$ as defined in Definition~\ref{defn:max_int_size}). Then the $s$-th order term in the cumulant expansion \eqref{eq:cumulant_expansion} is bounded by 
$$
\frac{1}{s!}\sum_{P_1,\cdots,P_s \in \mathcal{P}}|v_{P_1}\cdots v_{P_s} | \int_{[0,\beta]^s} \dd \tau_1 \cdots \dd\tau_s\left| \mathcal{E}_{c}(\{P_i,\tau_i\}_{i \in [s]})\right| \le N (C\beta  M{2^{2M}}L_g L_V)^s.
$$
for some absolute constant $C$. 
 Consequently, 
 whenever $C\beta M{2^{2M}} L_VL_g < 0.99$, by choosing $S =\Omega(\log(1/\epsilon))$ the truncation error of the expansion is bounded by
$$
\left|\log (Z/Z_0) - \sum_{s=1}^S \frac{(-1)^{s}}{s!} \sum_{P_1,\cdots,P_s \in \mathcal{P}} v_{P_1}\cdots v_{P_s} \int_{[0,\beta]^s} \dd \tau_1\cdots \dd \tau_s  \,\mathcal{E}_{c}(\{P_i,\tau_i\}_{i\in[s]})\right| \le \epsilon.
$$

\end{theorem}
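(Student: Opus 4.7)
The plan is to control the cumulant $\mathcal{E}_c(\{P_i,\tau_i\}_{i\in[s]})$ by invoking the tree-determinant expansion \eqref{eq:determinant_expansion} established in Section~\ref{sec:determinant-expansion}. That identity, descending from the Brydges--Battle--Federbush tree-graph equality, rewrites the connected correlator as a sum over labeled trees $T$ with vertex set $[s]$ in which each tree edge $\{i,j\}$ contributes a single Green's function entry $g_{\tau_i-\tau_j}(\cdot,\cdot)$ between one leg of $P_i$ and one leg of $P_j$, while the remaining Wick contractions are packaged into the determinant of a matrix of size $\mathcal{O}(Ms)\times\mathcal{O}(Ms)$ whose blocks retain the form of \eqref{green-matrix}, integrated against a probability measure on interpolation parameters. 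The point is that the naive Feynman count of $\sim(2s)!$ diagrams at order $s$ is replaced by Cayley's $s^{s-2}$ labeled trees, a growth compatible with the $1/s!$ prefactor of the cumulant expansion up to a benign $2^{\Or(s)}$ factor.

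First I would fix a tree $T$ and estimate its contribution. The essential ingredient is a bound on the determinant of the interpolated Green's matrix: since each block inherits the Gram structure of $g_\tau = \pm e^{-\tau h}(1+e^{\mp\beta h})^{-1}$, one applies the Gram--Hadamard inequality (Section~\ref{sec:detbound}, generalizing \cite{PedraSalmhofer2008determinant} to the non-periodic setting) to control the determinant by $(\mathcal{O}(L_g))^{Ms}$, with the implicit constant depending only on $M$. Without this structural observation a naive entry-wise Hadamard bound would produce $(Ms)^{Ms/2}$ and destroy convergence.

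Next I would sum over interaction labels and integrate over imaginary times. Rooting $T$ at vertex $1$ and summing from the leaves inward, each leaf $\ell$ carries a factor
\begin{equation*}
\sum_{P_\ell \in \mathcal{P}} |v_{P_\ell}| \sum_{a \in P_\ell^{\pm}} \bigl|g_{\tau_\ell - \tau_{\mathrm{par}(\ell)}}(a, b_\ell)\bigr| \,\le\, M\, 2^{2M}\, L_V L_g,
\end{equation*}
where $b_\ell$ is the site frozen by the parent edge; this bound uses $L_V$-summability of $V$ and $L_g$-summability of $g$, together with the facts that an $m_{P_\ell}\le M$-body vertex offers at most $2M$ contraction legs and that $2^{2M}$ absorbs the sign and normal-ordering combinatorics. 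Peeling leaves iteratively yields $(M\, 2^{2M} L_V L_g)^{s-1}$; the root contributes one free factor of $N$ from its unconstrained site sum, together with the overall $\beta^s$ from $\int_{[0,\beta]^s}\dd\tau$. Combining this with the determinant bound, Cayley's tree count $s^{s-2}$, and Stirling's identity $s^{s-2}/s! = \mathcal{O}(e^s/s^2)$, and collecting all absolute constants into a single $C$, one arrives at the claimed $N(C\beta M\, 2^{2M}\, L_V L_g)^s$. The truncation statement then follows by summing the geometric tail for $s>S$ once $C\beta M\, 2^{2M} L_V L_g < 0.99$, which forces $S = \Omega(\log(1/\epsilon))$.

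The principal obstacle is the Gram--Hadamard determinant bound. This is precisely the step where fermionic antisymmetry is exploited to beat the $(2s)!$ count: one must realize each $g_\tau(a,b)$ as an inner product $\langle u_a, v_b\rangle_{\mathcal{H}}$ in an auxiliary Hilbert space $\mathcal{H}$ with uniform bounds on $\|u_a\|$ and $\|v_b\|$ derivable from functional calculus of $h$. The non-periodic setting blocks the standard Fourier argument used in \cite{PedraSalmhofer2008determinant} and forces one to work directly with spectral resolutions on $\ell^2(\Omega)$, which is the main technical modification and the reason Section~\ref{sec:detbound} is devoted to this point.
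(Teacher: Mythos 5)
Your architecture is exactly the paper's: tree-determinant expansion, Gram-type determinant bound for the remainder, telescoping sum over interaction labels along the tree, and Cayley's formula against the $1/s!$. The growing-path/leaf-peeling step and the treatment of the contraction sum via the per-edge quantity $\sum_{k,l}|g(\cdot,\cdot)|$ match Lemmas~\ref{lem:tree-sum-iterate} and \ref{lem:tree-sum}, and the truncation claim follows from the geometric series exactly as you say.

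There is, however, one concrete misstatement that prevents your accounting from closing. You assert that the Gram--Hadamard argument controls the interpolated determinant by $(\mathcal{O}(L_g))^{Ms}$. The bound actually proved (Lemma~\ref{lem-det-bound}, via the embeddings of Lemma~\ref{lem.embedding} with unit-norm images) is the dimensionless $|\det\mathbf{G}|\le 2^{2n}\le 2^{2Ms}$; it carries \emph{no} factor of $L_g$, because the relevant quantity is the Gram constant of the covariance (here an absolute constant $2$), not the summability constant $L_g=\max_{\tau,a}\sum_b|g_\tau(a,b)|$, which is a different object. With your version the total power of $L_g$ would be $Ms+s-1$ rather than the $s$ appearing in the theorem, and since $L_g$ is generically $\ge 1$ this extra $L_g^{\Theta(Ms)}$ cannot be absorbed into $(CL_g)^s$; the claimed conclusion does not follow from your stated intermediate bounds. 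Relatedly, the $2^{2M}$ you attach to each leaf sum as ``sign and normal-ordering combinatorics'' is really the per-vertex share of the determinant bound $2^{2Ms}$ (the leaf sum itself is bounded by $ML_VL_g$ with no such factor), so as written that factor is both misattributed and effectively counted twice. Replacing your determinant estimate by the correct $2^{2Ms}$ and dropping the spurious $2^{2M}$ from the leaf factor restores the paper's bound $N(C\beta M2^{2M}L_VL_g)^s$.
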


This convergence result provides some of the necessary components of the efficient algorithm to compute the log-partition function.

In Theorems \ref{thm-algorithm-general} and \ref{thm-algorithm-geo-local} below, we first establish the efficiency of our algorithm using a query complexity model. Specifically, we assume exact access to the non-interacting Green’s function $g_{ab}(\tau)$ and the interaction coefficients $V_P$, counting only the total number of queries while excluding the cost of computing individual entries. We then combine this query count with the cost of specific evaluations and a stability analysis in Corollary \ref{cor-finite-range-ham} to derive the total end-to-end runtime complexity.

\begin{theorem} \label{thm-algorithm-general}
 Suppose the non-interacting Green's function is $L_g$-summable and the potential $V$ is $L_V$-{summable}, and the interacting term are at most $M$-body, then there exists an absolute constant $C>0$, such that whenever $\beta M{2^{2M}} L_VL_g < C$, a classical algorithm with query complexity $\mathcal{O}(|\mathcal{P}|^2 \epsilon^{-2} \polylog(1/\epsilon))$ can, with probability at least $2/3$, compute the log-partition function up to $N\epsilon$ error and compute the $m$-body observables with $m=O(1)$ within $\epsilon$ error. 
 \end{theorem}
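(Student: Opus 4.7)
The plan is to combine the exponential convergence from Theorem~\ref{thm-convergence} with a randomized estimator based on the tree-determinant expansion \eqref{eq:determinant_expansion} that the introduction previews. First, I truncate the cumulant series at order $S = \Theta(\log(1/\epsilon))$. By Theorem~\ref{thm-convergence}, the tail beyond $S$ is bounded by $N \sum_{s>S}(C\beta M 2^{2M} L_g L_V)^s$, which is a geometric series below $0.99$, so choosing the constant in $S$ large enough makes the truncation contribute at most $N\epsilon/2$ error. It therefore suffices to estimate each of the $S$ truncated terms $T_s = \frac{(-1)^s}{s!}\sum_{P_1,\ldots,P_s}v_{P_1}\cdots v_{P_s}\int_{[0,\beta]^s}\mathcal{E}_c\,d\tau$ to additive error $N\epsilon/(2S)$.

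Second, I apply the tree-determinant expansion to rewrite $\mathcal{E}_c(\{P_i,\tau_i\})$ as a sum over labeled trees $T$ on $[s]$ vertices, where the summand factorizes into edge contributions $g_{\tau_i-\tau_j}$ associated with the tree edges times an integral of a determinant of an $O(s)\times O(s)$ matrix built from the Green's function. This is the key structural identity: the $s^{s-2}$ trees (Cayley) cancel with the $1/s!$ up to a $2^{O(s)}$ factor, while the Gram--Hadamard bound from Section~\ref{sec:detbound} controls the determinant by $e^{O(s)}$. I build an unbiased Monte Carlo estimator as follows: (i) sample a tree uniformly using the Prüfer code in $O(s)$ time; (ii) conditional on $T$, sample the tuple $(P_1,\ldots,P_s,\tau_1,\ldots,\tau_s)$ from a tree-structured importance distribution (a Markov random field on $T$) whose edge potentials are proportional to $|v_{P_i}||v_{P_j}|\cdot\|G_{\tau_i-\tau_j}(P_i^-,P_j^+)\|$, i.e.\ exactly the quantities that appear in the termwise upper bound proving Theorem~\ref{thm-convergence}; (iii) output (sign)$\cdot s^{s-2}\cdot Z_{\mathrm{MRF}}(T)\cdot\det(\mathbf{G})/s!$ evaluated at the sample, divided by the product of proposal probabilities. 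Because the proposal lives on a tree, belief propagation computes $Z_{\mathrm{MRF}}(T)$ and all conditional marginals in a single leaf-to-root pass; each BP message update is a sum over pairs of multi-indices at the two endpoints of an edge, giving $O(|\mathcal{P}|^2)$ queries per edge and $O(s|\mathcal{P}|^2)$ queries per sample.

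Third, I control the variance. By the same termwise bound that yields Theorem~\ref{thm-convergence}, the MRF normalization satisfies $Z_{\mathrm{MRF}}(T)\le N(C'\beta L_V L_g)^s$, and combining with the determinant bound $|\det \mathbf{G}|\le e^{O(s)}$ shows that each sample's magnitude is at most $N\cdot\gamma^s$ for some $\gamma<1$ whenever $\beta M 2^{2M}L_V L_g$ is below the constant $C$ in the statement. Hence the second moment of the estimator is at most $O(N^2)$, and Chebyshev's inequality implies that $O(\epsilon^{-2}\log S)$ samples per order (boosted by median-of-means for the confidence $2/3$) suffice to achieve error $N\epsilon/(2S)$ per order. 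Summing over $s\le S$ gives total query count $\mathcal{O}(|\mathcal{P}|^2 \epsilon^{-2}\polylog(1/\epsilon))$ and total error $N\epsilon$. The observable estimate follows by differentiating $\log Z$ with respect to a source coupled to the $m$-body operator of interest, which only enlarges $\mathcal{P}$ by $O(1)$ terms and inherits the same estimator.

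The hardest step is designing the MRF proposal so that three constraints hold simultaneously: (a) its edge factors must be the same ones that drive the convergence proof, so that variance is controlled by Theorem~\ref{thm-convergence}'s bound rather than by a naive $s!$-type loss; (b) after taking absolute values in the determinant via Gram--Hadamard, the estimator's single-sample magnitude must stay $O(N)$ uniformly in $s\le S$; and (c) the edge potentials must factor cleanly on the tree so BP normalizes and samples exactly. Handling the continuous imaginary times $\tau_i\in[0,\beta]$ within the BP framework is the subtlety I expect to consume the most effort, likely resolved by observing that the time dependence enters only through the Green's function and can be integrated analytically or handled by a hybrid discrete/continuous sampler whose marginal cost is absorbed into the $\polylog(1/\epsilon)$ factor.
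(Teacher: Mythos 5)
Your proposal follows essentially the same route as the paper's proof: truncate at $S=\Theta(\log(1/\epsilon))$ using Theorem~\ref{thm-convergence}, sample a uniform labeled tree via the Pr\"ufer code, sample the interaction terms from the tree-structured distribution with edge potentials $M_{\tau_i,\tau_j}(P_i,P_j)$ by belief propagation at cost $O(s|\mathcal{P}|^2)$ per sample, and bound the single-sample magnitude by $N\gamma^s$ via the Gram determinant bound combined with the tree-summation bound $Z_s(T)\le N(ML_VL_g)^s$, so that $O(\epsilon^{-2})$ samples suffice; the observable case is likewise handled by differentiating the log-partition function with respect to a source term. Two details you leave open are resolved in the paper more simply than you anticipate: the imaginary times $\tau_1,\dots,\tau_s$ are drawn uniformly and independently from $[0,\beta]$ \emph{before} running BP (contributing only the $\beta^s$ volume factor to the weight), so no continuous-variable message passing is needed; and the remaining randomness inside $h_{\boldsymbol{\tau}}$ — the contraction assignment $\chi$, the growing path $\omega$, and the interpolation variables $\mathbf{t}$, on all of which the determinant $\mathbf{G}(T,\chi,\omega,\mathbf{t},\cdot)$ depends — must be sampled explicitly (per-edge $\chi$ proportional to $|g|$ with rejection of assignments that reuse an operator, sequential growing-path sampling using the normalized weights of Proposition~\ref{sequence-weight}, and $t_i$ drawn from the density $b_it_i^{b_i-1}$) for your estimator to actually be unbiased. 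The use of Chebyshev plus median-of-means in place of the paper's Hoeffding bound on the uniformly bounded weights is immaterial.
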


When the potential $V$ and the non-interacting Green's function $g_\tau (a,b)$ have stronger locality properties, the query complexity of our algorithm can be further reduced:

\begin{theorem} \label{thm-algorithm-geo-local}
In addition to the assumptions in Theorem~\ref{thm-algorithm-general}, we assume that the potential $V$ is $(\mathsf{n},r_0)$-geometrically local and the Green's function is $(K,\xi)$-exponentially decaying, then there exists an absolute constant $C>0$, such that whenever $\beta M{2^{2M}} L_VL_g < C$, there exists a classical algorithm with query complexity $\mathcal{O}(N \epsilon^{-2}{\mathrm{\polylog}}(1/\epsilon)) $ for computing the log-partition function within $N\epsilon$ error with probability at least $2/3$, and the query complexity can be reduced to $\mathcal{O}(\epsilon^{-2}\mathrm{\polylog}(1/\epsilon))$ for translation-invariant systems.
 There is a classical algorithm with query complexity { $\mathcal{O}(\epsilon^{-2}\mathrm{\polylog}(1/\epsilon))$} for computing local observables within $\epsilon$ error with probability at least $2/3$.

\end{theorem}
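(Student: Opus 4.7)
The plan is to start from the algorithm underlying Theorem~\ref{thm-algorithm-general} and exploit the additional structure from $(\mathsf{n},r_0)$-geometric locality of $V$ and $(K,\xi)$-exponential decay of the non-interacting Green's function to eliminate the global $|\mathcal{P}|^2$ factor. The general algorithm truncates the cumulant expansion \eqref{eq:cumulant_expansion} at order $S = \mathcal{O}(\log(1/\epsilon))$ (justified by Theorem~\ref{thm-convergence}), uses the tree-determinant expansion \eqref{eq:determinant_expansion}, and for each order samples a tree uniformly on $s$ vertices via the Pr\"ufer code together with an interaction configuration $(P_1,\dots,P_s)$ via belief propagation (BP) along that tree.

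First I would localize the sampling. Under $(K,\xi)$-exponential decay the weight of a tree edge between interactions $P_i$ and $P_j$ appearing in the BP proposal factor is bounded by a quantity that decays exponentially in $\mathrm{dist}(P_i,P_j)/\xi$. Combined with the $(\mathsf{n},r_0)$-geometric locality of $V$, once a ``root'' interaction $P_1$ is fixed, the BP-induced distribution on $(P_2,\dots,P_s)$ is concentrated on configurations supported in a ball $B(P_1,R)$ of radius $R = \mathcal{O}(\xi \log(1/\epsilon)) = \mathcal{O}(\mathrm{polylog}(1/\epsilon))$ around $P_1$, and restricting BP to this ball introduces only $\mathcal{O}(\epsilon)$ bias in each per-root estimator. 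The number of interactions inside $B(P_1,R)$ is $\mathcal{O}(\mathsf{n}R^d) = \mathcal{O}(\mathrm{polylog}(1/\epsilon))$, so each BP sweep together with the $\mathcal{O}(S)$-size determinant required to weight the sample can be evaluated in $\mathrm{polylog}(1/\epsilon)$ queries.

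To approximate the log-partition function I would then decompose the truncated cumulant expansion into a sum over $|\Omega|=N$ ``seed'' sites (the support of the designated root $P_1$) and estimate each seed's contribution by an independent importance-sampling estimator. The per-seed variance is $\mathcal{O}(1)$ by the Gram--Hadamard determinant and Cayley tree counting estimates used for Theorem~\ref{thm-convergence}, with the $|\mathcal{P}|$ factor of Theorem~\ref{thm-algorithm-general} now replaced by the $\mathcal{O}(\mathsf{n})$ interactions touching the seed; hence $\mathcal{O}(\epsilon^{-2}\mathrm{polylog}(1/\epsilon))$ samples per seed suffice to achieve $\mathcal{O}(\epsilon)$ per-seed error, and summing the $N$ independent estimators yields total error $N\epsilon$ with overall query complexity $\mathcal{O}(N\epsilon^{-2}\mathrm{polylog}(1/\epsilon))$. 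For a translation-invariant Hamiltonian every seed contributes the same value, so a single estimator suffices and the $N$ factor drops. For an $m$-body local observable with $m=\mathcal{O}(1)$, I would derive from \eqref{eq:cumulant_expansion} a tree-determinant expansion with one tree vertex anchored to the support of the observable (either by a source-term differentiation trick or by directly expanding the connected correlator); all contributing trees then live in a $\mathrm{polylog}(1/\epsilon)$-radius ball around that support, again removing the $N$ factor and giving $\mathcal{O}(\epsilon^{-2}\mathrm{polylog}(1/\epsilon))$ queries.

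The main obstacle I anticipate is the precise bookkeeping behind the BP-truncation bias and variance bounds: showing that restricting BP messages to a polylog-radius ball costs only $\mathcal{O}(\epsilon)$ bias \emph{per seed}, rather than $\mathcal{O}(\epsilon/N)$, while keeping the per-seed variance uniformly bounded in $N$. This relies on combining the exponential decay of $g_\tau$ with the Gram--Hadamard determinant bound underlying Theorem~\ref{thm-convergence}, and arguing that the $\mathrm{polylog}(1/\epsilon)$ factor from the ball size, the truncation order $S$, and the BP depth does not inherit a hidden $\log N$ dependence; this follows provided both the bias and the variance of each seed estimator depend only on the local structure inside $B(P_1,R)$ and not on the global system size.
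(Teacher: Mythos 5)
Your proposal is correct and rests on the same machinery as the paper: truncation of the cumulant expansion at order $S=\mathcal{O}(\log(1/\epsilon))$, the tree-determinant expansion, uniform tree sampling via the Pr\"ufer code, belief propagation on the sampled tree, and a distance-$R$ truncation of tree edges with $R=\mathcal{O}(\xi\log(1/\epsilon))$ justified by the exponential decay of $g_\tau$ (the paper's Lemma~\ref{lem:truncated-sum}, which indeed shows the total truncation bias is $N\epsilon$, i.e.\ $\mathcal{O}(\epsilon)$ per root, exactly as you anticipate). The one structural difference is where the linear factor of $N$ comes from. The paper keeps a single global estimator in which the root $P_1$ ranges over all of $\mathcal{P}$: the BP messages remain vectors of length $|\mathcal{P}|=\mathcal{O}(N)$, but each entry is computed by a sum over only the $\mathcal{O}(\mathsf{n}(R+r_0)^d)$ states within distance $R$, so one sample costs $\mathcal{O}(N\,\polylog(1/\epsilon))$ and Hoeffding (with the estimator bounded by $N$) gives $\mathcal{O}(\epsilon^{-2})$ samples. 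You instead partition the sum over $P_1$ into $N$ per-seed contributions, each of which is an $\mathcal{O}(1)$ quantity by the fixed-root bound \eqref{eq:tree-sum-1}, run a fully local BP of cost $\polylog(1/\epsilon)$ per sample conditional on the seed, and spend $\mathcal{O}(\epsilon^{-2})$ samples per seed; this is precisely the construction the paper reserves for the translation-invariant case and for local observables, applied seed by seed. Both routes land on $\mathcal{O}(N\epsilon^{-2}\polylog(1/\epsilon))$, and your error accounting is sound (independence across seeds means the sum concentrates without a union bound, so no hidden $\log N$; in fact your sample allocation is slightly wasteful, since the sum of $N$ independent seed estimators only needs per-seed variance $\mathcal{O}(N\epsilon^2)$). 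One small imprecision: after fixing the seed, the configurations reachable by an $R$-good tree of $s$ vertices live in a ball of radius $sR=\mathcal{O}(\log^2(1/\epsilon))$, not $R$; this does not affect the stated complexity since it remains $\polylog(1/\epsilon)$.
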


To derive the total runtime complexity, we combine the query complexity established above with the computational cost of realizing the non-interacting Green's function oracle. In the general case, the complexity of computing a single entry $g_\tau(a,b)$ scales as $\mathcal{O}(N^2 \log(1/\epsilon))$; thus, the final complexity is the product of the query count and this $\mathcal{O}(N^2 \log(1/\epsilon))$ overhead. However, when the non-interacting Hamiltonian $H_0$ is finite-ranged, we achieve a significant reduction in complexity through two distinct mechanisms. First, the sparsity of $H_0$ allows individual entries of the Green's function to be computed in $\mathcal{O}(\mathrm{polylog}(N/\epsilon))$ time. Second, as established in Lemma \ref{lemma-green-decay}, a finite-ranged $H_0$ guarantees that the Green's function decays exponentially. This decay satisfies the stronger assumptions of Theorem \ref{thm-algorithm-geo-local}, allowing us to utilize the reduced query complexity scaling. These combined factors lead to the following corollary:

\begin{corollary}
    \label{cor-finite-range-ham}
\begin{itemize}
    \item Under the assumptions of Theorem~\ref{thm-algorithm-general}, there exists an absolute constant $C>0$, such that whenever $\beta M{2^{2M}} L_VL_g < C$, an $\mathcal{O}(|\mathcal{P}|^2N^2 \epsilon^{-2} \polylog(N/\epsilon))$ classical algorithm can, with probability at least $2/3$, compute the log-partition function up to $N\epsilon$ error and compute the $m$-body observables with $m=O(1)$ within $\epsilon$ error.
    \item If, in addition, we assume that the potential $V$ is $(\mathsf{n},r_0)$-geometrically local and the non-interacting part $H_0$ is $r_1$-finite-ranged, then there exists an absolute constant $C>0$, such that whenever $\beta M{2^{2M}} L_VL_g < C$, there exists an $\mathcal{O}(N \epsilon^{-2}{\mathrm{\polylog}}(N/\epsilon)) $-time 
 classical algorithm for computing the log-partition function within $N\epsilon$ error with probability at least $2/3$, and the runtime can be reduced to $\mathcal{O}(\epsilon^{-2}\mathrm{\polylog}(N/\epsilon))$ for translation-invariant systems.
 There is a { $\mathcal{O}(\epsilon^{-2}\mathrm{\polylog}(N/\epsilon))$}-time classical algorithm for computing local observables within $\epsilon$ error with probability at least $2/3$.
\end{itemize}
\end{corollary}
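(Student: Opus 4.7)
The plan is to derive Corollary~\ref{cor-finite-range-ham} by assembling three ingredients that are already established in the paper: the query-complexity bounds of Theorems~\ref{thm-algorithm-general} and~\ref{thm-algorithm-geo-local}, the Green's-function decay estimate of Lemma~\ref{lemma-green-decay}, and an explicit implementation of the non-interacting Green's function oracle together with a stability analysis. Both bullets follow the same template: replace each exact oracle query by an approximate subroutine of cost $T_{\mathrm{entry}}(\epsilon')$ at per-entry precision $\epsilon' = \poly(\epsilon/N)$, and verify that the resulting errors propagate to at most the stated $N\epsilon$ (log-partition function) or $\epsilon$ (observable) accuracy.

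For the first bullet (general summable case) I would start from the query complexity $\Or(|\cP|^2\epsilon^{-2}\polylog(1/\epsilon))$ of Theorem~\ref{thm-algorithm-general}. Each query asks for one entry of $g_\tau$ as given in Definition~\ref{def:Green}, i.e., one entry of $f_{\tau,\beta}(h)$ where $f_{\tau,\beta}(x)=\pm e^{-\tau x}(1+e^{\mp\beta x})^{-1}$. Because $f_{\tau,\beta}$ is analytic in a strip containing the spectrum of $h$, a Chebyshev expansion of degree $D=\Or(\log(1/\epsilon'))$ converges exponentially fast. Evaluating $\langle e_a,f_{\tau,\beta}(h)e_b\rangle$ then amounts to $D$ matrix-vector products with the dense $N\times N$ matrix $h$, each costing $\Or(N^2)$, yielding $T_{\mathrm{entry}}(\epsilon')=\Or(N^2\log(1/\epsilon'))$. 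Multiplying by the query count and taking $\epsilon'=\poly(\epsilon/N)$ reproduces the advertised $\Or(|\cP|^2N^2\epsilon^{-2}\polylog(N/\epsilon))$ runtime.

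For the second bullet I would first apply Lemma~\ref{lemma-green-decay}: an $r_1$-finite-ranged $H_0$ automatically yields a $(K,\xi)$-exponentially decaying Green's function with $K,\xi$ depending only on fixed parameters, so the hypotheses of Theorem~\ref{thm-algorithm-geo-local} hold and its query-complexity bounds apply. To implement the oracle in polylogarithmic time, I would exploit the sparsity of $h$ (at most $\Or(r_1^d)$ nonzero entries per row) together with the exponential decay of $g_\tau$. A Combes--Thomas-style truncation, of the type already used inside Lemma~\ref{lemma-green-decay}, lets one restrict the computation of $g_\tau(a,b)$ to a ball of radius $\Or(\xi\log(1/\epsilon'))$ around $\{a,b\}$, producing an effective problem of size $\polylog(1/\epsilon')$ that can be solved by dense diagonalization or by Chebyshev expansion within the truncated block, at cost $\polylog(1/\epsilon')$ independent of $N$. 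Combined with the query counts of Theorem~\ref{thm-algorithm-geo-local} and $\epsilon'=\poly(\epsilon/N)$, this delivers the claimed $\Or(N\epsilon^{-2}\polylog(N/\epsilon))$ and (translation-invariant) $\Or(\epsilon^{-2}\polylog(N/\epsilon))$ runtimes for the log-partition function, and $\Or(\epsilon^{-2}\polylog(N/\epsilon))$ for local observables.

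The main obstacle is the stability step linking per-entry precision $\epsilon'$ to the final accuracy. The estimator produced by the algorithm is a sum of tree-indexed products of determinants of matrices built from Green's-function entries, with each determinant of size $\Or(MS)$ where $S=\Omega(\log(1/\epsilon))$ is the truncation order from Theorem~\ref{thm-convergence}. I would prove that an $\epsilon'$-perturbation of every entry perturbs each determinant by at most $\epsilon'\cdot e^{\Or(S)}$, uniformly along the interpolation, by reusing the Gram--Hadamard structure that underlies the determinant bound in the proof of Theorem~\ref{thm-convergence}. Averaging over the $\poly(N/\epsilon)$ samples drawn by the algorithm and demanding total error at most $\epsilon$ (respectively $N\epsilon$) then forces $\epsilon'=\poly(\epsilon/N)$, which inflates $T_{\mathrm{entry}}$ only by a $\polylog(N/\epsilon)$ factor. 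This is the technical core of the corollary, because any looser stability bound would destroy the linear-in-$N$ scaling claimed in the second bullet.
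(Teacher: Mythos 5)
Your proposal follows essentially the same route as the paper: combine the query-complexity bounds of Theorems~\ref{thm-algorithm-general} and~\ref{thm-algorithm-geo-local} with Lemma~\ref{lemma-green-decay}, implement the Green's-function oracle by a degree-$\Or(\log(1/\epsilon'))$ Chebyshev approximation (dense matrix--vector products in the general case, locality of the polynomial in the finite-range case --- your Combes--Thomas truncation is equivalent in effect to the paper's observation that $T_k(\tilde h)\ket{b}$ is automatically supported in a ball of radius $kr_1$), and close with a stability analysis showing the per-entry precision $\epsilon_g$ only enters the runtime through $\polylog(1/\epsilon_g)$. The one place you are more optimistic than the paper --- claiming $\epsilon'=\poly(\epsilon/N)$ suffices via a Gram-structure perturbation bound, whereas the paper's cruder term-counting argument demands the superpolynomially small $\epsilon_g=O(\epsilon/(S^SN^S))$ --- is immaterial, since $\log(1/\epsilon_g)=\polylog(N/\epsilon)$ in either case and the stated runtimes follow identically.
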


\noindent In the following sections, we present our main approach and provide the proofs for the main theorems. In Section \ref{sec:determinant-expansion}, we introduce the tree-determinant expansion for the cumulant function, whose algebraic structure is amenable to rigorous bounds. Using this representation, we prove the convergence of the expansion under general summability conditions {(Theorem \ref{thm-convergence})} in Section \ref{sec:convergence}. 
{Theorems \ref{thm-algorithm-general} and \ref{thm-algorithm-geo-local} are the subject of the subsequent sections}: in Section \ref{sec:alg-partition}, we develop an algorithm to compute the log-partition function by evaluating the truncated cumulant expansion and analyze its query complexity. We then discuss how to simplify this algorithm in the case of a geometrically local potential and an exponentially decaying Green's function. In Section \ref{sec:alg-observable}, we extend this approach to compute local observables. This is achieved using an identity that connects the expectation of an observable to the log-partition function.  { While in Sections \ref{sec:alg-partition} and \ref{sec:alg-observable} we assume an exact access to the non-interacting Green's function and only consider the query complexity model, we complete the end-to-end analysis (Corollary \ref{cor-finite-range-ham}) in Section \ref{sec:compute_non_int_greens_func}. There, we detail the algorithm for approximating the non-interacting Green's function and establish the stability of our algorithm.}  The technical details supporting these proofs are deferred to Section \ref{sec:technicaldetails}. 
\subsection{The Tree-Determinant Expansion}\label{sec:determinant-expansion}
Our key analytical and algorithmic tool is a tree-determinant expansion for the cumulant function $\mathcal{E}_c$, which reorganizes the perturbative series into a sum over tree structures \cite{GentileMastropietro2001renormalization}, {and whose proof is postponed to Section \ref{sec:determunantexpansion} (see Theorem \ref{cumulant-representation}):} 
\begin{equation}
\label{eq:determinant_expansion}
    \mathcal{E}_c(\{P_i,\tau_i\}_{i\in[s]} ) = \sum_{T \in \mathcal{T}([s])} \sum_{\chi \in \mathcal{A}(T)} \alpha_{T,\chi}\prod_{(i,j)\in T} g_{\tau_i,\tau_j}(P_i,P_j,\chi_{ij})\, h_{\boldsymbol{\tau}}(P_1,\cdots,P_s,T,\chi),
\end{equation}
where 
\begin{equation}
\label{eq:determinant_integral}
    h_{\boldsymbol{\tau}}(P_1,\cdots,P_s,T,\chi) =\sum_{\omega \in S(T)}\int_{[0,1]^{s-1}} \dd \mathbf{t}\, p_{T,\omega}(\mathbf{t}) \det \mathbf{G}(T,\chi,\omega,\mathbf{t},\{P_i,\tau_i\}_{i\in[s]}). 
\end{equation}
This formula expresses the cumulant into a sum over tree structures connecting the $s$ interaction terms $\{P_1, \dots, P_s\}$.  The contribution of each tree in the sum is then factorized into two components. The first is the product of Green's functions $g_{{\tau_i,\tau_j}}$ representing the propagators along the tree's edges. The second component, $h_{\boldsymbol{\tau}}$, encapsulates the remaining interactions into a linear combination of determinants, a structure essential for managing the series' convergence.
Let us now break down the components of this expansion.

\paragraph*{Tree Structures and Contractions}

\begin{itemize}
\item \textbf{Sum over Trees ($T \in \mathcal{T}([s])$):} The outer sum is over $\mathcal{T}([s])$, the set of all \textbf{labeled (undirected) trees} on the vertex set $[s] = \{1, \dots, s\}$. Each vertex $i$ corresponds to an interaction term $\Psi_{P_i}$, and each tree $T$ represents a ``structural backbone'' of connections between them. This is a crucial combinatorial simplification. By Cayley's formula, the number of such trees is $|\mathcal{T}([s])| = s^{s-2}$ (see Section \ref{sec:prufer}). This growth is far more manageable than the double-factorial growth ($\sim (2s)!$) of connected Feynman diagrams. When combined with the $1/s!$ prefactor from the cumulant expansion, the scaling becomes exponential $s^{s-2}/s! \sim e^s$ rather than factorial. 
We say $(i,j) \in T$ if $i<j$ and sites $P_i$ and $P_j$ are connected by $T$. {Though the tree is undirected, we always assume $i<j$ when we say $(i,j) \in T$ for convenience.}
    \item \textbf{Sum over Contraction Assignments ($\chi \in \mathcal{A}(T)$) and Propagators $g_{\tau_i,\tau_j}(P_i,P_j,\chi_{ij})$:}  For each edge $(i,j) \in T$, the assignment $\chi$ selects an annihilation operator from one vertex (e.g., $P_i$) and a creation operator from the other ($P_j$), or vice versa, with the constraint that each operator can only be contracted once. The propagator $g_{\tau_i,\tau_j}(P_i,P_j,\chi_{ij})$ then denote the Green's function between the two contracted operator. The tree $T$ and the assignment $\chi$ specify a ``backbone'' of interactions, and the product of propagators $\prod_{(i,j)\in T}g_{\tau_i,\tau_j}(P_i,P_j,\chi_{ij})$ gives the amplitude of the specific structural backbone. See Figure  \ref{fig:diagram} for an illustrative example.
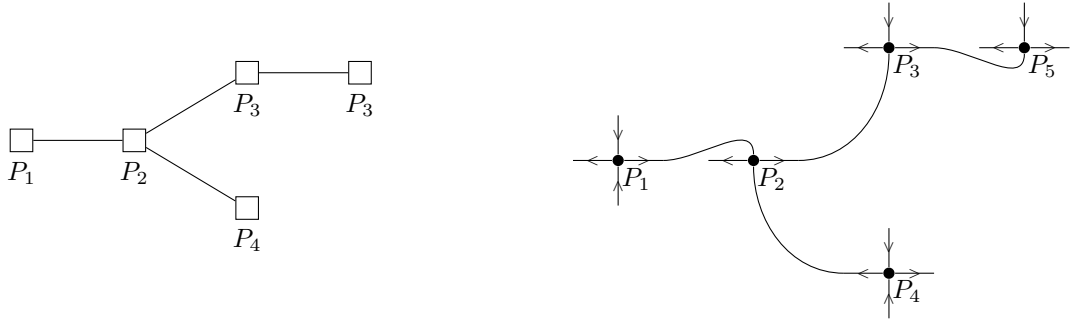
\begin{figure}[h!]
\centering 

\begin{minipage}{.45\textwidth}
\centering
\begin{tikzpicture}[scale = 0.6]
    \tikzstyle{box} = [draw, rectangle, minimum height=0.3cm, minimum width=0.3cm, label distance=2pt]

    \node (n1) at (0, 0)   [box, label=below:{$P_1$}] {};
    \node (n2) at (2.5, 0) [box, label=below:{$P_2$}] {};
    \node (n3) at (5, 1.5) [box, label=below:{$P_3$}] {};
    \node (n4) at (5, -1.5) [box, label=below:{$P_4$}] {};
    \node (n5) at (7.5, 1.5) [box, label=below:{$P_3$}] {}; 

    \draw (n1) -- (n2);
    \draw (n2) -- (n3);
    \draw (n2) -- (n4);
    \draw (n3) -- (n5);

\end{tikzpicture}
\end{minipage}
\hfill 
\begin{minipage}{.45\textwidth}
\centering
\begin{tikzpicture}[scale=0.6]
    
    \node (n1) at (0, 0) [circle, fill=black, inner sep=1.5pt] {};
    \node (n2) at (3, 0) [circle, fill=black, inner sep=1.5pt] {};
    \node (n3) at (6, 2.5) [circle, fill=black, inner sep=1.5pt] {};
    \node (n4) at (6, -2.5) [circle, fill=black, inner sep=1.5pt] {};
    \node (n5) at (9, 2.5) [circle, fill=black, inner sep=1.5pt] {};

    \node at (0.4, -0.4) {$P_1$};
    \node at (3.4, -0.4) {$P_2$};
    \node at (6.4, 2.1) {$P_3$};
    \node at (6.4, -2.9) {$P_4$};
    \node at (9.4, 2.1) {$P_5$};

    
    \draw (n1) -- ++(-1,0) node[midway, sloped] {$\scriptscriptstyle <$}; 
    \draw (n1) -- ++(0,1)  node[midway, sloped] {$\scriptscriptstyle <$};
    \draw (n1) -- ++(1,0)  node[midway, sloped] {$\scriptscriptstyle >$};  
    \draw (n1) -- ++(0,-1) node[midway, sloped] {$\scriptscriptstyle <$};

    \draw (n2) -- ++(-1,0) node[midway, sloped] {$\scriptscriptstyle <$}; 
    \draw (n2) -- ++(1,0)  node[midway, sloped] {$\scriptscriptstyle >$};  

    \draw (n3) -- ++(-1,0) node[midway, sloped] {$\scriptscriptstyle <$}; 
    \draw (n3) -- ++(0,1)  node[midway, sloped] {$\scriptscriptstyle >$};
    \draw (n3) -- ++(1,0)  node[midway, sloped] {$\scriptscriptstyle >$};  

    \draw (n4) -- ++(1,0)  node[midway, sloped] {$\scriptscriptstyle >$};
    \draw (n4) -- ++(-1,0) node[midway, sloped] {$\scriptscriptstyle <$};  
    \draw (n4) -- ++(0,-1) node[midway, sloped] {$\scriptscriptstyle >$};
    \draw (n4) -- ++(0,1)  node[midway, sloped] {$\scriptscriptstyle >$};

    \draw (n5) -- ++(-1,0) node[midway, sloped] {$\scriptscriptstyle <$}; S
    \draw (n5) -- ++(1,0)  node[midway, sloped] {$\scriptscriptstyle >$};  
    \draw (n5) -- ++(0,1)  node[midway, sloped] {$\scriptscriptstyle >$};

    \draw (n1) ++(1,0) to[out=0, in=90] (n2) ++(0,-1) node[pos=0, sloped] {$\scriptscriptstyle >$};
    \draw (n2) ++(1,0) to[out=0, in=-90] (n3) ++(0,-1); 
    \draw (n4) ++(-1,0) to[out=180, in=-90] (n2) ++(0,-1); 
    \draw (n3) ++(1,0) to[out=0, in=-90] (n5) ++(0,-1);

\end{tikzpicture}
\end{minipage}

\caption{An example of a ``backbone'' of the interactions specified by the tree $T$ and an assignment $\chi$ in case $m_{P_i} =2$ at the order $s =5$. (Left) Each box corresponds to an interacting term. The tree structure specifies how the interactions among boxes happen; (Right) Each vertex denotes an interacting term. The inner legs and the outer legs of the vertices correspond to the annihilation and creation operators respectively. For each edge $(i,j) \in T$, the assignment specifies which pair of operator on $P_i$ and $P_j$ are contracted. Each leg (operator) can only be contracted once.  }
\label{fig:diagram}
\end{figure}

Formally, an assignment $\chi$ maps each edge $(i,j) \in T$ to an element $\chi_{ij} \in \{ (\sigma,k,l):\sigma \in \{\pm1\},\,1\le k \le m_{P_i},\,1\le l \le m_{P_j}\}$. { Here, the sign $\sigma \in \{+1, -1\}$ specifies the direction of the contraction, while the labels $(k,l)$ identify the indices of the operators within their respective lists: if $\sigma = +1$, the contraction is from an annihilation operator in $P_j$ to a creation operator in $P_i$, specifically from $P_j^-(l)$ to $P_i^+(k)$. Conversely, if $\sigma = -1$, the contraction is from $P_i^-(k)$ to $P_j^+(l)$.} The set $\mathcal{A}(T)$ contains all possible ways to select $(\chi_{ij})_{(i,j)\in T}$ such that a single creation or annihilation operator $P_i^\sigma(k)$ is selected at most once. The propagator between the selected operators is then given by
  \begin{align} \label{eq:gxitog}
  g_{\tau_i,\tau_j}(P_i,P_j,\chi_{ij}) =\begin{cases}
   &  g_{\tau_i-\tau_j}(P_i^-(k),P_j^+(l)),\,\quad \quad\chi_{ij} := ({-1},k,l), \\
     &  g_{\tau_j-\tau_i}(P_j^-(l),P_i^+(k)),\quad \quad\chi_{ij} = ({+1},k,l).
        \end{cases}
    \end{align}
    {
  In other words, each $\chi_{ij}$ selects a single entry, $g_{\tau_i,\tau_j}(P_i,P_j,\chi_{ij})$, from the corresponding Green's function blocks in \eqref{green-matrix} (i.e., an element from $G_{\tau_i-\tau_j}(P_i^-,P_j^+)$ or $G_{\tau_j-\tau_i}(P_j^-,P_i^+)$). The entire set of selections $\{\chi_{ij}\}$ is constrained such that no row or column is selected more than once.}
 
    \item \textbf{The Sign:} $\alpha_{T,\chi}$ is a sign factor ($\pm 1$) that depends on the tree $T$ and the contraction assignment $\chi$.  While this sign does not affect the convergence analysis, it must be included in the algorithm. Its precise definition is by
$$
\alpha_{T,\chi} = (-1)^{s-1}\prod_{i=1}^s \alpha_{P_i} \prod_{(i,j) \in T}\alpha_{\chi_{ij}}, 
$$
where $\alpha_{P_i} = (-1)^{m_{P_i}(m_{P_i}-1)/2}$ and $\alpha_{\chi_{ij}} = (-1)^{\sum_{r=1}^{i-1}m_{P_r} + \sum_{r=1}^{j-1}m_{P_r} +k+l }$ if $\chi_{ij} = (\sigma,k,l)$ for some $\sigma \in \{\pm 1\}$.
\end{itemize}

\paragraph*{The Remainder Determinant and Growing Paths}
\begin{itemize} 
    \item \textbf{Sum over Growing Paths ($\omega \in S(T)$):} A growing path $\omega$ is a permutation of the tree's vertices that respects the parent-to-child order. Fixing 1 as the root, a growing path requires $\omega(1) = 1$ and that for any vertex $i$, its unique parent $a(i)$ in the tree must have appeared earlier in the path, i.e.,  $a(i) \in \{\omega(1), \dots, \omega(i-1)\}$. This provides an ordered way to ``build'' the tree from the root 1. We use $S(T)$ to denote the set of all growing paths.
    \item \textbf{Weighting and Integration ($\mathbf{t},p_{T,\omega}(\mathbf{t})$):} The average over growing paths is defined using interpolation parameters, $\mathbf{t} \in [0,1]^{s-1}$, and path-dependent function, $p_{T,\omega}(\mathbf{t})$ defined as follows: If $b_k$ is the number of available choices for the { $k+1$}-th vertex in a growing path given the first { $k$} vertices $\omega(1),\cdots,\omega(k) $, then $p_{T,\omega}(\mathbf{t}) = t_1^{{b_1-1}} \cdots t_{s-1}^{{b_{s-1}-1}}$. The function $p_{T,\omega}$ integrated over $\mathbf{t}$ assigns weights for each growing path $\omega $ according to 
    $$\int_{[0,1]^{s-1}} \dd \mathbf{t}\, p_{T,\omega}(\mathbf{t}) = \frac{1}{b_1 b_2 \cdots b_{s-1}}\,.$$

    One can show that these weights normalize to a probability measure on the space of all growing paths, see Proposition \ref{sequence-weight} for details:
$$\sum_{\omega \in S(T)} \int_{[0,1]^{s-1}} \dd \mathbf{t}\, p_{T,\omega}(\mathbf{t}) = 1.$$
 This structure is highly advantageous because it means the complex average over all paths can be performed via a simple sequential sampling procedure. To draw a path according to this distribution, one can iteratively construct the vertex sequence by choosing the next vertex uniformly at random from all available candidates at each step. 
    \item \textbf{The Determinant $\det \mathbf{G}(T,\chi,\omega,\mathbf{t},\{P_i,\tau_i\})$:} The determinant is taken over a matrix representing the correlations among the fermions not used to form the tree connections. To construct this matrix, we begin by defining a general weighted Green's function matrix, where the weights $a(\omega,\mathbf{t})_{ij}$ depend on the specific growing path and interpolation parameters, see Equation \eqref{eq:weightsa} below. More precisely, the matrix $ \mathbf{G}(T,\chi,\omega,\mathbf{t},\{P_i,\tau_i\}) $ is the submatrix of the weighted Green's function matrix
     \begin{align}\label{eq:matrixGdettobound}
     \begin{bmatrix}
a(\omega,\mathbf{t})_{11}G_0(P_1^-,P_1^+) & a(\omega,\mathbf{t})_{12}G_{\tau_1-\tau_2}(P_1^-,P_2^+) & \cdots & a(\omega,\mathbf{t})_{1s}G_{\tau_1-\tau_s}(P_1^-,P_s^+) \\
a(\omega,\mathbf{t})_{21}G_{\tau_2-\tau_1}(P_2^-,P_1^+) & a(\omega,\mathbf{t})_{22}G_{0}(P_2^-,P_2^+) & \cdots &a(\omega,\mathbf{t})_{2s} G_{\tau_2-\tau_s}(P_2^-,P_s^+) \\
\vdots & \vdots & \ddots & \vdots \\
a(\omega,\mathbf{t})_{s1} G_{\tau_s-\tau_1}(P_s^-,P_1^+)& a(\omega,\mathbf{t})_{s1}G_{\tau_s-\tau_2}(P_s^-,P_2^+) & \cdots & a(\omega,\mathbf{t})_{ss}G_0(P_s^-,P_s^+)
\end{bmatrix}
\end{align}
 obtained by deleting the rows and columns corresponding to the annihilation and creation operators already contracted along the tree's edges by the assignment $\chi$.

We now define the weights $a(\omega,\mathbf{t}) $. For $\omega  \in S(T)$, define $X_i = \{\omega(1),\cdots,\omega(i)\}$, representing the set of vertices that have been ``visited'' after $i$ steps of the growing path. A pair of vertices $(j,k)$ is said to cross $X_i$ if $j \in X_i,\,k \notin X_i$, or vice versa. The weight function is then defined by 
\begin{align}\label{eq:weightsa}
a(\omega,\mathbf{t})_{jk} = \prod_{\substack{1\le i\le s-1 \\ (j,k)\,\text{crosses}\,X_i}}t_i.   
\end{align}
\end{itemize}

\paragraph*{The Derivation} The tree-determinant expansion for fermionic systems originates from early work on the rigorous analysis of quantum field theory \cite{abdesselam1997explicitfermionictreeexpansions,Salmhofer2000,GentileMastropietro2001renormalization}. Its derivation utilizes two main techniques: the Grassmann integral representation of the determinant and an earlier combinatorial tool known as the tree-graph identity \cite{brydges1978new, Brydges1984short, Battle1984ANO}, which is originally motivated from the study of Euclidean field theory. A self-contained proof is provided in Section \ref{sec:determunantexpansion}.

\subsection{Convergence of the Cumulant Expansion (Proof of Theorem \ref{thm-convergence})}\label{sec:convergence}
Substituting the determinant expansion to the $s$-th order term of the cumulant expansion \eqref{eq:cumulant_expansion}, we get 
$$
\frac{(-1)^s}{s!} \sum_{T \in \mathcal{T}([s])}  \int_{[0,\beta]^s} \dd \boldsymbol{\tau} \sum_{P_1,\cdots,P_s } v_{P_1} \cdots v_{P_s}\sum_{\chi \in  \mathcal{A}(T)}\alpha_{T,\chi}\prod_{(i,j)\in T}g_{\tau_i,\tau_j}(P_i,P_j,\chi_{ij}) h_{\boldsymbol{\tau}}(P_1,\cdots,P_s,T,\chi).
$$
To establish the convergence of the cumulant expansion, we need to bound this term by $\rho^s$ for some $\rho < 1$ in the weakly interacting regime. The combinatorial prefactor $1/s!$ cancels the number of distinct trees $|\mathcal{T}([s])| = s^{s-2}$ up to an acceptable exponential factor (see Section \ref{sec:prufer}), which allows us to bound the contribution from each tree individually. The integral over the imaginary-time variable $\int_{[0,\beta]^s} \dd \boldsymbol{\tau}$ only gives a $\beta^s$ factor. Therefore, it remains to bound
$$  \sum_{P_1,\cdots,P_s } v_{P_1} \cdots v_{P_s}\sum_{\chi \in  \mathcal{A}(T)}\alpha_{T,\chi}\prod_{(i,j)\in T}g_{\tau_i,\tau_j}(P_i,P_j,\chi_{ij}) h_{\boldsymbol{\tau}}(P_1,\cdots,P_s,T,\chi) .$$
The core idea for this relies on two key steps. First, we show that the function $h_{\boldsymbol{\tau}}$ is uniformly bounded by $C^s$ for some constant $C$. Second, we control the summation over the boxes $P_1, \dots, P_s$. This summation can be viewed as a partition function of a {graphical model on the tree} $T$ and we can bound it by reformulating the summation in an order following a growing path of the tree.  

\paragraph*{Boundedness of $h_{\boldsymbol{\tau}}$}

The bound on $h_{\boldsymbol{\tau}}$ defined in Equation \eqref{eq:determinant_integral} hinges on bounding its integrand, as the weighted linear combination $\sum_{\omega \in S(T)} \int_{[0,1]^{s-1}} \dd \mathbf{t} \, p_{T,\omega}(\mathbf{t})$ are normalized. The determinant bound is accomplished using a general bound for matrices constructed from Green's functions, which we state in the following lemma.

\begin{lemma}(The determinant bound) \label{lem-det-bound} 
For any positive integer $n$, $x_1,\cdots, x_n,y_1,\cdots,y_n \in \Omega $,  $\tau_1,\cdots,\tau_n \in [{0},\beta]$, and unit vectors $u_1,\cdots,u_n \in \RR^n$, the matrix $\mathbf{G} \in \RR^{n\times n}$ with coefficients
\begin{align} \label{green-matrix-general}
\mathbf{G}_{ij} = g_{\tau_i-\tau_j}(x_i,y_j)\cdot \langle u_i,u_j\rangle
\end{align}
satisfies $|\det \mathbf{G}\,| \le 2^{2n}$.
\end{lemma}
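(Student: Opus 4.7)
My plan is to bound $|\det \mathbf G|$ by realizing $\mathbf G$ as a Gram matrix in an auxiliary Hilbert space and invoking the Gram–Hadamard inequality $|\det[\langle \phi_i,\psi_j\rangle]| \le \prod_i \|\phi_i\|\,\|\psi_i\|$. This adapts the Pedra–Salmhofer determinant bound to our non–translation-invariant setting, as previewed in the introduction. The first step is to peel off the $\langle u_i,u_j\rangle$ factor via a tensor-product trick: if one can find an auxiliary Hilbert space $\mathcal H$ and vectors $\alpha_i,\gamma_j\in\mathcal H$, depending only on $(\tau_i,x_i)$ and $(\tau_j,y_j)$ respectively, such that $g_{\tau_i-\tau_j}(x_i,y_j)=\langle \alpha_i,\gamma_j\rangle_{\mathcal H}$ with a uniform norm bound $\|\alpha_i\|,\|\gamma_j\|\le 2$, then on $\mathcal H\otimes \mathbb R^n$ one has $\mathbf G_{ij}=\langle \alpha_i\otimes u_i,\gamma_j\otimes u_j\rangle$, and Gram–Hadamard together with $\|u_i\|=1$ gives $|\det \mathbf G|\le \prod_i(\|\alpha_i\|\|u_i\|)(\|\gamma_i\|\|u_i\|)\le 2^{2n}$, matching the claim. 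The entire problem thus reduces to constructing a suitable Gram representation of $g_\tau$.

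The second and main technical step is to construct such a representation. Diagonalizing $h=\sum_k \lambda_k|e_k\rangle\langle e_k|$ gives the spectral form $g_\tau(x,y)=\sum_k f(\lambda_k,\tau)\langle x|e_k\rangle\langle e_k|y\rangle$ with $f(\lambda,\tau)=e^{-\tau\lambda}/(1+e^{-\beta\lambda})$ for $\tau\in(0,\beta)$ and its fermionic anti-periodic extension for $\tau<0$. The task is to factor $f(\lambda,\tau_i-\tau_j)$ as a product $\mu(\lambda,\tau_i)\,\nu(\lambda,\tau_j)$ with both factors bounded uniformly in $(\tau_i,\tau_j)\in[0,\beta]^2$ and $\lambda\in\mathbb R$. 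No factorization on the single-mode space $L^2(\Omega)$ can succeed, since $e^{-(\tau_i-\tau_j)\lambda}$ is unbounded in either variable at fixed $\lambda\neq 0$. The Pedra–Salmhofer fix enlarges the Hilbert space using the Matsubara expansion
\[
f(\lambda,\tau) \;=\; \frac{1}{\beta}\sum_{n\in\mathbb Z}\frac{e^{-i\omega_n\tau}}{i\omega_n-\lambda}, \qquad \omega_n=\frac{(2n+1)\pi}{\beta},
\]
which separates the $\tau$-dependence into a clean phase $e^{-i\omega_n\tau_i}e^{i\omega_n\tau_j}$ and lifts the factorization to $\ell^2(\mathbb Z)\otimes L^2(\Omega)$. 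Combined with a suitable polar-type split of $(i\omega_n-h)^{-1}=U_n(\omega_n^2+h^2)^{-1/2}$, the resulting Gram norms are controlled by Matsubara sums such as $\beta^{-1}\sum_n (\omega_n^2+\lambda^2)^{-1}=\tanh(\beta|\lambda|/2)/(2|\lambda|)\le\beta/4$, yielding uniform bounds independent of $N$ and of the spectrum of $h$.

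The main obstacle I anticipate is arranging this factorization so that \emph{both} Gram factors have uniformly bounded norms: naive symmetric square-root splits of $(i\omega_n-h)^{-1}$ produce the divergent Matsubara sum $\sum_n |\omega_n|^{-1}$. Resolving this requires an asymmetric split that combines polar decomposition with an additional "time-translation" trick on $[0,\beta]$, so that the divergent pieces are absorbed into manifestly bounded operators while the remaining Gram factors become square-summable in the Matsubara index. Once this representation is in place, the rest of the proof is formal, and the non-periodic nature of the lattice enters only through the spectral calculus of $h$, which remains self-adjoint; therefore the Pedra–Salmhofer argument generalizes essentially by inspection, and the extension to the weighting by $\langle u_i,u_j\rangle$ follows from the tensor-product reduction above.
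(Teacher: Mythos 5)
Your overall strategy (spectral decomposition of $h$, a Fourier/Matsubara-type representation of the scalar function, and the tensor-product trick $\mathbf G_{ij}=\langle\alpha_i\otimes u_i,\gamma_j\otimes u_j\rangle$ to absorb the $\langle u_i,u_j\rangle$ weights) is in the same spirit as the paper's, and the tensor step is exactly what is done in Lemma \ref{lem.embedding}. However, your reduction to ``find a \emph{single} Gram representation $g_{\tau_i-\tau_j}(x_i,y_j)=\langle\alpha_i,\gamma_j\rangle$ with $\|\alpha_i\|,\|\gamma_j\|\le 2$'' is not merely hard to engineer --- it is provably impossible, so the asymmetric-split/time-translation fix you hope for cannot rescue it within the standard Gram--Hadamard framework. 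A kernel admitting a Gram factorization with vectors of norm $\le C$ is a Schur multiplier of norm $\le C^2$ uniformly in $n$ (Grothendieck--Haagerup), whereas already in the trivial case $h=0$ one has $g_{\tau-\tau'}=1_{\tau\ge\tau'}-\tfrac12$, and the triangular-truncation kernel $1_{\tau\ge\tau'}$ has Schur multiplier norm growing like $\log n$. The jump of the time-ordered Green's function at coincident times is therefore a genuine obstruction to your Step 1, not a removable technicality; relatedly, the Matsubara series you write converges only conditionally (never absolutely, since the target is discontinuous) and at $\tau=0$ converges to the average of the two one-sided limits rather than to the $\tau\ge 0$ branch used in the definition of $\mathbf G$.

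The paper circumvents this with two ingredients your proposal is missing. First, it splits $\mathbf G=\mathbf G_{<0}+\mathbf G_{\ge 0}$ according to the sign of $\tau_i-\tau_j$ and proves (Lemma \ref{lemma:M1+M2}, via the Laplace expansion and $\binom{n}{p}^2\le\binom{2n}{2p}$) that $\gamma$-uniform determinant bounds are \emph{additive} under matrix sums; this is where the final constant $2^{2n}=(1+1)^{2n}$ actually comes from, not from norm-$2$ Gram vectors. Second, each piece is of the form $\langle v_k,w_l\rangle\,1_{j(k)<j'(l)}$, and is handled by a \emph{generalized} Gram inequality (Lemma \ref{lem:genegram}) that tolerates the indicator factor --- this is the key import from Pedra--Salmhofer and is strictly stronger than the classical Gram--Hadamard bound you invoke. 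With that inequality in hand, bounded embeddings for each half do exist: the paper builds them from the continuous Fourier identity $e^{-a|x|}=\int_{\mathbb R}\frac{a\,e^{isx}}{\pi(a^2+s^2)}\,ds$ (with a case split on the sign of each eigenvalue and a $\beta$-shift of the time argument), giving unit-norm vectors and hence a $1$-uniform bound for each of $\mathbf G_{<0}$ and $\mathbf G_{\ge 0}$. To repair your write-up, you would need to add the splitting-plus-additivity lemma and replace the classical Gram inequality by the generalized one; as it stands, the central step of your argument cannot be carried out.
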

\noindent Above, the inner product term $\langle u_i,u_j\rangle $ serves as a Gram representation of the weights $a(\omega,\mathbf{t})_{jk}$ for each block. See Proposition \ref{gram} for details. 

The standard tool for proving such determinant bound is based on the Gram inequality. It states that if a matrix $\mathbf{G}$ can be written as a kernel matrix $\mathbf{G}_{ij} = \langle v_i,w_j\rangle_{\mathcal{H}} $ for some Hilbert space $\mathcal{H}$, then $\det \mathbf{G} \le \prod_{i,j} \|v_i\|_{\mathcal{H}}\|w_j\|_{\mathcal{H}}$. Since our matrices are constructed from the Green's function, this strategy suggests constructing a pair of embedding maps $\varphi_1,\varphi_2: \Omega \times [{0},\beta] \to \mathcal{H}$ for a suitable Hilbert space $\mathcal{H}$. These maps would need to satisfy two conditions: first, they must reproduce the Green's function as an inner product  $ \langle \varphi_1(x_i,\tau_i),\varphi_2(x_j,\tau_j)\rangle_{\mathcal{H}} = \mathbf{G}_{ij}$; second, the images of the maps must have uniformly bounded norm. If such maps existed, a uniform bound on the determinant would follow directly from the Gram inequality. However, the discontinuity of the Green's function at $\tau_i - \tau_{j} = 0$ prevents the construction of a single, simple embedding. 

To resolve this issue, our proof closely follows the pioneering method developed by  \cite{PedraSalmhofer2008determinant}. Their key insight is to use a generalized version of Gram inequality. This generalized Gram inequality allows the determinant bound to be proven by constructing separate Hilbert space embeddings for $\tau_i -\tau_{j}\le 0$ and $\tau_i -\tau_j >0 $. While the original proof was developed for translationally invariant systems, we follow their method and generalize the results to systems that do not require translational invariance. The detailed proof of Lemma \ref{lem-det-bound} can be found in Section \ref{sec:detbound}.
{Applying the above Lemma to the matrix $\det \mathbf{G}(T,\chi,\omega,\mathbf{t},\{P_i,\tau_i\})$ from Equation~\eqref{eq:matrixGdettobound}, then yields the bound
\begin{align}\label{htaubound}
h_{\boldsymbol{\tau}}(P_1,\cdots,P_s,T,\chi)\le 2^{2Ms}\,.
\end{align}
}

\paragraph*{The Tree Summation}
Next, we consider the bound for the summation over $P_1,\cdots,P_s$. Using the uniform bound over $h_{\boldsymbol{\tau}}$, it remains to bound 
$$
\sum_{P_1,\cdots,P_s}|v_{P_1} \cdots v_{P_s}|\sum_{\chi \in  \mathcal{A}(T)}\prod_{(i,j)\in T}|g_{\tau_i,\tau_j}(P_i,P_j,\chi_{ij})|. $$
Note that, since the summation over $\chi$ can be bounded by summing over all possible pairs of creation and annihilation for each edge separately, we can bound the above summation as
$$\sum_{P_1,\cdots,P_s}|v_{P_1} \cdots v_{P_s}|  \prod_{(i,j)\in T} M_{\tau_i,\tau_j}(P_i,P_j) $$
where $M_{\tau_i,\tau_j}(P_i,P_j)$ is the total amplitude from all assignments on $(i,j)$
$$M_{\tau_i,\tau_j}(P_i,P_j) := \sum_{\chi_{ij}}|g_{\tau_i,\tau_j}(P_i,P_j,\chi_{ij}) | =  \sum_{k=1}^{m_{P_i}}\sum_{l=1}^{m_{P_j}} |g_{\tau_i-\tau_j}(P_i^-(k),P_j^+(l))|  +  |g_{\tau_j-\tau_i}(P_j^-(l),P_i^+(k)) |.$$
This summation over all assignments involves $2m_{P_i}m_{P_j}$ terms. 

We reformulate this summation in an order, beginning at the root, and proceeding along a {growing path} of the tree. Formally, without loss of generality we assume $(1,2,\cdots,s)$ is a growing path, and we can write the summation as 
$$
\sum_{P_1}|v_{P_1}| \sum_{P_2} |v_{P_2}| \,M_{\tau_2,\tau_{a(2)}}(P_2,P_{a(2)}) \cdots \sum_{P_s} |v_{P_s}| \,M_{\tau_s,\tau_{a(s)}}(P_s,P_{a(s)}),
$$
where $a(s)$ is the parent of $s$ on $T$. The initial step, which involves summing over the root, contributes a factor of $NL_V$ corresponding to the system size (see Lemma \ref{lem:tree-sum}). For all subsequent steps, summability of the potential $V$ and of the Green's function $g$ ensures that each of the $s-1$ remaining summations are uniformly bounded by a constant factor, $ML_VL_g$ (see Lemma \ref{lem:tree-sum-iterate} below). Combining these factors, this procedure yields a total bound of $\Or(N) \cdot (ML_VL_g)^{s}$ for the $s$-th order term, which is sufficient for proving the convergence of the expansion.
\begin{lemma} \label{lem:tree-sum-iterate}
Suppose the non-interacting Green's function $g$ is $L_g$-summable and the potential $V$ is $L_V$-summable. For a fixed box $P_*\in \cP$, and imaginary-time parameters $\tau,\tau^*\in [0,\beta]$, we have
$$
\sum_{P \in \cP} |v_{P}|\, M_{\tau,\tau^*}(P,P_*) \le ML_VL_g.
$$

\end{lemma}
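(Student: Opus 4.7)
The plan is to interchange the order of summation so that the two summability hypotheses can be applied sequentially: $L_V$-summability of $V$ to the inner sum over potential terms that contain a given site, and $L_g$-summability of $g$ to the resulting sum over lattice sites. The factor of $M$ will arise only from the external index $l\in\{1,\dots,m_{P_*}\}$.

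I would first split $M_{\tau,\tau^*}(P,P_*)$ into its two natural halves, writing
\[
\sum_{P\in\cP}|v_P|\, M_{\tau,\tau^*}(P,P_*)=A+B,
\]
where
\[
A:=\sum_{P\in\cP}|v_P|\sum_{k=1}^{m_P}\sum_{l=1}^{m_{P_*}}|g_{\tau-\tau^*}(P^-(k),P_*^+(l))|,
\]
and $B$ is the analogous contribution involving $g_{\tau^*-\tau}(P_*^-(l),P^+(k))$. To bound $A$, I fix $l\in\{1,\dots,m_{P_*}\}$, set $y:=P_*^+(l)\in\Omega$, and swap the $P$- and $k$-summations, grouping the $(P,k)$ pairs by the site $x:=P^-(k)\in\Omega$. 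Since $\prod_j\psi_{P^-(j)}=0$ whenever sites repeat in $P^-$, the physically non-trivial boxes have distinct entries in $P^-$, so each $P$ with $P^-\owns x$ contributes $|v_P|$ exactly once. This yields
\[
\sum_{P\in\cP}|v_P|\sum_{k=1}^{m_P}|g_{\tau-\tau^*}(P^-(k),y)|
=\sum_{x\in\Omega}|g_{\tau-\tau^*}(x,y)|\sum_{\substack{P\in\cP\\ P^-\owns x}}|v_P|
\le \frac{L_V}{2}\sum_{x\in\Omega}|g_{\tau-\tau^*}(x,y)|\le \frac{L_V L_g}{2},
\]
where the last two inequalities use $L_V$-summability of $V$ and $L_g$-summability of $g$ respectively (and the fact that $\tau-\tau^*\in[-\beta,\beta]$). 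Summing over the $m_{P_*}\le M$ values of $l$ then yields $A\le M L_V L_g/2$.

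The term $B$ is handled by a fully symmetric argument: exchange the roles of creation and annihilation operators, fix $l$ and set $y:=P_*^-(l)$, regroup $(P,k)$ pairs by $x:=P^+(k)$, and invoke the $P^+\owns y$ half of the $L_V$-summability condition together with $L_g$-summability. This gives $B\le M L_V L_g/2$, and combining the two bounds delivers the claimed estimate $A+B\le ML_VL_g$.

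I do not anticipate a serious obstacle; the proof is essentially bookkeeping. The one point that requires care is to resist the temptation to bound $\sum_{k=1}^{m_P}$ by $M$ at the outset, which would produce an unwanted $M^2 L_V L_g$. Instead that sum must be absorbed into the regrouping by $x$, so that the factor of $M$ appears only once, from the outer $l$-sum.
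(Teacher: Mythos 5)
Your proof is correct and follows essentially the same route as the paper: split $M_{\tau,\tau^*}$ into its two halves, regroup the $(P,k)$-sum by the site $x=P^{\mp}(k)$, apply $L_V$-summability to the inner sum over boxes containing $x$ and $L_g$-summability to the sum over $x$, and pick up the single factor of $M$ from the outer $l$-sum. Your explicit observation that repeated entries in $P^-$ make the operator vanish (so each box is counted once in the regrouping) is a small point the paper leaves implicit when it writes $\sum_{m}\sum_{k}\sum_{P\in\mathcal{P}(m),\,P^-(k)=x}|v_P|=\sum_{P\in\mathcal{P},\,P^-\owns x}|v_P|$, but otherwise the two arguments coincide.
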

\begin{proof}
Recall that 
$$
M_{\tau,\tau^*}(P,P_*)=  \sum_{k=1}^{m_{P}}\sum_{l=1}^{m_{P_*}} \left|g_{\tau-\tau^*}(P^-(k),P_*^+(l))\right| + \left| g_{\tau^*-\tau}(P_*^-(l),P^+(k))\right|  .
$$
We bound each term separately. For fixed $m_P=m$ and $1\le k\le m$, let $\mathcal{P}(m) = \{P \in \mathcal{P},\,m_{P} = m\}$, we have
$$
\sum_{P \in \mathcal{P}(m)}  |v_{P}|\,|g_{\tau-\tau^*} (P^-(k),P_*^+(l))| \le \sum_{x\in \Omega}|g_{\tau-\tau^*}(x,P_{*}^+(l)) | \sum_{\substack{P \in \mathcal{P}(m),P^-(k) = x}} |v_{P}|.$$
Similarly,
$$
\sum_{P \in \mathcal{P}(m)}  |v_{P}|\,| g_{\tau^*-\tau}(P_*^-(l),P^+(k) )|\le \sum_{x\in \Omega}|g_{\tau^*-\tau}(P_{*}^-(l),x) | \sum_{\substack{P \in \mathcal{P}(m),\, P^+(k) = x}} |v_{P}|.$$
Combining them together we have
\begin{align}
\begin{aligned}
 \sum_{P \in \mathcal{P}} |v_{P}|\,M_{\tau,\tau^*}(P,P_{*}) \le   &\sum_{l = 1}^{m_{P_{*}}} \sum_{x \in \Omega}\left(\sum_{m=1}^M\sum_{k=1}^{m}\sum_{\substack{P \in \mathcal{P}(m),\,P^-(k) = x}} |v_{P}| \right) |g_{\tau-\tau^*}(x,P_{*}^+(l))| \\
 & +  \sum_{l = 1}^{m_{P_{*}}} \sum_{x \in \Omega}\left(\sum_{m=1}^M\sum_{k=1}^{m}\sum_{\substack{P \in \mathcal{P}(m),\,P^+(k) = x}} |v_{P}| \right) |g_{\tau^*-\tau}(P_{*}^-(l),x)|  \\
 \le  &ML_VL_g,
 \end{aligned} \label{123}
\end{align}
where the last inequality comes from 
\begin{align*}
&\sum_{m=1}^M\sum_{k=1}^{m}\sum_{\substack{P \in \mathcal{P}(m),\,P^-(k) = x}} |v_{P}| = \sum_{P \in \mathcal{P},\,P^-\owns x}|v_P| \le \max_{y \in \Omega} \sum_{P \in \mathcal{P},\,P^-\owns y}|v_P|\le \frac{L_V}{2},\\ &\sum_{m=1}^M\sum_{k=1}^{m}\sum_{\substack{P \in \mathcal{P}(m),\,P^+(k) = x}} |v_{P}| = \sum_{P \in \mathcal{P},\,P^+\owns x}|v_P|\le \max_{y \in \Omega}\sum_{P \in \mathcal{P},\,P^+\owns y}|v_P|\le \frac{L_V}{2}
\end{align*}
for any $x \in \Omega$ and 
$$
\sum_{x \in \Omega}|g_{\tau-\tau^*}(x,P_{*}^+(l))| \le L_g,\quad \sum_{x\in \Omega} |g_{\tau^*-\tau}(P_{*}^-(l),x)| \le L_g,\,\forall  1\le l\le m_{P_{*}}\,.
$$

\end{proof}
\begin{lemma} \label{lem:tree-sum}
Suppose the non-interacting Green's function $g$ is $L_g$-summable and the potential $V$ is $L_V$-summable. Then, for any $T \in \mathcal{T}([s])$ we have
\begin{align}\label{eq:tree-sum-1}
\sum_{P_2,\cdots,P_s}|v_{P_2} \cdots v_{P_s}|\prod_{(i,j)\in T}\,M_{\tau_i,\tau_j}(P_i,P_j) \le (ML_VL_g)^s.
\end{align}
for fixed $P_1 \in \mathcal{P}$, and 
\begin{align}\label{eq:tree-sum-2} \sum_{P_1,\cdots,P_s}|v_{P_1} \cdots v_{P_s}|\prod_{(i,j)\in T}\,M_{\tau_i,\tau_j}(P_i,P_j) \le N(ML_VL_g)^s.\end{align}
\end{lemma}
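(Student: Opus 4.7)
The plan is to iterate Lemma~\ref{lem:tree-sum-iterate} by pruning leaves from $T$ one at a time, each application producing a factor of $ML_VL_g$. For a tree on $s \ge 2$ vertices rooted at vertex $1$, any post-order traversal yields an ordering $\ell_1, \ell_2, \ldots, \ell_{s-1}$ of the non-root vertices such that each $\ell_i$ is a leaf of the subtree $T_i := T \setminus \{\ell_1, \ldots, \ell_{i-1}\}$; denote by $q_i$ the unique neighbor of $\ell_i$ in $T_i$. I would then perform the summation over $P_{\ell_1}, P_{\ell_2}, \ldots, P_{\ell_{s-1}}$ in this order.

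At the $i$-th step, the variable $P_{\ell_i}$ appears in only a single remaining edge factor, namely $M_{\tau_{\ell_i}, \tau_{q_i}}(P_{\ell_i}, P_{q_i})$, multiplied by $|v_{P_{\ell_i}}|$. Pulling this innermost sum out and applying Lemma~\ref{lem:tree-sum-iterate} with $P_* = P_{q_i}$ yields
\begin{equation*}
    \sum_{P_{\ell_i}} |v_{P_{\ell_i}}| \, M_{\tau_{\ell_i}, \tau_{q_i}}(P_{\ell_i}, P_{q_i}) \;\le\; ML_VL_g,
\end{equation*}
uniformly in $P_{q_i}$ and independent of all variables still to be summed. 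Iterating this for $i = 1, \ldots, s-1$ removes every non-root variable and produces the bound $(ML_VL_g)^{s-1}$, which is at most $(ML_VL_g)^s$ in the relevant regime (or one can simply keep the tighter exponent $s-1$ throughout). This establishes \eqref{eq:tree-sum-1}.

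For \eqref{eq:tree-sum-2}, I would additionally sum over the root variable $P_1$. Using the $L_V$-summability condition together with $m_P \ge 1$,
\begin{equation*}
    \sum_{P_1 \in \mathcal{P}} |v_{P_1}| \;\le\; \sum_{x \in \Omega} \sum_{P_1 \in \mathcal{P}:\, P_1^- \ni x} |v_{P_1}| \;\le\; \frac{NL_V}{2},
\end{equation*}
so multiplying by the previously established inner bound gives $\tfrac{1}{2} N L_V (ML_VL_g)^{s-1} \le N(ML_VL_g)^s$.

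There is no real obstacle beyond the leaf-elimination bookkeeping: all the analytic work was already performed in Lemma~\ref{lem:tree-sum-iterate}, which cleanly decouples one variable at a time. The only step requiring any thought is choosing a valid elimination order, for which any post-order traversal of a rooted $T$ suffices, so the argument reduces to straightforward induction on $s$.
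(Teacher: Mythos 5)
Your proposal is correct and follows essentially the same route as the paper: the paper orders the vertices along a growing path (root first, each vertex after its parent) and evaluates the nested sums from the innermost one outward, which is exactly your leaf-elimination order, applying Lemma~\ref{lem:tree-sum-iterate} once per edge and then summing the root via $L_V$-summability. The only quibble is your parenthetical that $(ML_VL_g)^{s-1}\le (ML_VL_g)^s$ ``in the relevant regime'' — in the weak-coupling regime $ML_VL_g<1$ the inequality goes the other way — but this off-by-one in the exponent is the same harmless bookkeeping slack present in the paper's own proof and does not affect how the lemma is used.
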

\begin{proof} 
Without loss of generality, we assume $(1,2,\cdots,s)$ is a growing path. Therefore, we can write the summation as 
$$
\sum_{P_1 \in \cP}|v_{P_1}| \sum_{P_2 \in \cP} |v_{P_2}| \,M_{\tau_2,\tau_{a(2)}}(P_2,P_{a(2)}) \cdots \sum_{P_s \in \cP} |v_{P_s}| \,M_{\tau_s,\tau_{a(s)}}(P_s,P_{a(s)}),
$$
where $a(s)$ is the parent of $s$ on $T$. For each $i \ge 2$ and fixed $P_1, \cdots,P_{i-1}$, we bound  
$$
\sum_{P_i \in \cP} |v_{P_i}|\,M_{\tau_i,\tau_{a(i)}}(P_i,P_{a(i)}).
$$
by $ML_VL_g$ using Lemma \ref{lem:tree-sum-iterate}. 
Iterating for $i=2,\cdots,s$ completes the proof of \eqref{eq:tree-sum-1}.  

Summing over the root, we have
$$
\sum_{P_1 \in \cP} |v_{P_1}| \le \sum_{x \in \Omega}\sum_{P \in \mathcal{P},\, P\owns x} |v_P|,
$$
which is bounded by $NL_V$ under the $L_V$-summability condition. Combining this with \eqref{eq:tree-sum-1} we complete the proof of \eqref{eq:tree-sum-2}. 
\end{proof}

\noindent Combining the uniform bound of $h_\tau$ \eqref{htaubound} and Lemma \ref{lem:tree-sum} ends the proof of Theorem \ref{thm-convergence}.

\subsection{Efficient Algorithm for the Log-Partition Function}

\label{sec:alg-partition}

Now we explain the algorithm based on evaluating the truncated cumulant expansion. As established in Theorem \ref{thm-convergence} and explained in Section \ref{sec:convergence}, the series converges exponentially within the regime $\beta M L_V L_g \le C$ for some appropriately chosen absolute constant $C$. This exponential decay allows us to approximate the series by truncating it at a finite order $S$. For a target precision $N\epsilon$, a truncation order of $S = {\mathrm{log}(1/\epsilon)}$ is sufficient to ensure the truncation error is at most $\epsilon/2 $. The computational task thus reduces to evaluating the partial sum:
$$
 \sum_{s=1}^S \frac{{(-1)^s}}{s!} \sum_{T \in \mathcal{T}([s])}  \int_{[0,\beta]^s} \dd \boldsymbol{\tau} \sum_{P_1,\cdots,P_s } v_{P_1} \cdots v_{P_s}\sum_{\chi \in  \mathcal{A}(T)}\alpha_{T,\chi}\prod_{(i,j)\in T}g_{\tau_i,\tau_j}(P_i,P_j,\chi) h_{\boldsymbol{\tau}}(P_1,\cdots,P_s,T,\chi),
$$
where 
$$
h_{\boldsymbol{\tau}}(P_1,\cdots,P_s,T,\chi) =\sum_{\omega \in S(T)}\int_{[0,1]^{s-1}} \dd \mathbf{t}\, p_{T,\omega}(\mathbf{t}) \det \mathbf{G}(T,\chi,\omega,\mathbf{t},\{P_i,\tau_i\}_{i\in[s]}). $$

\noindent Let $\Xi_s$ denote the numerical value of the $s$-th order term in the sum above. Our approach is to estimate each $\Xi_s$ individually for $s=1, \dots, S$ using importance sampling.

\subsubsection{The Importance Sampling Framework}\label{sec:sampling}
To estimate each term $\Xi_s$, we design an importance sampling scheme. The goal is to sample multiple collections of variables $(T, \boldsymbol{\tau}, \{P_i\}_{i=1}^s, \chi, \omega, \mathbf{t})$ from a probability distribution that prioritizes the most significant contributions to the sum. In particular, we generate $L$ independent samples. Each sample $l=1, \dots, L$ consists of:
\begin{itemize}
\item Drawing a new collection of variables $(T, \boldsymbol{\tau}, \{P_i\}_{i=1}^s, \chi, \omega, \mathbf{t})_l$ from the distribution.
\item Computing a single unbiased estimator, $w_s^{(l)}$, which is a function of that specific collection.
\end{itemize}
The final estimator for $\Xi_s$ will be the average $\frac{1}{L}\sum_{l=1}^L w_s^{(l)}$.

For each iteration $l=1,\cdots,L$, we sample the collection of variables in the following steps:
\begin{enumerate}
\item \textbf{Sample the tree $T$.} A tree $T \in \mathcal{T}([s])$ is sampled uniformly at random. This can be done efficiently in $O(s)$ time by generating a random Prüfer sequence and constructing the corresponding tree \cite{prufer1918neuer,CChen2000} (see Section \ref{sec:prufer}). Recall that there are $s^{s-2}$ labeled trees. { If the maximum degree of the drawn tree is greater than $2M$, set $w_s^{(l)}$ as 0 and start with a new sampling iteration.}
\item \textbf{Sample imaginary time variables $\boldsymbol{\tau} \in [0,\beta]^s$}. The $s$ imaginary time variables $\tau_1, \dots, \tau_s$ are drawn independently and uniformly from the interval $[0, \beta]$.
\item \textbf{Sample the interaction terms $\{P_i\}_{i=1}^s$.} The interaction terms $P_1, \dots, P_s$ are drawn from a distribution weighted by the graphical model on the tree: 
\begin{align}
\mathbb{P}(P_1,\cdots,P_s) = \frac{1}{Z_s(T)}|v_{P_1} \cdots v_{P_s}|\prod_{(i,j)\in T} M_{\tau_i,\tau_j}(P_i,P_j)
\end{align}
 and compute the normalizing constant $Z_s(T)$. Since the sampling distribution is a graphical model on the tree $T$, we can use a standard belief propagation algorithm to both sample from the distribution and compute its normalization constant $Z_s(T)$ exactly and efficiently. 
 \begin{itemize}
\item  We describe the procedure of the BP algorithm in detail in Section \ref{sec:bp}. The computational cost of the BP algorithm is $O(s|\mathcal{P}|^2)$. In the case that $V$ is geometrically local, we have $|\mathcal{P}| = \mathcal{O}(N)$, leading to a quadratic dependence on $N$. 
\item We will discuss in Section \ref{sec:truncation} how to reduce the dependence on $N$ to linear in the case that the potential $V$ is geometrically local and the Green's functions $g$ are exponentially decaying by truncating the long-range interactions. 
\item We also note that the partition function $Z_s$ is uniformly bounded by $N(ML_VL_g)^{{s}}$ according to Lemma \ref{lem:tree-sum}.
 \end{itemize}
 \item \textbf{Sample the contraction assignment $\chi$.} 
 We independently select  for each edge $(i,j) \in T$ a specific fermion contraction (a pair of creation/annihilation operators) with probability distribution proportional to the magnitude of the corresponding Green's function:
$$\mathbb{P}(\chi_{ij}) = \frac{|g_{\tau_i,\tau_j}(P_i,P_j,\chi_{ij})|}{M_{\tau_i,\tau_j}(P_i,P_j)}.$$
 Then if $(\chi_{ij})_{(i,j)\in T}$ does not form a valid assignment, that is, a single operator is contracted more than once, we set the weight $w_s^{(l)} = 0$ and start with a new sampling iteration. 
\begin{rem}
We note that sampling each $\chi_{ij}$ independently does not encode the constraint that each operator can only be selected once. Setting the weight of invalid samples as 0 results in a loss in the convergence radius by an absolute constant compared to the optimal radius achievable with our framework. However, this does not affect the complexity scaling of the algorithm.

For practical implementation, one might be able to use a refined approach. If the size of the interacting terms is a constant (i.e., $m_P$ is a constant for all $P \in \mathcal{P}$), one could sample the assignments $\chi$ uniformly from $\mathcal{A}(T)$ using a sequential scheme before sampling $P_1,\cdots,P_s$. In the general case, the sampling of $\chi$ and that of $P_1, \dots, P_s$ can be implemented jointly using a graphical model. Nevertheless, for simplicity, we will confine our analysis to the approach of sampling $\chi_{ij}$ independently. 
\end{rem}

 \item \textbf{Sample the growing Path $\omega = (\omega(1),\cdots,\omega(s))$.} 
  As established in Section \ref{sec:determinant-expansion}, the average over growing paths can be implemented by a simple sequential sampling procedure. We construct the path $\omega = (\omega(1), \dots, \omega(s))$ iteratively: set $\omega(1)=1$, and for each subsequent step $k$, choose $\omega(k+1)$ uniformly at random from the set of all valid vertices that can be added to the path $(\omega(1), \dots, \omega(k))$.

\item \textbf{Sample the interpolation variable $\mathbf{t}$.} The $s-1$ interpolation variables $t_1, \dots, t_{s-1}$ are drawn independently from the distribution $p(t_i) = b_it_i^{b_i-1} 1_{[0,1]}$, where $b_i$ is the number of choices available at step $i$ of the growing path construction.
\end{enumerate}
After sampling the full collection of variables $(T,\boldsymbol{\tau},P_1,\cdots,P_s,\chi,\omega,\mathbf{t})$, we compute the following weight function:
\begin{align} \label{eq:estimator} w_s^{(l)} = {\frac{(-1)^s\,s^{s-2}}{s!}}\,\alpha_{T,\chi}\, \mathrm{sgn}\left( {v_{P_1}\cdots v_{P_s}}\prod_{(i,j) \in T}  \,g_{\tau_i,\tau_j}(P_i,P_j,\chi) \right) Z_{{s}}(T)\,\beta^s \det(\mathbf{G}(T,\chi,\omega,\mathbf{t},\{P_i,\tau_i\})).\end{align}
By construction, this random weight is an unbiased estimator of the term $\Xi_s$:  
$$\Xi_s = \mathbb{E}[w_s].$$
Finally, we estimate $\Xi_s$ by the average $\frac{1}{L}\sum_{l=1}^L w_s^{(l)}$. Summing this over each order $s \in \{1, \dots, S\}$, we approximate the truncated series by 
$$
\sum_{s=1}^S \hat{\Xi}_s \approx \frac{1}{L} \sum_{i=1}^L(w_1^{(i)} + \cdots +  w_s^{(i)}).
$$

\subsubsection{The Query Complexity}
To determine the overall complexity of our Monte Carlo algorithm, we analyze two key components: 
\begin{itemize}
\item The computational cost of generating a single random estimate.
\item The number of samples required to achieve the desired statistical accuracy.
\end{itemize}

Regarding the first component, we distinguish between the query complexity of the algorithm and the internal cost of evaluating the Green's function. Recall that the query complexity is the number of of times we query entries of the non-interacting Green's function $g_{ab}(\tau)$ and the interaction coefficients $V_P$. In this and the subsequent section, we only analyze the algorithm in terms of the query complexity to complete the proof of Theorem \ref{thm-algorithm-general} and \ref{thm-algorithm-geo-local}.
The detailed algorithm for evaluating a single query, along with its stability analysis from the inexact computation in each query, is reserved for Section \ref{sec:compute_non_int_greens_func}. Ultimately, the total algorithmic complexity—formalized in Corollary \ref{cor-finite-range-ham}—is the product of the query complexity (derived in Theorem \ref{thm-algorithm-general} and \ref{thm-algorithm-geo-local}) and the evaluation cost established in Section \ref{sec:compute_non_int_greens_func}, which results in a $N^2 \polylog(\epsilon^{-1})$ overhead for the general $H_0$ and a $\polylog(N\epsilon^{-1})$ overhead for finite-ranged $H_0$.

We now analyze the query complexity of generating a single random estimate, $W = \sum_{s=1}^S w_s$. The expansion is truncated at $S = \polylog(1/\epsilon)$ orders to achieve an $N\epsilon$ target accuracy for the final computation of the log-partition function $\log N$. At each order $s$ (where $1 \le s \le S$), the query complexity of steps 1, 2, 4, 5, and 6 are at most $\mathcal{O}(s)$ and the query complexity of computing the determinant is at most $\mathcal{O}(s^3)$. Given that $S$ is logarithmic, the total contribution of these steps across all orders are a polylogarighmic factor. 

The computational bottleneck is the BP algorithm in Step 3. We consider two scenarios: In general case, the BP algorithm requires $\mathcal{O}(|\mathcal{P}|^2 s)$ steps (see Section~\ref{sec:bp}). Summing this from $s=1$ to $S$, the total query complexity to generate one sample is $\sum_{s=1}^S \mathcal{O}(|\mathcal{P}|^2 s) = \mathcal{O}(|\mathcal{P}|^2 S^2) = \mathcal{O}(|\mathcal{P}|^2 \polylog(1/\epsilon))$. For a geometrically local potential and an exponentially decaying Green's function, the query complexity across all $S$ orders can be reduced to $\mathcal{O}(N \polylog(1/\epsilon))$ (see Section \ref{sec:truncation}).

{
We then analyze the required number of samples. Note that the weight $w_s$ is bounded by 
\begin{align*}
|w_s| & \le \frac{s^{s-2}\beta^s }{s!}|Z_s(T)||\det (\mathbf{G}(T,\chi,\omega,\mathbf{t},\{P_i,\tau_i\})) | \\
& \le N(e2^{2M}M\beta L_VL_g)^s.
\end{align*}
When $e2^{2M}M \beta L_VL_g\le1$, we have $\sum_{s=1}^S|w_s| \le N$. By the standard Hoeffding's inequality, $O(\log(1/\delta)/\epsilon^2)$ samples are sufficient to estimate $\sum_{s=1}^S {\Xi}_s$ within $N\epsilon$-error, where $1-\delta$ is the success probability. 
}

Finally, the overall query complexity is the product of the query complexity per sample and the required number of samples. We conclude that the query complexity is $\mathcal{O}(|\mathcal{P}|^2\epsilon^{-2}\mathrm{polylog}(N/\epsilon))$ in the general cases, and $\mathcal{O}(N\epsilon^{-2}\polylog(1/\epsilon))$ for a geometrically local potential and an exponentially decaying Green's function. {We will also discuss how to improve this to independent of $N$ if we further assume the system is translation invariant.} Combining the analysis of the query complexity with the cost of evaluating the non-interacting Green's function, we obtain the final algorithm complexity stated in Theorem \ref{thm-algorithm-general} and Theorem \ref{thm-algorithm-geo-local}.

\subsubsection{Improved Strategy for Local Potential and Exponentially Decaying Green's Function} \label{sec:truncation}
The bottleneck of the algorithm described above is step 3, namely that of sampling $P_1,\cdots,P_s$ and computing the normalization constant $Z_s$ by BP. In this section, we explain how to simplify this step by truncating the summation for the geometrically local potential and exponentially decaying Green's function, {leading to the improved runtime claimed in Theorem \ref{thm-algorithm-geo-local}.} 

Recall that at order $s$, after sampling the tree $T$ and the imaginary-time variables $\boldsymbol{\tau}$, we need to evaluate
\begin{equation} \label{exact-sum}
\sum_{P_1,\cdots,P_s } v_{P_1} \cdots v_{P_s}\sum_{\chi \in  \mathcal{A}(T)}\alpha_{T,\chi}\prod_{(i,j)\in T}g_{\tau_i,\tau_j}(P_i,P_j,\chi_{ij}) \,h_{\boldsymbol{\tau}}(P_1,\cdots,P_s,T,\chi).
\end{equation}
Rather than summing over all possible paths $(P_1,\cdots,P_s) \in \mathcal{P}^s$, we can truncate paths with small weights to reduce the computational complexity. In particular, since the Green's function is exponentially decaying, any paths $(P_1,\cdots,P_s)$ where at least one pair $(P_i, P_j)$ connected by an edge in $T$ is ``far apart'' on the lattice $\Lambda$ will have an exponentially small $g_{\tau_i,\tau_j}(P_i, P_j,\chi_{ij})$ term. This makes the total contribution of that entire path set negligible.

We can therefore truncate these ``small weight'' paths by enforcing a maximum distance $R$ for all linked pairs. This leads to the following formal definition of the paths we keep.
\begin{defn}
We say $P_1,\cdots,P_s$ forms an $R$-\textbf{good path} if for any $(i,j) \in T$, $\mathrm{dist}(P_i,P_j) \le R$. We use {$\cP_R^s$} to denote the set of all $R$-good path with length $s$.
\end{defn}
\noindent By truncating the paths that violate this $R$-good condition, we obtain the following summation over the truncated path set $\cP_R^s$:\begin{align} \label{eq:truncate-sum} \hat{\Xi}_s \!:=\! \frac{{(-1)^s}}{s!}\!\!\sum_{T \in \mathcal{T}([s])}\int_{[0,\beta]^s} \!\!\!\!\!\!\dd \boldsymbol{\tau}\!\!\!\sum_{(P_1,\cdots,P_s) \in \cP_R^s} \!\!\!\!\!v_{P_1} \cdots v_{P_s}\sum_{\chi \in \mathcal{A}(T)}\!\!\!\alpha_{T,\chi}\prod_{(i,j)\in T}g_{\tau_i,\tau_j}(P_i,P_j,\chi) h_{\boldsymbol{\tau}}(P_1,\cdots,P_s,T,\chi).\end{align}We can then use the same importance sampling framework as described in Section \ref{sec:sampling} to evaluate this truncated summation.
In the remainder of the section, we will analyze the truncation error and the resulting query complexity. 

\paragraph*{Error Bound for the Truncation}
To bound the truncation error, we start with a Lemma to bound the summation over a single edge $(i,j)$ when $\mathrm{dist}(P_i,P_j)$ is large.  
\begin{lemma} \label{lem:truncated-sum-iterate}
Suppose that the potential $V$ is $L_V$-summable, and the Green's function $g$ is $(K,\xi)$-exponentially decaying. For a fixed $P_* \in \mathcal{P}$, and imaginary-time parameters $\tau,\tau^*$, we have
$$
\sum_{P \in \mathcal{P}|\,\mathrm{dist}(P,P_*)>R} |v_P|M_{\tau,\tau^*}(P,P_*) \le C_dML_V K{\xi R^{d-1}} \exp(-R/\xi),
$$
\end{lemma}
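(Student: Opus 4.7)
The plan is to mimic closely the proof of Lemma \ref{lem:tree-sum-iterate}, but replace the global $L_g$-summability step by a tail estimate arising from the $(K,\xi)$-exponential decay of the non-interacting Green's function. First I would expand $M_{\tau,\tau^*}(P,P_*)$ into its two symmetric pieces (the one involving $g_{\tau-\tau^*}(P^-(k),P_*^+(l))$ and the one involving $g_{\tau^*-\tau}(P_*^-(l),P^+(k))$) and treat them separately; the arguments are identical up to swapping the roles of creation and annihilation operators.

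The key geometric observation is that $\mathrm{dist}(P,P_*) > R$ implies $\mathrm{dist}(P^-(k), P_*^+(l)) > R$ for every choice of indices, simply because $\mathrm{dist}(P,P_*)$ is the minimum of pairwise site distances. So the restricted sum is contained in the sum over $P$ with $P^-(k)$ lying outside the $R$-ball around $P_*^+(l)$. Swapping the order of summation exactly as in the proof of Lemma \ref{lem:tree-sum-iterate}, one gets
\[
\sum_{\substack{P\in\mathcal{P}(m)\\ \mathrm{dist}(P,P_*)>R}} |v_P|\, |g_{\tau-\tau^*}(P^-(k),P_*^+(l))| \le \sum_{\substack{x\in\Omega\\ \mathrm{dist}(x,P_*^+(l))>R}} |g_{\tau-\tau^*}(x,P_*^+(l))| \sum_{\substack{P\in\mathcal{P}(m)\\ P^-(k)=x}} |v_P|,
\]
where the inner sum, after summing over $m$ and $k$, is bounded by $L_V/2$ via $L_V$-summability, exactly as in \eqref{123}.

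It then remains to estimate the Green's function tail $\sum_{x:\mathrm{dist}(x,y)>R} |g_{\tau-\tau^*}(x,y)|$. This is the only quantitative new input: using $|g_{\tau-\tau^*}(x,y)| \le K e^{-\mathrm{dist}(x,y)/\xi}$ together with the fact that the number of lattice points at graph distance exactly $r$ from any fixed site is bounded by a dimension-dependent constant times $r^{d-1}$, one obtains
\[
\sum_{\substack{x\in\Omega\\ \mathrm{dist}(x,y)>R}} |g_{\tau-\tau^*}(x,y)| \le K \sum_{r>R} C_d\, r^{d-1} e^{-r/\xi} \le C'_d\, K \xi R^{d-1} e^{-R/\xi},
\]
where the last inequality is a routine incomplete Gamma / integration-by-parts estimate for the dominant behaviour of the tail. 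Finally, summing over $l \in \{1,\dots,m_{P_*}\}$ contributes the factor $m_{P_*}\le M$, and combining the two symmetric contractions while absorbing absolute constants into $C_d$ yields the claimed bound.

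The only step that is not a verbatim repetition of Lemma \ref{lem:tree-sum-iterate} is the lattice tail estimate producing the $R^{d-1}\xi e^{-R/\xi}$ prefactor; this is the main technical point, but is standard. Everything else—in particular the reorganisation of the double sum and the application of $L_V$-summability—is identical.
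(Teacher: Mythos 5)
Your proposal is correct and follows essentially the same route as the paper's proof: reorganize the double sum exactly as in Lemma \ref{lem:tree-sum-iterate}, apply $L_V$-summability to the inner sum over interactions anchored at a fixed site, and replace the $L_g$-summability step by the shell-counting tail estimate $\sum_{r>R} C_d r^{d-1} K e^{-r/\xi} \le C_d' K \xi R^{d-1} e^{-R/\xi}$. The only cosmetic difference is that you restrict $x$ by $\mathrm{dist}(x,P_*^+(l))>R$ while the paper uses $\mathrm{dist}(x,P_*)>R$; both follow from $\mathrm{dist}(P,P_*)>R$ and give the same bound.
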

\begin{proof}
Similar to the proof of Lemma \ref{lem:tree-sum-iterate}, denoting  and letting $\mathcal{P}(m) = \{P \in \mathcal{P},\,m_{P} = m\}$,
\begin{align*}
 \sum_{P\in\mathcal{P}|\mathrm{dist}(P,P_*)>R} \!\!\!\!\!\!\!\!\!\!\!|v_{P}|M_{\tau,\tau^*}(P,P_*)\le   &\sum_{l = 1}^{m_{P_{*}}} \sum_{x\in\Omega|\mathrm{dist}(x,P_{*})>R}\!\!\!\left(\sum_{m=1}^M\sum_{k=1}^{m}\sum_{\substack{P \in \mathcal{P}(m),\,P^-(k) = x}} |v_{P}| \right) |g_{\tau-\tau^*}(x,P_{*}^+(l))| \notag \\
 &\!\!\!\!\!\!\!\!\!\! +  \sum_{l = 1}^{m_{P_{*}}} \sum_{x\in\Omega|\mathrm{dist}(x,P_{*})>R}\left(\sum_{m=1}^M\sum_{k=1}^{m}\sum_{\substack{P \in \mathcal{P}(m),\,P^+(k) = x}} |v_{P}| \right) |g_{\tau^*-\tau}(P_{*}^-(l),x)| \\
\le  & ML_V \sum_{r \ge R}C_d'{r}^{d-1} K\exp(-{r}/\xi)  \\
\le & M L_V  C_d K {\xi R^{d-1}}\exp(-R/\xi).
\end{align*}
\end{proof}
\begin{lemma} \label{lem:truncated-sum}(Error bound of the truncation) 
Suppose that the potential $V$ is $L_V$-summable, and the Green's function $g$ is $(K,\xi)$-exponentially decaying. Let $\mathcal{P}^s$ denote the set of all paths $(P_1,\cdots,P_s)$ with $P_i \in \mathcal{P} $. We have
\begin{align} \label{eq:truncation-error}
\sum_{(P_1,\cdots,P_s) \in \mathcal{P}^s\setminus \cP_R^s} |v_{P_1} \cdots v_{P_s} |  \prod_{(i,j)\in T} |M_{{\tau_i,\tau_j}}(P_i,P_j)|  &
&{\le s\,C_d^{{ s-1}}\,L_V^sM^{s-1}NK^{s-1}\xi^{1+d(s-2)}R^{d-1}e^{-R/\xi}}
\end{align}
for some constant $C_d$ that only depends on $d$.

Consequently, consider the truncated term in the summation
$
\hat{\Xi}_s 
$. One can choose $R  = \mathcal{O}(\log(1/\epsilon)) $ so that when $\beta 4^{M}ML_VK\xi^{d}C_d \le C $ for an appropriate absolute constant $C$, the truncation error is bounded by 
$$
\left|\sum_{s=1}^S \hat{\Xi}_s - \sum_{s=1}^S \Xi_s\right| = \Or(N\epsilon).
$$
\end{lemma}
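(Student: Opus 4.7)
The plan is to first prove the combinatorial bound \eqref{eq:truncation-error} and then feed it into the full truncation error via the determinant bound already established in Lemma \ref{lem-det-bound}.

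For \eqref{eq:truncation-error}, observe that $\mathcal{P}^s\setminus \cP_R^s$ is the union, over the $s-1$ edges of $T$, of the sets of paths for which that particular edge is ``long'' (distance $>R$). Applying a union bound over this choice of a ``bad edge'' $(i_0,j_0)\in T$ reduces the task to bounding, for each fixed bad edge, a sum of the form
\begin{equation*}
\sum_{P_1,\ldots,P_s:\,\mathrm{dist}(P_{i_0},P_{j_0})>R}\!\!\!\!|v_{P_1}\cdots v_{P_s}|\prod_{(i,j)\in T}M_{\tau_i,\tau_j}(P_i,P_j).
\end{equation*}
I would then replay the iterative summation used in the proof of Lemma \ref{lem:tree-sum}: choose a growing path of $T$ starting from the root, contributing $\sum_{P_1}|v_{P_1}|\le NL_V$; at each step corresponding to a ``good'' edge, apply Lemma \ref{lem:tree-sum-iterate} to collect a factor of $ML_VL_g\le C_dML_VK\xi^d$ (using $L_g\le C_dK\xi^d$ from the exponential decay of $g$); and at the single step corresponding to the bad edge, apply Lemma \ref{lem:truncated-sum-iterate} to collect the decay factor $C_dML_VK\xi R^{d-1}e^{-R/\xi}$. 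Multiplying these $s-1$ edge factors together with the root factor and the union-bound factor $s-1\le s$ yields exactly the claimed bound $s\, C_d^{s-1}\,L_V^sM^{s-1}NK^{s-1}\xi^{1+d(s-2)}R^{d-1}e^{-R/\xi}$.

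For the truncation error on $\hat{\Xi}_s$, I would bound $|\Xi_s-\hat{\Xi}_s|$ by pulling absolute values inside the expression in \eqref{eq:truncate-sum}, using Lemma \ref{lem-det-bound} to control $|h_{\boldsymbol\tau}|\le 2^{2Ms}$, the trivial bound $\sum_{\chi}\prod_{(i,j)\in T}|g_{\tau_i,\tau_j}(P_i,P_j,\chi_{ij})|\le \prod_{(i,j)\in T}M_{\tau_i,\tau_j}(P_i,P_j)$, Cayley's count $|\mathcal{T}([s])|=s^{s-2}$ with $s^{s-2}/s!\le e^s$, and $\int_{[0,\beta]^s}d\boldsymbol\tau=\beta^s$. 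Combining these with \eqref{eq:truncation-error} gives
\begin{equation*}
|\Xi_s-\hat{\Xi}_s|\;\le\; N\,(C_de\beta 2^{2M}ML_VK\xi^d)^{s-1}\,\cdot\, C\beta 2^{2M} L_V\,\xi^{1-d}\,s\,R^{d-1}e^{-R/\xi}
\end{equation*}
for some absolute $C$. Under the hypothesis $\beta 4^MML_VK\xi^d C_d<C$ with $C$ small enough that $C_de\beta 2^{2M}ML_VK\xi^d<1/2$, the factor in parentheses generates a convergent geometric series, and $\sum_{s=1}^{S}|\Xi_s-\hat{\Xi}_s|=\mathcal{O}(N\,R^{d-1}e^{-R/\xi})$ uniformly in $S$. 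Finally I would choose $R=\Theta(\xi\log(1/\epsilon))$, with a logarithmic correction to absorb the polynomial prefactor $R^{d-1}$, so that $R^{d-1}e^{-R/\xi}\le\epsilon$; this gives $R=\mathcal{O}(\log(1/\epsilon))$ and the desired $\mathcal{O}(N\epsilon)$ total truncation error.

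The main obstacle I anticipate is not any single step but the bookkeeping: the Cayley factor $s^{s-2}$, the cumulant prefactor $1/s!$, the determinant bound $2^{2Ms}$, the $\beta^s$ from the time integrals, and the edge factors $ML_VL_g$ must be collected carefully so that the geometric-series parameter matches the smallness hypothesis $\beta 4^MML_VK\xi^dC_d<C$, with the leftover $\xi^{1-d}R^{d-1}e^{-R/\xi}$ factor correctly tracked through both edges of the ``bad edge'' trick. Once these exponents in $s$ and in $K,\xi,L_V,M,\beta$ are reconciled, both claims follow directly.
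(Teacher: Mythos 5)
Your proposal is correct and follows essentially the same route as the paper's proof: a union bound over the single ``bad'' edge of $T$, the iterative growing-path summation with the root contributing $NL_V$, Lemma \ref{lem:tree-sum-iterate} for the good edges and Lemma \ref{lem:truncated-sum-iterate} for the bad one, followed by the determinant bound $2^{2Ms}$, Cayley's formula, and a geometric series in $s$ closed off by choosing $R=\Theta(\xi\log(1/\epsilon))$. The exponent bookkeeping in your edge-factor product matches the stated bound exactly.
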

\begin{proof}
Without loss of generality, assume $1,\cdots,s$ forms a growing path and denote by $a(s)$ the parent of $s$.
Let $\mathcal{P}_{j,R}^s$ denote the set of all paths $(P_1,\cdots,P_s)$ that violate the condition of $R$-goodness at the edge $\{a(j),j\}$, i.e., $\mathrm{dist}(P_j,P_{a(j)})> R$. Then the truncation error in \eqref{eq:truncation-error} is bounded by 
$$
\sum_{j=2}^s\sum_{(P_1,\cdots,P_s) \in \cP_{j{, R}}^s} |v_{P_1} \cdots v_{P_s} |  \prod_{(i,j)\in T} |M_{{\tau_i,\tau_j}}(P_i,P_j)|.
$$
For each $j$ we can reformulate the summation over $(P_1,\cdots,P_s) \in \cP_{j,R}^s$ as 
$$
\sum_{P_1 \in \cP}|v_{P_1}|\sum_{P_2  \in \cP}|v_{P_2}|M_{\tau_2,\tau_{a(2)}}(P_2,P_{a_2}) \cdots\!\!\!\!\!\!\!\!\!\!\!\!\!\! \sum_{\mathrm{dist}(P_{j}, P_{a(j)})> R}\!\!\!\!\!\!\!\!\!\!\!\!\!|v_{P_s}|M_{\tau_j,\tau_{a(j)}}(P_j,P_{a(j)}) \cdots \sum_{P_s \in \cP}|v_{P_s}|M_{\tau_s,\tau_{a(s)}}(P_s,P_{a(s)})
$$
The summation over $P_i$, $i \in [s]\setminus\{j\}$ follows the same bound as in Lemmas \ref{lem:tree-sum-iterate} and \ref{lem:tree-sum}:
\begin{align*}
\sum_{P_1 \in \cP}|v_{P_1}| \le N L_V, \quad
\sum_{P_i \in \cP}|v_{P_i}|M_{\tau_i,\tau_{a(i)}}(P_i,P_{a(i)}) \le ML_VL_g \le C_dML_VK\xi^d.
\end{align*}
The summation over $P_j$ follows from Lemma \ref{lem:truncated-sum-iterate}:
$$
\sum_{\mathrm{dist}(P_j,P_{a(j)})>R} |v_{P_j}|M_{\tau_j,\tau_{a(j)}}(P_j,P_{a(j)})\le ML_VC_dK {\xi R^{d-1}} \exp(-R/\xi) .$$ 
Combining the bound for the summation over $P_1,\cdots,P_s$ together, we complete the proof of \eqref{eq:truncation-error}. Therefore, combining with the bound \eqref{htaubound} and $|\mathcal{T}([s])|=s^{s-2}$ we get that
\begin{align*}
&\left|\sum_{s=1}^S\hat{\Xi}_s-\sum_{s=1}^S\Xi_s\right|\\
&\qquad \le\sum_{s=1}^S\frac{1}{s!}\, \sum_{T\in\mathcal{T}([s])}\,  \int_{[0,\beta]^s}\dd\boldsymbol{\tau}\sum_{(P_1,\cdots,P_s)\in \mathcal{P}^s\backslash \mathcal{P}_R^s}|v_{P_1}\cdots v_{P_s}|\prod_{(i,j)\in T} |M_{\tau_i,\tau_j}(P_i,P_j)| 2^{2Ms}\\
&\qquad \le \sum_{s=1}^S\frac{1}{s!}\, s^{s-1}\,  \beta^s\,C_d^{{ s-1}}\,L_V^sM^{s-1}NK^{s-1}\xi^{1+d(s-2)}R^{d-1}e^{-R/\xi}2^{2Ms}\\
&\qquad \le
 e^{-R/\xi}\xi^{1-d}\,L_V\beta N R^{d-1}4^M \sum_{s=1}^S\frac{1}{s!}\, (4^M\xi^dKL_VM\beta C_ds)^{s-1}\,\\
&\qquad \le e^{-R/\xi}e\xi^{1-d}\,L_V\beta N R^{d-1}4^M \sum_{s=1}^S(4^M\xi^dKL_VM\beta C_de)^{s-1}.
\end{align*}
When $4^M \xi^d KL_VM\beta C_d e \le 0.99$, we can choose $ R =\mathcal{O}(\log(1/\epsilon))$ for some absolute constant $C$ such that the above error is bounded by $N\epsilon$.
\end{proof}

\paragraph*{The Querying Complexity after Truncation}
The truncation modifies the graphical model probability in step 3, which is handled by the BP subroutine. The new probability distribution is given by:$$\mathbb{P}(P_1,\cdots,P_s) = \frac{1}{\hat{Z}_s(T)}|v_{P_1} \cdots v_{P_s}| \prod_{(i,j) \in T}M_{\tau_i,\tau_j}(P_i,P_j) \mathbf{1}_{\mathrm{dist}(P_i,P_j) \le R},$$for some different normalization constant $\hat{Z}_s(T)$. Truncating out the long-range terms in $M_{\tau_i,\tau_j}(P_i,P_j)$ improves the querying complexity of the BP subroutine: Recall that the $O(|\mathcal{P}|^2)$ cost per vertex arises from computing a message vector of size $|\mathcal{P}|$, where each entry requires a sum over $|\mathcal{P}|$ states. The truncation now sparsifies this computation. The sum for each entry is restricted to child states $P_i$ within distance $R$ of the parent's state $P_{a(i)}$. For an $(\mathsf{n},r_0)$-geometrically local state space $V$, this sum is over at most $\mathsf{n}C_d(R+r_0)^d$ states. This reduces the query complexity of computing one message vector from $\Or(|\mathcal{P}|^2)$ to $\Or(|\mathcal{P}| \cdot \mathsf{n}C_d(R+r_0)^d)$. Given $|\mathcal{P}| \le N\mathsf{n}$, we obtain the $O(N\mathsf{n}^2 C_d(R+r_0)^d)$ query complexity.

\smallskip

As established in Lemma \ref{lem:truncated-sum-iterate}, an $N\epsilon$-approximation error can be guaranteed by choosing a radius $R = \mathcal{O}(\text{polylog}(1/\epsilon))$. This result, combined with the per-vertex cost derived above, confirms that the BP step has a query complexity of $\mathcal{O}(N\polylog(1/\epsilon))$. Multiplying this per-sample query complexity by the number of required samples $1/\epsilon^2$ (as analyzed in Section \ref{sec:sampling}) and the $\mathcal{O}(\polylog(N/\epsilon)) $ cost for computing the non-interacting Green's function (as analyzed in Section \ref{sec:compute_non_int_greens_func}) gives an overall algorithm complexity of $\mathcal{O}(N\epsilon^{-2}\polylog(1/\epsilon))$.
{
\paragraph*{An Improved Algorithm in the Translation Invariant Setting}
If we further assume the system is translation-invariant—in addition to the exponential decay of the non-interacting Green's function and the locality of the potential—we can simplify the procedure described above. This simplification improves the query complexity of the BP subroutine, making it independent of the system size.

We fix the first term in the summation, $P_1$, to be some specific $P_*$. Using translation invariance, we can rewrite the truncated term in $\hat{\Xi}_s$ as
\begin{align*}  &\hat{\Xi}_s= \\
&\! \frac{{(-1)^s|\mathcal{P}|v_{P_1}}}{s!}\!\!\sum_{T \in \mathcal{T}([s])}\int_{[0,\beta]^s} \!\!\!\!\!\!\dd \boldsymbol{\tau}\!\!\!\sum_{(P_2,\cdots,P_s) \in \cP_R^{s,P_1}} \!\!\!\!\!\!\!\!\!\!\!\!v_{P_2} \cdots v_{P_s}\sum_{\chi \in \mathcal{A}(T)}\!\!\!\alpha_{T,\chi}\prod_{(i,j)\in T}g_{\tau_i,\tau_j}(P_i,P_j,\chi) h_{\boldsymbol{\tau}}(P_1,\cdots,P_s,T,\chi),\end{align*}
where $\mathcal{P}_R^{s,P_1}$ denote all possible $(P_2,\cdots,P_s)$ such that $P_1, P_2, \cdots,P_s$ form an $R$-good path starting from $P_1$. Recall from Lemma \ref{lem:truncated-sum} that when $\beta L_V $ is bounded by some constant $C(M,K,\xi,d)$, choosing $R = \mathcal{O}(\log(1/\epsilon))$ and $S = O(\log N)$ ensures the truncated summation $\sum_{s=1}^S \hat{\Xi}_s$ approximates the log-partition function up to $N\epsilon$ accuracy. 

Now we evaluate $\sum_{s=1}^S \Xi_s$ with a modified BP procedure: we sample $P_2,\cdots,P_s$ from
\begin{align}\label{eq:bp-conditional}
\mathbb{P}(P_2,\cdots,P_s|P_1) = \frac{1}{\hat{Z}_s^{P_1}(T)}|v_{P_2}\cdots v_{P_s}| \prod_{(i,j) \in T}M_{\tau_i,\tau_j}(P_i,P_j)1_{\mathrm{dist}(P_i,P_j) \le R}.
\end{align}
and compute the normalization constant $\hat{Z}_s^{P_1}(T)$. The query complexity is analyzed as below:
\begin{itemize}
\item \textbf{Query complexity per sample.} We show that query complexity for generating a single sample can now be reduced to being independent of $N$. Indeed, any path in $\mathcal{P}_R^{s,P_1}$ is supported on a region within distance at most $sR$ from $P_1$. Denoting
$$
B(P_1, \eta ) = \{P \in \mathcal{P}: \mathrm{dist}(P,P_1) \le \eta \},
$$
we note that in the distribution $\mathbb{P}(P_2,\cdots,P_s|P_1)$ each $P_i(i\ge 2)$ is supported on
 $B(P_1,sR)$. Since $|B(P_1,sR)| = \mathcal{O}(\mathrm{polylog}(1/\epsilon))$, the query complexity of BP can be reduced to $\mathcal{O}(\mathrm{polylog}(1/\epsilon))$. Thus the query complexity of BP sampler is independent of $N$.
 \item \textbf{The number of required samples.} The unbaised estimator now becomes 
 \begin{align*}
 &\hat{w}_s^{P_1} =\\
 &{\frac{(-1)^s\,s^{s-2}}{s!}}\,\alpha_{T,\chi}\, \mathrm{sgn}\left(\! {v_{P_1}\cdots v_{P_s}}\!\!\!\prod_{(i,j) \in T} \!\!\! \,g_{\tau_i,\tau_j}(P_i,P_j,\chi) \right) \!|\mathcal{P}| Z_{{s}}^{P_1}(T)\,\beta^s \det(\mathbf{G}(T,\chi,\omega,\mathbf{t},\{P_i,\tau_i\}));
 \end{align*}
 the only difference between $w_s$ in \eqref{eq:estimator} in and $w_s^{P_1}$ is that we have replaced $Z_s$ with a factor $|\mathcal{P}|Z_s^{P_1}$. Note that $|\mathcal{P}| = \mathcal{O}(N)$ and $|Z_s^{P_1}| \le M\beta L_V L_g \le M\beta L_V C_dK \xi^d $  by Lemma \ref{lem:tree-sum-iterate}. Thus when $\beta L_V$ is bounded by some constant $C(M,K,\xi,d)$, we also have 
 $ |w_s^{P_1}| \le O(N)$ and by Hoeffding's inequality $O(\log(1/\delta))$ samples are sufficient to estimate $\sum_{s=1}^S\Xi_s$ within $N\epsilon$ error. 
\end{itemize}
Combining the analysis of the cost per sample and the number of required samples, and multiplying this with the complexity for evaluating the Green's function (analyzed in Section \ref{sec:compute_non_int_greens_func}), the algorithm complexity is $\mathrm{polylog}(N/\epsilon)$ for translation invariant systems. 
 }
\subsection{Efficient Algorithm for Computing Local Observables} \label{sec:alg-observable}
The partition function encodes many physically interesting quantities. In that spirit, the algorithm for computing the log-partition function can also be applied to compute a wide range of these quantities. In this section, we will adapt this algorithm to compute local observables of weakly interacting Fermionic systems.

    Let $O$ be an observable with $P_*$ denoting the list of creation and annihilation operators in $O$, and $m_{{P_{*}}}$ denoting the number of creation/annihilation operators in $P_*$, which also satisfy $m_{{P_*}} \le M$. The expectation of $O$ over the thermal state is $\mathrm{Tr}(Oe^{-\beta H})/Z$, which can be related to the log-partition function by the following standard result:
    \begin{lemma}
\begin{align} \label{eq:identity}
 \left.\frac{\partial}{\partial \lambda }\log \mathrm{Tr}(e^{-\beta (H + \lambda O)})\right|_{\lambda =0} = -\frac{\beta}{Z}\Tr(Oe^{-\beta H}) 
\end{align}
\end{lemma}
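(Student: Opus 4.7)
The plan is to establish the identity by directly differentiating the matrix exponential $e^{-\beta(H+\lambda O)}$ under the trace, and then applying the chain rule for the logarithm. The main tool is Duhamel's formula, which states that for any operators $A, B$,
\begin{equation*}
\frac{\partial}{\partial \lambda} e^{-\beta(A+\lambda B)} = -\int_0^\beta e^{-s(A+\lambda B)}\, B\, e^{-(\beta-s)(A+\lambda B)}\, \dd s.
\end{equation*}
This can be derived, for example, by expanding $e^{-\beta(A+\lambda B)}$ as a Dyson series (an expansion already used in this paper, cf.\ \eqref{eq:Dyson}) and differentiating term by term in $\lambda$.

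Taking $A = H$, $B = O$, and then taking the trace of both sides, I would invoke the cyclicity of the trace to collapse the integrand: for every $s \in [0,\beta]$,
\begin{equation*}
\Tr\!\left(e^{-s(H+\lambda O)}\, O\, e^{-(\beta-s)(H+\lambda O)}\right) = \Tr\!\left(O\, e^{-\beta(H+\lambda O)}\right),
\end{equation*}
so the $s$-integral simply produces a factor of $\beta$. This yields
\begin{equation*}
\frac{\partial}{\partial \lambda}\Tr\!\left(e^{-\beta(H+\lambda O)}\right) = -\beta\, \Tr\!\left(O\, e^{-\beta(H+\lambda O)}\right).
\end{equation*}

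Finally, applying the chain rule to $\log$ and evaluating at $\lambda = 0$, the denominator becomes $\Tr(e^{-\beta H}) = Z$ and the numerator becomes $-\beta \Tr(O e^{-\beta H})$, giving the claimed identity. There is no serious obstacle here; the only subtle point is justifying the exchange of the derivative with the integral and the trace, which is immediate since everything is finite-dimensional and the exponentials are entire functions of $\lambda$, so $\lambda \mapsto \Tr(e^{-\beta(H+\lambda O)})$ is real-analytic.
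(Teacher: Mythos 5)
Your proposal is correct and follows essentially the same route as the paper: Duhamel's formula for the derivative of the exponential, cyclicity of the trace to collapse the integrand, and the chain rule for the logarithm (the paper merely parametrizes the Duhamel integral over $[0,1]$ rather than $[0,\beta]$, which is an immaterial change of variables). No gaps.
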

\begin{proof}
Let $Z(\lambda) = \mathrm{Tr}(e^{-\beta (H + \lambda O)})$. The left-hand side of the identity is $\frac{\partial}{\partial \lambda} \log Z(\lambda)|_{\lambda=0}$. Using the chain rule, this is equal to
\begin{align} \label{232}
\left.\frac{1}{Z(0)} \frac{\partial Z(\lambda)}{\partial \lambda}\right|_{\lambda=0}.\end{align} Now we compute the derivative of $Z(\lambda)$. By Duhamel's formula, we get
$$
\frac{\partial}{\partial \lambda} e^{-\beta (H + \lambda O)} = \int_0^1 e^{-s\beta(H + \lambda O)} (-\beta O) e^{-(1-s)\beta(H + \lambda O)} ds.
$$
Taking the trace of this expression and using its cyclic property:
\begin{align*}
\frac{\partial Z(\lambda)}{\partial \lambda} &= -\beta \int_0^1 \mathrm{Tr}\left( e^{-s\beta(H + \lambda O)} O e^{-(1-s)\beta(H+ \lambda O)} \right) ds \\
&= -\beta \int_0^1 \mathrm{Tr}\left( O e^{-\beta(H + \lambda O)} \right) ds \\
& = -\beta \mathrm{Tr}(O e^{-\beta(H + \lambda O)}).
\end{align*}
Evaluating this at $\lambda=0$ gives
$$
\frac{\partial Z(\lambda)}{\partial \lambda}\bigg|_{\lambda=0} = -\beta \mathrm{Tr}(O e^{-\beta H}).
$$
Substituting this to \eqref{232} we complete the proof.
\end{proof}
We apply the cumulant expansion to the $\log \mathrm{Tr}(\dots)$ term. The derivative $\frac{\partial}{\partial \lambda}$ at $\lambda=0$ filters the expansion to keep only the terms that contain exactly one factor of the observable $O$. This process gives the following series expansion for the observable:
\begin{align*}
\frac{1}{Z}\Tr(Oe^{-\beta H})& = -\frac{1}{\beta}\sum_{s=1}^\infty \frac{(-1)^s}{s!}\sum_{P_1,\cdots,P_{s} \in \mathcal{P}} \frac{\partial}{\partial v_{P_*}}v_{P_1} \cdots v_{P_{s}} \int_{[0,\beta]^s} { \dd \boldsymbol{\tau}~} \mathcal{E}_c(\{P_i,\tau_i\}_{i\in [s]}) \\ =& \sum_{s=1}^\infty \frac{(-1)^{s-1}}{\beta(s-1)!}\sum_{P_1 = P_*,\,P_2,\cdots,P_{s} \in \mathcal{P}}v_{P_2} \cdots v_{P_{s}} \int_{[0,\beta]^s} { \dd \boldsymbol{\tau}~} \mathcal{E}_c(\{P_i,\tau_i\}_{i\in [s]}).
\end{align*}
\noindent Next, we use the determinant expansion on $\mathcal{E}_c$. The $s$-th order term in the corresponding expansion will be 
\begin{align} \label{O-series}
&\Xi_s^O:=\nonumber\\
&\frac{(-1)^{s-1}}{\beta(s-1)!} \sum_{T \in \mathcal{T}([s])}  \int_{[0,\beta]^s}\!\!\!\! \dd \boldsymbol{\tau} \!\!\!\sum_{P_1 = P^*,\,P_2\cdots,P_{s}\in\mathcal{P} } \!\!\!\!v_{P_2} \cdots v_{P_{s}}\!\!\!\!\!\sum_{\chi \in  \mathcal{A}(T)}\!\!\!\alpha_{T,\chi}\!\!\!\!\prod_{(i,j)\in T}\!\!\!\!g_{\tau_i,\tau_j}(P_i,P_j,\chi_{ij}) h_{\boldsymbol{\tau}}(P_1,\cdots,P_s,T,\chi).
\end{align}
Now we consider an algorithm for computing the local observables using a truncated series summation $\sum_{s=1}^S \Xi_s^O$, where $\Xi_s^O$ is the $s$-th order term in the series expansion given by \eqref{O-series}. 

\subsubsection{The General Case}
\paragraph*{Convergence of the Expansion}
As before, the convergence of the tree-determinant expansion follows from the boundedness of $h_{\boldsymbol{\tau}}$ and the tree-summation structure. In the following Lemma, we show that truncating at $S = O(\log (1/\epsilon))$ yields an approximation within $\epsilon$ error.
\begin{lemma}
When $\beta4^MML_VL_g < C$ for some absolute constant $C$, by choosing $S = \Omega(\log(1/\epsilon))$, we have 
$$
\left|\frac{1}{Z}\mathrm{Tr}(Oe^{-\beta H}) - \sum_{s=1}^S\Xi_s^O\right| = O(\epsilon).
$$
\end{lemma}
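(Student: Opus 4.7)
The plan is to run essentially the same argument that established Theorem~\ref{thm-convergence}, but applied term-by-term to the series \eqref{O-series} instead of \eqref{eq:cumulant_expansion}. The crucial structural difference is that in \eqref{O-series} the first interaction term is \emph{fixed} to $P_1=P_*$, so the outer summation over the root of the tree no longer contributes a factor of $N$. This is exactly what converts the $N\epsilon$ bound of Theorem~\ref{thm-convergence} into an $\epsilon$ bound here.

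Concretely, I would bound $|\Xi_s^O|$ as follows. First, inside the integrand apply the uniform determinant bound \eqref{htaubound} to get $|h_{\boldsymbol{\tau}}(P_1,\dots,P_s,T,\chi)| \le 2^{2Ms}$. Next, for each fixed tree $T$ and $\boldsymbol{\tau}$, absorb the sum over contractions into the edge weights by the pointwise estimate
\begin{equation*}
\sum_{\chi\in\mathcal{A}(T)} \prod_{(i,j)\in T} \bigl|g_{\tau_i,\tau_j}(P_i,P_j,\chi_{ij})\bigr| \;\le\; \prod_{(i,j)\in T} M_{\tau_i,\tau_j}(P_i,P_j),
\end{equation*}
since dropping the ``used at most once'' constraint only enlarges the sum. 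Then use Lemma~\ref{lem:tree-sum}, equation \eqref{eq:tree-sum-1}, to handle the summation over $P_2,\dots,P_s$ with $P_1=P_*$ held fixed; this yields $(ML_VL_g)^s$ with no $N$ factor. The time integration contributes $\beta^s$, and Cayley's formula gives $|\mathcal{T}([s])|=s^{s-2}$. Putting these ingredients together with the $1/[\beta(s-1)!]$ prefactor and using $s^{s-2}/(s-1)! \le e^{s}$ (up to a universal constant) gives
\begin{equation*}
|\Xi_s^O| \;\le\; C_0 \,\beta^{s-1}\,(e\,4^{M}\,M\,L_V L_g)^{s}
\;=\; C_0\,\beta^{-1}\bigl(e\,4^{M}\,M\,\beta L_V L_g\bigr)^{s}
\end{equation*}
for some absolute constant $C_0$.

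Choose the absolute constant $C$ in the hypothesis so that $e\,4^{M}M\beta L_V L_g \le 1/2$. Then $|\Xi_s^O|$ decays geometrically in $s$, which simultaneously ensures absolute convergence of the full series $\sum_{s\ge 1}\Xi_s^O$ (so that the identity $\frac{1}{Z}\mathrm{Tr}(Oe^{-\beta H}) = \sum_{s\ge 1}\Xi_s^O$ from \eqref{eq:identity} and the cumulant expansion is valid as a convergent sum), and gives the tail bound
\begin{equation*}
\Bigl|\tfrac{1}{Z}\mathrm{Tr}(Oe^{-\beta H}) - \sum_{s=1}^S \Xi_s^O\Bigr| \;\le\; \sum_{s>S} |\Xi_s^O| \;\le\; 2C_0\beta^{-1} \, 2^{-S}.
\end{equation*}
Choosing $S = \Omega(\log(1/\epsilon))$ therefore yields the claimed $O(\epsilon)$ error.

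The proof is essentially bookkeeping once Theorem~\ref{thm-convergence} and its underlying ingredients (Lemmas~\ref{lem-det-bound}, \ref{lem:tree-sum-iterate}, \ref{lem:tree-sum}) are in hand. The one point that deserves a sentence of care is that the series \eqref{O-series} really comes from differentiating $\log\mathrm{Tr}(e^{-\beta(H+\lambda O)})$ in $\lambda$ at $\lambda=0$ term-by-term; since the geometric bound above is uniform for $\lambda$ in a small disk around $0$, analyticity justifies the term-by-term differentiation and the starting identity $\frac{1}{Z}\mathrm{Tr}(Oe^{-\beta H})=\sum_{s\ge 1}\Xi_s^O$. No new estimate beyond what is already in Section~\ref{sec:convergence} is required.
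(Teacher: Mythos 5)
Your proposal is correct and follows essentially the same route as the paper's proof: bound $h_{\boldsymbol{\tau}}$ by $2^{2Ms}$ via the determinant bound, absorb the sum over contractions into the edge weights $M_{\tau_i,\tau_j}$, apply the first part of Lemma~\ref{lem:tree-sum} with the root $P_1=P_*$ fixed to avoid the factor of $N$, and combine with Cayley's formula and the $1/(s-1)!$ prefactor to get geometric decay $|\Xi_s^O|\lesssim(e\,4^M M\beta L_VL_g)^s$, so that truncating at $S=\Omega(\log(1/\epsilon))$ gives an $O(\epsilon)$ tail. Your additional remark justifying the term-by-term differentiation by uniform convergence in a disk around $\lambda=0$ is a fair point of care that the paper leaves implicit, but it does not change the argument.
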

\begin{proof}
Recall again that 
$$
M_{\tau,\tau^*}(P,P_*)=  \sum_{k=1}^{m_{P}}\sum_{l=1}^{m_{P_*}} \left|g_{\tau-\tau^*}(P^-(k),P_*^+(l))\right| + \left| g_{\tau^*-\tau}(P_*^-(l),P^+(k))\right|.
$$
Applying the first part of Lemma \ref{lem:tree-sum}, we have
$$
\sum_{P_2,\cdots,P_{s}}|v_{P_2} \cdots v_{P_{s}}|\prod_{(i,j)\in T}|M_{\tau_i,\tau_j}(P_i,P_j)| \le (ML_VL_g)^{s-1}
$$
Combining this with the bound $h_{\boldsymbol{\tau}}\le 2^{2Ms}$ derived in \eqref{htaubound} and the fact that the number of trees $T$ of size $s$ scales as $|\mathcal{T}([s])| = s^{s-2}$ (cf. Section \ref{sec:prufer}), we obtain 
$$
|\Xi_s^O| \le (\beta 2^{2M}ML_VL_g{e})^s.
$$
Thus we complete the proof.
\end{proof}
\paragraph*{The Algorithm}
The algorithm for evaluating $\sum_{s=1}^S \Xi_s^O$ follows the importance sampling strategy similar to the algorithm for computing the log-partition function. The sampling problem at the BP subroutine becomes
$$
\mathbb{P}(P_2,\cdots,P_{s}|P_1) = \frac{1}{Z_s^O}|v_{P_2} \cdots v_{P_{s}}| \prod_{(i,j) \in T} M_{\tau_i,\tau_j}(P_i,P_j),
$$
where $Z_s^O$ is the normalization constant. We note that in general the BP procedure cannot be simplified, so the query complexity for generating a single sample is $\mathcal{O}(|\mathcal{P}|
^2{s})$. Combining this with the fact that the unbaised estimator is bounded by $O(1)$ when $\beta ML_VL_g < C$ since $|Z_s^O| \le (ML_VL_g)^s$ from Lemma \ref{lem:tree-sum}, we obtain an $\mathcal{O}(|\mathcal{P} |^2\epsilon^{-2}\polylog(1/\epsilon))$ query complexity for estimating the expectation of observable $O$.

\subsubsection{Truncations for Local Potential and Exponentially Decaying Green's Function}
Similar to the algorithm for log-partition function, when the potential term is local and the Green's function is exponentially decaying, we can truncate the summation to simplify the algorithm. Define $P_R^{s,O}$ to be all $P_2,P_3,\cdots,P_{s}$ such that $P_1,P_2,\cdots,P_{s}$ forms a $R$-good path, we consider the truncated series
\begin{align}
&\hat{\Xi}_s^O :=  \frac{(-1)^{s-1}}{\beta(s-1)!} 
  \!\sum_{T \in \mathcal{T}([s])}  \!\!\int_{[0,\beta]^s} \!\! \dd \boldsymbol{\tau}\!\!\!\!\!\!\!\!\sum_{(P_2,\cdots,P_{s}) \in \mathcal{P}_R^{s,O}}\!\!\!\!\!\!\!\!\!v_{P_2}\! \cdots v_{P_{s}}\!\!\!\!\sum_{\chi \in  \mathcal{A}(T)}\!\!\alpha_{T,\chi}\!\!\prod_{(i,j)\in T}\!\!\!g_{\tau_i,\tau_j}(P_i,P_j,\chi) h_{\boldsymbol{\tau}}(P_1,\cdots\!,P_s,T,\chi).\nonumber
\end{align}
\paragraph*{The Truncation Error} 
As before, we can truncate out the paths that violate the $R$-good condition, by setting $R = \mathcal{O}(\log(1/\epsilon))$. We have the following Lemma.
\begin{lemma}
Suppose that the potentail $V$ is $L_V$-summable, and the Green's function $g$ is $(K,\xi)$-exponentially decaying. 
We can choose $R = \mathcal{O}(\log(1/\epsilon))$ so that when $\beta 4^MML_VK\xi^dC_d \le C$ for some absolute constant $C$, the truncation error is bounded by
$$
\left|\sum_{s=1}^S \hat{\Xi}_s^O - \sum_{s=1}^S  {\Xi}_s^O\right| = O(\epsilon)
$$
\end{lemma}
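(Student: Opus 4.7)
The plan is to mirror, step by step, the proof of Lemma \ref{lem:truncated-sum} for the log-partition function, adapting it to the observable series. The essential structural difference is that the first vertex is now pinned to $P_1 = P_*$ rather than being summed over $\mathcal{P}$. Consequently, wherever the log-partition bound accumulated the factor $N L_V$ from summing over the root, the observable bound will contribute only the single weight $|v_{P_*}|$, which replaces the $N\epsilon$ bound by an $\epsilon$ bound.

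First I would expand $\Xi_s^O - \hat\Xi_s^O$ using \eqref{O-series} and apply the uniform determinant estimate $|h_{\boldsymbol\tau}(P_1,\dots,P_s,T,\chi)| \le 2^{2Ms}$ from \eqref{htaubound}, together with $\sum_{\chi}\prod_{(i,j)\in T} |g_{\tau_i,\tau_j}(P_i,P_j,\chi_{ij})| \le \prod_{(i,j)\in T} M_{\tau_i,\tau_j}(P_i,P_j)$. This reduces the task to bounding
\[
\sum_{T\in\mathcal{T}([s])}\int_{[0,\beta]^s}\dd\boldsymbol\tau\!\!\!\!\sum_{(P_2,\dots,P_s)\notin \mathcal{P}_R^{s,O}}\!\!\!\!|v_{P_2}\cdots v_{P_s}|\prod_{(i,j)\in T} M_{\tau_i,\tau_j}(P_i,P_j).
\]
Next, as in Lemma \ref{lem:truncated-sum}, I would decompose the complement of $\mathcal{P}_R^{s,O}$ as a union over edges $(a(j),j)\in T$ at which the $R$-goodness fails and apply the union bound. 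Picking a growing path rooted at vertex $1$ (which is legitimate since $P_1=P_*$ is already fixed), I can reorder the sum along this path; every "good" child contributes a factor $ML_V L_g \le ML_V C_d K \xi^d$ by Lemma \ref{lem:tree-sum-iterate}, while the single "bad" child contributes $M L_V C_d K \xi R^{d-1} e^{-R/\xi}$ by Lemma \ref{lem:truncated-sum-iterate}. Because $P_1$ is fixed, no $N$-factor appears.

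Putting this together with Cayley's formula $|\mathcal{T}([s])|=s^{s-2}$, the $\beta^s$ from the imaginary-time integration, the $2^{2Ms}$ from $h_{\boldsymbol\tau}$, and the $(s-1)$ choices of bad edge, yields
\[
\bigl|\Xi_s^O-\hat\Xi_s^O\bigr| \;\le\; \frac{s^{s-2}(s-1)}{\beta(s-1)!}\,\beta^s\,4^{Ms}\,(M L_V C_d K\xi^d)^{s-1}\,\xi^{1-d}\,R^{d-1}e^{-R/\xi}.
\]
Using $s^{s-1}/(s-1)!\le e^{s}$ by Stirling, the right-hand side is bounded by $C_1 R^{d-1} e^{-R/\xi}\bigl(e\cdot 4^M M\beta L_V C_d K\xi^d\bigr)^{s-1}$ times a factor independent of $s$. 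Under the hypothesis $\beta\,4^M M L_V K\xi^d C_d \le C$ with $C$ chosen so that $eC<1$, summing the geometric series over $s=1,\dots,S$ leaves a bound of the form $C_2 R^{d-1} e^{-R/\xi}$. Choosing $R = \xi\bigl(\log(1/\epsilon)+(d-1)\log\log(1/\epsilon)+\log C_2\bigr) = \mathcal{O}(\log(1/\epsilon))$ then makes the total error $O(\epsilon)$, which is what we want.

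I do not anticipate a genuine obstacle here, since all the ingredients (the determinant bound, the tree-summation lemmas \ref{lem:tree-sum-iterate} and \ref{lem:truncated-sum-iterate}, and Cayley's formula) were established earlier and apply verbatim. The only points requiring slight care are bookkeeping: the prefactor is $\frac{1}{\beta(s-1)!}$ rather than $\frac{1}{s!}$, which combines with $\beta^s$ to yield $\beta^{s-1}/(s-1)!$ and neatly matches the $(s-1)$-edge estimate; and the absence of the root summation must be traced through carefully to confirm that the analogue of the $NL_V$ factor does not appear, giving the $\epsilon$-bound (as opposed to $N\epsilon$) claimed in the statement.
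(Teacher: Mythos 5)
Your proposal is correct and follows essentially the same route as the paper: the same decomposition of the bad-path set by the first edge violating $R$-goodness, the same use of Lemmas \ref{lem:tree-sum-iterate} and \ref{lem:truncated-sum-iterate} along a growing path with the root pinned at $P_1=P_*$, the determinant bound $2^{2Ms}$, Cayley's formula, and a geometric series with $R=\mathcal{O}(\log(1/\epsilon))$. Your bookkeeping of the prefactor $\frac{1}{\beta(s-1)!}$ and the $\xi$-exponents is in fact slightly more careful than the paper's, but the argument is the same.
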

\begin{proof}
Similar to the proof of Lemma \ref{lem:truncated-sum}, we first show that 
\begin{align} \label{1212}\sum_{(P_2,\cdots,P_{s}) \in \mathcal{P}^{s-1}\setminus \cP_R^{s,O}} |v_{P_2} \cdots v_{P_s} |  \prod_{(i,j)\in T} |M_{{\tau_i,\tau_j}}(P_i,P_j)|  \le s(C_dML_VK)^{s-1}\xi^{1+(s-2)d}R^{d-1}e^{-R/\xi}.\end{align}
The proof of \eqref{1212} follows a similar argument as in Lemma \ref{lem:truncated-sum}. Without loss of generality, assume $P_1,\cdots,P_{s}$ forms a growing path. For $ 2 \le j \le s$, we consider the contribution from paths  $P_2,\cdots,P_{s}$ such that $(P_1,P_2,\cdots,P_{s})$ violates the condition of $R$-good path at the edge $\{a(j),j\}$, i.e., $\mathrm{dist}(P_j,P_{a(j)})\ge R$:
$$\epsilon_j = \sum_{P_2  \in \cP}|v_{P_2}|M_{\tau_2,\tau_{1}}(P_2,P_{1}) \cdots\!\!\!\!\!\!\!\!\!\!\!\!\! \sum_{\mathrm{dist}(P_{j}, P_{a(j)})> R}\!\!\!\!\!\!\!\!\!|v_{P_j}|M_{\tau_j,\tau_{a(j)}}(P_j,P_{a(j)}) \cdots \!\!\!\!\sum_{P_{s} \in \cP}|v_{P_{s}}|M_{\tau_{s},\tau_{a(s)}}(P_s,P_{a(s)}).$$
The summation over $P_i$'s with $i \in \{2,\cdots,s\}\setminus \{j\}$ is bounded by $ML_VC_dK\xi^d$ using Lemma \ref{lem:tree-sum-iterate}; the summation over $P_j$ is bounded by $C_dML_VK \xi R^{d-1} e^{-R/\xi}$ using Lemma \ref{lem:truncated-sum-iterate}. Thus we have $$\epsilon_j \le (C_dML_VK)^{s-1}\xi^{1+(s-2)d}R^{d-1}e^{-R/\xi}. $$
By the union bound, we complete the proof of \eqref{1212}. Thus combining this with bound on $h_{\boldsymbol{\tau}}$ we get 
\begin{align}
&\left|\sum_{s=1}^S\Xi_s^O - \sum_{s=1}^S \hat{\Xi}_s^O \right|\\ & \le \sum_{s=1}^S \frac{1}{s!}\sum_{T \in \mathcal{T}([s])}\int_{[0,\beta]^s}\dd \boldsymbol{\tau} \sum_{(P_2,\cdots,P_s) \in \mathcal{P}^{s-1} \setminus \mathcal{P}_R^{s,O}}|v_{P_2}\cdots v_{P_s}|\prod_{(i,j)\in T }M_{\tau_i,\tau_j}(P_i,P_j,\chi_{ij})2^{2Ms} \\
& \le \sum_{s=1}^S \frac{s^{s-1}}{s!} \beta^s (C_dML_VK)^{s-1}\xi^{1+(s-2)d}R^{d-1}e^{-R/\xi} {2^{2Ms}}\\
& = {4^M}R^{d-1}e^{-R/\xi}\xi^{1-d}\beta e \sum_{s=1}^S (e\beta C_d 4^M ML_V K \xi^d)^{s-1} 
\end{align}
When $4^M\xi^d K L_VM\beta C_d e \le 0.99$, we can choose $R =\mathcal{O}(\log(1/\epsilon))$ such that the above error is bounded by $\epsilon$. 
\end{proof}
\paragraph*{The Algorithm Complexity}
We replace $\Xi_s^O$ with $\hat{\Xi}_s^O$ and run the same importance sampling procedure to evaluate $\sum_{s=1}^S \hat{\Xi}_s^O$. 
{
Note the BP procedure becomes 
$$
\mathbb{P}(P_2,\cdots,P_{s} |P_1)  = \frac{1}{\hat{Z}_s^{O}} |v_{P_2} \cdots v_{P_{s}} | \prod_{(i,j) \in T} M_{\tau_i,\tau_j}(P_i,P_j)1_{\mathrm{dist}(P_i,P_j) < R},
$$
where $\hat{Z}_s^O$ is the normalizing constant. 
This is the same as the distribution \eqref{eq:bp-conditional} in the algorithm for translation invariant systems. Thus following a similar argument we obtain an algorithm with complexity $\mathcal{O}(\epsilon^{-2}\polylog(N/\epsilon))$ for computing the expectation of local observables. 

\subsection{Approximate Computation of the Green's Function and the Stability}
\label{sec:compute_non_int_greens_func}
In practice, the non-interacting Green's function $g_\tau(a,b)$ can only be computed approximately. This raises a critical question: is our series expansion numerically stable with respect to small errors in $g_\tau$? We analyze this using the log-partition function expansion, $\sum_{s=1}^S \Xi_s$, as our primary example, noting that the stability analysis for local observables yields a similar conclusion. The following analysis demonstrates that this approximation does not introduce a prohibitive numerical instability.

Let $\sum_{s=1}^S\tilde{\Xi}_s$ be the expansion computed using an approximate Green's function $\tilde{g}_\tau$, which satisfies a precision bound $\sup |g_{\tau} - \tilde{g}_\tau | \le \epsilon_g$. Our goal is to ensure the total error on the final sum, $|\sum_{s=1}^S\tilde{\Xi}_s - \sum_{s=1}^S\Xi_s|$, remains within our target tolerance, $O(\epsilon)$.

We now apply a straightforward (though not tight) error propagation analysis. At each order $s$, $\Xi_s$ is a summation over at most $M_s \le s^{s-2}s! N^s$ terms (where the $s!$ factor comes from the determinant). Each term $T$ is of the form $T = \frac{1}{s!} \prod_{i=1}^s g_i$, where $g_i$ is an approximately computed Green's function. Since $|g_i| \le 1$, the error $\delta T$ in a single term due to the error in each $g_i$ can be bounded by (assuming $\epsilon_gs\le1$)
$$|\delta T| \le s \left( \frac{e}{s!} \right) \epsilon_g = \frac{e}{(s-1)!} \epsilon_g.$$
The total error at order $s$ is the number of terms $M_s$ times the error per term $|\delta T|$:
$$|\delta \Xi_s| \le M_s \cdot |\delta T| \le (s^{s-2}s! N^s) \times \left(\frac{e}{(s-1)!} \epsilon_g\right) = s^{s-1}e N^s  \epsilon_g.$$ 
Thus to keep the total error within $O(\epsilon)$, it suffices to choose $\epsilon_g = O\left(\frac{\epsilon}{S^{S} N^S}\right)$.

The denominator is superpolynomially large, suggesting that $\epsilon_g$ must be exponentially small. However, this seemingly infeasible precision is achievable because the computational cost to compute $g_\tau(a,b)$ depends only polylogarithmically on the required precision, $1/\epsilon_g$. As we will show in the Lemma immediately following this discussion, this cost scales as $\Or(N^2 \cdot \text{polylog}(1/\epsilon_g))$ for a general $H_0$, and improves to $\Or(\text{polylog}(1/\epsilon_g))$ for finite-ranged $H_0$. Since $\polylog(1/\epsilon_g) = \poly(S,\log N)$ the entire cost to compute $g_\tau$ to the required precision remains $\mathcal{O}(\text{polylog}(N/\epsilon))$. Thus, the cost of computing the approximate Green's function contributes only a polylogarithmic factor to the total computational cost.
\begin{lemma}
   Suppose $H_0 = \sum_{i,j\in \Omega}h_{ij}\psi_i^\dag \psi_j$  satisfies $|h_{ij}|\le 1$ for all $i,j\in\Omega$. Given $\tau \in [-\beta,\beta]$ and $a,b \in \Omega$, there is an algorithm to compute $g_{\tau}(a,b)$ in $\epsilon$-accuracy in $\Or(N^2\log(1/\epsilon) )$ runtime. In addition, if $H_0$ is $r_1$-finite-ranged, that is, $h_{ij}=0$ unless $\operatorname{dist}(i,j)<r_1$, the runtime can be improved to $\Or(\polylog(1/\epsilon) )$.
   \end{lemma}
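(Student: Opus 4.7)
The target $g_\tau(a,b)$ is the single entry $\langle e_a|f(h)|e_b\rangle$ of a matrix function, where $f(x) = e^{-\tau x}/(1+e^{-\beta x})$ for $\tau\ge 0$ and $f(x) = -e^{-\tau x}/(1+e^{\beta x})$ for $\tau<0$, per Definition~\ref{def:Green}. I would use the standard Chebyshev-polynomial matrix-function recipe: construct a degree-$d$ polynomial $p_d \approx f$ on $[-\Lambda,\Lambda]\supseteq\mathrm{spec}(h)$, then compute $p_d(h)|e_b\rangle$ by the three-term Chebyshev recurrence, and extract the $a$-th component.

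\emph{Polynomial approximation.} With $|h_{ij}|\le 1$, Gershgorin gives $\|h\|\le \Lambda := \max_i\sum_j|h_{ij}|$, which is $\le N$ in general and $\Or(r_1^D)$ (independent of $N$, with $D$ the lattice dimension) in the $r_1$-finite-ranged case. The function $f$ is meromorphic with nearest poles to the real axis at $\pm i\pi/\beta$ coming from the Fermi factor, so after rescaling to $[-1,1]$ one can fit a Bernstein ellipse with semi-minor axis $\pi/(2\beta\Lambda)$ strictly inside the analyticity strip. On this ellipse the exponential factor is $\Or(e^{|\tau|\Lambda})$ and the Fermi factor is $\Or(\beta)$, and the standard Chebyshev-coefficient bound $|c_k|\le 2M\rho^{-k}$ with $\rho-1\asymp 1/(\beta\Lambda)$ yields $\|f-p_d\|_{\infty,[-\Lambda,\Lambda]}\le \epsilon/2$ for $d = \Or(\beta\Lambda(|\tau|\Lambda+\log(1/\epsilon))) = \Or(\log(1/\epsilon))$ once the $N$-independent quantities $\beta,\Lambda$ are absorbed into $\Or(\cdot)$.

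\emph{Matrix-vector evaluation and cost.} Compute $|v_0\rangle = |e_b\rangle$, $|v_1\rangle = (h/\Lambda)|e_b\rangle$, iterate $|v_{k+1}\rangle = 2(h/\Lambda)|v_k\rangle - |v_{k-1}\rangle$ for $k=1,\ldots,d-1$, accumulate $|\phi\rangle = \sum_{k=0}^d c_k|v_k\rangle$, and return $\langle e_a|\phi\rangle$. Functional calculus for Hermitian $h$ gives $\|p_d(h)-f(h)\|\le\epsilon/2$, so the result is $\epsilon$-accurate; roundoff is benign because $\|h/\Lambda\|\le 1$ makes the Chebyshev recurrence stable. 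For general dense $h$ each matrix-vector product costs $O(N^2)$ and there are $d$ of them, giving $\Or(N^2\log(1/\epsilon))$. For $r_1$-finite-ranged $h$, the support of $|v_k\rangle$ is confined to the lattice ball $B(b,kr_1)$ of size $O((kr_1)^D)$; each localized matrix-vector product costs $O((kr_1)^D r_1^D)$, and summing over $k\le d$ gives $\Or(d^{D+1}r_1^{2D}) = \polylog(1/\epsilon)$, independent of $N$.

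\textbf{Main obstacle.} The delicate point is the Chebyshev error bound above: the usable Bernstein ellipse is narrow ($\sim \pi/(\beta\Lambda)$) because of the Fermi poles, while simultaneously the factor $e^{-\tau x}$ grows exponentially inside any ellipse comparable to that width, so balancing these two effects to land at $d = \Or(\log(1/\epsilon))$ needs care. A cleaner workaround is to approximate $e^{-\tau h}$ (entire, degree $\Or(\tau\Lambda+\log(1/\epsilon))$) and $(1+e^{\pm\beta h})^{-1}$ (Fermi function, degree $\Or(\beta\Lambda\log(1/\epsilon))$) by separate Chebyshev polynomials and compose them with two recurrences, which incurs only a constant-factor overhead in the matrix-vector count and sidesteps the $e^{|\tau|\Lambda}$ blowup inside the ellipse. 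Everything else — recurrence stability, lattice-ball support growth, and reading off a single coordinate — is routine once the polynomial degree bound is in hand.
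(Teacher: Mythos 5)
Your algorithm is the same as the paper's: rescale $h$ by a bound on $\|h\|$, approximate the scalar function by a degree-$\Or(\log(1/\epsilon))$ Chebyshev expansion, run the three-term recurrence $|v_{k}\rangle = 2\tilde h|v_{k-1}\rangle-|v_{k-2}\rangle$ starting from $|e_b\rangle$, read off the $a$-th coordinate, and in the finite-range case exploit that $|v_k\rangle$ is supported on a ball of radius $kr_1$ to make each matvec cost $N$-independent. The only substantive difference is how the ``main obstacle'' you flag is resolved. You propose approximating $e^{-\tau h}$ and the Fermi factor by two separate polynomials and composing the recurrences, to avoid the $e^{|\tau|\Lambda}$ growth of the exponential inside the Bernstein ellipse. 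The paper (Lemma \ref{poly-approx}) instead bounds the \emph{combined} function directly: on the ellipse with $|\Im(\beta\|h\|z)|\le \pi/2$ one has $\cos(\beta\|h\|\Im z)\ge 0$, so $|1+e^{\pm\beta\|h\|z}|\ge \sqrt{1+e^{\pm 2\beta\|h\|\Re z}}$, and since $|\tau|\le\beta$ the numerator $e^{-\tau\|h\|\Re z}$ is absorbed by this denominator, giving $|f|\le 1$ on the whole ellipse. This is cleaner than your composition workaround, which is also workable but forces you to approximate the Fermi factor to accuracy $\epsilon\,e^{-|\tau|\,\|h\|}$ to control the product error. Separately, be careful with your parenthetical that ``$\beta,\Lambda$ are $N$-independent'': by your own Gershgorin bound $\Lambda\le N$ for general dense $h$, so your degree estimate $d=\Or(\beta\Lambda(|\tau|\Lambda+\log(1/\epsilon)))$ is not $\Or(\log(1/\epsilon))$ in that regime; the paper's $|f|\le 1$ bound at least removes the $|\tau|\Lambda$ term from the numerator, and the $N$-independence of the degree is really only justified when $\|h\|=\Or(1)$ (e.g.\ the finite-range case).
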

\begin{proof}
Recall that the non-interacting Green's function is defined by the matrix function $(f_{\ge 0}(h))_{ab}$ for $\tau \ge 0$ and $(f_{< 0}(h))_{ab}$ for $\tau < 0$, where $h$ is the matrix $H_0$. We will assume $\tau \ge 0$ without loss of generality and let $f = f_{\ge 0}$; the case $\tau < 0$ is analogous.

We fix a parameter $b > \|h\|$ (such a $b = \mathcal{O}(r_1^d)$ can be bounded using the sparsity of $\|h\|$, see Lemma \ref{spectral-norm-bound} for details). Define the rescaled matrix $\tilde{h} = h/b$. The spectrum of $\tilde{h}$ is now contained within $[-1, 1]$. We are computing $(f(h))_{ab} = (f(b\tilde{h}))_{ab}$. Let $\tilde{f}(x) = f(bx)$; this is the function we actually need to approximate on the interval $[-1, 1]$. 

By Lemma \ref{poly-approx}, we can choose $n = \mathcal{O}(\log(1/\epsilon))$ such that there is a degree $n$ polynomial $\tilde{f}_n(x) = \sum_{k=0}^n c_k T_k(x)$ that satisfies $|\tilde{f}(x) - \tilde{f}_n(x)| \le \epsilon$ for all $x \in [-1, 1]$, where $T_k$ is the $k$-th Chebyshev polynomial. These $n+1$ coefficients $c_k$ can be efficiently computed by interpolating on $O(n)$ Chebyshev nodes, which costs $O(\poly(n)) = O(\text{polylog}(1/\epsilon))$ time.

Our task thus reduces to computing $(\tilde{f}_n(\tilde{h}))_{ab} = \langle a | \tilde{f}_n(\tilde{h}) | b \rangle$. We compute this using the  recurrence relation of the Chebyshev polynomials. Define a sequence of vectors $|v_k\rangle = T_k(\tilde{h}) |b\rangle$ and compute them recursively:
$$|v_0\rangle = I |b\rangle = |b\rangle,\,|v_1\rangle = \tilde{h} |b\rangle,\quad |v_k\rangle = 2\tilde{h} |v_{k-1}\rangle - |v_{k-2}\rangle \quad \text{for } k \ge 2.$$ 
We first compute all $n+1$ vectors in this sequence. The final answer is then obtained by summing their components along $\langle a |$ weighted by the coefficients $c_k$:$$(\tilde{f}_n(\tilde{h}))_{ab} = \sum_{k=0}^n c_k \langle a | T_k(\tilde{h}) | b \rangle = \sum_{k=0}^n c_k \langle a | v_k \rangle.$$

We now analyze the computation runtime. In general case, each matrix-vector multiplication step can be computed in $N^2$ time, and thus the overall complexity is $\Or(N^2\log(1/\epsilon))$. For finite-ranged $H_0$, we leverage the sparsity of $\tilde{h}$ to improve this complexity to be independent of $N$. Any degree-$k$ polynomial of $\tilde{h}$, including $T_k(\tilde{h})$, contains at most $k r_1$ non-zero entries on each row (corresponding to sites that are within distance $kr_1$ from the site corresponding to the column). This means the vector $|v_k\rangle = T_k(\tilde{h})|b\rangle$ is sparse, with non-zero entries only for sites $i$ where $\operatorname{dist}(i,b) < k  r_1$, thus the number of such non-zero entries is bounded by $N_k \le C_d (kr_1)^d$ for some constant $C_d$. The computational cost is dominated by the $n-1$ sparse-matrix-vector multiplications $\tilde{h} |v_{k-1}\rangle$ required for the recurrence. The cost of each step $k$ is proportional to the number of non-zero entries in $|v_{k-1}\rangle$ multiplied by the number of non-zero entries per row of $\tilde{h}$, resulting in a $\mathcal{O}(N_kr_1^d) = \mathcal{O}(k^d r_1^{2d})$ cost. Thus the total time to compute all vectors is given by
$$\sum_{k=1}^n \mathcal{O}(k^d r_1^{2d}) = \mathcal{O}(r_1^{2d} n^{d+1}).$$
Since $ n = \mathcal{O}(\log(1/\epsilon))$, the total runtime for the computation is is $\mathcal{O}(\polylog(1/\epsilon))$.
\end{proof}

\section{Technical Details}\label{sec:technicaldetails}

 In this section, we collect several technical results that are used throughout the proofs of the main result.

\subsection{The Tree-Determinant Expansion for the Cumulant}\label{sec:determunantexpansion}

This section is devoted to the proof of the tree-determinant expansion \eqref{eq:determinant_expansion} of the cumulant function. The main result is stated in Theorem \ref{cumulant-representation}. The key element of its proof consists in the use of the so-called tree-graph identity \cite{brydges1978new, Brydges1984short, Battle1984ANO}, see Theorem \ref{tree-graph-thm} below.

\subsubsection{The Tree-Graph Identity} 
We first introduce some notations. We fix an integer $s$. Let $(V_{jk})_{1\le j \le k \le s}$ be a set of elements in a commutative algebra (taken later as the center of the Grassmann algebra, that is, elements with even parity, see Section \ref{sec:grassmann_variables}). For any subset $\mathcal{N} \subset \{1,\cdots,s\}$, define 
\begin{align}\label{eq:V(N)}
V(\mathcal{N}) = \sum_{i,j\in \mathcal{N},i\le j} V_{ij}. 
\end{align}
We are interested in the connected part of $\exp(-V)$.
\begin{defn}\label{def:QQc}
Let $\mathcal{A}$ be a commutative algebra, for a map $Q: 2^{[s]} \to \mathcal{A}$ {with $Q(\emptyset)=1$}, the connected part of $Q$, denoted by $Q_c$, is the map $Q_c:2^{[s]} \to \mathcal{A}$ defined recursively as $Q_c(\emptyset)=0$ and
\begin{equation}\label{eqQcitt}
Q_c(\mathcal{N}) = Q(\mathcal{N}) -\sum_{\substack{B \subsetneq \mathcal{N},\\  \min \mathcal{N} \in B}} Q_c(B)Q(\mathcal{N}\backslash B),
\end{equation}
 where $\min\mathcal{N}=\min\{j\in [s]|j\in\mathcal{N}\}$. 
Most importantly, $Q$ and $Q_c$ satisfy the defining relations
\begin{equation} \label{def-conn}
Q(\mathcal{N}) = \sum_{\substack{B \subseteq \mathcal{N},\\  \min \mathcal{N} \in B}} Q_c(B)Q(\mathcal{N}\backslash B) = \sum_{\Pi\in \mathbf{P}_\mathcal{N}}  \prod_{B \in \Pi} Q_c(B),\quad \forall\,\mathcal{N} \subset [s],
\end{equation}
{where $\mathbf{P}_\mathcal{N}$ stands for the set of partitions of $\mathcal{N}$. Note that this decomposition precisely coincides with that of the equation \eqref{eq:connectedtext} relating the moment function $\mathcal{E}$ to its connected part $\mathcal{E}_c$.}

\end{defn}

\paragraph{Short Justification of Definition \ref{def:QQc}} From \eqref{eqQcitt}, we can directly decompose $Q(\mathcal{N})$ into a sum over partitions of $\mathcal{N}$. Moreover, each partition $\Pi$ appears exactly once, namely by choosing the block \(B \in \Pi\) that contains \(\min \mathcal{N}\) (the term \(Q_c(B)\) in the outer sum) and the partition of \(\mathcal{N} \setminus B\) into the remaining blocks. Hence 
\eqref{def-conn} follows. \qed

\smallskip

\noindent {In Section \ref{sec:grassplustreegraph}, we will choose $\mathcal{A}$ as the center of a well-chosen Grassmann algebra generated. The tree-graph identity gives an explicit representation of the connected part of $\mathcal{N}\mapsto \exp(-V(\mathcal{N}))$. Before presenting the identity, we first introduce some notations:
\begin{itemize}
\item For $\mathcal{N} \subset [s]$ with $|\mathcal{N} | = n$, let $T(\mathcal{N})$ be the set of (undirected) trees on the vertex set $\mathcal{N}$. {We write $\ell =(i,j) \in T$ if $i\in\mathcal{N}$ and $j\in\mathcal{N}$ are connected by an edge in the tree $T\in  T(\mathcal{N})$}. We always assume $i<j$ when we write an edge as $ (i,j)$.
\item For a subset $X \subset \mathcal{N}$, we say that a pair of vertices $\ell =(j,k)$ (not necessarily an edge on $T$) crosses $X$, or $\ell \sim \partial X$, if {$k \in X,\,j\notin X$ or vice versa}. 
\item A growing path $\omega$ is a permutation of the tree's vertices that respects the parent-to-child order: fixing the minimum element in the tree as the root, a growing path requires $\omega(1) = {\min\{j\in[s]|j\in \mathcal{N}\}}$ and that for any vertex $i$, its unique parent $a(i)$ in the tree must have appeared earlier in the path, i.e.,  $a(i) \in \{\omega(1), \dots, \omega(i-1)\}$. We denote the set of growing paths as $S(T)$.

A sequence of subsets $\{\min\mathcal{N} \} = X_1 \subset X_2 \subset \cdots \subset X_{n-1} \subset \mathcal{N}$ is said to be compatible with $T$ if it forms a growing path. That is, there is a growing path $\omega \in S(T)$ such that $X_i = \{\omega(1),\cdots,\omega(i)\}$. We note that $(X_1,\cdots,X_{n-1})$ is compatible with $T$ if and only if the edges of the tree can be ordered as $\ell_1,\cdots,\ell_{n-1}$ such that $\ell_i \sim \partial X_i$.
\item Given a sequence $(X_1,\cdots,X_{n-1})$ that is compatible with a tree $T$, let $b_i$ denote the number of edges within $T$ crossing $X_i$, define $p_{T,\omega}:[0,1]^{n-1} \to\RR $ by
\begin{align*}
p_{T,\omega}(\mathbf{t}) = t_1^{b_1-1}\cdots t_{n-1}^{b_{n-1}-1},
\end{align*}
We also define $a_{\omega}:[0,1]^{n-1} \to\RR^{n\times n} $ by \begin{align} 
a_{\omega}(\mathbf{t})_{jk} =
\begin{cases}
\prod_{i=1}^{n-1} t_i((j,k)), \,j\ne k \\
1,\,j = k.
\end{cases}.
\end{align}
 Here,
 \begin{align}
 t_i((j,k)) = \begin{cases} t_i \in [0,1] & \text{if } (j,k) \text{ crosses } X_i \\ 1 & \text{otherwise} .\end{cases} \label{def-tl}
 \end{align}
 Note that $a_\omega$ does not depend on the tree structure. 
\end{itemize}
\begin{theorem}(The tree-graph identity) \label{tree-graph-thm} \cite[Theorem 3.1]{Brydges1984short} \cite[Theorem 2]{Procacci1998}
The connected part of {the map $2^{[s]}\to \mathcal{A}$, $\mathcal{N}\mapsto \exp(-V(\mathcal{N}))$} is given by
\begin{align}
\exp(-V)_c(\mathcal{N}) = \sum_{T \in \cT(\mathcal{N})} \prod_{(i,j)\in T} (-V_{ij}) \sum_{\omega \in S(T)} \int_{[0,1]^{n-1}} \dd \mathbf{t}\,  p_{T,\omega}(\mathbf{t})\exp\left(- \sum_{u,v\in\mathcal{N},u\le v} a_{\omega}(\mathbf{t})_{uv} V_{uv}\right).\label{tree-graph}
\end{align}
\end{theorem}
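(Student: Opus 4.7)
The plan is to prove the identity by induction on $|\mathcal{N}|$, using an iterative interpolation that peels off boundary contributions matching the defining recursion \eqref{def-conn} of $Q_c = \exp(-V)_c$. Set $F(\mathbf{s}) := \exp(-\sum_{u \le v} s_{uv} V_{uv})$ for $\mathbf{s} = (s_{uv})$, so that $\exp(-V(\mathcal{N})) = F(\mathbf{1})$. The base case $|\mathcal{N}| = 1$ is immediate: $Q_c(\{i\}) = Q(\{i\}) = \exp(-V_{ii})$ matches the empty-tree contribution in \eqref{tree-graph}. For the induction step, grow a sequence $\{\min \mathcal{N}\} = X_1 \subset X_2 \subset \cdots \subset X_n = \mathcal{N}$ one vertex at a time by iterated application of the fundamental theorem of calculus.

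\textbf{Iteration and boundary factorisation.} Assume $\min \mathcal{N} = 1$ and $X_1 = \{1\}$. At step $i$, the running integrand carries a coefficient $a_\omega^{(i-1)}(\mathbf{t})_{uv} = \prod_{k < i:\, (u,v) \sim \partial X_k} t_k$ in front of each $V_{uv}$ in the exponent. Multiply the coefficient of every pair crossing $X_i$ by a fresh parameter $t_i \in [0,1]$ and apply the identity
\begin{equation*}
\exp\!\left(-\sum_{\ell \sim \partial X_i} a_\omega^{(i-1)} V_\ell\right) = 1 + \int_0^1 \dd t_i \Bigl(-\sum_{\ell \sim \partial X_i} a_\omega^{(i-1)} V_\ell\Bigr) \exp\!\left(-\sum_{\ell \sim \partial X_i} t_i\, a_\omega^{(i-1)} V_\ell\right).
\end{equation*}
The boundary ``$1$'' term cuts every pair between $X_i$ and its complement, causing the full integrand to factorise as (a function supported on $X_i$)$\cdot \exp(-V(\mathcal{N}\setminus X_i))$; by the inductive hypothesis the first factor integrates to $Q_c(X_i)$. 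In the integral term, expanding the sum over crossing pairs picks one candidate edge $\ell_i \sim \partial X_i$, and the newly included vertex becomes $X_{i+1}\setminus X_i$. Iterating until $X_n = \mathcal{N}$ produces a tree $T \in \mathcal{T}(\mathcal{N})$ with edges $\ell_1,\ldots,\ell_{n-1}$ and a compatible ordering $X_1 \subset \cdots \subset X_n$, i.e.\ a growing path $\omega \in S(T)$.

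\textbf{Weight bookkeeping and conclusion.} Two identifications match the derivative branch to the right-hand side of \eqref{tree-graph}. First, the cumulative multiplication yields the final exponent coefficient $a_\omega(\mathbf{t})_{uv} = \prod_{i:\, (u,v)\sim \partial X_i} t_i$. Second, when the derivative at step $i$ selects $\ell_i$ it brings down $(-V_{\ell_i}) \cdot a_\omega^{(i-1)}(\mathbf{t})_{\ell_i}$; aggregated across all $i$, the total power of $t_k$ in the accumulated prefactor equals the number of tree edges $\ell_j$ with $j \ne k$ that cross $X_k$, namely $b_k - 1$, reproducing $p_{T,\omega}(\mathbf{t}) = \prod_k t_k^{b_k-1}$. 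Combining the boundary and derivative contributions, the iteration evaluates
\begin{equation*}
\exp(-V(\mathcal{N})) = Q_c^{\mathrm{tree}}(\mathcal{N}) + \sum_{B \subsetneq \mathcal{N},\, \min \mathcal{N} \in B} Q_c(B)\, Q(\mathcal{N}\setminus B),
\end{equation*}
where $Q_c^{\mathrm{tree}}$ denotes the right-hand side of \eqref{tree-graph}; comparison with \eqref{def-conn} forces $Q_c^{\mathrm{tree}}(\mathcal{N}) = \exp(-V)_c(\mathcal{N})$.

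\textbf{Main obstacle.} The key subtlety lies in the combinatorial bookkeeping of the derivative branch: one must verify that each pair $(T,\omega) \in \mathcal{T}(\mathcal{N}) \times S(T)$ arises from exactly one iteration trajectory and that the prefactor collected along it matches $p_{T,\omega}(\mathbf{t})$. Growing paths make this clean -- the compatibility condition $a(i) \in \{\omega(1), \ldots, \omega(i-1)\}$ in the definition of $S(T)$ is precisely the requirement $\ell_i \sim \partial X_i$ imposed by the iteration -- but the enumeration of the powers of $t_k$ and the careful separation of ``which factor lives on which subset'' at each boundary step both require care.
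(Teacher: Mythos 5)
Your proposal is correct and follows essentially the same route as the paper's proof: iterated interpolation of the exponent via the fundamental theorem of calculus, factorization of the $t_i=0$ boundary terms into a piece supported on $X_i$ times $\exp(-V(\mathcal{N}\setminus X_i))$, the $b_k-1$ power count for $p_{T,\omega}$, and identification with the defining recursion \eqref{def-conn} of the connected part. The only cosmetic difference is that you organize the final identification as an explicit induction on $|\mathcal{N}|$, whereas the paper runs the full iteration and identifies the resulting kernel $K$ with $\exp(-V)_c$ for all subsets at once via the same recursion.
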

\begin{proof}
For any subset $X \subset \mathcal{N}$, define $L(X) = \{(i,j): i\le j,\,i,j\in X\}$. For $\ell = (i,j) \in L(X)$, let $V_\ell = V_{ij}$. Recall that $V(X) = \sum_{\ell \in L(X)} V_\ell$. Let $X_1 = \{\min\mathcal{N}\}$, $X_2, X_3, \dots, X_r$ ($r < n$) a sequence of subsets of $\mathcal{N}$ such that $X_1 \subset X_2 \subset \dots \subset X_r$ and $|X_i| = i$ ($1 \le i \le r$) and define for any $X_r \subset X \subset \mathcal{N}$,
\begin{equation}
W_X(X_1, X_2, \dots, X_r; t_1, t_2, \dots, t_r) = \sum_{\ell \in L(X)} t_1(\ell) t_2(\ell) \dots t_r(\ell) V_{\ell}
\label{def-W} 
\end{equation}
where $t_i(\ell)$ is defined in \eqref{def-tl}. Observe that
\begin{align}
W_X(X_1, \dots, X_r; t_1, \dots, t_{r-1}, 0) &= W_{X_r}(X_1, \dots, X_{r-1}; t_1, \dots, t_{r-1}) + V( X \setminus X_r) \label{W0}  \\
W_X(X_1, \dots, X_r; t_1, \dots, t_{r-1}, 1) &= W_X(X_1, \dots, X_{r-1}; t_1, \dots, t_{r-1}) \label{W1}
\end{align}
Now we re-express $\exp(-V)$ in terms of $W$ by interpolation: note that
\begin{equation}
W_{\mathcal{N}}(X_1; t_1) = t_1 V(  \mathcal{N}) + (1-t_1) [V(X_1) + V(\mathcal{N} \setminus X_1)]~;
\end{equation}
that is, $W_{\mathcal{N}}(X_1; t_1)$ interpolates between the full $V({\mathcal{N}})$ and two of its ``proper subsets'', $V(X_1)$, $V({\mathcal{N}} \setminus X_1)$. In this way,
\begin{align}
e^{-V(\mathcal{N})} &= e^{-W_\mathcal{N}(X_1; 0)} + \int_0^1 dt_1 \left[ \frac{\partial}{\partial t_1} e^{-W_\mathcal{N}(X_1; t_1)} \right] \nonumber \\
&= e^{-W_\mathcal{N}(X_1; 0)} +\sum_{\ell_1 \sim \partial X_1} (-V_{\ell_1}) \int_0^1 dt_1 \, e^{-W_\mathcal{N}(X_1; t_1)}~, \label{A1}
\end{align}
where we further used the commutativity of the terms $V_{\ell}$ in the above integral. Let us now iterate this construction by re-expressing $W_{\mathcal{N}}(X_1,t_1)$ through interpolation: for fixed $\ell_1 \sim \partial X_1$, let $X_2 = X_1 \cup \ell_1$, notice that
\begin{equation}
W_{\mathcal{N}}(X_1, X_2; t_1, t_2) = t_2 W_{\mathcal{N}}(X_1; t_1) + (1-t_2) [W_{X_2}(X_1; t_1) + V({\mathcal{N}} \setminus X_2)]~,  
\end{equation} 
so that
\begin{align}
e^{-W_{\mathcal{N}}(X_1; t_1)} &= e^{-W_{\mathcal{N}}(X_1, X_2; t_1, 0)} + \int_0^1 dt_2 \left[ \frac{\partial}{\partial t_2} e^{-W_{\mathcal{N}}(X_1, X_2; t_1, t_2)} \right] \nonumber \\
&= e^{-W_{\mathcal{N}}(X_1, X_2; t_1, 0)} + \sum_{\ell_2 \sim \partial X_2} (-V_{\ell_2})\int_0^1 dt_2 \, t_1(\ell_2) \, e^{-W_{\mathcal{N}}(X_1, X_2; t_1, t_2)}~.
\label{A2}
\end{align}
Substituting \eqref{A2} into \eqref{A1} we get:
\begin{align}
e^{-V(\mathcal{N})} &= e^{-W_{\mathcal{N}}(X_1; 0)} + \sum_{\ell_1 \sim \partial X_1} \int_0^1 dt_1 \, (- V_{\ell_1}) e^{-W_{\mathcal N}(X_1, X_2; t_1, 0)} + \nonumber \\
&+ \sum_{\ell_1 \sim \partial X_1} \sum_{\ell_2 \sim \partial X_2} \int_0^1 dt_1 \int_0^1 dt_2 \, (- V_{\ell_1}) (-V_{\ell_2}) t_1(\ell_2) e^{-W_{\mathcal{N}}(X_1, X_2; t_1, t_2)}.
\end{align}
Note that $X_2$ depends on the choice of $\ell_1$, since $X_2=X_1\cup \ell_1$. We now iterate this procedure for $n-1$ steps: at step $n_0$, 
assume that
\begin{align}
\begin{aligned}
e^{-V(\mathcal{N})} &= \sum_{r=1}^{n_0-1} \sum_{\ell_1 \sim\partial X_1} \dots \sum_{\ell_{r-1} \sim \partial X_{r-1}} \prod_{j=1}^{r-1} (-V_{\ell_j}) \int_0^1 dt_1 \dots \int_0^1 dt_{r-1} \\
&\prod_{j=1}^{r-2} t_1(\ell_{j+1}) \cdots t_j(\ell_{j+1}) e^{-W_{\mathcal{N}}(X_1, \dots, X_{r}; t_1, \dots, t_{r-1},0)}\\
&+\sum_{\ell_1 \sim\partial X_1} \dots \sum_{\ell_{n_0-1} \sim \partial X_{n_0-1}} \prod_{j=1}^{n_0-1} (-V_{\ell_j}) \int_0^1 dt_1 \dots \int_0^1 dt_{n_0-1}\\
&\prod_{j=1}^{n_0-2} t_1(\ell_{j+1}) \cdots t_j(\ell_{j+1}) e^{-W_{\mathcal{N}}(X_1, \dots, X_{n_0-1}; t_1, \dots, t_{n_0-1})}\label{eq:r0step1}
\end{aligned}
\end{align}
where $X_{n_0}=X_{n_0-1}\cup \ell_{n_0-1}$. Rewriting $X_{[1:n_0]}:=X_1,\dots,X_{n_0}$, $t_{[1:n_0]}:=t_1,\dots, t_{n_0}$, we note that in the last term in the sum over $r$ above: 
\begin{align}
W_{\mathcal{N}}(X_{[1,n_0]};t_{[1,n_0]}) =t_{n_0}W_{\mathcal{N}}(X_{[1,n_0-1]};t_{[1:n_0-1]})+(1-t_{n_0})[W_{X_{n_0}}(X_{[1:n_0-1]};t_{[1:n_0-1]})+V(\mathcal{N}\backslash X_{n_0})].
\end{align}
Thus,
\begin{align}
&e^{-W_{\mathcal{N}}(X_{[1,n_0-1]},t_{[1,n_0-1]})}=e^{-W_{\mathcal{N}}(X_{[1,n_0]},t_{[1,n_0-1]},0)}+\int_0^1 dt_{n_0}\left[\frac{\partial}{\partial t_{n_0}}e^{-W_{\mathcal{N}}(X_{[1:n_0]};t_{[1:n_0]})}\right]\nonumber\\
&~~~~~ =e^{-W_{\mathcal{N}}(X_{[1,n_0]},t_{[1,n_0-1]},0)}+\sum_{\ell_{n_0}\sim \partial X_{n_0}}(-V_{\ell_{n_0}})\int_0^1 dt_{n_0}t_{1}(\ell_{n_0})\cdots t_{r_{0-1}}(\ell_{n_0}) e^{-{W}_{\mathcal{N}}(X_{[1:n_0]};t_{[1:n_0]})}.\label{eq:r0step}
\end{align}
Substituting \eqref{eq:r0step} into \eqref{eq:r0step1}, we get
\begin{align}
\begin{aligned}
e^{-V(\mathcal{N})}& = \sum_{r=1}^{n_0} \sum_{\ell_1 \sim\partial X_1} \dots \sum_{\ell_{r-1} \sim \partial X_{r-1}} \prod_{j=1}^{r-1} (-V_{\ell_j}) \int_0^1 dt_1 \dots \int_0^1 dt_{r-1} \\
&\prod_{j=1}^{r-2} t_1(\ell_{j+1}) \cdots t_j(\ell_{j+1}) e^{-W_{\mathcal{N}}(X_1, \dots, X_{r}; t_1, \dots, t_{r-1},0)} \\
&+\sum_{\ell_1 \sim\partial X_1} \dots \sum_{\ell_{n_0} \sim \partial X_{n_0}} \prod_{j=1}^{n_0} (-V_{\ell_j}) \int_0^1 dt_1 \dots \int_0^1 dt_{n_0}\\
&\prod_{j=1}^{n_0-1} t_1(\ell_{j+1}) \cdots t_j(\ell_{j+1}) t_{n_0-1}(\ell_{n_0})e^{-W_{\mathcal{N}}(X_{[1:n_0]};t_{[1:n_0]})},
\end{aligned}
\end{align}
and hence the induction step is verified. Repeating till $n_0=n$, we arrive at
\begin{align*}
e^{-V(\mathcal{N})}&=\sum_{r=1}^{n} \sum_{\ell_1 \sim\partial X_1} \dots \sum_{\ell_{r-1} \sim \partial X_{r-1}} \prod_{j=1}^{r-1} (-V_{\ell_j}) \int_0^1 dt_1 \dots \int_0^1 dt_{r-1} \\
&\prod_{j=1}^{r-2} t_1(\ell_{j+1}) \cdots t_j(\ell_{j+1}) e^{-W_{X_r}(X_1, \dots, X_{r-1}; t_1, \dots, t_{r-1})}e^{-V(\mathcal{N}\backslash X_r)},
\end{align*}
where we used \eqref{W0}, and $X_r$ is once again constructed by $X_{r} = X_{r-1} \cup \ell_{r-1}$. 

Next, we observe that in the summation $\sum_{\ell_1 \sim\partial X_1} \dots \sum_{\ell_{r-1} \sim \partial X_{r-1} }$, the constraint $\ell_i \sim \partial X_i$ implies that $\{\ell_1,\cdots,\ell_{r-1} \}$ forms a tree ${T}$ on $\mathcal{T}(X_{r})$, and the sequence $(X_1,\cdots,X_r)$ is compatible with $T$. We can reformulate the summation as
\begin{align*}
\sum_{\ell_1 \sim\partial X_1} \dots \sum_{\ell_{r} \sim \partial X_r } = \sum_{\substack{X_{r} \subset \mathcal{N},\\ \,|X_{r}| = r,\min\mathcal{N} \in X_r}} \sum_{T \in \mathcal{T}(X_{r})} \sum_{\substack{{(}X_1,\cdots,X_{{r}}{)}\, \\ \text{compatible with}\,{T}}} 
\end{align*}
Thus we get 
\begin{align} \label{re-sum}
e^{-V({\mathcal{N}})} & = \sum_{r=1}^{n} \sum_{\substack{X_{r} \subset \mathcal{N},\\ \,|X_{r}| = r,\min\mathcal{N} \in X_r}} e^{-V(\mathcal{N} \setminus X_{r})} \sum_{T \in \mathcal{T}(X_{r})} \prod_{j=1}^{r-1}(-V_{\ell_j})\int_0^1 dt_1\cdots\int_0^1dt_{r-1} \\ &\sum_{\substack{{(}X_1,\cdots,{X_{r})} \, \\ \text{compatible with}\,{T}}} \prod_{j=1}^{r-2} t_1(\ell_{j+1}) \cdots t_j(\ell_{j+1}) e^{-W_{X_{r}}(X_1, \dots, X_{r-1}; t_1, \dots, t_{r-1})},\nonumber
\end{align}
where $\ell_j$ is the edge in $T$ that satisfies $\ell_j \sim \partial X_j$ {but not $\ell_j \sim \partial X_{j+1}$}. Let $b_j$ be the number of edges in $T$ crossing $X_j$, then {since given a tree $T\in\mathcal{T}(X_r)$ with compatible sequence $(X_1,\cdots ,X_r)$, the only edges that can cross $X_j$ are $\ell_j,\cdots ,\ell_{r-1}$, with $\ell_j$ definitely being one of them}, we have
$$
\prod_{j=1}^{r-2} t_1(\ell_{j+1}) \cdots t_j(\ell_{j+1}) = \prod_{j=1}^{r-1}t_j^{b_j-1}.
$$
Define $K(X_r) = 1$ for $|X_l| = 1$ and, for ${|}X_r{|} \ge 2$,
\begin{align*}
K(X_r) =   \sum_{T \in \mathcal{T}(X_{r})} \prod_{j=1}^{r-1}(-V_{\ell_j}){\int_{[0,1]^{r-1}}d\mathbf{t}}\sum_{\substack{{(X_1,\cdots,X_{r})}\, \\ \text{compatible with}\,{T}}}  \left(\prod_{j=1}^{r-1}t_j^{b_j-1} \right)e^{-W_{X_{r}}(X_1, \dots, X_{r-1}; t_1, \dots, t_{r-1})}.
\end{align*}
Combining this with \eqref{re-sum}, 
$$
e^{-V(\mathcal{N})} = \sum_{{r}=1}^n \sum_{\substack{X_{r} \subset \mathcal{N},\\ \,|X_{r}| = r,\min\mathcal{N} \in X_r}}K(X_{r})\exp(-V({\mathcal{N}}\setminus X_{r})).
$$
Comparing this with the definition of connected part of $\exp(-V)$ \eqref{def-conn}, we can directly identify $K$ with $\exp(-V)_c$.
\end{proof}

The following properties of the weight function $p_{T,\omega}$ in the tree-graph identity is important to design the sampling algorithm. 
\begin{prop} \label{sequence-weight}
   For a fixed tree $T$ and a growing path $\omega$ of $T$. Let $b_i$ be the number of edges crossing $X_i = \{\omega(1),\cdots,\omega(i)\}$, we have 
\begin{align} \label{part1}   \int_{[0,1]^{n-1}} \dd \mathbf{t}  p_{T,\omega}(\mathbf{t}) = \frac{1}{b_1\cdots b_{n-1}}. \end{align}
Moreover,
   \begin{align}\label{part2}\sum_{\omega \in S(T)} \int_{[0,1]^{n-1}} \dd \mathbf{t}  p_{T,\omega}(\mathbf{t}) = 1.\end{align}
\end{prop}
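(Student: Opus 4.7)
The plan is to handle the two identities separately, exploiting the product structure of $p_{T,\omega}$ and a probabilistic interpretation of the resulting weights.

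For \eqref{part1}, the argument is a direct computation. Since $p_{T,\omega}(\mathbf{t}) = \prod_{i=1}^{n-1} t_i^{b_i-1}$ factorizes across the coordinates, Fubini's theorem reduces the integral to the product
\[
\int_{[0,1]^{n-1}} \dd \mathbf{t}\, p_{T,\omega}(\mathbf{t}) = \prod_{i=1}^{n-1} \int_0^1 t_i^{b_i-1}\, \dd t_i = \prod_{i=1}^{n-1}\frac{1}{b_i}.
\]
Here one only needs to note that $b_i\ge 1$ for every $i\in\{1,\dots,n-1\}$, which holds because the tree $T$ is connected and the compatibility of $(X_1,\dots,X_{n-1})$ with $T$ forces at least one edge of $T$ to cross $X_i$ (otherwise no vertex outside $X_i$ could be reached).

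For \eqref{part2}, the plan is to interpret the weights $1/(b_1\cdots b_{n-1})$ as the probability of drawing the growing path $\omega$ under the following sequential procedure, rooted at $\omega(1)=\min\mathcal{N}$: at step $i$, having chosen $X_i=\{\omega(1),\dots,\omega(i)\}$, pick $\omega(i+1)$ uniformly at random from the set of vertices in $\mathcal{N}\setminus X_i$ whose unique parent in $T$ (with respect to the chosen root) already lies in $X_i$. The key observation I would establish first is that this set of candidates is in bijection with the edges of $T$ crossing $X_i$: each such edge $(j,k)\in T$ with $j\in X_i$, $k\notin X_i$ contributes exactly the candidate $k$, and conversely every valid candidate is obtained this way since the tree has a unique path from $\min\mathcal{N}$ to any vertex. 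Consequently the number of candidates equals $b_i$, so the probability of producing any particular $\omega$ is $\prod_{i=1}^{n-1}\frac{1}{b_i}=\int p_{T,\omega}(\mathbf{t})\dd\mathbf{t}$.

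Putting the two parts together, summing over $\omega\in S(T)$ and using that the sequential procedure terminates almost surely after $n-1$ steps (since each step adds exactly one vertex and every vertex of $\mathcal{N}$ is eventually reachable by the connectedness of $T$), one gets
\[
\sum_{\omega\in S(T)}\int_{[0,1]^{n-1}}\dd\mathbf{t}\,p_{T,\omega}(\mathbf{t})=\sum_{\omega\in S(T)}\mathbb{P}(\omega)=1.
\]
The only subtle point is the bijection between valid next choices and edges of $T$ crossing $X_i$; once that is checked, both identities follow with essentially no computation. I do not anticipate any real obstacle, only the need to state carefully the parent relation $a(\cdot)$ induced by rooting $T$ at $\min\mathcal{N}$.
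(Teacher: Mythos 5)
Your proposal is correct and follows essentially the same route as the paper: \eqref{part1} by direct coordinate-wise integration, and \eqref{part2} by the sequential (telescoping) summation over growing paths, using that the number of valid choices for $\omega(i+1)$ equals the number $b_i$ of edges crossing $X_i$. The probabilistic phrasing and the explicit bijection between candidate vertices and crossing edges (which relies on $X_i$ inducing a connected subtree, so no outside vertex has two neighbors inside) is just a slightly more detailed rendering of the paper's same argument.
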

\begin{proof}
The proof of \eqref{part1} follows from direct calculation $$\int_0^1 t_i^{b_i-1}\dd t = \frac{1}{b_i}.$$
For the second part, we sum over all possible compatible sequences sequentially: note that once $\omega(1),\cdots,\omega(i)$ is fixed, $\omega(i+1)$ has exactly $b_i$ choices, thus
$$
\sum_{\omega \in S(T)} \frac{1}{b_1\cdots b_{n-1}} = \sum_{\omega(2)}\frac{1}{b_1} \sum_{\omega(3)}\frac{1}{b_2} \cdots \sum_{\omega(n-1)}\frac{1}{b_{n-2} } = 1,
$$
where we used $b_{n-1} = 1$.
\end{proof}
\noindent We end this section with a proposition which will be crucial for deriving the determinant bound in Section \ref{sec:detbound}.
\begin{prop} \label{gram}
Given a sequence $ X_1 \subset \cdots \subset X_{n-1} \subset \mathcal{N}$ with $|X_i| = i$ and $\mathbf{t} \in \RR^{n-1}$, there is a set of unit vectors $u_1,\cdots,u_n \in \RR^{n}$ for $i \in \mathcal{N} $ such that 
$$ a_{\omega}(\mathbf{t})_{jk} =  \langle u_j,u_k\rangle,\quad \forall\,j,k\in \mathcal{N}.    $$
\end{prop}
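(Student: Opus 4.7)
The plan is to construct the unit vectors $u_j$ explicitly by induction along the growing path, exploiting the monotonicity of the chain $X_1 \subset X_2 \subset \cdots \subset X_{n-1}$.

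First, I would extract the key structural fact about the matrix $a_\omega(\mathbf{t})$. With the conventions $X_0=\emptyset$ and $X_n=\mathcal{N}$, assign each $j\in\mathcal{N}$ a \emph{birth time} $s_j:=\min\{i\in[n]:j\in X_i\}$, so that the map $j\mapsto s_j$ is a bijection between $\mathcal{N}$ and $[n]$. Write $j_i$ for the unique vertex with birth time $i$. Because the chain is monotone increasing, for any pair $j_s \neq j_r$ with $s<r$, the pair $(j_s,j_r)$ crosses $X_i$ if and only if $s\le i<r$. Consequently, by the definition of $a_\omega(\mathbf{t})_{jk}$,
\[
a_\omega(\mathbf{t})_{j_s,j_r}=\prod_{i=s}^{r-1}t_i,\qquad 1\le s<r\le n,
\]
and $a_\omega(\mathbf{t})_{j_i,j_i}=1$.

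Next, I would construct the vectors by recursion. Let $e_1,\ldots,e_n$ be the standard basis of $\RR^n$. Since $\mathbf{t}\in[0,1]^{n-1}$, the quantities $\sqrt{1-t_i^2}$ are real, so I define
\[
u_{j_1}:=e_1,\qquad u_{j_i}:=t_{i-1}\,u_{j_{i-1}}+\sqrt{1-t_{i-1}^2}\,e_i\quad(i\ge 2).
\]

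Finally, I would verify by induction on $i$ two invariants: (a) $u_{j_i}$ is a unit vector supported on the coordinates $\{1,\ldots,i\}$; (b) $\langle u_{j_i},u_{j_s}\rangle=\prod_{r=s}^{i-1}t_r$ for all $1\le s<i$. Invariant (a) follows immediately from the recursion, since $u_{j_{i-1}}$ is supported on $\{1,\ldots,i-1\}$ (hence orthogonal to $e_i$), giving $\|u_{j_i}\|^2=t_{i-1}^2+(1-t_{i-1}^2)=1$. For invariant (b), the same orthogonality yields $\langle u_{j_i},u_{j_s}\rangle=t_{i-1}\langle u_{j_{i-1}},u_{j_s}\rangle$, and the induction hypothesis gives $\langle u_{j_{i-1}},u_{j_s}\rangle=\prod_{r=s}^{i-2}t_r$ for $s<i-1$ (and trivially $1$ for $s=i-1$). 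This produces exactly the product $\prod_{r=s}^{i-1}t_r=a_\omega(\mathbf{t})_{j_s,j_i}$, completing the Gram representation.

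There is no real obstacle here; the only step requiring thought is the observation that the monotone chain forces the crossing pattern of every pair to be a contiguous interval of indices, which turns $a_\omega(\mathbf{t})_{j_s,j_r}$ into the product $\prod_{r=s}^{i-1}t_r$ and lets the telescoping recursion go through. Everything else is a direct inductive computation.
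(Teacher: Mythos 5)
Your proposal is correct and follows essentially the same route as the paper: you identify that the monotone chain forces $a_\omega(\mathbf{t})_{j_s,j_r}=\prod_{i=s}^{r-1}t_i$ and then build the vectors by the recursion $u_{j_i}=t_{i-1}u_{j_{i-1}}+\sqrt{1-t_{i-1}^2}\,e_i$, which is exactly the paper's construction. The only (cosmetic) difference is that you verify the inner products by induction with the two invariants, whereas the paper expands $u_j$ explicitly and checks the products by a telescoping sum.
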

\begin{proof}
We recall that 
\begin{align*} 
a_{\omega}(\mathbf{t})_{jk} :=
\begin{cases}
\prod_{i=1}^{n-1} t_i((j,k)), \,j\ne k \\
1,\,j = k.
\end{cases}.
\end{align*}
Without loss of generality, we assume $\mathcal{N} = \{1,\cdots,n\}$, and $X_i = \{1,\cdots,i\}$. Then for $j < k$, $a_\omega (\mathbf{t})_{jk}  = \prod_{i=j}^{k-1} t_i$. Let ${e}_1,\cdots,{e}_n$ be a orthonormal basis of $\RR^n$, define $u_1 = e_1$ and $u_j = t_{j-1}u_{j-1} + e_j\sqrt{1-t_{j-1}^2} $ for $2\le j \le n$, that is given $t_0\equiv 0$,
\begin{align*}
u_j=\sum_{\ell=0}^{j-1}  t_{\ell+1}\cdots t_{j-1} e_{\ell+1}\sqrt{1-t_\ell^2},
\end{align*}
where by convention, empty products evaluate to $1$. They satisfy the desired relations: first by a telescopic sum argument, 
\begin{align*}
\langle u_j, u_j\rangle &=\sum_{\ell,\ell'=0}^{j-1} t_{\ell+1}\cdots t_{j-1}t_{\ell'+1}\cdots t_{j-1}\sqrt{1-t_{\ell^2}}\sqrt{1-t_{\ell'}^2}\langle e_{\ell+1},e_{\ell'+1\rangle}\\
&=\sum_{\ell=0}^{j-1}(t_{\ell+1}\cdots t_{j-1})^2(1-t_\ell^2)=1,
\end{align*}
while for $j< k$,
\begin{align*}
\langle u_j,u_k\rangle &=\sum_{\ell=0}^{j-1}\sum_{\ell'=0}^{k-1} t_{\ell+1}\cdots t_{j-1}t_{\ell'+1}\cdots t_{k-1}\sqrt{1-t_{\ell^2}}\sqrt{1-t_{\ell'}^2}\langle e_{\ell+1},e_{\ell'+1\rangle}\\
&=\sum_{\ell=0}^{j-1}(t_{\ell+1}\cdots t_{j-1})^2 t_j\cdots t_{k-1}(1-t_\ell^2)=\prod_{i=j}^{k-1}t_i.
\end{align*}
The result follows. 

\end{proof}

\subsection{Grassmann Variables} 
\label{sec:grassmann_variables}

Another key element in the derivation of the determinant expansion \eqref{eq:determinant_expansion} is the use of Grassmann variables, whose algebra and integration rules we briefly recall here. A Grassmann algebra is a free algebra over $\mathbb{C}$, generated by a set of elements $\{ \phi_a^+, \phi_a^- \}_{a=1}^N$ with the anti-commutation relations:
\begin{equation}
    \{ \phi_a^{\sigma}, \phi_b^{\tau} \} = \phi_a^{\sigma} \phi_b^{\tau} + \phi_b^{\tau} \phi_a^{\sigma} = 0,
\end{equation}
for all $a,b \in \{1, \ldots, N\}$ and $\sigma, \tau \in \{+,-\}$. A direct consequence is the nilpotency property, $(\phi_a^{\sigma})^2 = 0$. We will often write these variables collectively as vectors, $\phi^+ = (\phi_1^+, \phi_2^+, \dots, \phi_N^+)$ and $\phi^- = (\phi_1^-, \phi_2^-, \dots, \phi_N^-)$. Any function of these variables can be expanded as a finite polynomial, and the algebra they generate has a dimension of $2^{2N}$: indeed, arbitrarily ordering the generators as $\phi_1,\cdots,\phi_{2N}$,
since each generator anticommutes and squares to zero, any monomial can be reordered (up to a sign) into strictly increasing order, so the set
\[
\{\,1\,\} \cup \{\,\phi_{i_1}\phi_{i_2}\cdots\phi_{i_k} : 1 \le i_1 < i_2 < \cdots < i_k \le 2N\,\}
\]
spans the Grassmann algebra. These monomials are linearly independent, so they form a basis. Therefore, the dimension is equal to the sum $\sum_{k=0}^{2N} \binom{2N}{k} = 2^{2N}$.
Another important straightforward property is that even-degree monomials of Grassmann variables commute with all elements of the algebra.

\paragraph*{Berezin Integration}
 The Berezin integral over the Grassmann algebra is defined as follow: For a single variable $\phi$, the integration is defined by the following rules:
\begin{equation}
    \int\dd\phi \, 1 = 0, \quad \int\dd\phi \, \phi = 1,
\end{equation}
and in a multivariate setting
$$
\int \dd \phi_1 (\phi_1 \cdots \phi_n) = \phi_2 \cdots \phi_n.
$$
This definition is analogous to the expectation values in a two-level fermionic system. If we identify $\phi$ with a creation operator $\psi^\dag$ and the integration measure $\dd\phi$ with the corresponding annihilation operator $\psi$, the rules correspond to taking an expectation value with respect to a vacuum state $\ket{\mathrm{vac}}$:
\begin{equation}
    \braket{\mathrm{vac}|\psi^\dagger|\mathrm{vac}} = \int \dd\phi \, 1 = 0, \quad \braket{\mathrm{vac}|\psi \psi^\dag|\mathrm{vac}} = \int \dd\phi \, \phi = 1.
\end{equation}
For multiple variables, the differential elements also anti-commute, e.g., $\dd\phi_a^+ \dd\phi_b^- = -\dd\phi_b^- \dd\phi_a^+$. 
The standard measure for our set of variables is defined as $\dd\phi = \dd\phi_1^- \dd\phi_1^+ \cdots \dd\phi_N^- \dd\phi_N^+$.
\paragraph*{Gaussian Integrals}

The evaluation of Gaussian integrals is a cornerstone of the formalism. The principal results are summarized in the following lemma.

\begin{lemma} \label{gaussian-grassmann} \cite{Charret:1996zy}
Let $A$ be an $N \times N$ invertible complex matrix. The integral measure over all $2N$ Grassmann variables is denoted by $\dd\phi \equiv \prod_{i=1}^{N} \dd\phi_i^+\dd\phi_i^-$.

\begin{enumerate}
    \item The fundamental Gaussian integral evaluates to the determinant of the matrix:
    \begin{equation}
        \int \dd\phi \, \exp\left(-\sum_{i,j=1}^N \phi_i^- A_{ij} \phi_j^+\right) = \det(A).
    \end{equation}

    \item For an integer $r$ with $1 \le r \le N$, let $\mathbf{a} = \{a_1, \ldots, a_r\}$ and $\mathbf{b} = \{b_1, \ldots, b_r\}$ be two sets of distinct indices. The integral of a monomial against the Gaussian weight is
    \begin{equation} \label{eq:gra-moment}
        \int \dd\phi \, e^{-\sum_{i,j} \phi_i^- A_{ij} \phi_j^+} \phi^-_{a_1} \phi_{b_1}^+\cdots\phi^-_{a_r}\phi^+_{b_r} = (-1)^{\sum_{k=1}^r (a_k+b_k)} \det(A_{\hat{\mathbf{a}},\hat{\mathbf{b}}}),
    \end{equation}
    where $A_{\hat{\mathbf{a}},\hat{\mathbf{b}}}$ is the $(N-r) \times (N-r)$ submatrix of $A$ obtained by deleting the rows indexed by $\mathbf{a}$ and the columns indexed by $\mathbf{b}$.
\end{enumerate}
\end{lemma}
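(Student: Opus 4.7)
The plan is to prove both identities from the same template: expand the exponential as a finite polynomial using nilpotency, project onto the unique sector of monomials that integrate to a nonzero value (namely those containing every generator exactly once), and recognize the resulting sum over index permutations as the Leibniz expansion of a determinant.

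For Part 1, I would set $S=\sum_{i,j}\phi_i^- A_{ij}\phi_j^+$; since $S$ is an even element of the algebra, its summands pairwise commute and $\exp(-S)=\sum_k(-S)^k/k!$ is well-defined as a finite polynomial. Any monomial in $S^k$ contains exactly $k$ generators of each of the two types, so by the Berezin rules only the $k=N$ term survives integration, giving $\int d\phi\,e^{-S}=\frac{(-1)^N}{N!}\int d\phi\,S^N$. Expanding $S^N$, the surviving index tuples are precisely pairs of permutations $(\sigma,\tau)\in S_N\times S_N$. Since each pair $\phi_i^-\phi_j^+$ is of even degree and hence commutes as a block with other such pairs, I would reorder the monomials into canonical form $\phi_1^-\phi_{\pi(1)}^+\phi_2^-\phi_{\pi(2)}^+\cdots$ with $\pi=\tau\sigma^{-1}$. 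Reordering produces $N!$ identical copies per $\pi$ (cancelling the $1/N!$), and then bringing the $\phi^+$'s into their matching slots contributes exactly $\mathrm{sgn}(\pi)$. Summing over $\pi$ yields $\sum_\pi\mathrm{sgn}(\pi)\prod_k A_{k,\pi(k)}=\det(A)$, with the residual $(-1)^N$ absorbed by the orientation of $d\phi$ (pinned down by the single-pair normalization).

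Part 2 applies the same mechanism to the enlarged integrand. The inserted monomial $\phi_{a_1}^-\phi_{b_1}^+\cdots\phi_{a_r}^-\phi_{b_r}^+$ already supplies $r$ generators of each type, so only the order-$(N-r)$ term of the exponential can contribute, and the summation indices are forced to range over the complementary sets $\hat{\mathbf{a}},\hat{\mathbf{b}}$. The permutation-and-reordering argument then reproduces $\det(A_{\hat{\mathbf{a}},\hat{\mathbf{b}}})$, and the overall prefactor $(-1)^{\sum_k(a_k+b_k)}$ arises from the fixed cost of anti-commuting the inserted generators past the generators supplied by the exponential to reach canonical order.

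The main obstacle is the sign bookkeeping: one must verify that the prefactor $(-1)^{N-r}/(N-r)!$, the permutation sign from reordering the Grassmann generators, the orientation of the Berezin measure, and the signs accumulated when moving the inserted pairs $(\phi_{a_k}^-,\phi_{b_k}^+)$ into canonical position all combine into precisely the claimed $(-1)^{\sum_k(a_k+b_k)}\det(A_{\hat{\mathbf{a}},\hat{\mathbf{b}}})$ \emph{independently} of which representative one tracks. Fixing a measure-ordering convention up front, moving the inserted $\phi_{a_k}^-$'s leftmost in increasing order of $a_k$ and the $\phi_{b_k}^+$'s correspondingly, and then computing the sign of a single representative term (e.g.\ $\pi=\mathrm{id}$) to calibrate, should reduce the bookkeeping to a mechanical check.
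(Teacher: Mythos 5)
The paper does not prove this lemma at all --- it is quoted from \cite{Charret:1996zy} --- so your sketch is not competing with an in-paper argument but with the standard literature proof, and what you outline is exactly that standard proof: expand $e^{-S}$ (finite by nilpotency), observe that only the top-degree sector survives Berezin integration, and read off the Leibniz formula from the permutation sum. Part 1 is correct as outlined: the $N!$ redundancy over pairs $(\sigma,\tau)$ with fixed $\pi=\tau\sigma^{-1}$ does cancel the $1/N!$, even blocks commute so the reordering into minus-index-sorted form is sign-free, and sorting the $\phi^+$'s contributes $\mathrm{sgn}(\pi)$. The one thing to treat with more care than a parenthetical is the measure orientation: the lemma writes $\dd\phi=\prod_i\dd\phi_i^+\dd\phi_i^-$ while the surrounding text defines $\dd\phi=\dd\phi_1^-\dd\phi_1^+\cdots$, and these differ by $(-1)^N$ on the canonical monomial $\phi_1^-\phi_1^+\cdots\phi_N^-\phi_N^+$; only one of the two orderings makes Part 1 come out as $+\det(A)$, so "absorbed by the orientation of $\dd\phi$" should be replaced by an explicit choice. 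For Part 2 your reduction to the order-$(N-r)$ term with indices forced into the complements is right, and calibrating the overall sign on a single representative is legitimate \emph{because} your antisymmetry argument already fixes all relative signs; but be aware that this calibration is where the entire content of \eqref{eq:gra-moment} lives, that the answer depends on the interleaved ordering $\phi_{a_1}^-\phi_{b_1}^+\cdots\phi_{a_r}^-\phi_{b_r}^+$ and on whether the $a_k$, $b_k$ are taken increasing (the statement writes them as sets, which is not enough to pin down the sign), and that the resulting prefactor feeds directly into the definition of $\alpha_{\chi_{ij}}$ in Theorem \ref{cumulant-representation} and must be consistent with the convention used in \eqref{Wickidentity}. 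Carrying out that one explicit low-dimensional check against a fixed measure ordering is the only missing step; the rest of the proposal is sound.
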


\subsubsection{From Tree-Graph Identity to the Tree-Determinant Expansion}\label{sec:grassplustreegraph}

In this section, we combine the tree-graph identity with Gaussian integral over Grassmann variables to derive a formula for the cumulant function $\mathcal{E}_c$.
In order to do this, we reformulate the correlation function using the integration of the Grassmann variable to compute $\mathcal{E}_c$. We fix interactions $P_1,\cdots,P_s \in \mathcal{P}$ and aim to find an expression of
$\mathcal{E}_c(\{P_i,\tau_i\}_{i\in[s]})$.
 We introduce a set of Grassmann variables $(\phi_{ik}^+,\phi_{ik}^-)_{1\le i\le s,\,1\le k \le m_{P_i}}$, denote 
 $$\boldsymbol{\phi}_{i}^+ = (\phi_{i,1}^+,\cdots,\phi_{i,m_{P_i}}^+  ),\quad \boldsymbol{\phi}_{i}^- = (\phi_{i,1}^-,\cdots,\phi_{i,m_{P_i}}^-   )$$
and define
 \begin{align*}
V_{ij}=
\begin{cases}
     &  \boldsymbol{\phi}_{i}^- G_{\tau_i-\tau_j}(P_i^-,P_j^+) (\boldsymbol{\phi}_{j}^+)^{{\intercal}} +  \boldsymbol{\phi}_{j}^-G_{\tau_j-\tau_i}(P_j^-,P_i^+) (\boldsymbol{\phi}_{i}^+)^{{\intercal}} ,\quad i < j  \\
     & \boldsymbol{\phi}_{i}^- G_{0}(P_i^-,P_i^+)  (\boldsymbol{\phi}_{i}^+)^{{\intercal}} ,\quad i = j
\end{cases}
 \end{align*}
 where $G_{\tau_i-\tau_j}(P_i^-,P_i^+) $ are defined in \eqref{green-matrix}. First, using the Grassmann Gaussian integral formula of Lemma \ref{gaussian-grassmann}, the multi-point correlation function can be reformulated as the following Grassmann integral (cf. \eqref{Wickidentity}):
\begin{align}
\mathcal{E}(\{P_i, \tau_i\}_{i \in [s]}) &=\!  (-1)^{\sum_{i\in [s]} m_{P_i}(m_{P_i}-1)/2} \det(\mathbf{G}(\{P_i, \tau_i\}_{i\in [s]})) \nonumber\\
&= \!(-1)^{\sum_{i\in [s]} m_{P_i}(m_{P_i}-1)/2 } \! \int \!D[\boldsymbol{\Phi}] \exp \left(-(\boldsymbol{\phi}_1^-,\cdots,\boldsymbol{\phi}_s^-) \mathbf{G}(\{P_i, \tau_i\}_{i \in [s]}) (\boldsymbol{\phi}_1^+,\cdots,\boldsymbol{\phi}_s^+)^\intercal\right) \nonumber\\
&= \!\int D[\boldsymbol{\Phi}] \exp \left(-\sum_{1 \le i \le j \le s} V_{ij}\right) \prod_{i=1}^s \alpha_{P_i}\nonumber\\
&=\! \int D[\boldsymbol{\Phi}] \exp(-V(\mathcal{N})) \prod_{i=1}^s \alpha_{P_i}, \label{eq:correlationfunc}
\end{align}
where we recall $V(\mathcal{N})$ was defined in \eqref{eq:V(N)}, and where $D[\boldsymbol{\Phi}]  =\prod_{i\in [s]}\prod_{1\le k\le m_{P_i}} \dd \phi_{i,k}^+ \phi_{i,k}^-$ and $\alpha_{P_i} = (-1)^{m_{P_i}(m_{P_i}-1){/2}}$. The following Lemma establishes a direct connection between the Gaussian integral representation of a map and that of its connected part (see Definition \ref{def:QQc}):
\begin{lemma}
Given a commutative algebra $\mathcal{A}$, for any map $Q: 2^{[s]} \to \mathcal{A}$ and a function $f:{[s] }\to \RR$, define $Y:2^{[s]}\to\mathcal A$ as follows
$$
Y(B)=\int D[\boldsymbol{\Phi}_B] Q(B)\prod_{i\in B} f(i)\,,
$$
where $D[\boldsymbol{\Phi}_B] =\prod_{i\in B}\prod_{1\le k\le m_i} \dd\phi_{i,k}^+\dd\phi_{i,k}^-  $ for any subset $B \subset [s]$. Then, we have that
$$
Y_c(B)=  \int D[\Phi_B]\left( Q_c(B)\prod_{i\in B} f(i)\right)\,,
$$
{where $Q_c:2^{[s]}\to \mathcal{A}$, resp.~$Y_c:2^{[s]}\to \mathcal{A}$, represents the connected part of $Q$, resp.~that of $Y$.}
\begin{proof}
We note that $D[\boldsymbol{\Phi}_{B_1}]$ commutes with $D[\boldsymbol{\Phi}_{B_2}]$ for any $B_1,B_2 \subset [s]$ since each of them has even parity. Given a set $\mathcal{N}\subseteq [s]$, we recall that $\mathbf{P}_\mathcal{N}$ denotes the set of partitions of $\mathcal{N}$. For any $\Pi \in {\mathbf{P}_{\mathcal{N}}}$, we have $D[\boldsymbol{\Phi}_{\mathcal{N}}] = \prod_{B \in \Pi}D[\boldsymbol\Phi_B]$.
Thus the proof follows by linearity from the defining relation \eqref{def-conn}:
 \begin{align*}\int D[\boldsymbol\Phi_{\mathcal{N}}]\left(Q(\mathcal{N})\prod_{i\in {\mathcal{N}}} f(i)\right) &= \sum_{\Pi \in \mathbf{P}_\mathcal{N}}\int D[\boldsymbol\Phi_{\mathcal{N}}] \prod_{B \in \Pi} \left( Q_{c}(B)\prod_{i\in B} f(i) \right) \\ &= \sum_{\Pi \in \mathbf{P}_\mathcal{N}} \prod_{B\in \Pi}\int D[\boldsymbol\Phi_{{B}}] \left(  Q_{c}(B) \prod_{i\in B} f(i)\right).  \end{align*}
\end{proof}
\end{lemma}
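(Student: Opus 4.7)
The goal is to show that the Berezin integral commutes with the operation of taking the connected part, namely that $Y_c(B) = \int D[\boldsymbol\Phi_B]\, Q_c(B) \prod_{i\in B} f(i)$. My plan is to argue this by uniqueness: the connected part $Y_c$ is characterized uniquely by the defining relation $Y(\mathcal{N}) = \sum_{\Pi \in \mathbf{P}_\mathcal{N}} \prod_{B\in\Pi} Y_c(B)$ from Definition \ref{def:QQc}, so it suffices to verify that the right-hand side candidate $\tilde Y_c(B) := \int D[\boldsymbol\Phi_B]\, Q_c(B) \prod_{i\in B} f(i)$ reproduces $Y$ through this same relation.

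Starting from the definition of $Y(\mathcal{N})$, I would substitute the defining relation $Q(\mathcal{N}) = \sum_{\Pi\in\mathbf{P}_\mathcal{N}} \prod_{B\in\Pi} Q_c(B)$ into the integrand, obtaining
\begin{equation*}
Y(\mathcal{N}) = \sum_{\Pi\in\mathbf{P}_\mathcal{N}} \int D[\boldsymbol\Phi_\mathcal{N}] \prod_{B\in\Pi}\Bigl(Q_c(B)\prod_{i\in B} f(i)\Bigr).
\end{equation*}
Next, I would use that the index sets of a partition are disjoint to factor the measure as $D[\boldsymbol\Phi_\mathcal{N}] = \prod_{B\in\Pi} D[\boldsymbol\Phi_B]$ (this is a bookkeeping identity up to sign, but the full measure involves an even number of Grassmann differentials per block, so no sign arises).

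The critical step is the factorization of the integral into a product of block integrals, which relies on the even-parity property. Each $D[\boldsymbol\Phi_B]$ involves an even number of Grassmann differentials and hence belongs to the even subalgebra, which commutes with everything. Consequently $D[\boldsymbol\Phi_B]$ can be slid past the other block factors $Q_c(B')\prod_{i\in B'} f(i)$ with $B'\neq B$, and the Berezin integral over $\boldsymbol\Phi_\mathcal{N}$ distributes as a product of Berezin integrals over each $\boldsymbol\Phi_B$:
\begin{equation*}
Y(\mathcal{N}) = \sum_{\Pi\in\mathbf{P}_\mathcal{N}} \prod_{B\in\Pi} \int D[\boldsymbol\Phi_B]\Bigl(Q_c(B)\prod_{i\in B} f(i)\Bigr) = \sum_{\Pi\in\mathbf{P}_\mathcal{N}} \prod_{B\in\Pi} \tilde Y_c(B).
\end{equation*}
Since the defining relation \eqref{def-conn} uniquely determines the connected part, this forces $Y_c = \tilde Y_c$, proving the claim. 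The only subtle point, and the one I would be most careful about, is the even-parity argument: one must check that $Q_c(B)\prod_{i\in B} f(i)$ is itself an even element of the Grassmann algebra (equivalently, that $Q$ takes values in the even subalgebra in the intended application, which it does since $Q(\mathcal{N}) = e^{-V(\mathcal{N})}$ with $V(\mathcal{N})$ bilinear in Grassmann generators), so that the sign gymnastics needed to rearrange the $D[\boldsymbol\Phi_B]$ factors produces no extraneous minus signs.
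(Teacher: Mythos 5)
Your proposal follows the same route as the paper's proof: substitute the defining relation $Q(\mathcal{N})=\sum_{\Pi}\prod_{B\in\Pi}Q_c(B)$ into the integrand, factor $D[\boldsymbol\Phi_{\mathcal{N}}]=\prod_{B\in\Pi}D[\boldsymbol\Phi_B]$ using the even parity of each block measure, and conclude by the uniqueness of the connected-part decomposition \eqref{def-conn}. Your extra care about checking that $Q_c(B)\prod_{i\in B}f(i)$ is itself even is a sensible refinement of the same argument, not a different approach.
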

Combining the above Lemma with the integral representation \eqref{eq:correlationfunc}, we can express $\mathcal{E}_c$ as 
\begin{align}
\begin{aligned} \label{11}
&\mathcal{E}_c(\{P_i,\tau_i\}_{i \in [s]}) = \int D[\Phi]\,{\exp(-V)_c([s])}\prod_{i=1}^s\alpha_{P_i} \\ 
& =\prod_{i=1}^s \alpha_{P_i} \int D[\Phi]\sum_{T \in \cT([s]) }\prod_{(i,j)\in T} (-V_{ij}) \sum_{\omega \in S(T)} \int_{[0,1]^{s-1}} \dd p_{T,\omega}(\mathbf{t}) \exp\left(- \sum_{u,v\in{[s]},u\le v} a_{\omega}(\mathbf{t})_{uv} V_{uv}\right),
\end{aligned}
\end{align}
where the second equation follows from the tree-graph identity of Theorem \ref{tree-graph-thm}. Now we expand $\prod_{(i,j) \in T} V_{ij} $ and compute the integral by writing it as a summation over the assignments (cf. Equations \eqref{eq:Gtog} and \eqref{eq:gxitog}):
$$
\prod_{(i,j) \in T} (-V_{ij}) = (-1)^{s-1}\sum_{\chi\in \mathcal{A}(T)} \prod_{(i,j) \in T}g_{\tau_i,\tau_j}(P_i,P_j,\chi_{ij})\,\Phi_{\chi_{ij}}
$$
where 
\begin{align*}
  \Phi_{\chi_{ij}} = 
  \begin{cases}
      & \phi_{i,k}^- \phi_{j,l}^+,\quad \text{if} \,\
      \chi_{ij} = ({-1},k,l) \\
      & \phi_{j,l}^- \phi_{i,k}^+, \quad \text{if} \,\,\chi_{ij} =({+1},k,l).
  \end{cases}
\end{align*}
Substituting this into \eqref{11} and applying the Grassmann variable Gaussian integral \eqref{eq:gra-moment}, we finally obtain the decomposition claimed in \eqref{eq:determinant_expansion}:

\begin{theorem} \label{cumulant-representation}
The cumulant function can be written as 
$$
\mathcal{E}_c(\{P_i,\tau_i\}_{i\in[s]} ) = \sum_{T \in \mathcal{T}([s])} \sum_{\chi \in \mathcal{A}(T)} \alpha_{T,\chi}\prod_{(i,j)\in T} g_{\tau_i,\tau_j}(P_i,P_j,\chi_{ij})\, h_{\boldsymbol{\tau}}(P_1,\cdots,P_s,T,\chi),
$$
where $$h_{\boldsymbol{\tau}}(P_1,\cdots,P_s,T,\chi) =\sum_{\omega \in S(T)}\int_{[0,1]^{s-1}} \dd \mathbf{t}\, p_{T,\omega}(\mathbf{t}) \det \mathbf{G}(T,\chi,\omega,\mathbf{t},\{P_i,\tau_i\}_{i\in[s]}) $$
and the definitions of $\alpha_{T,\chi}$ and $\mathbf{G}$ coincide with those given in Section \ref{sec:determinant-expansion}, namely:
where 
\begin{itemize}
\item The matrix in the determinant $ \mathbf{G}(T,\chi,\omega,\mathbf{t},\{P_i,\tau_i\}) $ is the submatrix of the weighted Green's function matrix
    $$ \begin{bmatrix}
a(\omega,\mathbf{t})_{11}G_0(P_1^-,P_1^+) & a(\omega,\mathbf{t})_{12}G_{\tau_1-\tau_2}(P_1^-,P_2^+) & \cdots & a(\omega,\mathbf{t})_{1s}G_{\tau_1-\tau_s}(P_1^-,P_s^+) \\
a(\omega,\mathbf{t})_{21}G_{\tau_2-\tau_1}(P_2^-,P_1^+) & a(\omega,\mathbf{t})_{22}G_{0}(P_2^-,P_2^+) & \cdots &a(\omega,\mathbf{t})_{2s} G_{\tau_2-\tau_s}(P_2^-,P_s^+) \\
\vdots & \vdots & \ddots & \vdots \\
a(\omega,\mathbf{t})_{s1} G_{\tau_s-\tau_1}(P_s^-,P_1^+)& a(\omega,\mathbf{t})_{s1}G_{\tau_s-\tau_2}(P_s^-,P_2^+) & \cdots & a(\omega,\mathbf{t})_{ss}G_0(P_s^-,P_s^+)
\end{bmatrix},$$
 obtained by deleting the rows and columns corresponding to the annihilation and creation operators already contracted along the tree's edges by the assignment $\chi$.
\item { $\alpha_{T,\chi}$ is the sign defined by
$$
\alpha_{T,\chi} = (-1)^{s-1}\prod_{i=1}^s \alpha_{P_i} \prod_{(i,j) \in T}\alpha_{\chi_{ij}}, 
$$
where we recall $\alpha_{P_i} = (-1)^{m_{P_i}(m_{P_i}-1)/2}$, and for $\chi_{ij} = (\pm1,k,l)$, define $\alpha_{\chi_{ij}} = (-1)^{\sum_{r=1}^{i-1}m_{P_r} + \sum_{r=1}^{j-1}m_{P_r} +k+l }$  as the sign from the Grassmann integral formula \eqref{eq:gra-moment}.}
\end{itemize}
\end{theorem}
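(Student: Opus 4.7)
\textbf{Proof proposal for Theorem \ref{cumulant-representation}.} The plan is to assemble three ingredients: the Grassmann-integral representation of the time-ordered correlator, the factorization of the Berezin measure across the partition structure of the cumulant, and the tree-graph identity of Theorem~\ref{tree-graph-thm}. First I would rewrite the multi-point correlator $\mathcal{E}(\{P_i,\tau_i\}_{i\in[s]})$, which equals a signed determinant by Wick's theorem (Lemma~\ref{lemm:Wick}), as a Grassmann Gaussian integral via Lemma~\ref{gaussian-grassmann}. Introducing variables $(\phi_{i,k}^\pm)$ with $1\le k\le m_{P_i}$ and the quadratic forms $V_{ij}$ displayed in Section~\ref{sec:grassplustreegraph}, this produces the identity
\begin{equation}
\mathcal{E}(\{P_i,\tau_i\}_{i\in[s]})=\Big(\prod_{i=1}^s\alpha_{P_i}\Big)\int D[\boldsymbol{\Phi}]\,\exp\!\Big(-\sum_{1\le i\le j\le s}V_{ij}\Big),
\end{equation}
with $\alpha_{P_i}=(-1)^{m_{P_i}(m_{P_i}-1)/2}$ arising from reordering the annihilation block inside the exponent.

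Next I would lift this to the connected part. Because each $V_{ij}$ is a sum of monomials of even total Grassmann degree, the quadratic forms associated with disjoint index blocks commute, and $D[\boldsymbol{\Phi}_{\mathcal{N}}]$ factors as $\prod_{B\in\Pi}D[\boldsymbol{\Phi}_B]$ for any partition $\Pi\in\mathbf{P}_{\mathcal{N}}$. Applying linearity of $\int D[\boldsymbol{\Phi}]$ term by term in the defining relation \eqref{def-conn} then yields the identity $\int D[\boldsymbol{\Phi}]\exp(-V)_c(\mathcal{N})=(\int D[\boldsymbol{\Phi}]\exp(-V))_c(\mathcal{N})$, which, together with the previous step, gives
\begin{equation}
\mathcal{E}_c(\{P_i,\tau_i\}_{i\in[s]})=\Big(\prod_{i=1}^s\alpha_{P_i}\Big)\int D[\boldsymbol{\Phi}]\,\exp(-V)_c([s]).
\end{equation}
At this point I invoke the tree-graph identity (Theorem~\ref{tree-graph-thm}) to replace $\exp(-V)_c([s])$ by a sum over $T\in\mathcal{T}([s])$ and a weighted integral over growing paths and interpolation parameters $\mathbf{t}$ of $\prod_{(i,j)\in T}(-V_{ij})\cdot \exp(-\sum_{u\le v}a_\omega(\mathbf{t})_{uv}V_{uv})$.

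The final step is to expand $\prod_{(i,j)\in T}(-V_{ij})$ monomial-by-monomial. Writing each $V_{ij}$ as the sum of its two bilinear pieces, one selects, for each tree edge, a creation and an annihilation operator together with a sign $\sigma\in\{\pm 1\}$; this is exactly the data of a map $\chi\in\mathcal{A}(T)$, and produces the prefactor $(-1)^{s-1}\prod_{(i,j)\in T}g_{\tau_i,\tau_j}(P_i,P_j,\chi_{ij})$ together with a monomial $\prod_{(i,j)\in T}\Phi_{\chi_{ij}}$ in the remaining Grassmann variables. The leftover integral is a Gaussian Berezin integral with a prescribed insertion, which equation~\eqref{eq:gra-moment} evaluates to the determinant of the submatrix of the weighted Green's function matrix \eqref{eq:matrixGdettobound} obtained by deleting the rows and columns indexed by the operators used in the contractions $\chi$; this is precisely $\det\mathbf{G}(T,\chi,\omega,\mathbf{t},\{P_i,\tau_i\}_{i\in[s]})$. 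Integrating over $\mathbf{t}$ with density $p_{T,\omega}(\mathbf{t})$ and summing over $\omega\in S(T)$ reproduces $h_{\boldsymbol{\tau}}(P_1,\ldots,P_s,T,\chi)$.

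The main obstacle I anticipate is sign bookkeeping. The $(-1)^{s-1}$ from the tree factorization, the $\prod_i\alpha_{P_i}$ from Wick's theorem, and the sign $(-1)^{\sum_k(a_k+b_k)}$ produced by \eqref{eq:gra-moment} when one deletes the rows and columns dictated by $\chi$ must combine to exactly $\alpha_{T,\chi}=(-1)^{s-1}\prod_i\alpha_{P_i}\prod_{(i,j)\in T}\alpha_{\chi_{ij}}$. Verifying this requires fixing a global lexicographic ordering of the Grassmann generators $\phi_{i,k}^\pm$, tracking the anticommutations needed to bring each selected pair $\Phi_{\chi_{ij}}$ into the canonical order demanded by \eqref{eq:gra-moment}, and checking that the cumulative sign matches the explicit formula $\alpha_{\chi_{ij}}=(-1)^{\sum_{r<i}m_{P_r}+\sum_{r<j}m_{P_r}+k+l}$ for $\chi_{ij}=(\pm 1,k,l)$. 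Once this combinatorial identity is in place, the remaining pieces fit together mechanically and the theorem follows.
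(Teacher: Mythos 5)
Your proposal is correct and follows essentially the same route as the paper: Grassmann Gaussian representation of the correlator via Lemma~\ref{gaussian-grassmann}, the observation that Berezin integration commutes with taking the connected part (using the even parity and partition-factorization of the measure together with \eqref{def-conn}), the tree-graph identity of Theorem~\ref{tree-graph-thm}, and finally the monomial expansion of $\prod_{(i,j)\in T}(-V_{ij})$ over assignments $\chi$ followed by the moment formula \eqref{eq:gra-moment} to produce the minor determinant and the signs $\alpha_{\chi_{ij}}$. Your flagged concern about sign bookkeeping is legitimate but is resolved exactly as you describe, by reading off the sign $(-1)^{\sum_k(a_k+b_k)}$ from \eqref{eq:gra-moment} under the global ordering of the generators, which is how the paper defines $\alpha_{\chi_{ij}}$.
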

\noindent Theorem \ref{cumulant-representation} gives a representation of the cumulant function that plays a key role in establishing the convergence result stated in \ref{thm-convergence}. 
\subsection{The Determinant Bound}\label{sec:detbound}

In this section, we are interested in the weighted Green's function matrix $\mathbf{G}(T,\chi,\omega,\mathbf{t},\{P_i,\tau_i\})$ \eqref{eq:matrixGdettobound} derived in Theorem \ref{cumulant-representation}. Proposition \ref{gram} shows that the weight of each block can be encoded in an inner product structure, so that {the matrix $\mathbf{G}(T,\chi,\omega,\mathbf{t},\{P_i,\tau_i\})$ has coefficients of the form}
\begin{align}\label{eq:GformforM}
\mathbf{G}(x,\tau,u;y,\tau',u') := g_{\tau-\tau'}(x,y)\langle u,u' \rangle,\quad x,y \in \Omega,\,\tau,\tau' \in [0,\beta],\,u,u' \in {\mathbb{S}^{s-1}},
\end{align}
{where $\mathbb{S}^{s-1}$ denotes the unit sphere in $\mathbb{R}^s$.}  
More generally, for an index set $X$, we call a map $M: X \times X \to \CC$ an $X \times X$-matrix. For $\mathbf{x}_1,\cdots,\mathbf{x}_n,\mathbf{y}_1,\cdots,\mathbf{y}_n \in X$, we use the notation $M(\mathbf{x}_k,\mathbf{y}_l)_{1\le k,l\le n}$ to denote the $n\times n$ submatrix of $M$ formed by the rows $\mathbf{x}_1,\cdots,\mathbf{x}_n$ and columns $\mathbf{y}_1,\cdots,\mathbf{y}_n$. 

In our case of interest, the matrices \eqref{eq:GformforM} we need to consider in the determinant bound can be written as $X \times X $ matrices, with $X = \Omega \times [0,\beta]\times {\mathbb{S}^{s-1}}$. Following \cite{PedraSalmhofer2008determinant} we define the determinant bound as follows:

\begin{defn}(Determinant bound)
    We say a $X\times X$ matrix $M$ satisfies the $\gamma$-uniform determinant bound if, for any integer $n$,sequence $\mathbf{x}_1,\cdots,\mathbf{x}_n,\mathbf{y}_1,\cdots,\mathbf{y}_n \in X$, 
    $$
    |\det(   M(\mathbf{x}_k,\mathbf{y}_l))_{1\le k,l\le n})| \le \gamma^{2n}.
    $$
\end{defn}
\noindent Our goal is to show that the Green's function $\mathbf{G}$ defined in \eqref{eq:GformforM} satisfies a $\gamma$-uniform determinant bound for some absolute constant $\gamma$. 

The Green's function has a discontinuity at $\tau = \tau'$, so a split is required to prove the determinant bound. We write the matrix $\mathbf{G}$ as
$
\mathbf{G} = \mathbf{G}_{< 0} +\mathbf{G}_{\ge0}  
$
with 
$$ 
\mathbf{G}_{< 0}(x,\tau,u;y,\tau',u') = g_{\tau-\tau'}(x,y)1_{\tau < \tau'}\langle u,u' \rangle,\qquad \mathbf{G}_{\ge 0}(x,\tau,y,\tau') = g_{\tau-\tau'}(x,y)1_{\tau \ge \tau'}\langle u,u' \rangle
$$
The following lemma shows that we only need to give a determinant bound for $\mathbf{G}_{\ge 0}$ and $\mathbf{G}_{< 0}$ separately:
\begin{lemma}\label{lemma:M1+M2}
If $X\times X$ matrices $M_1$ and $M_2$ satisfy the ${\gamma_1}$-uniform determinant bound and ${\gamma_2}$-uniform determinant bound, respectively, then $M_1+M_2$ satisfies the $(\gamma_1+\gamma_2)$-uniform determinant bound.
\end{lemma}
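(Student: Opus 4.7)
The plan is to reduce everything to a multilinearity-plus-Laplace-expansion argument that rewrites $\det(M_1+M_2)$ on a fixed submatrix as a bilinear combination of determinants of submatrices of $M_1$ and of $M_2$, separately, so that the two hypotheses can be applied term by term. Concretely, I would fix $n$ and sequences $\mathbf{x}_1,\dots,\mathbf{x}_n,\mathbf{y}_1,\dots,\mathbf{y}_n\in X$, and set $A_{kl}=M_1(\mathbf{x}_k,\mathbf{y}_l)$, $B_{kl}=M_2(\mathbf{x}_k,\mathbf{y}_l)$. Multilinearity of $\det$ in the rows then gives
\begin{equation*}
\det(A+B)=\sum_{S\subseteq[n]}\det D_S,
\end{equation*}
where the $k$-th row of $D_S$ equals the $k$-th row of $A$ for $k\in S$ and the $k$-th row of $B$ for $k\notin S$.

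Next I would apply the generalized Laplace expansion of $\det D_S$ along the rows indexed by $S$, which yields
\begin{equation*}
\det D_S=\sum_{T\subseteq[n],\,|T|=|S|}\varepsilon(S,T)\,\det A[S|T]\cdot\det B[S^c|T^c],
\end{equation*}
with $\varepsilon(S,T)\in\{\pm 1\}$, and where $A[S|T]$ denotes the $|S|\times|T|$ submatrix of $M_1$ on rows $\{\mathbf{x}_k\}_{k\in S}$ and columns $\{\mathbf{y}_l\}_{l\in T}$, and analogously for $B[S^c|T^c]$. The key observation is that after the Laplace step the mixed matrices $D_S$ disappear: each block is an honest submatrix of the original $M_1$ or $M_2$, so the $\gamma_i$-uniform determinant bounds apply directly, giving $|\det A[S|T]|\le\gamma_1^{2|S|}$ and $|\det B[S^c|T^c]|\le\gamma_2^{2(n-|S|)}$. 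Summing over $S,T$ and using that the number of pairs with $|S|=|T|=k$ equals $\binom{n}{k}^2$ yields
\begin{equation*}
|\det(A+B)|\le\sum_{k=0}^{n}\binom{n}{k}^{2}\gamma_1^{2k}\gamma_2^{2(n-k)}.
\end{equation*}

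To finish, I would invoke the elementary combinatorial inequality $\binom{n}{k}^{2}\le\binom{2n}{2k}$ (which follows from the observation that balanced selections of $k$ out of $n$ in each half of a $2n$-set produce a subfamily of all $2k$-subsets of $[2n]$). This allows me to dominate the above sum by the even-indexed subseries of the binomial expansion of $(\gamma_1+\gamma_2)^{2n}$:
\begin{equation*}
\sum_{k=0}^{n}\binom{n}{k}^{2}\gamma_1^{2k}\gamma_2^{2(n-k)}\le\sum_{k=0}^{n}\binom{2n}{2k}\gamma_1^{2k}\gamma_2^{2(n-k)}\le(\gamma_1+\gamma_2)^{2n},
\end{equation*}
completing the proof. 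I do not anticipate a genuine obstacle here; the only subtle point is bookkeeping the signs from the Laplace expansion, which is harmless because we absorb them into absolute values. Notably, this argument is purely algebraic and does not rely on any Gram/Hilbert-space factorization of $M_1$ or $M_2$, so the lemma holds for arbitrary matrices satisfying the uniform determinant bound.
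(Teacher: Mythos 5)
Your proposal is correct and follows essentially the same route as the paper: the paper likewise writes $\det(A+B)=\sum_{S,T}\varepsilon(S,T)\det A_{S,T}\det B_{S^c,T^c}$ (your multilinearity-plus-Laplace derivation is just the standard proof of that identity), applies the two uniform determinant bounds blockwise, and closes with $\binom{n}{k}^2\le\binom{2n}{2k}$ to dominate the sum by $(\gamma_1+\gamma_2)^{2n}$. No gaps.
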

\begin{proof}
For any integer $n$ and $\mathbf{x}_1,\cdots,\mathbf{x}_n,\mathbf{y}_1,\cdots,\mathbf{y}_n \in X$, denote $A = M_1(\mathbf{x}_k,\mathbf{y}_l)_{1\le k,l\le n}$ and $B = M_2(\mathbf{x}_k,\mathbf{y}_l)_{1\le k,l\le n}$. We use the Laplace expansion to bound the determinant of $A+B$. We denote $A_{S,T}$ for $S\subset [n]$ and $T \subset [n]$ with $|S|=|T|$ as a submatrix of $A$ formed by rows $(\mathbf{x}_k)_{k \in S}$ and $(\mathbf{y}_k)_{k \in T}$; $B_{S,T}$ is similarly defined. Since $M_1$ and $M_2$ satisfy uniform determinant bounds, for $|S| = |T| = p$, we have $|\det(A_{S,T})|\le \gamma_1^{2p}$ and $|\det(B_{S,T})|\le \gamma_2^{2p}$. By the Laplace expansion, 
\begin{align*}
    \det(A+B) = \sum_{{S,T \subset [n],\, |S| = |T|}} \varepsilon(S,T) \det A_{S,T} \det B_{S^c, T^c}
\end{align*}
where $S^c = [n] \setminus S$ and some $\varepsilon(S,T) \in \{-1, 1\}$. Let $|S| = |T| = p$, we have 
\begin{align*}
    |\det A_{S,T}| \le \gamma_1^{2p},\quad  |\det B_{S^c, T^c}| \le \gamma_2^{2(n-p)}.
\end{align*}
Thus, using $\binom{n}{p}^2 \le \binom{2n}{2p}$,
\begin{align*}
    |\det(A+B)| \le \sum_{p=0}^n \binom{n}{p}^2 \gamma_1^{2p} \gamma_2^{2(n-p)} \le (\gamma_1 + \gamma_2)^{2n}.
\end{align*}
\end{proof}
\noindent Next, we derive the determinant bound for $\mathbf{G}_{< 0}$ and $\mathbf{G}_{\ge 0}$ separately. This relies on the following generalized Gram inequality:
\begin{lemma}\label{lem:genegram} (Generalized Gram Inequality) Let $\mathcal{H}$ be a Hilbert space and $n$ be an integer. For $j,j':[n] \to \RR$ and any $v_1,\cdots,v_n, w_1,\cdots,w_n \in \mathcal{H}$, consider the $n\times n$ matrix with coefficients $$M_{kl} = \langle v_k, w_l \rangle 1_{j(k)  < j'(l)}.$$ Then, we have
$$
 |\det M | \le \prod_{k=1}^n\|v_k\|\|w_k\|.
$$
The bound also holds if we replace $1_{j(k) < j'(l)}$ with $1_{j(k) \le  j'(l)}$ in the definition of $M$.
\end{lemma}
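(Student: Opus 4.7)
The plan is to reduce the claim to the classical Gram/Hadamard inequality, which states that for any $x_1,\dots,x_n,y_1,\dots,y_n$ in a Hilbert space, $|\det(\langle x_k,y_l\rangle)|\le \prod_k \|x_k\|\|y_k\|$; this follows from Hadamard's bound applied to $X^*Y$ with $X,Y$ the matrices whose columns are the $x_k$'s and $y_l$'s. The strategy is to enlarge the Hilbert space so that the truncated inner products $M_{kl}$ themselves become honest Gram inner products of vectors whose norms are controlled by $\|v_k\|$ and $\|w_l\|$.

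First, I would construct an auxiliary Hilbert space $\mathcal{K}$ together with families of vectors $\xi_1,\dots,\xi_n,\eta_1,\dots,\eta_n\in \mathcal{K}$ such that $\|\xi_k\|_{\mathcal{K}},\|\eta_l\|_{\mathcal{K}}\le 1$ and $\langle \xi_k,\eta_l\rangle_{\mathcal{K}}=1_{j(k)<j'(l)}$. This follows the approach of Pedra--Salmhofer \cite{PedraSalmhofer2008determinant}: $\mathcal{K}$ is taken to be a (finite-dimensional) space indexed by the distinct values in $\{j(k)\}\cup\{j'(l)\}$, and $\xi_k,\eta_l$ are realized as carefully weighted step-function vectors supported on complementary half-line subsets determined by $j(k)$ and $j'(l)$, with the weights tuned so that the inner products yield exactly the step indicator while keeping the norms bounded by $1$.

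Given such a representation, I would set $V_k:=v_k\otimes \xi_k$ and $W_l:=w_l\otimes \eta_l$ in $\mathcal{H}\otimes \mathcal{K}$; by bilinearity $\langle V_k,W_l\rangle = \langle v_k,w_l\rangle\langle \xi_k,\eta_l\rangle = M_{kl}$, while $\|V_k\|=\|v_k\|\|\xi_k\|\le \|v_k\|$ and similarly $\|W_l\|\le \|w_l\|$. The classical Gram inequality applied to $(V_k),(W_l)$ then immediately yields
\begin{align*}
|\det M|=|\det(\langle V_k,W_l\rangle)|\le \prod_k \|V_k\|\|W_k\|\le \prod_k \|v_k\|\|w_k\|,
\end{align*}
as desired. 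The non-strict variant with $1_{j(k)\le j'(l)}$ follows from the same construction with the half-line cutoffs shifted to include the boundary (equivalently, by a continuity argument in $j,j'$).

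The main obstacle is the explicit construction of $\xi_k,\eta_l$ in Step 1. A naive attempt --- e.g., taking $\xi_k$ to be a single basis vector indexed by $j(k)$ and $\eta_l$ a sum of basis vectors indexed by values below $j'(l)$ --- produces an unnormalized representation, and the usual renormalization destroys the identity with the step indicator; small examples (e.g.\ $n=2$ with $j=(1,2)$, $j'=(2,3)$) show that an entry-wise Gram representation of the indicator with unit-norm vectors need not exist at all. The Pedra--Salmhofer construction circumvents this by distributing the weights across two complementary subspaces of $\mathcal{K}$, in a manner analogous to the spectral decomposition $h=h_{+}+h_{-}$ and to the splitting $\mathbf{G}=\mathbf{G}_{<0}+\mathbf{G}_{\ge 0}$ combined through Lemma~\ref{lemma:M1+M2}, which is ultimately what allows Lemma~\ref{lem:genegram} to be converted into the determinant bound of Lemma~\ref{lem-det-bound}.
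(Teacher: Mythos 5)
Your reduction to the classical Gram inequality hinges entirely on Step 1: producing vectors $\xi_k,\eta_l$ in an auxiliary Hilbert space with $\|\xi_k\|,\|\eta_l\|\le 1$ and $\langle \xi_k,\eta_l\rangle=1_{j(k)<j'(l)}$. This is impossible, and you in fact notice the obstruction yourself before waving it away. Concretely: if $\langle\xi_k,\eta_l\rangle=1$ with both norms at most $1$, Cauchy--Schwarz forces $\xi_k=\eta_l$ with norm exactly $1$; so any column of the indicator matrix containing two $1$'s forces the corresponding rows to coincide identically, which already fails for the $2\times 2$ staircase $\bigl(\begin{smallmatrix}1&1\\0&1\end{smallmatrix}\bigr)$. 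Relaxing to $\|\xi_k\|,\|\eta_l\|\le C$ does not help either: the minimal value of $\max_k\|\xi_k\|\cdot\max_l\|\eta_l\|$ over Gram factorizations of a matrix is its Schur-multiplier (cb) norm, and for the triangular indicator this grows like $\log n$ (Kwapie\'n--Pe{\l}czy\'nski), so no uniformly bounded representation exists and the tensor trick would at best yield a bound degraded by an $n$-dependent factor raised to the $n$-th power. Your appeal to Pedra--Salmhofer to rescue the construction is a misreading of what their embeddings do: the maps $\varphi_{<0},\varphi_{\ge 0}$ of Lemma~\ref{lem.embedding} represent the Green's function as an inner product \emph{only on the region where the corresponding indicator equals one}; the indicator itself is never absorbed into the inner product. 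That is precisely why the generalized Gram inequality is needed as a separate, stronger tool --- it is not derivable from the classical one by enlarging the Hilbert space.

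The actual proof (Section 3 of Pedra--Salmhofer, as the paper indicates) goes through the exterior algebra $\bigwedge\mathcal{H}$: one writes $\det M$ as a vacuum expectation of a time-ordered product of creation and annihilation operators on fermionic Fock space, ordered according to the values of $j$ and $j'$ so that only the contractions permitted by $1_{j(k)<j'(l)}$ survive, and then bounds the result by the product of the operator norms $\|a(v_k)\|=\|v_k\|$ and $\|a^{*}(w_l)\|=\|w_l\|$. If you want to complete your write-up, you should replace Steps 1--2 by this Fock-space argument; the splitting idea you mention at the end (Lemma~\ref{lemma:M1+M2} applied to $\mathbf{G}=\mathbf{G}_{<0}+\mathbf{G}_{\ge 0}$) belongs to the \emph{application} of the generalized Gram inequality in Lemma~\ref{lem-det-bound}, not to its proof.
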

\noindent The statement of Lemma \ref{lem:genegram} relies on an expression for the determinant in the exterior algebra $\bigwedge \mathcal{H}$. We refer the interested reader to \cite[Section 3]{PedraSalmhofer2008determinant} for a proof.
This lemma is exactly useful for providing the determinant bound for $\mathbf{G}_{< 0}$ and $\mathbf{G}_{\ge 0}$. Take $\mathbf{G}_{< 0}$ for example, if we can find embedding maps $\varphi_{< 0},\varphi'_{< 0}:X\to \mathcal{H}$ with norm uniformly bounded by $\gamma$ such that the matrix elements in $\mathbf{G}_{< 0}$ can be represented by
$$
\mathbf{G}_{< 0}(x,\tau,u; y,\tau',u')1_{\tau < \tau'} = \langle \varphi_{<0}(x,\tau,u),\varphi_{<0}(y,\tau',u')\rangle 1_{\tau < \tau'},
$$
then 
$$
 \mathbf{G}_{< 0}(x_k,\tau_k,u_k,y_l,\tau_l,u_l) = \langle  \varphi_{< 0}(x_k,\tau_k,u_k), \varphi'_{< 0}(y_l,\tau_l,u_l) \rangle 1_{\tau_k < \tau_l},
$$
and the generalized Gram lemma implies a $\gamma$-uniform determinant bound for the matrix $\mathbf{G}_{< 0}$. Therefore, it suffices to construct the bounded embedding maps $\varphi_1$ and $\varphi_2$.
\begin{lemma}\label{lem.embedding}
 There is a Hilbert space $\mathcal{H}$ and embedding maps $\varphi_{< 0},\varphi'_{< 0}, \varphi_{\ge 0},\varphi_{\ge 0}':\widetilde{X}\equiv  \Omega \times [0,\beta]\to  \mathcal{H}$ with $$ \sup_{\mathbf{x} \in \widetilde{X}}\|\varphi_{< 0}(\mathbf{x})\| \le 1,\,\sup_{\mathbf{x} \in \widetilde{X}}\|\varphi'_{< 0}(\mathbf{x})\| \le 1,\,\sup_{\mathbf{x} \in \widetilde{X}}\|\varphi_{\ge  0}(\mathbf{x})\| \le 1,\,  \sup_{\mathbf{x} \in \widetilde{X}}\|\varphi'_{\ge 0}(\mathbf{x})\| \le 1$$
  such that for any $(x,\tau),(y,\tau') \in \Omega \times [0,\beta]$
  $$
 g_{\tau-\tau'}(x,y)1_{\tau < \tau'} = \langle \varphi_{< 0}(x,\tau),\varphi'_{< 0}(y,\tau')\rangle 1_{\tau< \tau'}
  $$
  and 
  $$
 g_{\tau-\tau'}(x,y)1_{\tau \ge \tau'} = \langle \varphi_{\ge 0}(x,\tau),\varphi'_{\ge 0}(y,\tau')\rangle 1_{\tau \ge \tau'}.
  $$
  Consequently, we have embedding maps $\Omega \times [0,\beta] \times {\mathbb{S}^{s-1}}\to \mathcal{H} \otimes {\mathbb{S}^{s-1}}$,{ $(x,\tau,u)\mapsto \varphi_{<0}(x,\tau)\otimes u$, $(x,\tau,u)\mapsto \varphi'_{<0}(x,\tau)\otimes u$, $(x,\tau,u)\mapsto \varphi_{\ge 0}(x,\tau)\otimes u$, $(x,\tau,u)\mapsto \varphi'_{\ge 0}(x,\tau)\otimes u$} such that
  $$
  \mathbf{G}_{<0}(x,\tau,u;y,\tau',u') = \langle \varphi_{< 0}(x,\tau) \otimes u,\varphi'_{< 0}(y,\tau')\otimes u'\rangle 1_{\tau< \tau'} 
  $$
  and 
    $$
  \mathbf{G}_{\ge 0}(x,\tau,u;y,\tau',u') = \langle \varphi_{\ge  0}(x,\tau) \otimes u,\varphi'_{\ge  0}(y,\tau')\otimes u'\rangle 1_{\tau\ge  \tau'} .
  $$
\end{lemma}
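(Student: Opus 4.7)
The plan is to spectrally decompose $h$ to reduce the task to a family of scalar problems, construct bounded scalar embeddings eigenvalue by eigenvalue following \cite{PedraSalmhofer2008determinant}, and reassemble, tensoring with $u \in \mathbb{S}^{s-1}$ at the end.

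First I would diagonalize $h = \sum_n \epsilon_n |n\rangle\langle n|$, obtaining
\begin{align*}
g_{\tau-\tau'}(x,y) = \sum_n u_n(x)\overline{u_n(y)}\, g^{(\epsilon_n)}_{\tau-\tau'},
\end{align*}
where $g^{(\epsilon)}_s$ is the scalar Green's function at eigenvalue $\epsilon$. The task reduces to finding, for each $\epsilon \in \mathbb{R}$, vectors $\phi^1_\ge(\tau,\epsilon), \phi^2_\ge(\tau,\epsilon)$ and $\phi^1_<(\tau,\epsilon), \phi^2_<(\tau,\epsilon)$ in a fixed auxiliary Hilbert space $\mathcal{H}_0$ with norms $\le 1$ uniformly in $\tau \in [0,\beta]$ and $\epsilon \in \mathbb{R}$, and inner products reproducing $g^{(\epsilon)}_{\tau-\tau'}$ on the correct branch. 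Given these, I would set $\mathcal{H} := \mathbb{C}^N \otimes \mathcal{H}_0$ and define
\begin{align*}
\varphi_{\ge 0}(x,\tau) := \sum_n \overline{u_n(x)}\, |n\rangle \otimes \phi^1_\ge(\tau,\epsilon_n),\qquad \varphi'_{\ge 0}(y,\tau') := \sum_n \overline{u_n(y)}\, |n\rangle \otimes \phi^2_\ge(\tau',\epsilon_n),
\end{align*}
together with the $<0$ counterparts. Orthonormality of the eigenbasis recovers the identity on the selected branch, and the Pythagorean identity combined with $\sum_n |u_n(x)|^2 = 1$ yields $\|\varphi_{\ge 0}(x,\tau)\| \le 1$.

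The scalar construction is the technical core and presents the principal obstacle: the naive factorization $g^{(\epsilon)}_{\tau-\tau'} = e^{-\tau\epsilon} \cdot e^{\tau'\epsilon}/(1+e^{-\beta\epsilon})$ has a factor that blows up like $e^{\beta\epsilon}$ as $\epsilon \to +\infty$, and no symmetric rearrangement of $(1+e^{-\beta\epsilon})^{1/2}$-prefactors removes this blow-up, since the imbalance reflects the individual ranges of $\tau$ and $\tau'$ rather than of $\tau-\tau'$. Following \cite{PedraSalmhofer2008determinant}, I would split by the sign of $\epsilon$. For $\epsilon < 0$, the identity $g^{(\epsilon)}_{\tau-\tau'} = e^{\epsilon(\beta-\tau)} \cdot e^{\epsilon\tau'}/(1+e^{\beta\epsilon})$ gives a one-dimensional factorization $\phi^1_\ge(\tau,\epsilon) = e^{\epsilon(\beta-\tau)}/\sqrt{1+e^{\beta\epsilon}}$ and $\phi^2_\ge(\tau',\epsilon) = e^{\epsilon\tau'}/\sqrt{1+e^{\beta\epsilon}}$, both pointwise bounded by $1$ since all exponents are nonpositive. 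For $\epsilon > 0$ one-dimensional embeddings cannot be uniformly bounded; instead I pass to $L^2(\mathbb{R}_+)$ and use the Volterra-type embedding
\begin{align*}
\phi^1_\ge(\tau,\epsilon)(\sigma) = \sqrt{\tfrac{2\epsilon}{1+e^{-\beta\epsilon}}}\, e^{-\epsilon(\sigma-\tau)}\, \mathbf{1}_{\sigma \ge \tau},\qquad \phi^2_\ge(\tau',\epsilon)(\sigma) = \sqrt{\tfrac{2\epsilon}{1+e^{-\beta\epsilon}}}\, e^{-\epsilon(\sigma-\tau')}\, \mathbf{1}_{\sigma \ge \tau'}.
\end{align*}
A direct calculation gives $\|\phi^i_\ge\|^2 = 1/(1+e^{-\beta\epsilon}) \le 1$ and $\langle \phi^1_\ge, \phi^2_\ge\rangle = e^{-\epsilon|\tau-\tau'|}/(1+e^{-\beta\epsilon})$, matching $g^{(\epsilon)}_{\tau-\tau'}$ precisely on $\tau \ge \tau'$. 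The degenerate case $\epsilon = 0$ is absorbed by an extra one-dimensional constant summand in $\mathcal{H}_0$ carrying $1/\sqrt{2}$. The branch $\tau < \tau'$ is handled symmetrically using $g_s = -e^{-sh}/(1+e^{\beta h})$ for $s < 0$, with the roles of the spectral halves swapped.

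Finally, tensoring with $u \in \mathbb{S}^{s-1}$ is elementary. Working in $\mathcal{H} \otimes \mathbb{R}^s$ and setting $(x,\tau,u) \mapsto \varphi_{\ge 0}(x,\tau) \otimes u$ (and analogously for the other three maps), the unit-norm bound transfers via $\|a \otimes u\| = \|a\|\,\|u\| \le 1$, and $\langle a \otimes u, b \otimes v\rangle = \langle a,b\rangle \langle u,v\rangle$ reproduces the weighted entries of \eqref{eq:GformforM}. Together with the generalized Gram inequality (Lemma \ref{lem:genegram}), this yields a unit determinant bound for each of $\mathbf{G}_{\ge 0}$ and $\mathbf{G}_{<0}$, which Lemma \ref{lemma:M1+M2} then combines into the $2^{2n}$ bound of Lemma \ref{lem-det-bound}.
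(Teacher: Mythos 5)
Your proposal is correct and follows the same overall strategy as the paper (spectral decomposition of $h$, separate treatment of the two time-ordering branches, a further split by the sign of the eigenvalue, and bounded Gram embeddings fed into the generalized Gram inequality), but the concrete realization of the scalar embeddings differs in a meaningful way. The paper, following Pedra--Salmhofer, uses the Poisson-kernel/Fourier identity $e^{-a|x|}=\int_{\RR}\frac{a\,e^{isx}}{\pi(a^2+s^2)}\,\dd s$ to embed every eigenmode into $L^2([N]\times\RR)$ via functions of the form $\sqrt{|\epsilon_k|/\pi(1+e^{\pm\beta\epsilon_k})}\;e^{\pm is\tau}/(\epsilon_k+is)$, with the $e^{\pm is\beta}$ phase shifts arranged so that both spectral halves are handled by the same integral device. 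You instead use, for the spectral half where the exponents are favorable, a rank-one factorization such as $e^{\epsilon(\beta-\tau)}\cdot e^{\epsilon\tau'}/(1+e^{\beta\epsilon})$, and for the unfavorable half the one-sided exponential (Volterra) kernel in $L^2(\RR_+)$, whose inner product produces $e^{-\epsilon|\tau-\tau'|}/(1+e^{-\beta\epsilon})$ and hence agrees with the Green's function exactly on the selected branch --- which is all the generalized Gram inequality requires. The two constructions are unitarily related (your Volterra kernel is the inverse Fourier transform of the paper's $1/(\epsilon+is)$ factor), so neither is more general, but yours is more elementary in that it avoids the Fourier identity and replaces it with direct Gaussian-free integrals of exponentials; it also explicitly covers the $\epsilon=0$ mode via a constant one-dimensional summand, an edge case the paper's formulas silently drop (both indicator-weighted terms vanish at $\epsilon_k=0$, where the true value is $\tfrac12 U_{xk}U_{yk}^*$). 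Your norm bookkeeping ($\|\phi\|^2=(1+e^{\mp\beta\epsilon})^{-1}\le 1$, combined with $\sum_n|u_n(x)|^2=1$) and the final tensoring with $u\in\mathbb{S}^{s-1}$ match the paper's and are correct.
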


\begin{proof}
Recall that the Green's function has a matrix representation (cf. Lemma \ref{lem:Greenfunction} below)
$$ 
g_{\tau}(a,b) = (-1_{\tau < 0} e^{-\tau h}(1+e^{\beta h})^{-1} + 1_{\tau \ge 0}e^{-\tau h}(1+e^{-\beta h})^{-1} )_{ab},\quad \forall \,a,b \in \Omega.
$$
We start from a spectral decomposition of the matrix. Given $h =
\sum_{k=1}^N \epsilon_ku_ku_k^\dag $, with vectors $u_k$ of coordinates $(u_k)_x=U_{xk}$, we can write
$$ 
g_{\tau-\tau'}(x,y) = \sum_{k=1}^N (-1_{\tau < \tau'}f_{< 0}(\epsilon_k,{\tau-\tau'})  +1_{\tau \ge \tau'}f_{\ge 0}(\epsilon_k,{\tau-\tau'}))U_{{x}k}U_{{y}k}^* ,
$$
where $f_{< 0}(E,\tau) = e^{-\tau E}(1+e^{\beta E})^{-1}$ and $f_{\ge 0}(E) = e^{-\tau E}(1+e^{-\beta E})^{-1}$.
We only need to find the representations $$\sum_{k=1}^N f_{< 0}(\epsilon_k,\tau-\tau')U_{xk}U_{yk}^* = \langle \varphi_{<0}(x,\tau),\varphi'_{< 0}(y,\tau') \rangle$$ and 
$$\sum_{k=1}^N f_{\ge 0}(\epsilon_k,\tau-\tau')U_{xk}U_{yk}^* = \langle \varphi_{\ge 0}(x,\tau),\varphi'_{\ge 0}(y,\tau') \rangle.$$
For this, we need some decomposition of the scalar function $e^{-(\tau-\tau')E}$.

The decomposition can be constructed by using the fact that for $a>0$ the Fourier inverse transform of                     $ \frac{a}{\pi (a^2+x^2)} $ is  $e^{-a|x|}$:
\begin{align*}
e^{-a|x|}=\int_{\mathbb{R}}\frac{ae^{isx}}{\pi(a^2+s^2)}\,ds.
\end{align*}
Note this formula depends on the sign of both $a$ and $x$, so we will consider the 4 cases depending on the sign of $\tau - \tau'$ and $\epsilon_k$: for $\tau < \tau'$ part, we note that if $\epsilon_k< 0$,
$$
f_{< 0}(\epsilon_k,\tau-\tau') = \frac{-\epsilon_k}{1+e^{\beta \epsilon_k}}\int_\RR \frac{e^{is(\tau'-\tau)}}{\pi(s^2+\epsilon_k^2)} \dd s ;
$$
and if $\epsilon_k \ge 0$,
$$ 
f_{< 0}(\epsilon_k,\tau-\tau') = \frac{\epsilon_k}{(1+e^{-\beta \epsilon_k})}\int_\RR \frac{e^{is(\tau-\tau' + \beta)}}{\pi(s^2+\epsilon_k^2)} \dd s.$$
where we also used that $\tau-\tau'\le \beta$ by construction. Next, define $L^2([N] \times \RR)$ as the set of functions  $f:[N]\times \RR \to \CC $ with $\sum_{k=1}^N \int_\RR |f(k,s)|^2 \dd s <+\infty$ with the inner product defined by $$\langle f,g \rangle_{L^2([N]\times \mathbb{R})}: = \sum_{k=1}^N \int_\RR f^*(k,s) g(k,s)\dd s.  $$ We construct $\varphi_{< 0}\varphi_{<0}': \widetilde{X}\to L^2([N] \times\RR)$ given by
$$\varphi_{< 0}(x,\tau)(k,s) = U_{xk}^*\left(1_{\epsilon_k< 0 } \sqrt{ \frac{-\epsilon_k}{\pi(1+e^{\beta \epsilon_k})}} \frac{e^{is\tau}}{{\epsilon_k}+is} +1_{\epsilon_k \ge 0} \sqrt{\frac{\epsilon_k}{\pi(1+e^{-\beta \epsilon_k})}} \frac{e^{-is\tau}}{{\epsilon_k}+is} \right)$$
and 
$$\varphi'_{< 0}(y,\tau')(k,s)  = U_{yk}^*\left(1_{\epsilon_k< 0 } \sqrt{ \frac{-\epsilon_k}{\pi(1+e^{\beta \epsilon_k})}} \frac{e^{is\tau'}}{{\epsilon_k}+is} + 1_{\epsilon_k \ge 0} \sqrt{\frac{\epsilon_k}{\pi(1+e^{-\beta \epsilon_k})}} \frac{e^{-is(\tau'-\beta)}}{{\epsilon_k}+is}\right). $$
They satisfy
$$ 
\langle \varphi_{< 0}(x,\tau),\varphi'_{< 0}(y,{\tau'}) \rangle_{L^2([N] \times \mathbb{R})} = \sum_{k=1}^N f_{< 0}(\epsilon_k,\tau-\tau')U_{xk}U_{yk}^* = g_{\tau-\tau'}(x,y).
$$
for $\tau<\tau'$. This completes the construction for the $\mathbf{G}_{< 0}$ part. The construction for $\mathbf{G}_{\ge 0}$ follows similarly by:
$$
\varphi_{\ge 0}(x,\tau)(k,s) = -U_{xk}^*\left(1_{\epsilon_k< 0 } \sqrt{ \frac{-\epsilon_k}{\pi(1+e^{\beta \epsilon_k})}} \frac{e^{-is(\beta-\tau)}}{{\epsilon_k}+is} +1_{\epsilon_k \ge 0} \sqrt{\frac{\epsilon_k}{\pi(1+e^{-\beta \epsilon_k})}} \frac{e^{-is\tau}}{{\epsilon_k}+is} \right)
$$
and 
$$
\varphi'_{\ge 0}(y,\tau')(k,s) = U_{yk}^*\left(1_{\epsilon_k< 0 } \sqrt{ \frac{-\epsilon_k}{\pi(1+e^{\beta \epsilon_k})}} \frac{e^{is\tau'}}{{\epsilon_k}+is} +1_{\epsilon_k \ge 0} \sqrt{\frac{\epsilon_k}{\pi(1+e^{-\beta \epsilon_k})}} \frac{e^{-is\tau'}}{{\epsilon_k}+is} \right).
$$
Since $\sum_{k=1}^N |U_{xk}|^2 = 1$ and 
$$ 
\int_\RR \frac{E}{\pi(E^2+s^2)} \dd s = 1,
$$
we directly get that the embedding functions are bounded by 1. 
\end{proof}

{\begin{proof}[Proof of Lemma \ref{lem-det-bound}]
Lemma \ref{lem-det-bound} directly follows the above lemmas: first by the embedding maps introduced in Lemma \ref{lem.embedding} and the generalized Gram inequality \ref{lem:genegram}, we get that both $\mathbf{G}_{<0}$ and $\mathbf{G}_{\ge 0}$ satisfy the $1$-uniformly determinant bound. Then, by \ref{lemma:M1+M2}, we get that $\mathbf{G}=\mathbf{G}_{<0}+\mathbf{G}_{\ge 0}$ satisfies the $2$-uniform determinant bound. 
\end{proof}}

 \subsection{Dyson Expansions}\label{sec:technlemm}

In this section, we provide a proof of the Dyson expansions 
\eqref{eq:Dyson} and \eqref{eq:cumulant_expansion}.

\begin{lemma}[Dyson expansion]\label{lem:dyson}
In the notations of Sec. \ref{sec.2},
\begin{align*}
\frac{Z}{Z_0} = \sum_{s=0}^\infty \frac{(-1)^s}{s!}\sum_{P_1,\cdots,P_s \in \mathcal{P}} v_{P_1}\cdots v_{P_s} \int_{[0,\beta]^s} \dd \tau_1 \cdots \dd \tau_s \langle \cT (\Psi_{P_1}(\tau_1) \cdots \Psi_{P_s}(\tau_s) )\rangle_0.
\end{align*}
\end{lemma}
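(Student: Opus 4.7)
The plan is to establish the identity via the standard interaction-picture derivation. Define the imaginary-time interaction-picture evolution operator $U(\tau) := e^{\tau H_0} e^{-\tau H}$ for $\tau \in [0,\beta]$. Differentiating yields $\partial_\tau U(\tau) = -V(\tau)\, U(\tau)$ with $U(0) = I$, where $V(\tau) := e^{\tau H_0} V e^{-\tau H_0}$ is the interaction in the interaction picture. Since $H$ acts on the finite-dimensional Fock space over $\Omega$, iterating this ODE produces the norm-convergent series
\begin{equation*}
U(\beta) = \sum_{s=0}^\infty (-1)^s \int_{0 \le \tau_s \le \cdots \le \tau_1 \le \beta} V(\tau_1) V(\tau_2) \cdots V(\tau_s)\, \dd \tau_1 \cdots \dd \tau_s.
\end{equation*}

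Next, I would convert this simplex integral into an integral over the full hypercube $[0,\beta]^s$. By the definition of $\cT$, on the simplex where $\tau_{\sigma(1)} \ge \cdots \ge \tau_{\sigma(s)}$ one has $V(\tau_{\sigma(1)}) \cdots V(\tau_{\sigma(s)}) = \cT(V(\tau_1) \cdots V(\tau_s))$; this reordering is signless because $V$ is parity-preserving and hence each $\Psi_P$ is an even element of the fermionic algebra. Partitioning $[0,\beta]^s$ into the $s!$ simplices indexed by permutations and exploiting the symmetry of $\cT$ yields
\begin{equation*}
U(\beta) = \sum_{s=0}^\infty \frac{(-1)^s}{s!} \int_{[0,\beta]^s} \cT(V(\tau_1) \cdots V(\tau_s))\, \dd \tau_1 \cdots \dd \tau_s.
\end{equation*}

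Finally, I would expand $V = \sum_{P \in \mathcal{P}} v_P \Psi_P$ inside the time-ordered product, interchange the (finite-dimensional, hence absolutely convergent) sums with the integral, multiply on the left by $e^{-\beta H_0}$, and take the trace to obtain $Z = \Tr(e^{-\beta H_0} U(\beta))$. Dividing through by $Z_0 = \Tr(e^{-\beta H_0})$ converts the remaining trace into the non-interacting thermal expectation $\langle \cdot \rangle_0$ and reproduces the claimed identity.

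The only real subtlety is the interplay between time-ordering and fermionic signs: for general fermionic operators, $\cT$ would introduce a $\mathrm{sgn}(\sigma)$ factor, but the parity-preserving assumption on $V$ means each $\Psi_P$ has even degree, so $\cT$ acts as a signless reordering and the ``bosonic-looking'' Dyson formula applies verbatim. All other steps are routine: norm-convergence of the Dyson series and the interchange of finite sums with the Bochner integral are automatic in the finite-dimensional setting used throughout Section~\ref{sec.2}.
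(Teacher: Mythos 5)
Your proposal is correct and follows essentially the same route as the paper: both define the interaction-picture evolution operator $e^{\tau H_0}e^{-\tau H}$, derive the ODE $\partial_\tau U = -V(\tau)U$, iterate to get the simplex-ordered Dyson series, symmetrize to the hypercube with the $1/s!$ factor and $\cT$, expand $V$ over $\mathcal{P}$, and take the trace against the unperturbed Gibbs state. Your explicit remark that the signless action of $\cT$ is consistent because each $\Psi_P$ is an even element of the fermionic algebra is a correct (and welcome) clarification that the paper leaves implicit.
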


\begin{proof}
Define $W(\tau)\equiv e^{\tau H_0}\,e^{-\tau H}$, $\tau\in[0,\beta]$.
Clearly, $W(0)=\mathbf{1}$ and $e^{-\beta H} = e^{-\beta H_0}\,W(\beta)$. Therefore $Z = \operatorname{Tr}\!\left( e^{-\beta H} 
\right)= \operatorname{Tr}\!\left(e^{-\beta H_0}\,W(\beta)\right)=Z_0\langle W(\beta)\rangle_0$.
Differentiating $W(\tau)$:
\begin{align*}
\frac{d}{d\tau}W(\tau)= H_0\,e^{\tau H_0}e^{-\tau H} - e^{\tau H_0}He^{-\tau H} = -\,e^{\tau H_0}(H-H_0)e^{-\tau H} = -\,e^{\tau H_0} V e^{-\tau H}.
\end{align*}
Inserting $1=e^{-\tau H_0}e^{\tau H_0}$ on the right yields
\begin{equation}
\frac{d}{d\tau}W(\tau) 
= - \underbrace{e^{\tau H_0}V e^{-\tau H_0}}_{V(\tau)}\,W(\tau),
\qquad W(0)=1.
\label{eq:diffeq}
\end{equation}
Equation \eqref{eq:diffeq} is a first-order operator equation with a non-commuting coefficient $V(\tau)$. Therefore,
\begin{align*}
W(\beta) 
&= 1 
- \int_0^\beta d\tau \, e^{\tau H_0} V e^{-\tau H}=1-\int_0^\beta d\tau  e^{\tau H_0}V e^{-\tau H_0}W(\tau) .
\end{align*}
Iterating this, we get the series expansion
\begin{align}
W(\beta)& =\sum_{s=0}^\infty (-1)^s \int_0^\beta \int_0^{\tau_1}\dots \int_0^{\tau_{s-1}}\,V(\tau_1)\dots V(\tau_s)d\tau_1\dots d\tau_s\nonumber \\
&=\sum_{s=0}^\infty \frac{(-1)^s}{s!}\,\left[\prod_{j=1}^s\int_0^\beta d\tau_j\right]\, \mathcal{T} \left[\prod_{j=1}^s V(\tau_j)\right]\,d\tau_1\dots d\tau_s\label{eq:ZZ0exp}
\end{align}
Next, by decomposition $V=\sum_{P\in \mathcal{P}}v_P\,\Psi_P$ and developing the products above, we get 
\begin{align*}
W(\beta)=\sum_{s=0}^\infty \frac{(-1)^s}{s!}\,\sum_{P_1,\dots, P_s\in\mathcal{P}}v_{P_1}\dots v_{P_s}\int_{[0,\beta]^s} \Psi_{P_1}(\tau_1)\dots \Psi_{P_s}(\tau_s)\,d\tau_1\dots d\tau_s\,.
\end{align*}
The result follows after taking the trace against the unperturbed Gibbs state.
\end{proof}

\begin{lemma}[Dyson expansion for log-partition function]\label{lem:dysonlog}
In the notations of Sec. \ref{sec.2},
\begin{align*}
\log\left(\frac{Z}{Z_0}\right)= \sum_{s=1}^\infty \frac{(-1)^{s}}{s!} \sum_{P_1,\cdots,P_s \in \mathcal{P}} v_{P_1}\cdots v_{P_s} \int_{[0,\beta]^s} \dd \tau_1\cdots \dd \tau_s  \,\mathcal{E}_{c}(\{P_i,\tau_i\}_{i\in[s]}),
 \end{align*}
where the cumulant function $\mathcal{E}_c$ is defined as
\begin{align}\label{eq:defcumulant}
\mathcal{E}_{c}(\{P_i,\tau_i\}_{i\in[s]}):=\sum_{\Pi\in\mathbf{P}_s}(-1)^{|\Pi|-1}(|\Pi|-1)!\prod_{B=\{j_1,\cdots , j_{|B|}\}\in\Pi}\langle \mathcal{T}(\Psi_{P_{j_1}}(\tau_{j_1})\cdots \Psi_{P_{j_{|B|}}}(\tau_{j_{|B|}}) )\rangle_0,
\end{align}
where $\mathbf{P}_s$ stands for the set of partitions of $\{1,\cdots ,s\}$ and $|\Pi|$ denotes the number of sets in a partition $\Pi$. Moreover, $\mathcal{E}_c$ is the connected part of the moment function $\mathcal{E}$ and satisfies:
\begin{align}\label{eq:connected}
\mathcal{E}(\{P_i,\tau_i\}_{i\in[s]}) = \sum_{\Pi\in \mathbf{P}_s} \prod_{B\in\Pi} \mathcal{E}_c(\{P_i,\tau_i\}_{i\in B}).
\end{align}
\end{lemma}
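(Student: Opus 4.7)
The plan is to start from the Dyson expansion \eqref{eq:Dyson} for $Z/Z_0$ proved in Lemma \ref{lem:dyson} and reduce the statement to a standard moment-to-cumulant conversion followed by the exponential formula for generating functions.

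First, I would verify that the recursive definition \eqref{eq:defcumulant} of $\mathcal{E}_c$ is equivalent to the factorization \eqref{eq:connected}. Since the kernel $(-1)^{|\Pi|-1}(|\Pi|-1)!$ is exactly the value of the M\"obius function on the lattice of set partitions (from $\Pi$ to the finest partition $\hat 0$), \eqref{eq:defcumulant} and \eqref{eq:connected} are M\"obius inverses of each other. Concretely, one substitutes \eqref{eq:defcumulant} into the right-hand side of \eqref{eq:connected} and checks by induction on $s$ that all non-trivial contributions cancel. Equivalently, treating $\Psi_{P_i}(\tau_i)$ as random variables under $\langle \mathcal{T}(\cdot)\rangle_0$, \eqref{eq:connected} is the standard multivariate moment-cumulant duality.

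Next, I would introduce the integrated cumulant
\begin{align*}
C(n):=\sum_{P_1,\dots,P_n\in\mathcal{P}}v_{P_1}\cdots v_{P_n}\int_{[0,\beta]^n}d\tau_1\cdots d\tau_n\,\mathcal{E}_c(\{P_i,\tau_i\}_{i\in[n]}),
\end{align*}
and the analogous $A(n)$ for $\mathcal{E}$, so that Lemma \ref{lem:dyson} reads $Z/Z_0=\sum_{s\ge 0}(-1)^s A(s)/s!$ and the target identity becomes $\log(Z/Z_0)=\sum_{s\ge 1}(-1)^s C(s)/s!$. Substituting \eqref{eq:connected} into $A(s)$ and using that the sum over $(P_1,\dots,P_s)$ and the integral over $(\tau_1,\dots,\tau_s)$ are invariant under relabeling of the indices, the contribution of a partition $\Pi\in\mathbf{P}_s$ factorizes as $\prod_{B\in\Pi}C(|B|)$. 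Using $(-1)^s=\prod_{B\in\Pi}(-1)^{|B|}$ and setting $a_n:=(-1)^n C(n)$, this yields $(-1)^s A(s)=\sum_{\Pi\in\mathbf{P}_s}\prod_{B\in\Pi}a_{|B|}=B_s(a_1,\dots,a_s)$, the $s$-th complete exponential Bell polynomial.

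Finally, the exponential formula for Bell polynomials gives
\begin{align*}
\frac{Z}{Z_0}=\sum_{s\ge 0}\frac{B_s(a_1,\dots,a_s)}{s!}=\exp\!\Bigl(\sum_{n\ge 1}\frac{a_n}{n!}\Bigr)=\exp\!\Bigl(\sum_{n\ge 1}\frac{(-1)^n C(n)}{n!}\Bigr),
\end{align*}
and taking the logarithm produces the claimed identity. The main obstacle is justifying the rearrangement implicit in the exponential formula: as an identity of formal power series in the coupling the manipulation is purely algebraic, while in the weakly interacting regime of Theorem \ref{thm-convergence} both the moment series and the cumulant series converge absolutely with the same radius, so Fubini/dominated convergence legitimizes all the reorderings. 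A secondary bookkeeping point is making sure the permutation symmetry in $(P_1,\dots,P_s)$ and $(\tau_1,\dots,\tau_s)$ is used consistently when factorizing the $\Pi$-sum over blocks, which is what allows the blockwise factor to depend only on $|B|$ and not on which indices lie in $B$.
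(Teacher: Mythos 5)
Your proposal is correct, and it reaches the result by a route that is combinatorially equivalent to, but presentationally different from, the paper's. The paper introduces a source functional $\mathcal{G}[J]$, Taylor-expands $\log\mathcal{G}[J]$ around $J=0$ via Fr\'echet derivatives, and invokes the Fa\`a-di-Bruno identity twice --- once for $\log\mathcal{G}$ to obtain the cumulant formula \eqref{eq:defcumulant} and once for $\mathcal{G}=e^{\log\mathcal{G}}$ to obtain the factorization \eqref{eq:connected}. You instead work directly at the level of set-partition combinatorics: M\"obius inversion on the partition lattice for the equivalence of \eqref{eq:defcumulant} and \eqref{eq:connected}, then the complete Bell polynomial / exponential formula to resum the Dyson series of Lemma \ref{lem:dyson}. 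These are the same identities in different clothing (Fa\`a-di-Bruno for $\log$ \emph{is} the partition-lattice M\"obius inversion, and the exponential formula \emph{is} the Taylor expansion of $\exp$ applied to the cumulant generating function), but your version avoids the functional-derivative machinery and makes the key structural input explicit, namely that the blockwise factor in the $\Pi$-sum depends only on $|B|$; this in turn rests on the permutation symmetry of the time-ordered correlators, which holds here because each $\Psi_P$ has even fermion parity, a point worth stating. Your treatment of convergence (formal power series, upgraded to absolute convergence in the regime of Theorem \ref{thm-convergence}) is at the same level of rigor as the paper's ``standard interchanges of series, integrals and differentiations.'' One cosmetic slip: the kernel $(-1)^{|\Pi|-1}(|\Pi|-1)!$ is the M\"obius function $\mu(\Pi,\hat 1)$ from $\Pi$ up to the \emph{coarsest} (one-block) partition, not to the finest partition $\hat 0$; this does not affect the argument.
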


\begin{proof}
Given an integrable source term $J$ on $[0,\beta]$, we introduce the time-ordered exponential
\begin{align*}
\mathcal{G}[J]:=\sum_{s=0}^\infty \frac{1}{s!}\, \int_{[0,\beta]^s}\, J(\tau_1)\,\cdots J(\tau_s)\, \langle \mathcal{T}(V(\tau_1)\cdots V(\tau_s))\rangle_0\,d\tau_1\cdots d\tau_s 
\end{align*}
where $V$ is defined as in Eq.\eqref{eq:diffeq}. We see directly from here that setting $J$ to the constant function $J=-1$ yields back the ratio function $Z/Z_0$ whereas $\log \mathcal{G}[J]$ yields the quantity we are after. We also directly have that $\mathcal{G}[0]=1$. 
Taking the Fréchet derivatives of $\log\mathcal{G}[J]$ at $0$ directly yields to the cumulant functions $\mathcal{E}_c$: first, by Taylor series expansion, we have that
\begin{align*}
\log\left(\frac{Z}{Z_0}\right)=\log\mathcal{G}[-1]&=\sum_{s=1}^\infty \frac{(-1)^s}{s!}\, D^s\log\mathcal{G}[0](-1,\cdots ,-1)\\
&=\sum_{s=1}^\infty \frac{(-1)^s}{s!}\int_{[0,\beta]^s}\left.\frac{\delta^s \log\mathcal{G}[J]}{\delta J(\tau_1)\cdots \delta J(\tau_s)}\right|_{J=0}\,d\tau_1\cdots d\tau_s
\end{align*}
where the second line follows from Riesz representation theorem, with
\begin{align*}
\left.\frac{\delta^s \log\mathcal{G}[J]}{\delta J(\tau_1)\cdots \delta J(\tau_s)}\right|_{J=0}:=D^m\log \mathcal{G}[0](\delta_{\tau_1},\cdots,\delta_{\tau_s}).
\end{align*}
Next, by the Fàa-di-Bruno identity, and since $\mathcal{G}[0]=1$, we get that
\begin{align*}
\left.\frac{\delta^s \log\mathcal{G}[J]}{\delta J(\tau_1)\cdots \delta J(\tau_s)}\right|_{J=0}=\sum_{\Pi\in\mathbf{P}_s} (-1)^{|\Pi|-1}(|\Pi|-1)!\,\prod_{B\in\Pi}\left.\frac{\delta^{|B|}\mathcal{G}}{\prod_{i\in B}\delta J(\tau_i)}\right|_{J=0}.
\end{align*}
 Moreover, recalling the expansion found in Eq. \eqref{eq:ZZ0exp}, the Fréchet derivatives of $\mathcal{G}$ at $0$ can be expressed by standard interchanges of series, integrals and differentiations as
\begin{align*}
\left.\frac{\delta^s \mathcal{G}[J]}{\delta J(\tau_1)\cdots \delta J(\tau_s)}\right|_{J=0}=\langle \mathcal{T}(V(\tau_1)\cdots V(\tau_s))\rangle_0\,.
\end{align*}
This directly yields Eq. \eqref{eq:defcumulant} after expressing each $V$ as a sum over interaction terms. In order to get Eq. \eqref{eq:connected}, we make one more use of the Faà-di-Bruno identity which yields the following expansion of derivatives of $\mathcal{G}[J]=e^{\log\mathcal{G}[J]}$:
\begin{align*}
\frac{\delta^s\mathcal{G}}{\delta J(\tau_1)\cdots \delta J(\tau_s)}[J]=\mathcal{G}[J]\,\sum_{\Pi\in\mathbf{P}_s}\,\prod_{B\in \Pi}\,\frac{\delta^{|B|}\log\mathcal{G}}{\prod_{i\in B}\delta J(\tau_i)}[J].
\end{align*}
This directly gives \eqref{eq:connected} after evaluating the equation at $J=0$.
\end{proof}

\subsection{Green's Function and Wick's Theorem}

This section is devoted to a proof of Equation \eqref{Wickidentity}. First, in the next Lemma, we recall the derivation of Green's functions associated to a Gaussian fermionic state:

\begin{lemma}[Green's function]\label{lem:Greenfunction}

The Green's functions defined in Definition \ref{def:Green} admit a matrix representation 
$$ g_\tau := {-}\mathbf{1}_{\tau {<} 0}e^{-\tau h}(1+e^{\beta h})^{-1} {+} \mathbf{1}_{\tau{\ge}0} e^{-\tau h}(1+e^{-\beta h})^{-1} . $$

\end{lemma}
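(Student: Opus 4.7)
The plan is to compute the two required correlators explicitly by propagating the annihilation operator in imaginary time and then evaluating a static two-point function in the Gaussian state.

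First, I would derive the imaginary-time Heisenberg evolution of $\psi_a$ under $H_0$. A short calculation using the canonical anticommutation relations gives $[H_0,\psi_a]=-\sum_c h_{ac}\psi_c$, so that $\frac{d}{d\tau}\psi_a(\tau)=-\sum_c h_{ac}\psi_c(\tau)$, with the closed-form solution
\begin{align*}
\psi_a(\tau)\;=\;\sum_c (e^{-\tau h})_{ac}\,\psi_c.
\end{align*}
Once this is in hand, both cases of the Green's function reduce to evaluating the static covariance $\langle\psi_b^\dagger\psi_c\rangle_0$ of the Gaussian state $e^{-\beta H_0}/Z_0$.

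Second, I would compute $\langle\psi_b^\dagger\psi_c\rangle_0$ by diagonalizing $h=UDU^\dagger$ with $D=\mathrm{diag}(\epsilon_k)$, introducing the rotated modes $d_k=\sum_a U_{ak}^*\psi_a$ so that $H_0=\sum_k\epsilon_k d_k^\dagger d_k$ decouples into independent harmonic fermionic modes. The Gibbs state then factorizes and the Fermi--Dirac formula $\langle d_k^\dagger d_l\rangle_0=\delta_{kl}(1+e^{\beta\epsilon_k})^{-1}$ is standard. Transforming back via $\psi_a=\sum_k U_{ak}d_k$ yields
\begin{align*}
\langle\psi_b^\dagger\psi_c\rangle_0\;=\;\bigl((1+e^{\beta h})^{-1}\bigr)_{cb}.
\end{align*}

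Third, I would assemble the two cases. For $\tau\ge 0$, $g_\tau(a,b)=\langle\psi_a(\tau)\psi_b^\dagger\rangle_0$, and using $\psi_c\psi_b^\dagger=\delta_{cb}-\psi_b^\dagger\psi_c$ together with the previous step gives
\begin{align*}
g_\tau(a,b)=\bigl(e^{-\tau h}\bigr)_{ab}-\bigl(e^{-\tau h}(1+e^{\beta h})^{-1}\bigr)_{ab}=\bigl(e^{-\tau h}(1+e^{-\beta h})^{-1}\bigr)_{ab},
\end{align*}
where I used the elementary identity $1-(1+e^{\beta h})^{-1}=(1+e^{-\beta h})^{-1}$ (which follows from simultaneous diagonalization). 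For $\tau<0$, $g_\tau(a,b)=-\langle\psi_b^\dagger\psi_a(\tau)\rangle_0$, and inserting the expression for $\psi_a(\tau)$ and the static covariance directly gives $g_\tau(a,b)=-\bigl(e^{-\tau h}(1+e^{\beta h})^{-1}\bigr)_{ab}$. Combining the two cases produces the claimed matrix representation.

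There is no real obstacle here: the argument is a textbook Gaussian/Wick computation. The only point requiring a little care is the placement of indices after the basis change $\psi_a\leftrightarrow d_k$, since one must track which of $b$ or $c$ gets transposed when reassembling the answer as a matrix acting on $\mathbb{C}^{\Omega\times\Omega}$; doing this carefully is what fixes the ordering in $g_\tau(a,b)=(\,\cdot\,)_{ab}$ rather than $(\,\cdot\,)_{ba}$.
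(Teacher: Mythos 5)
Your proposal is correct and follows essentially the same route as the paper's proof: derive $\psi_a(\tau)=\sum_c(e^{-\tau h})_{ac}\psi_c$ from the commutator $[H_0,\psi_a]=-\sum_c h_{ac}\psi_c$, diagonalize $h$ to reduce the static covariance to the Fermi--Dirac occupation of decoupled modes, and assemble the two time branches using $\{\psi_c,\psi_b^\dagger\}=\delta_{cb}$ and the identity $1-(1+e^{\beta h})^{-1}=(1+e^{-\beta h})^{-1}$. The only difference is one of detail: the paper derives the occupation $\langle\psi_a\psi_b^\dagger\rangle_0=(1+e^{-\beta h})^{-1}_{ab}$ by an explicit trace over the Fock basis rather than quoting the Fermi--Dirac formula, but the content is identical.
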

\begin{proof}

We denote the imaginary-time Heisenberg fields
\[
\psi_a(\tau)=e^{\tau H_0}\psi_a e^{-\tau H_0},\qquad
\psi_a^\dagger(\tau)=e^{\tau H_0}\psi_a^\dagger e^{-\tau H_0}.
\]
Since
\begin{align*}
[H_0,\psi_a]&=\sum_{b,c}h_{bc}[\psi_b^\dagger\psi_c,\psi_a]\\
&=\sum_{b,c}h_{bc}\left(\psi_b^\dagger [\psi_c,\psi_a]+[\psi_b^\dagger,\psi_a]\psi_c\right)\\
&=\sum_{b,c}h_{bc}\left(-\psi_b^\dagger\psi_a\psi_c-\psi_a\psi_b^\dagger\psi_c\right)\\
&=-\sum_{b,c}h_{bc}\{\psi_b^\dagger,\psi_a\}\psi_c\\
&=-\sum_{c}h_{ac}\psi_c\,,
\end{align*}
we get that $\partial_\tau \psi_a(\tau)=-\sum_c h_{ac}\psi_c(\tau)$. In vector form (with $\bm{\psi}=(\psi_a)_a$), 
\begin{align}\label{eq:psitau}
\bm{\psi}(\tau)=e^{-\tau h}\,\bm\psi(0).
\end{align}
Moreover,
\begin{align}\label{eq:psitaudag}
\bm\psi^\dagger(\tau)=e^{\tau H_0}\bm\psi^\dagger(0) e^{-\tau H_0}=\left(e^{-\tau H_0}\bm\psi(0) e^{\tau H_0}\right)^\dagger=\left(e^{\tau h}\bm\psi(0)\right)^\dagger=\bm\psi^\dagger(0)\,e^{+\tau h}.
\end{align}
Next, we diagonalize $h=U\varepsilon U^\dagger$ and set $\bm c=U^\dagger \bm\psi$. By construction, the operators $c_{\alpha}$ also satisfy the canonical anticommutation relations. Therefore, $H_0=\sum_{\alpha,\beta}\epsilon_{\alpha\beta}c_\alpha^\dagger c_\beta= \sum_\alpha \epsilon_{\alpha} c_\alpha^\dagger c_\alpha$, where we set $\epsilon_\alpha=\epsilon_{\alpha\alpha}$, and by commutativity and the projection property of $c^\dagger_{\alpha}c_\alpha$,
\[
e^{-\beta H_0}=e^{-\beta \sum_\alpha \epsilon_\alpha c_\alpha^\dagger c_\alpha}=\prod_\alpha e^{-\beta \epsilon_\alpha c_\alpha^\dagger c_\alpha}=\prod_\alpha\left(1+c_\alpha^\dagger c_\alpha \big(e^{-\beta \epsilon_\alpha}-1\big)\right).
\]
Hence,
\begin{align*}
\langle \psi_a\psi_b^\dagger\rangle_0&=\Tr\big( e^{-\beta H_0}\psi_a\psi_b^\dagger\big)/Z_0\\
&=\sum_{\alpha,\beta}\frac{U_{a\alpha}U^\dagger_{\beta b}}{Z_0}\,\Tr\left(\prod_{\alpha'}\Big(1+c_{\alpha'}^\dagger c_{\alpha'}\big(e^{-\beta\epsilon_{\alpha'}}-1\big)\Big)c_\alpha c_\beta^\dagger\right)\\
&\overset{(1)}{=}\sum_\alpha \frac{U_{a\alpha}U_{\alpha b}^\dagger}{Z_0} \Tr\left(\prod_{\alpha'\ne \alpha}\Big(1+c_{\alpha'}^\dagger c_{\alpha'}\big(e^{-\beta \epsilon_{\alpha'}}-1\big)\Big)\,c_\alpha c_\alpha^\dagger\right)\\
&=\sum_\alpha \frac{U_{a\alpha}U_{\alpha b}^\dagger}{Z_0} \Tr\left(\prod_{\alpha'\ne \alpha}\Big(1+c_{\alpha'}^\dagger c_{\alpha'}\big(e^{-\beta \epsilon_{\alpha'}}-1\big)\Big)(1-c_\alpha^\dagger c_\alpha)\right)
\end{align*}
where in (1) we used that the trace vanishes for $\alpha\ne\beta$ due to the anticommutation relations. Developing the trace via the Fock basis \cite{bravyi2002fermionic,bravyi2004lagrangian}, we get
\begin{align*}
\langle \psi_a\psi_b^\dagger\rangle_0=\sum_\alpha \frac{U_{a\alpha}{U_{\alpha b}^\dagger }}{Z_0}\,\Tr\left(\prod_{\alpha'\ne \alpha}\,\Big(1+c_{\alpha'}^\dagger c_{\alpha'}(e^{-\beta \epsilon_{\alpha'}}-1)\Big) \right)\,.
\end{align*}
Since the trace as well as the partition function $Z_0$ factorize, the above expression simplifies as
\begin{align}\label{eq:twopointcorrs}
\langle \psi_a\psi_b^\dagger\rangle_0=\sum_\alpha \frac{U_{a\alpha}U_{\alpha b}^\dagger}{\Tr_\alpha\Big(1+c_\alpha^\dagger c_\alpha(e^{-\beta\epsilon_\alpha}-1)\Big)}= \sum_\alpha U_{a\alpha}\frac{1}{1+e^{-\beta \epsilon_\alpha}}U_{\alpha b}^\dagger=(1+e^{-\beta h})^{-1}_{ab}
\end{align}
where $\Tr_\alpha$ stands for trace over a single-mode Fock basis associated to mode $\alpha$. Moreover, 
\begin{align}\label{eq:twopointcorrs2}
\langle \psi_b^\dagger \psi_a\rangle_0=\delta_{ab}-\langle \psi_a\psi_b^\dagger\rangle_0=(1+e^{\beta h})^{-1}_{ab}\,.
\end{align}
Combining with the previously derived simple form for the time evolved creation and annihilation operators, we get
\begin{align*}
&\langle \psi_b^\dagger \psi_a(\tau)\rangle_0=\sum_\alpha (e^{-\tau h})_{a\alpha}\langle\psi_b^\dagger \psi_\alpha\rangle_0=\left(e^{-\tau h}(1+e^{\beta h})^{-1}\right)_{ab}\\
&\langle \psi_a(\tau)\psi_b^\dagger\rangle_0=\sum_\alpha (e^{-\tau h})_{a\alpha}\langle \psi_\alpha\psi_b^\dagger\rangle_0=(e^{-\tau h}(1+e^{-\beta h})^{-1})_{ab}\,.
\end{align*}

\end{proof}

\begin{lemma}[Wick's theorem \cite{sims2016decay,araki1971quasifree,bach1994generalized}]\label{lemm:Wick}

In the notations of Sec. \ref{sec.2}, 
\[
\mathcal{E}(\{P_i,\tau_i\}_{i \in [s]}) = (-1)^{\sum_{i=1}^s m_{P_i}(m_{P_i} - 1)/2}\operatorname{det}(\mathbf{G}(\{P_i,\tau_i\}_{i \in [s]}),
\]
\end{lemma}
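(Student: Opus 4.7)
The plan is to deduce the identity from the gauge-invariant quasi-free (Gaussian) structure of the non-interacting Gibbs state $\omega_0(\cdot) = Z_0^{-1}\operatorname{Tr}(\cdot\,e^{-\beta H_0})$, which was effectively established in Lemma~\ref{lem:Greenfunction}: after diagonalizing $H_0 = \sum_\alpha \epsilon_\alpha c_\alpha^\dag c_\alpha$, the operator $e^{-\beta H_0}/Z_0$ factorizes over single-particle modes, so $\omega_0$ is a quasi-free state whose two-point functions match the time-ordered Green's function by the same calculation: $\omega_0(\psi_a(\tau)\psi_b^\dag(\tau')) = g_{\tau-\tau'}(a,b)$ for $\tau>\tau'$ and $\omega_0(\psi_b^\dag(\tau')\psi_a(\tau)) = -g_{\tau-\tau'}(a,b)$ for $\tau<\tau'$. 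In particular, every entry of $\mathbf{G}$ coincides with a two-point function of the corresponding pair of operators, with the sign fixed by the time difference.

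First, I would simplify the left-hand side by observing that each $\Psi_{P_i}(\tau_i)$ has even fermionic parity, so $\mathcal{T}$ merely permutes the $\Psi_{P_i}(\tau_i)$'s without introducing any sign; we may therefore assume $\tau_1 \ge \cdots \ge \tau_s$ without loss of generality. Next, I would invoke the standard Wick theorem for quasi-free gauge-invariant states, which expresses $\omega_0$ of any product of equal numbers of creation and annihilation operators as a signed sum over perfect matchings pairing each annihilation with a creation, each matching weighted by the product of its two-point functions and by the sign of the permutation bringing the matched pairs into adjacent positions. Since these two-point functions are precisely the entries $g_{\tau_i-\tau_j}(P_i^-(k),P_j^+(l))$ of $\mathbf{G}$, the matching sum coincides, up to an overall sign, with the Leibniz expansion of $\det\mathbf{G}$.

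The last and most delicate step is to pin down this overall sign. Fixing the lexicographic ordering of rows of $\mathbf{G}$ by pairs $(i,k)$ and of columns by $(j,l)$, a matching is equivalent to a bijection $\sigma$ from row labels to column labels, and the Wick sign factors as $\operatorname{sgn}(\sigma)$ times a block prefactor that I expect to evaluate to $(-1)^{\sum_i m_{P_i}(m_{P_i}-1)/2}$. This prefactor captures the fact that within each $\Psi_{P_i}$ the $m_{P_i}$ creation operators form one contiguous block followed by the $m_{P_i}$ annihilation operators, rather than being interleaved in the canonical Wick pattern. The main obstacle is precisely this sign bookkeeping, which is error-prone to carry out by hand; a cleaner alternative—essentially the route already taken in~\eqref{eq:correlationfunc}—is to pass to the Grassmann picture by identifying each $\Psi_{P_i}$ with the monomial $\prod_j \phi^+_{P_i^+(j)}\prod_j \phi^-_{P_i^-(j)}$ and applying the Gaussian Grassmann integral of Lemma~\ref{gaussian-grassmann}, which produces both $\det\mathbf{G}$ and the sign $\prod_i \alpha_{P_i}=(-1)^{\sum_i m_{P_i}(m_{P_i}-1)/2}$ automatically from reordering each monomial into the Berezin-canonical form dictated by the integration measure.
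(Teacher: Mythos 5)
Your overall strategy is sound and close in spirit to the paper's: reduce to the quasi-free structure of $\langle\cdot\rangle_0$ established in Lemma~\ref{lem:Greenfunction}, apply Wick's theorem, identify the matching sum with a determinant expansion, and then fix the overall sign. However, the one step that constitutes the actual content of this lemma --- verifying that the block prefactor equals $(-1)^{\sum_i m_{P_i}(m_{P_i}-1)/2}$ --- is exactly the step you leave as an unverified expectation. The paper's proof is devoted almost entirely to this bookkeeping: it starts from the Pfaffian form $\langle C_1\cdots C_{2m}\rangle_0=\operatorname{pf}(\mathcal{C})$ valid for general quasi-free states, uses the vanishing of $\langle\psi_a\psi_b\rangle_0$ and $\langle\psi_a^\dagger\psi_b^\dagger\rangle_0$ to put $\mathcal{C}$ into off-diagonal block form after conjugating by an explicit permutation $\Pi$, and then combines $\det(\Pi)=(-1)^{\sum_{i<j}m_im_j}$ with the Pfaffian-to-determinant sign $(-1)^{m(m-1)/2}$, the congruence $\sum_{i<j}m_im_j+\binom{m}{2}\equiv\sum_i\binom{m_i}{2}\pmod 2$ delivering the claimed prefactor. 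Without some such explicit computation your argument does not establish the stated sign, and the sign matters downstream (it enters $\alpha_{T,\chi}$ in the tree-determinant expansion).

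The fallback you propose --- passing to the Grassmann picture and letting Lemma~\ref{gaussian-grassmann} produce the sign ``automatically'' --- is circular as stated. Lemma~\ref{gaussian-grassmann} is a purely algebraic fact about Berezin integrals; to use it you must first justify that $\langle\mathcal{T}(\Psi_{P_1}(\tau_1)\cdots\Psi_{P_s}(\tau_s))\rangle_0$ equals the corresponding Grassmann Gaussian integral with covariance $\mathbf{G}$. In the paper that dictionary is \emph{derived from} the Wick identity \eqref{Wickidentity}: the first equality of \eqref{eq:correlationfunc} is precisely the statement of Lemma~\ref{lemm:Wick}, and only then is the determinant rewritten as a Grassmann integral. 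Establishing the operator-to-Grassmann correspondence independently (e.g.\ via a fermionic coherent-state path integral for the trace) would be a legitimate but substantially longer route, and you do not sketch it. So either carry out the combinatorial sign computation directly (as the paper does), or supply a genuinely independent derivation of the Grassmann representation; as written, the proposal has a gap at its central step.
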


 \begin{proof}

We argue by multilinearity of the determinant. Here we denote $m_k\equiv m_{P_k}$. By definition, we have, given { $\tau_{\sigma(1)}\ge \dots\ge \tau_{\sigma(s)}$},
\begin{align*}
&\mathcal{E}(\{P_i,\tau_i\}_{i \in [s]})\\
&\quad =\Big\langle \!\prod_{j=1}^{m_{\sigma(1)}}\psi^\dagger_{P^+_{\sigma(1)}(j)}(\tau_{\sigma(1)})\prod_{j=1}^{m_{{\sigma(1)}}}\psi_{P^-_{\sigma(1)}(j)}(\tau_{\sigma(1)})\cdots \prod_{j=1}^{m_{\sigma(s)}}\psi^\dagger_{P^+_{\sigma(s)}(j)}(\tau_{\sigma(s)})\prod_{j=1}^{m_{{\sigma(s)}}}\psi_{P^-_{\sigma(s)}(j)}(\tau_{\sigma(s)})\Big\rangle_0
\end{align*}
By \eqref{eq:psitau}, \eqref{eq:psitaudag}, we have that, for any $k\in[s]$ and $j\in [m_{\sigma(k)}]$,
\begin{align*}
&C_{\sum_{r=0}^{k-1}2m_{\sigma(r)}+j}:=\psi^\dagger_{P^+_{\sigma(k)}(j)}(\tau_{\sigma(k)})=\sum_{\alpha}\psi^\dagger_\alpha\,(e^{\tau_{\sigma(k)}h})_{\alpha P^+_{\sigma(k)}(j)}\\
&C_{\sum_{r=0}^{k-1}2m_{\sigma(r)}+m_{\sigma(k)}+j}:=\psi_{P^-_{\sigma(k)}(j)}(\tau_{\sigma(k)})=\sum_{\alpha} (e^{-\tau_{\sigma(k)}h})_{P^-_{\sigma(k)}(j)\alpha}\psi_{\alpha},
\end{align*}
and hence, since the average in $\mathcal{E}(\{P_i,\tau_i\}_{i \in [s]})$ is taken over a quasi-free Gaussian state, we have that \cite{sims2016decay,araki1971quasifree,bach1994generalized}, since $m=\sum_{k\in [s]}m_{k}$,
\begin{align*}
\mathcal{E}(\{P_i,\tau_i\}_{i \in [s]})=\langle C_1\cdots C_{2m}\rangle_0=\operatorname{pf}(\mathcal{C}), 
\end{align*}
where $\operatorname{pf}(\mathcal{C})$ stands for the pfaffian of the skew-symmetric matrix 
\begin{align*}
\mathcal{C}=(\langle C_uC_v\rangle_0)_{1\le u,v\le 2m}.
\end{align*}
This expression can be further simplified due to the block structure of the matrix $\mathcal{C}$. Indeed, since $\langle \psi_a\psi_b\rangle_0=\langle \psi_a^\dagger\psi_b^\dagger\rangle_0=0$ for all $a,b\in\Omega$,
\begin{align*}
\mathcal{C}=\begin{bmatrix}
0 &\mathcal{C}^{+-}_{11}  & 0 &\mathcal{C}^{+-}_{12}&0 &\cdots &0&\mathcal{C}^{+-}_{1s}  \\
-\mathcal{C}^{+-}_{11} &0  & \mathcal{C}^{-+}_{12}& 0&\mathcal{C}_{13}^{-+}&\cdots &\mathcal{C}^{-+}_{1{ s}} &0\\
0&- \mathcal{C}^{-+}_{12} &  0&\mathcal{C}_{22}^{+-}&0&\cdots &0&\mathcal{C}_{2s}^{+-}      \\
\vdots & \vdots & \ddots&\ddots &\ddots& \ddots&\ddots&\vdots \\
0 & -\mathcal{C}^{-+}_{1{ s}} & \cdots & \cdots&\cdots&\cdots &0& \mathcal{C}_{ss}^{+-}\\
-\mathcal{C}^{+-}_{1s}&0&\cdots&\cdots &\cdots&\cdots&-\mathcal{C}_{ss}^{+-}&0
\end{bmatrix}
\end{align*}
where $\mathcal{C}^{+-}_{ij}$ and $\mathcal{C}^{-+}_{ij}$ are blocks of size $m_{\sigma(i)}\times m_{\sigma(j)}$. Next, we consider the permutation $\Pi$ that first groups all $+$ blocks together, and next all $-$ blocks:
\begin{align*}
(1^+,1^-,2^+,2^-,\cdots, s^+,s^- )\overset{\Pi}{\longrightarrow }(1^+,2^+,\cdots ,s^+,1^-,2^-,\cdots s^-).
\end{align*}
Under this permutation, we have 
\begin{align}\label{eq:rotatedC}
\Pi C \Pi^{\intercal} =\begin{bmatrix}
0&X\\
-X^\intercal &0
\end{bmatrix}
\end{align}
where
\begin{align*}
X=\begin{bmatrix}
\mathcal{C}^{+-}_{11}   &\mathcal{C}^{+-}_{12} &\cdots &\mathcal{C}^{+-}_{1s-1}&\mathcal{C}^{+-}_{1s}\\
-\mathcal{C}^{-+}_{12}&\mathcal{C}^{+-}_{22}&\cdots &\mathcal{C}^{+-}_{2s-1}&\mathcal{C}^{+-}_{2s}\\
-\mathcal{C}^{-+}_{13}&-\mathcal{C}^{-+}_{23}&\mathcal{C}^{+-}_{33}&\cdots& \mathcal{C}^{+-}_{3s}\\
\vdots &\ddots&\ddots &\ddots &\vdots \\
- \mathcal{C}^{-+}_{1s} &\cdots&\cdots &- \mathcal{C}^{-+}_{s-1s}& \mathcal{C}^{+-}_{ss}
\end{bmatrix}.
\end{align*}
Hence, using the standard result that for any $2m\times 2m$ real matrices $A,B$ with $A^\intercal=-A$, $\operatorname{pf}(B^\intercal AB)=\operatorname{det}(B)\operatorname{pf}(A)$, we get
\begin{align}\label{eq:decomCPiX}
\operatorname{pf}(\mathcal{C})=\operatorname{pf}(\Pi^\intercal \Pi \mathcal{C}\Pi^\intercal \Pi)=\operatorname{det}(\Pi)\,\operatorname{pf}(\Pi \mathcal{C}\Pi^\intercal )
\end{align}
Another standard identity on Pfaffians of matrices of the form of \eqref{eq:rotatedC} directly implies that 
\begin{align}\label{eq:pftodet}
\operatorname{pf}(\Pi\mathcal{C}\Pi^\intercal )=\operatorname{det}(X)(-1)^{\frac{1}{2}m(m-1)},
\end{align}
where $m=\sum_k m_{P_k}$.

Finally, since the determinant of a permutation coincides with its sign, to compute $\operatorname{det}(\Pi)$ we simply need to decompose it into 
transpositions, as follows
\smallskip
\[
(1^+\tikzmarknode{R}{,}\,1^-,
  \tikzmarknode{L}{2}^+,\,2^-,\,
  \cdots,\, s^+,\,s^-)
\qquad \rightarrow \qquad
(1^+,\,2^+\tikzmarknode{U}{,}\,1^-,\,2^-,\,\tikzmarknode{V}{3}^+,\,3^- ,\,\cdots ,\,s^+,\, s^-)
\qquad \rightarrow \qquad \cdots
\]
\UnderArrow[<-]{R}{L}{}
\UnderArrow[<-]{U}{V}{}

\noindent The first basis transformation above involves $m_{\sigma(1)}\times m_{\sigma(2)}$ transpositions, the second involves $(m_{\sigma(1)}+m_{\sigma(2)})\times m_{\sigma(3)}$ etc. In total, the permutation $\Pi$ can be written as a composition of 
\begin{align*}
\sum_{1< j\le  s} (m_{\sigma(1)}+\cdots +m_{\sigma(j-1)})m_{\sigma(j)}=\sum_{1\le i<j\le s}m_{\sigma(i)}m_{\sigma(j)}
&=\frac{1}{2}\left(\Big(\sum_{i=1}^sm_{\sigma(i)}\Big)^2-\sum_{i=1}^s m_{\sigma(i)}^2\right)\\
&=\sum_{1\le i<j\le s}m_{i}m_{j}.
\end{align*}
Note that this is independent of the permutation $\sigma$ associated to the time ordering $\mathcal{T}$. Thus, we get 
\begin{align}\label{eq:detperm}
\operatorname{det}(\Pi)=(-1)^{\sum_{1\le i<j\le s}m_{i}m_{j}}
\end{align}
Combining \eqref{eq:decomCPiX}, \eqref{eq:pftodet} and \eqref{eq:detperm}, we have shown that
\begin{align*}
\operatorname{pf}(\mathcal{C})=(-1)^{\sum_{1\le i<j\le s}m_{i}m_{j}+\frac{1}{2}m(m-1)}\operatorname{det}(X) =  (-1)^{\sum_{i=1}^s m_{i}(m_{i} - 1)/2} \det X.
\end{align*}
Now, by definition, for $i\le j$, $X_{ij}=\mathcal{C}^{+-}_{ij}$ coincides with $g_{\boldsymbol{\tau}}(P^-_{\sigma(j)},P^+_{\sigma(i)})=\mathbf{G}(\{P_i,\tau_i\}_{i \in [s]})_{\sigma(j)\sigma(i)}$, whereas for $i>j$, $X_{ij}=-\mathcal{C}_{ji}^{-+}$, which again coincides with $\mathbf{G}(\{P_i,\tau_i\}_{i \in [s]})_{\sigma(j)\sigma(i)}$. Hence
\begin{align*}
\operatorname{det}(X)=\operatorname{det}(X^\intercal)=\operatorname{det}(\Sigma^\intercal \mathbf{G}(\{P_i,\tau_i\}_{i \in [s]})\Sigma)=\operatorname{det}(\mathbf{G}(\{P_i,\tau_i\}_{i \in [s]})),
\end{align*}
where we denoted by $\Sigma$ the matrix representation of the permutation $\sigma$ of the indices $i\in[s]$. The proof follows. 

\end{proof}

\subsection{Decay of Green's Function} \label{green-decay}

The goal of this section is to prove Lemma \ref{lemma-green-decay}. Recall that the non-interacting Hamiltonian is $H_0 = \sum_{i,j \in \Omega}h_{ij} \psi_i^\dag \psi_j$ and the green's function matrix is given by $g_\tau := - \mathbf{1}_{\tau < 0}e^{-\tau h}(1+e^{\beta  h})^{-1} + \mathbf{1}_{\tau \ge 0} e^{-\tau h}(1+e^{-\beta h})^{-1}.$ Here, we suppose that $|h_{ij}|\le 1$ for all $i,j\in\Omega$, and $h_{ij}=0$ unless $\operatorname{dist}(i,j)<r_1$. We aim to prove that the matrix entries $g_{\tau}(a,b)$ are bounded by a function exponentially decaying with $\mathrm{dist}(a,b)$ uniformly over $\tau \in [-\beta,\beta]$.

For this, we define two scalar functions $f_{<0},f_{\ge0}:[-\|h\|,\|h\|] \to \RR$ by $$f_{< 0}(x) = e^{-\tau x}(1+e^{\beta x})^{-1} ,\qquad f_{\ge 0}(x) = e^{-\tau x}(1+e^{-\beta x})^{-1},$$
where we omit the dependence on $\beta,\tau$. The green's function matrix $g_\tau$ can be written as $g_\tau = - \mathbf{1}_{\tau < 0 }f_{<0}(h) + \mathbf{1}_{\tau \ge 0} f_{\ge 0}(h)$ and we can consider the $\tau < 0$ and $\tau\ge 0$ parts separately. The proof idea is to approximate the functions $f_{<0},f_{\ge 0}$ by polynomials.

\paragraph*{The local Hamiltonian Case}

We use the truncated Chebyshev series to provide a uniform polynomial approximation for the functions $f_1$ and $f_2$. The following can be found in \cite[Theorem 8.2]{doi:10.1137/1.9781611975949}:
\begin{lemma}[Chebyshev Polynomial Approximation]\label{lem:cheb} For $\rho > 1$, define $E_\rho$ to be the closed area enclosed by the ellipse
$
 \left\{ \frac{1}{2}\left(z + z^{-1}\right) : |z| \le \rho \right\}.
$
Let $f$ be an analytical function on $[-1,1]$. If $f$ possesses an analytical continuation on $E_\rho$ and  $|f(z)| \le C$ for $z \in E_\rho$, then the error of Chebyshev polynomial approximation of $f$ is bounded by 

$$
\sup_{x\in[-1,1]}|f_n(x) - f(x) | \le \frac{2C\rho^{-n}}{{\rho-1}},
$$
where $f_n = \sum_{k=0}^n a_k P_k$ is the projection of $f$ onto $\mathrm{span}(P_0,P_1,\cdots,P_k) $  with $P_k$ denoting the $k$-th order Chebyshev polynomial.
\end{lemma}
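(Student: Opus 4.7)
The stated bound is the classical Bernstein estimate for truncated Chebyshev expansions (the paper cites Trefethen's textbook directly for it), so the plan is just to reproduce the standard contour argument on the Bernstein ellipse. It has three short steps.

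First I would introduce the Joukowski change of variables $x = \tfrac12(w+w^{-1})$, which sends the circle $\{|w|=\rho\}$ onto $\partial E_\rho$ and $\{|w|=1\}$ two-to-one onto $[-1,1]$, and under which the Chebyshev polynomials become $P_k(x) = \tfrac12(w^k+w^{-k})$. Pulling $f$ back gives $F(w) := f(\tfrac12(w+w^{-1}))$, which by the hypothesis of analytic continuation is holomorphic on the closed annulus $\{1 \le |w| \le \rho\}$, bounded by $C$ on its outer boundary, and invariant under $w \mapsto w^{-1}$.

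Second, I would identify the Chebyshev coefficients $a_k$ of $f$ with (twice) the Laurent coefficients of $F$ at $w^k$. Cauchy's formula on $\{|w|=1\}$, combined with contour deformation out to $\{|w|=\rho\}$ (which is legal because $F$ is holomorphic on the whole annulus), gives the representation
\begin{equation*}
a_k \;=\; \frac{1}{\pi i}\oint_{|w|=\rho} F(w)\, w^{-k-1}\, dw, \qquad k \ge 1,
\end{equation*}
and the ML-inequality with $|F| \le C$ on the outer circle then yields the geometric bound $|a_k| \le 2C\rho^{-k}$.

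Third, since $|P_k(x)| \le 1$ on $[-1,1]$, the truncation error of $f_n = \sum_{k=0}^n a_k P_k$ is controlled by the tail of a geometric series:
\begin{equation*}
\sup_{x \in [-1,1]} |f_n(x) - f(x)| \;\le\; \sum_{k=n+1}^\infty |a_k| \;\le\; 2C \sum_{k=n+1}^\infty \rho^{-k} \;=\; \frac{2C\,\rho^{-n}}{\rho-1},
\end{equation*}
which is exactly the claimed inequality. The only non-routine point is justifying the contour deformation from $|w|=1$ to $|w|=\rho$, but this is immediate from the hypothesis that $f$ extends analytically to all of $E_\rho$; everything else is bookkeeping around the standard factor of $2$ between the normalizations of $a_0$ and $a_k$ for $k \ge 1$.
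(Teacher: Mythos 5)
Your proof is correct: the coefficient identity $a_k = \tfrac{1}{\pi i}\oint_{|w|=\rho} F(w)w^{-k-1}\,dw$, the bound $|a_k|\le 2C\rho^{-k}$, and the geometric tail sum give exactly the stated estimate. The paper does not prove this lemma at all --- it imports it by citation from Trefethen's textbook (Theorem 8.2 there) --- and your argument is precisely the standard Bernstein-ellipse proof given in that reference, so there is nothing to compare beyond noting that you have supplied the proof the paper omits.
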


Now we apply this general theorem to the functions $f_1$ and $f_2$.
\begin{lemma} \label{poly-approx}
There exist degree $n$ polynomials $f_{n,<0}$ and $f_{n,\ge 0}$ such that 
$$ \sup_{x\in[-\|h\|,\|h\|]} |f_{<0}(x) -f_{n,<0}(x) | \le \frac{2\rho^{-n}}{\rho - 1},\qquad   \sup_{x\in[-\|h\|,\|h\|]} |f_{\ge 0}(x) - f_{n,\ge0}(x)|  \le \frac{2\rho^{-n}}{\rho - 1}$$
where $\rho = \frac{\pi}{2\beta \|h\|} + \sqrt{1+ \frac{\pi^2}{4\beta^2 \|h\|^2}} $.
\end{lemma}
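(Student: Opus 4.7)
The plan is to apply the Chebyshev approximation result of Lemma~\ref{lem:cheb} after rescaling the interval. I first set $\tilde{f}_{<0}(y):=f_{<0}(\|h\|y)$ and $\tilde{f}_{\geq 0}(y):=f_{\geq 0}(\|h\|y)$, so that approximating $f_{<0},f_{\geq 0}$ on $[-\|h\|,\|h\|]$ is equivalent to approximating $\tilde f_{<0},\tilde f_{\geq 0}$ on $[-1,1]$ (and a polynomial in $y$ of degree $n$ gives a polynomial in $x$ of degree $n$).

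Next I identify the analytic extension and its singularities. Both $\tilde f_{<0}$ and $\tilde f_{\geq 0}$ extend meromorphically to $\mathbb{C}$, with poles exactly at the zeros of $1+e^{\pm\beta\|h\| y}$, i.e.\ at $y=\pm i\pi(2k+1)/(\beta\|h\|)$ for $k\in\mathbb{Z}$. The singularities closest to the real axis are at distance $\pi/(\beta\|h\|)$. The Bernstein ellipse $E_\rho$ has semi-minor axis $\tfrac{1}{2}(\rho-\rho^{-1})$; with the stated choice $\rho=\tfrac{\pi}{2\beta\|h\|}+\sqrt{1+\tfrac{\pi^2}{4\beta^2\|h\|^2}}$ one checks $\rho-\rho^{-1}=\pi/(\beta\|h\|)$, so the semi-minor axis equals $\pi/(2\beta\|h\|)$, strictly less than $\pi/(\beta\|h\|)$. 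Hence $E_\rho$ contains no poles and both functions are analytic on it.

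The crucial step, and the main obstacle, is to produce a uniform bound $|\tilde f_{\geq 0}(y)|, |\tilde f_{<0}(y)| \le 1$ on $E_\rho$, which will give the constant $C=1$ implicit in the $\tfrac{2\rho^{-n}}{\rho-1}$ bound of Lemma~\ref{lem:cheb}. For this I would use the identity
\begin{equation*}
\tilde f_{\geq 0}(y)=\frac{1}{e^{\tau\|h\|y}+e^{-(\beta-\tau)\|h\|y}},\qquad \tilde f_{<0}(y)=\frac{1}{e^{\tau\|h\|y}+e^{(\beta+\tau)\|h\|y}}.
\end{equation*}
Parametrise $y=A\cos\theta+iB\sin\theta$ on $E_\rho$ with $A=\tfrac12(\rho+\rho^{-1})$ and $B=\pi/(2\beta\|h\|)$. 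The factor $B$ is tuned so that for any $\tau\in[0,\beta]$ the imaginary parts of $\tau\|h\|y$ and $(\beta-\tau)\|h\|y$ have opposite signs and together span an arc of length $\beta\|h\|\cdot B|\sin\theta|=(\pi/2)|\sin\theta|\le \pi/2$. Consequently the two summands $e^{\tau\|h\|y}$ and $e^{-(\beta-\tau)\|h\|y}$ meet at an angle whose cosine is nonnegative, so $|a+b|^2\ge |a|^2+|b|^2\ge \max(|a|,|b|)^2$, giving
\begin{equation*}
\bigl|e^{\tau\|h\|y}+e^{-(\beta-\tau)\|h\|y}\bigr|\ \ge\ \max\!\bigl(e^{\tau\|h\|A\cos\theta},\,e^{-(\beta-\tau)\|h\|A\cos\theta}\bigr).
\end{equation*}
Taking the case $\cos\theta\ge 0$ with the first bound and $\cos\theta<0$ with the second (and using $\tau,\beta-\tau\ge 0$) yields $|\tilde f_{\geq 0}(y)|\le 1$. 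The same argument with the roles of $-\tau$ and $\beta+\tau$ handles $\tilde f_{<0}$ for $\tau\in[-\beta,0]$.

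With $C=1$ established, Lemma~\ref{lem:cheb} applied to $\tilde f_{<0}$ and $\tilde f_{\geq 0}$ yields polynomials $\tilde f_{n,<0}$ and $\tilde f_{n,\geq 0}$ of degree $n$ on $[-1,1]$ with error $\tfrac{2\rho^{-n}}{\rho-1}$; pulling back via $y=x/\|h\|$ gives the polynomials $f_{n,<0},f_{n,\geq 0}$ on $[-\|h\|,\|h\|]$ claimed in the lemma. The hardest part is the ellipse bound in Step~4; the choice of the safety factor of~2 in the semi-minor axis (giving $\pi/(2\beta\|h\|)$ rather than the naive $\pi/(\beta\|h\|)$) is precisely what forces the imaginary arcs above to lie within $\pi/2$ so that the two exponentials cannot cancel destructively.
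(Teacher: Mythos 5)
Your proposal is correct and follows essentially the same route as the paper: rescale to $[-1,1]$, observe that the choice of $\rho$ makes the semi-minor axis of $E_\rho$ equal to $\pi/(2\beta\|h\|)$ so that $|\Im(\beta\|h\|y)|\le\pi/2$ throughout the ellipse, establish the uniform bound $C=1$ there, and invoke Lemma~\ref{lem:cheb}. The only (cosmetic) difference is in how the bound $C=1$ is obtained — you rewrite the functions symmetrically as $1/(e^{\tau\|h\|y}+e^{-(\beta-\tau)\|h\|y})$ and argue the two exponentials meet at angle at most $\pi/2$, whereas the paper expands $|1+e^{\beta\|h\|z}|^2$ directly and uses $\cos(\beta\|h\|b)\ge 0$; your version is, if anything, slightly cleaner and correctly restricts $\tau$ to the sign range in which each function is actually used.
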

\begin{proof}
Consider the rescaling $\tilde{f}_{<0}(x) = f_{<0}(x\|h\|) $ and $\tilde{f}_{\ge 0}(x) = f_{\ge 0}(x\|h\|)$. Then $\tilde{f}_{<0}$ and $\tilde{f}_{\ge 0}$ are defined on $[-1,1]$. Moreover, given $\rho = \frac{\pi}{2\beta \|h\|} + \sqrt{1+ \frac{\pi^2}{4\beta^2 \|h\|^2}}$, for any $z =a+bi\in \CC$, $a,b\in\mathbb{R}$, in the closed area enclosed by $E_\rho$, we have $|\mathrm{Im}(\beta \|h\| z)| \le \frac{\pi}{2}$. Indeed, denoting the polar decomposition $\xi=re^{i\theta}$ with $0\le r\le \rho$ and $\theta \in\mathbb{R}$, and $z=1/2(\xi+\xi^2)$, we have that $|b|\le 1/2(\rho+\rho^{-1})$. Therefore
\begin{align*}
\beta\|h\||b|\le \beta\|h\|\frac{\rho-\rho^{-1}}{2}.
\end{align*}
Putting $x:=\pi/(2\beta\|h\|)$, we have that $\rho=x+\sqrt{1+x^2}$, and since $\rho^{-1}=\sqrt{1+x^2}-x$, we have
\begin{align*}
\rho-\rho^{-1}=2x=\frac{\pi}{\beta\|h\|}
\end{align*}
and the claim follows. This implies
$ |\tilde{f}_{<0}(z)| \le 1 $ and $|\tilde{f}_{\ge 0}(z)| \le 1$. Indeed, for any  $z = a+bi$ with $\beta \| h\| |b| \le \frac{\pi}{2} $,
$$ 
|\tilde{f}_{<0}(z)| =   \frac{e^{-\tau \|h\| a}}{\sqrt{(1+e^{\beta\|h\|a}\cos(\beta\|h\|b))^2 + e^{2\beta\|h\|a} \sin^2 (\beta \|h\|b)}} \le \frac{e^{-\tau \|h\|a}}{\sqrt{1+\,e^{2\beta \|h\|a}}} \le 1,
$$
where we used $\cos (\beta \|h\|b) \ge 0$ and $\tau\in[-\beta,\beta]$. Similar calculation shows $|\tilde{f}_{\ge 0}|$ is bounded by 1. The result follows from a direct application of Lemma \ref{lem:cheb}.
\end{proof}
\begin{lemma} \label{spectral-norm-bound}
    Suppose $H_0 = \sum_{i,j\in \Omega}h_{ij}\psi_i^\dag \psi_j$ is such that $|h_{ij}|\le 1$ for all $i,j\in\Omega$, and $h_{ij}=0$ unless $\operatorname{dist}(i,j)<r_1$. Then, the spectral norm of the matrix $h$ is bounded by
    $$ \|h\| \le C_dr_1^d.$$
\end{lemma}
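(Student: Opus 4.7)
The plan is to control $\|h\|$ via a standard sparsity estimate. The matrix $h$ is Hermitian with $|h_{ij}|\le 1$ and $h_{ij}=0$ whenever $\operatorname{dist}(i,j)\ge r_1$. On a $d$-dimensional lattice, the number of sites within graph distance $r_1$ of any given site is bounded by $C_d\,r_1^d$ for a dimension-dependent constant $C_d$ (this is just the volume of a lattice ball). Hence each row and each column of $h$ has at most $C_d\, r_1^d$ nonzero entries, each of modulus at most $1$.

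The quickest route is the Schur test (equivalently, the classical interpolation bound $\|A\|\le\sqrt{\|A\|_{1\to 1}\|A\|_{\infty\to\infty}}$). First I would bound
\[
\|h\|_{\infty\to\infty}=\max_{a\in\Omega}\sum_{b\in\Omega}|h_{ab}|\le C_d r_1^d,
\]
and by Hermiticity the same bound holds for $\|h\|_{1\to 1}$. Then
\[
\|h\|\le\sqrt{\|h\|_{1\to 1}\,\|h\|_{\infty\to\infty}}\le C_d r_1^d,
\]
after absorbing a constant into $C_d$. Alternatively one can invoke Gershgorin's disc theorem directly: every eigenvalue of $h$ lies in a disc of radius $\sum_{b\ne a}|h_{ab}|\le C_d r_1^d$ centered at $h_{aa}$ with $|h_{aa}|\le 1$, which yields the same conclusion.

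There is no real obstacle here; the bound is purely combinatorial and uses nothing beyond the locality and entrywise bound on $h$, together with the fact that a lattice ball of radius $r_1$ in $\mathbb{Z}^d$ has volume $O(r_1^d)$. The only care needed is to choose $C_d$ large enough to absorb the constant in the lattice-ball volume estimate.
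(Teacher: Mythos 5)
Your proof is correct and takes essentially the same route as the paper: both bound the maximal row and column sums by the lattice-ball volume $C_d r_1^d$ and then apply the interpolation bound $\|h\|\le\sqrt{\|h\|_{1\to 1}\,\|h\|_{\infty\to\infty}}$. The Gershgorin alternative you mention is a fine variant but adds nothing beyond the paper's argument.
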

\begin{proof}
Since each row (or column) has at most $C_dr_1^d$ non-zero entries and each entry is bounded by 1, we have 
$$\|h\|_1 = \|h\|_{\infty}\le C_dr_1^d$$
where $\|h\|_1$, resp. $\|h\|_\infty$ correspond to the max column sum and the max row sum, respectively. By the standard bound $\|h\| \le \sqrt{\|h\|_1\|h\|_\infty} \le C_dr_1^d$.
\end{proof}

As a direct corollary, we obtain the decay of the Green's functions of the non-interacting Hamiltonian $H_0$ that were stated in Lemma \ref{lemma-green-decay}, which we repeat below for convenience:

\begin{corollary}[First part of Lemma \ref{lemma-green-decay}]\label{lemma-green-decay1}
If the non-interacting part satisfies $|h_{ab}| \le 1$ and $h_{ab} = 0$ unless $\mathrm{dist}(a,b) < r_1$, the Green's function is $(K,\xi)$-exponentially decaying with 
$$
 K = O(\beta r_1^d)\qquad \text{ and }\qquad  \xi =O(\beta r_1^{d+1}).
$$

\end{corollary}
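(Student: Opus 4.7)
The strategy combines the Chebyshev polynomial approximation of Lemma~\ref{poly-approx} with the observation that powers of a finite-range Hamiltonian have controlled off-diagonal support. Concretely, I will split $g_\tau = -\mathbf{1}_{\tau<0}f_{<0}(h) + \mathbf{1}_{\tau\ge 0}f_{\ge 0}(h)$ and treat each piece identically, so I only need to bound $|(f(h))_{ab}|$ uniformly in $\tau\in[-\beta,\beta]$ for each of the two scalar functions $f\in\{f_{<0},f_{\ge 0}\}$ defined on $[-\|h\|,\|h\|]$.

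The key sparsity observation is that since $h_{ij}=0$ whenever $\mathrm{dist}(i,j)\ge r_1$, the matrix $h^k$ has $(h^k)_{ab}=0$ whenever $\mathrm{dist}(a,b)\ge kr_1$. Hence for any polynomial $p$ of degree at most $n$, $(p(h))_{ab}=0$ as soon as $\mathrm{dist}(a,b)> nr_1$. For a fixed pair $a,b\in\Omega$ with $R:=\mathrm{dist}(a,b)$, I will choose $n=\lfloor R/r_1\rfloor$ and let $f_n$ be the Chebyshev polynomial approximation from Lemma~\ref{poly-approx}, so that $(f_n(h))_{ab}=0$. Then
\begin{equation*}
|(f(h))_{ab}| = |((f-f_n)(h))_{ab}| \le \|f(h)-f_n(h)\| \le \sup_{x\in[-\|h\|,\|h\|]}|f(x)-f_n(x)| \le \frac{2\rho^{-n}}{\rho-1},
\end{equation*}
with $\rho = \tfrac{\pi}{2\beta\|h\|}+\sqrt{1+\tfrac{\pi^2}{4\beta^2\|h\|^2}}$.

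The last step is to convert this into an $(K,\xi)$-exponential decay bound. By Lemma~\ref{spectral-norm-bound}, $\|h\|\le C_d r_1^d$, so setting $x=\pi/(2\beta\|h\|)$ we have $\rho=x+\sqrt{1+x^2}$ and $\log\rho=\mathrm{arcsinh}(x)$. For $x$ in any bounded set, $\mathrm{arcsinh}(x)\ge c\,x$ for some absolute $c>0$, while for large $x$ the decay only improves; either way $\log\rho \ge c'/(\beta r_1^d)$ for some absolute $c'>0$ (up to adjusting constants, one may simply use $\log\rho\ge \min(x,\log 2)$). Combined with $n\ge R/r_1 - 1$, this gives $\rho^{-n}\le \rho\cdot e^{-R/(C\beta r_1^{d+1})}$, so $\xi = O(\beta r_1^{d+1})$. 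The prefactor $2/(\rho-1)=2/x\cdot(x/(\rho-1))=O(\beta\|h\|)=O(\beta r_1^d)$ yields $K=O(\beta r_1^d)$. For the regime $R<r_1$ where $n=0$ would be forced, the resulting bound is still of the form $K'$ for a constant $K'=O(\beta r_1^d)$, which is absorbed into $K$ since $e^{-R/\xi}\ge e^{-r_1/\xi}=\Theta(1)$.

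The only mild subtlety is verifying that the constant $c'$ in the lower bound $\log\rho\ge c'/(\beta r_1^d)$ is genuinely absolute (i.e.\ independent of $\beta$ and $r_1$); this requires handling both the small-$x$ and large-$x$ regimes of $\mathrm{arcsinh}$. Everything else is bookkeeping: apply the scalar bound, convert back through the two cases $\tau\ge0$ and $\tau<0$, take the sup over $\tau\in[-\beta,\beta]$, and package the constants.
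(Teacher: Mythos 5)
Your proposal is correct and follows essentially the same route as the paper's proof: the sparsity observation that degree-$n$ polynomials of $h$ vanish at entries with $\mathrm{dist}(a,b)\ge nr_1$, the Chebyshev error bound of Lemma~\ref{poly-approx} applied with $n=\lfloor\mathrm{dist}(a,b)/r_1\rfloor$, and the conversion $\xi=r_1(\log\rho)^{-1}=O(\beta r_1\|h\|)$ together with $\|h\|\le C_d r_1^d$ from Lemma~\ref{spectral-norm-bound}. Your extra care with the $\mathrm{arcsinh}$ lower bound and the $R<r_1$ edge case is fine but not a substantive departure.
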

\begin{proof}
For fixed $a,b \in \Omega$, let $n = \lfloor\frac{\mathrm{dist}(a,b)}{r_1}\rfloor$.  Supposing the fact that $|h_{ij}|\le 1$ for all $i,j\in \Omega$ local, for any degree $n$ polynomials $f_{n,<0}$ and $f_{n,\ge0}$, we have
$$ (f_{n,<0}(h))_{ab} = (f_{n,\ge 0}(h))_{ab} = 0. $$
Choosing $f_{n,<0}$ and $f_{n,\ge0}$ to be the Chebyshev polynomial approximations of $f_{<0}$ and $f_{\ge 0}$ that satisfy the error bound in Lemma \ref{poly-approx}, we have
\begin{align*}
|g_\tau(a,b)| &  = |\mathbf{1}_{\tau \le 0}(f_{<0}(h) - f_{n,<0}(h))_{ab}|  -|\mathbf{1}_{\tau>0} (f_{\ge 0}(h) - f_{n,\ge0}(h))_{ab}|  \\
& \le \mathbf{1}_{\tau \le 0} \| f_{<0}(h) - f_{n,<0}(h)\| + \mathbf{1}_{\tau >  0}\| f_{\ge 0}(h) - f_{n,\ge 0}(h)\| \\
& \le \frac{2\rho^{-n}}{\rho -1} \\
&=\frac{2e^{-\log(\rho)\lfloor \operatorname{dist}(a,b)/r_1\rfloor}}{\rho-1}\\
&\le \frac{2\rho e^{-\log(\rho) \operatorname{dist}(a,b)/r_1}}{\rho-1}\\
& \le \left(2+\frac{4\beta\|h\|}{\pi} \right)\,e^{- \mathrm{dist}(a,b)/\xi},
\end{align*}
where $\xi = r_1(\log \rho)^{-1} \le \Or(r_1\beta\|h\|) $. Combining this with Lemma \ref{spectral-norm-bound}, we complete the proof.
\end{proof}

\paragraph*{The Long-Range Case}
\begin{lemma}[Second Part of Lemma \ref{lemma-green-decay}]\label{secondpartlemmadecay}
 If $H_0$ satisfies 
$$
 \max_{a \in \Omega}\sum_{b\in \Omega} |h_{ab}| (e^{\theta \mathrm{dist}(a,b)} -1) \le \frac{\pi}{4\beta\|h\|},
$$
for some constant $\theta > 0$,  the Green's function is $(K,\xi)$-exponentially decaying with
$$
K = O(\beta\|h\|^2)\qquad\text{ and }\qquad  \xi = \theta^{-1}.
$$
\end{lemma}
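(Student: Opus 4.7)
The plan is to adapt the Combes-Thomas method: conjugate $h$ by a diagonal exponential weight so that the Cauchy integral representation of the Green's function acquires an explicit $e^{-\theta\,\mathrm{dist}(a,b)}$ factor, and then show that the remaining integral is uniformly bounded. Fix an arbitrary site $a_0\in\Omega$ and set $U_\theta=\operatorname{diag}(e^{\theta\,\mathrm{dist}(a_0,a)})_{a\in\Omega}$. The conjugated matrix $h_\theta:=U_\theta^{-1}hU_\theta$ has entries $(h_\theta)_{ab}=h_{ab}\,e^{\theta(\mathrm{dist}(a_0,b)-\mathrm{dist}(a_0,a))}$, and the reverse triangle inequality gives $|(h_\theta-h)_{ab}|\le |h_{ab}|(e^{\theta\,\mathrm{dist}(a,b)}-1)$. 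Applying the hypothesis of the lemma to both rows and columns (using $|h_{ab}|=|h_{ba}|$ from Hermiticity) yields $\|h_\theta-h\|_1=\|h_\theta-h\|_\infty\le \pi/(4\beta\|h\|)$, hence by Riesz--Thorin $\|h_\theta-h\|_{op}\le \pi/(4\beta\|h\|)$. By Bauer--Fike the spectrum of $h_\theta$ lies within this distance of $\mathrm{spec}(h)\subset[-\|h\|,\|h\|]$.

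Next, the scalar functions $f_{<0}(z)=e^{-\tau z}/(1+e^{\beta z})$ and $f_{\ge 0}(z)=e^{-\tau z}/(1+e^{-\beta z})$ are meromorphic with nearest poles at $z=\pm i\pi/\beta$, hence analytic in the strip $|\mathrm{Im}(z)|<\pi/\beta$. I would take $C$ to be the boundary of the rectangle $\{z:|\mathrm{Re}(z)|\le\|h\|+\pi/(2\beta),\,|\mathrm{Im}(z)|\le\pi/(2\beta)\}$, which lies in this strip and (for $\|h\|$ not too small, so that $\pi/(4\beta\|h\|)\le\pi/(2\beta)$) encloses $\mathrm{spec}(h_\theta)$. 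The holomorphic functional calculus together with $(zI-h)^{-1}=U_\theta(zI-h_\theta)^{-1}U_\theta^{-1}$ gives
\begin{align*}
(f(h))_{ab}=e^{\theta(\mathrm{dist}(a_0,a)-\mathrm{dist}(a_0,b))}\,\frac{1}{2\pi i}\oint_C f(z)\,\bigl((zI-h_\theta)^{-1}\bigr)_{ab}\,dz,
\end{align*}
and specializing $a_0=a$ extracts the desired $e^{-\theta\,\mathrm{dist}(a,b)}$ factor outside the integral.

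Finally I would control the integral uniformly in $z\in C$. The second resolvent identity $(zI-h_\theta)^{-1}=(zI-h)^{-1}\bigl[I-(h_\theta-h)(zI-h)^{-1}\bigr]^{-1}$, combined with $\|(zI-h)^{-1}\|_{op}\le 2\beta/\pi$ (from self-adjointness of $h$ and $\mathrm{dist}(z,\mathrm{spec}(h))\ge\pi/(2\beta)$ on $C$) and $\|h_\theta-h\|_{op}\le\pi/(4\beta\|h\|)$, yields $\|(zI-h_\theta)^{-1}\|_{op}\le 4\beta/\pi$. A short case analysis shows $\sup_{z\in C}|f(z)|=O(1)$: on the horizontal edges one uses $|1\pm ie^{\pm\beta x}|=\sqrt{1+e^{\pm 2\beta x}}$ to cancel the exponential prefactor $|e^{-\tau z}|$, while on the vertical edges $|1+e^{\pm\beta z}|$ is either close to one or dominates the numerator. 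Since the perimeter of $C$ is $O(\|h\|+1/\beta)$, we obtain $|g_\tau(a,b)|\le O(\beta\|h\|+1)\,e^{-\theta\,\mathrm{dist}(a,b)}$, which fits inside the claimed $K=O(\beta\|h\|^2)$ with $\xi=\theta^{-1}$. The main obstacle is the tightness of the contour placement: $C$ must simultaneously stay in the strip $|\mathrm{Im}(z)|<\pi/\beta$ (analyticity of $f$) and remain at distance larger than $\|h_\theta-h\|_{op}$ from $\mathrm{spec}(h)$ (so the Neumann series for the perturbed resolvent converges); the $\pi/4$ factor in the hypothesis is precisely what leaves enough room for the $1/2$ margin used above.
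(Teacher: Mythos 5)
Your proposal is correct and follows essentially the same Combes--Thomas strategy as the paper: conjugation by the diagonal weight $U_\theta$, the contour-integral representation of $f(h)$ over a rectangle inside the analyticity strip $|\Im z|<\pi/\beta$, and a resolvent perturbation bound using $\|h_\theta-h\|\le \pi/(4\beta\|h\|)$. In fact you supply several details the paper's own proof leaves implicit or unfinished (the Schur/Riesz--Thorin step converting the row/column bound into an operator-norm bound, the Neumann-series resolvent estimate, and the uniform bound on $|f|$ along the contour); the only superfluous element is the appeal to Bauer--Fike, since $h_\theta$ is similar to $h$ and has the same spectrum.
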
 
\begin{proof} Let $\sigma(h)$ denote the spectrum of the Hermitian matrix $h$. For $f \in \{f_{\ge 0}, f_{<0}\} $, we aim to prove the decaying of the matrix elements in $f(h)$.  Let $\delta = \frac{\pi}{2\beta \|h\|}$. Consider a contour $\Gamma$ that encloses the rectangle
$$ \{z:|\Re(z)| \le \|h\|+1,\,|\Im(z)| \le  \delta\}$$
Then the distance between points in $\Gamma$ and $\sigma(h)$ is at least $\delta$. We consider the contour integral representation
$$
f(h) = \frac{1}{2\pi i}\int_\Gamma f(z)(zI-h)^{-1} \dd z
$$
We now aim to estimate the decaying for each resolvent $((zI-h)^{-1})_{xy}$.

Fix $x_0 \in \Omega$ and define the multiplication operator $U_\theta$ by:
\[
(U_\theta v)_x= e^{\theta d(x,x_0)} v_x,\quad \forall  v \in \mathbb{C}^N.
\]
Define $h_\theta = U_\theta h U_\theta^{-1}$. The matrix elements in $h_\theta$ is given by:
\[
(h_\theta)_{xy} = h_{xy} e^{\theta (d(x,x_0) - d(y,x_0))}.
\]
Thus we have 
\[
(z I- h)^{-1} = U_\theta^{-1} (zI - h_\theta)^{-1} U_\theta.
\]
and the matrix elements of $(zI-h)^{-1}$ can be expressed as
\[
(zI - h)^{-1}_{x_0 y} = e^{-\theta d(y,x_0)} (z I- h_\theta)^{-1}_{x_0 y},
\]
which implies the estimate 
$$
|(z I- h)^{-1}_{x_0 y}| = e^{-\theta d(y,x_0)} \left\|(zI-h_\theta)^{-1} \right\| \le \frac{e^{-\theta \mathrm{dist}(x_0,y)}}{\delta - \|h-h_\theta\|}.$$
Now we estimate $\|h-h_\theta\|$. Note $|(h_\theta - h)_{xy}| \le |h_{xy}||e^{\theta d(x,y)} -1| $, using the condition of the Lemma we have
$$
\sum_{x}|(h_\theta - h)_{xy} | \le \delta/2
$$

\end{proof}

\subsection{Sampling Algorithms}
\subsubsection{The Belief Propagation} \label{sec:bp}
Recall that we need to sample $P_1,\cdots,P_s \in \mathcal{P}$ from the following distribution.
$$
\mathbb{P}(P_1,\cdots,P_s) = \frac{1}{Z_s}|v_{P_1} \cdots v_{P_s}|\prod_{(i,j)\in T} M_{\tau_i,\tau_j}(P_i,P_j)
$$
The procedure of belief propagation algorithm involves the following steps:
\begin{enumerate}
    \item An \textbf{upward message pass} from the leaves to the root, where each node sends a summary of its subtree to its parent.
    \item Calculation of the \textbf{normalizing constant $Z_s$} at the root using the incoming messages.
    \item A \textbf{downward sampling pass} (ancestral sampling) from the root to the leaves, where each variable is sampled conditioned on the value of its parent.
\end{enumerate}
The message sent from a node $i$ to its parent $a(i)$, denoted $m_{i \to a(i)}(P_{a(i)})$, represents the aggregated information from the subtree rooted at $i$.

Without loss of generality, we assume $P_1,\cdots,P_s$ forms a growing. Otherwise, we can simply reorder the vertices. We describe the algorithm as follows:
\begin{algorithm}[H]
\caption{Belief Propagation for Sampling and Normalization on a Tree}
\label{alg:bp_tree}
\begin{algorithmic}[1]
\REQUIRE Parent function $a(i) \in \{1,\cdots,i-1\}$ for each node $2\le i \le s$. Vertex factors $|v_{P_i}|$ and edge factors $M_{\tau_i,\tau_j}(P_i,P_j)$ for $P_i, P_j
\in \mathcal{P}$.
\STATE 
\ENSURE The normalizing constant $Z_s$ and a sample $(\hat{P}_1, \dots, \hat{P}_s)$.

\STATE
\STATE \textit{// Phase 1: Upward message pass (from leaves to root)}
\FOR{$i = s, s-1, \dots, 2$}
    \STATE \textit{// Compute the message vector $m_{i \to a(i)}$ for the parent $a(i)$.}
    \FOR{all $P_{a(i)} \in \mathcal{P}$}
        \STATE Let $C(i) \leftarrow \{k \in V \mid a(k)=i\}$ be the children list of $i$.
        \STATE $m_{i \to a(i)}(P_{a(i)}) \leftarrow \sum_{P_i \in \mathcal{P}} M_{\tau_{a(i)},\tau_i}(P_{a(i)}, P_i) |v_{P_i}| \prod_{k \in C(i)} m_{k \to i}(P_i)$
    \ENDFOR
\ENDFOR

\STATE
\STATE \textit{// Phase 2: Compute normalizing constant $Z_s$}
\STATE Let $C(1) \leftarrow \{k \in V \mid a(k)=1\}$.
\STATE Compute the unnormalized marginal for the root for each $P_1 \in \mathcal{P}$:
\STATE \qquad $b_1(P_1) \leftarrow |v_{P_1}| \prod_{k \in C(1)} m_{k \to 1}(P_1)$
\STATE $Z_s \leftarrow \sum_{P_1 \in \mathcal{P}} b_1(P_1)$

\STATE
\STATE \textit{// Phase 3: Downward sampling pass (from root to leaves)}
\STATE Compute the root distribution $\mathbb{P}(P_1) \leftarrow b_1(P_1) / Z_s$ for all $P_1 \in \mathcal{P}$.
\STATE Sample a state $\hat{P}_1$ from the distribution $\mathbb{P}(P_1)$.
\FOR{$i = 2, 3, \dots, s$}
    \STATE Let $C(i) \leftarrow \{k \in V \mid a(k)=i\}$.
    \STATE Compute the unnormalized conditional probability for each $P_i \in \mathcal{P}$:
    \STATE \qquad $b(P_i | \hat{P}_{a(i)}) \leftarrow M_{\tau_{a(i)},\tau_i}(\hat{P}_{a(i)}, P_i) |v_{P_i}| \prod_{k \in C(i)} m_{k \to i}(P_i)$
    \STATE Normalize to get the conditional distribution $\mathbb{P}(P_i | \hat{P}_{a(i)})$.
    \STATE Sample a state $\hat{P}_i$ from the distribution $\mathbb{P}(P_i | \hat{P}_{a(i)})$.
\ENDFOR

\STATE
\STATE \textbf{return} $Z_s$ and $(\hat{P}_1, \dots, \hat{P}_s)$.
\end{algorithmic}
\end{algorithm}
The total time complexity of the procedure is $O(s \cdot |\mathcal{P}|^2)$. This cost is dominated by the upward message pass (Phase 1). For each of the $s-1$ non-root vertices, the algorithm computes a message vector for its parent. The size of this vector is $|\mathcal{P}|$, and for each entry of the vector, a summation over all $|\mathcal{P}|$ states of the current vertex is required. Therefore, computing the full message vector for a single vertex costs $O(|\mathcal{P}|^2)$ and the total complexity of the upward pass $O(s|\mathcal{P}|^2)$. The other phases are faster, with Phase 2 taking $O(|\mathcal{P}|)$ time and Phase 3 taking $O(s |\mathcal{P}|)$ time. 

\subsubsection{Sampling Uniformly Random Labeled Tree}\label{sec:prufer}

Fix $s\ge2$ and let $P=(p_1,\dots,p_{s-2})\in[s]^{\,s-2}$.
Define $\Phi(P)$ to be the (unique) labeled tree $T$ on $[s]$ whose edge set
is constructed as follows: first, for any $v\in [s]$, 
\medskip
\[
d(v)\coloneqq 1+\#\{i\in\{1,\dots,s-2\}:p_i=v\}\,
\]
Here, $d(v)$ coincides with the final degree that vertex $v$ will have in the finished tree. Initially $\ell\coloneqq\{v\in[s]: d(v)=1\}$ is the set of leaves. Next, for $i=1,2,\dots,s-2$:
\begin{enumerate}
\item Let $v$ be the smallest element of $\ell$ (the smallest current leaf).
\item Add the edge $\{v,p_i\}$ to the edge set $E(T)$.
\item Update $d(v)\gets d(v)-1$ (so $d(v)=0$) and $d(p_i)\gets d(p_i)-1$.
\item Remove $v$ from $\ell$. If after the decrement $d(p_i)=1$, insert $p_i$ into $\ell$.
\end{enumerate}
After the loop, exactly two vertices $x\ne y$
satisfy $d(x)=d(y)=1$. Indeed, since the tree over $s$ vertices has $(s-1)$ edges, $\sum_{v}d(v)=2(s-1)$. Since at each step we decrease this sum by $2$, after $s-2$ steps, it remains $2(s-1)-2(s-2)=2$. With one edge left to add, those two remaining degree ``units'' must be supplied by the two endpoints of that final edge. A single vertex with 
$d(v)=2$ would require adding a loop (edge from the vertex to itself) to use both units, which is impossible in a simple tree. Therefore the only possibility is that two distinct vertices have $d=1$ and all others have $d=0$. Set $E(T)\gets E(T)\cup\{\{x,y\}\}$. The resulting graph $T$ is $\Phi(P)$.

\begin{lemma}[Pr\"ufer code: bijection and uniform sampling]
For $s\ge2$, the map $\Phi:[s]^{\,s-2}\to\mathcal{T}([s])$ that sends a sequence
$P=(p_1,\dots,p_{s-2})$ to the labeled tree obtained by the standard Pr\"ufer
\emph{decoding} is a bijection. Consequently, if $P$ is sampled uniformly from
$[s]^{\,s-2}$, then $\Phi(P)$ is a uniformly random labeled tree on $[s]$.
Moreover, $\Phi$ can be evaluated in $O(s)$ time and $O(s)$ space.
\end{lemma}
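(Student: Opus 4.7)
The plan is to establish the bijection by constructing an explicit inverse (the Pr\"ufer \emph{encoding}) and verifying that the two maps compose to the identity. Define $\Psi:\mathcal{T}([s])\to[s]^{s-2}$ by the following classical procedure: given $T$, set $T_0=T$ and for $i=1,\dots,s-2$, let $v_i$ be the smallest leaf of $T_{i-1}$, record $p_i$ as the unique neighbor of $v_i$ in $T_{i-1}$, and set $T_i=T_{i-1}\setminus\{v_i\}$. This is well defined because every tree on at least two vertices has at least two leaves, so $v_i$ and $p_i$ exist, and $T_i$ is again a tree (on $s-i$ vertices). Output $\Psi(T)=(p_1,\dots,p_{s-2})$.

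Next I would verify $\Phi$ is well defined. A direct induction shows that the total degree budget $\sum_v d(v)=2(s-1)$ at the start decreases by exactly $2$ at each step, so after $s-2$ iterations exactly two vertices have $d=1$ and all others have $d=0$, justifying the closing edge. At the same time, an invariant worth highlighting is that at step $i$ the vertex $v$ is a current leaf in the partial construction if and only if $d(v)=1$, and the labels whose residual $d$-value is $0$ are precisely those already ``removed'', mirroring $\Psi$ exactly. Using this invariant I would prove $\Psi\circ\Phi=\mathrm{id}_{[s]^{s-2}}$ and $\Phi\circ\Psi=\mathrm{id}_{\mathcal{T}([s])}$ by induction on $s$: the base case $s=2$ is the empty sequence and the single edge $\{1,2\}$; for the inductive step, observe that the first symbol $p_1$ produced by $\Psi(T)$ is the neighbor of the smallest leaf of $T$, and removing this leaf reduces to $\Psi$ on $[s]\setminus\{v_1\}$ (relabeled); matching this with the first iteration of $\Phi$ (which also attaches the smallest current leaf to $p_1$ and then runs on the remaining labels) closes the induction. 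Bijectivity then implies $|\mathcal{T}([s])|=s^{s-2}$, recovering Cayley's formula as a byproduct, and the uniform sampling claim follows immediately since the pushforward of the uniform measure on $[s]^{s-2}$ under a bijection is the uniform measure on $\mathcal{T}([s])$.

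For the complexity claim, I would describe an implementation that attains $O(s)$ time and space. Precompute the multiset of occurrences by a single pass over $P$ to obtain the array $d(\cdot)$ in $O(s)$ time. Maintain a pointer $\mathrm{ptr}$ initialized to the smallest index with $d(v)=1$. At each iteration, let $v=\mathrm{ptr}$, output the edge $\{v,p_i\}$, decrement $d(v)$ and $d(p_i)$, and update the pointer: if $d(p_i)$ just became $1$ and $p_i<\mathrm{ptr}$, take $v\gets p_i$ for the next round; otherwise advance $\mathrm{ptr}$ to the next index $\ge\mathrm{ptr}+1$ with $d=1$. A standard amortized argument shows that $\mathrm{ptr}$ only advances monotonically (with at most one backward ``shortcut'' per iteration), so the total cost over the $s-2$ rounds is $O(s)$; the memory footprint is the arrays $P$, $d$, and the output edge list, all of size $O(s)$.

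The main obstacle is really the verification of the inductive step, i.e., that the smallest leaf produced by $\Psi$ on $T$ matches the first leaf attached by $\Phi$ on $\Psi(T)$. The subtlety is that after one decoding step the residual sequence is indexed on $[s]\setminus\{v_1\}$ rather than $[s-1]$, so one must check that ``smallest'' is preserved under this restriction; this is immediate because relative order on $[s]\setminus\{v_1\}$ is inherited from $[s]$, but it is the one place where the argument would need care. Everything else reduces to bookkeeping on vertex degrees.
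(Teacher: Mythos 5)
Your proposal is correct and follows essentially the same route as the paper: both establish the bijection by exhibiting the Pr\"ufer encoding as an explicit inverse of the decoding map, deduce uniformity by pushing forward the uniform measure on $[s]^{s-2}$, and obtain the $O(s)$ complexity via the same moving-pointer implementation with a degree array. The only difference is one of detail — you spell out the induction verifying that encoding and decoding are mutually inverse, whereas the paper asserts this directly — but the underlying argument is identical.
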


\begin{proof}
 Given a tree $T$ on $[s]$, the {encoding} iteratively removes the smallest leaf and records its unique neighbor; by repeating this $s-2$ times, exactly two vertices remain which correspond to the final edge in decoding. No symbol is recorded for them.
 This produces a
sequence in $[s]^{s-2}$. This procedure and the decoding map $\Phi$ described in the last paragraph are inverse to each other: encoding followed by
decoding (and vice versa) returns the original object. Hence $\Phi$ is a
bijection and $|\mathcal{T}([s])|=|[s]^{s-2}|=s^{s-2}$.
 If $P$ is uniform on $[s]^{s-2}$ and $\Phi$ is a bijection,
then $\Phi(P)$ is uniform on $\mathcal{T}([s])$ by pushforward of the uniform
measure. In decoding, each of the $s$ vertices has its degree
initialized once, each of the $s-2$ symbols is processed once, and each vertex
enters/leaves the leaf-set at most once. Maintaining $\ell$ via a moving pointer yields total $O(s)$ time and $O(s)$ space.
\end{proof}

\section*{Acknowledgments} We thank Zhen Huang and Yuran Zhu for helpful discussions.
We thank Chenxuan Li for proof reading part of the manuscript.
J.C., G.K.C., 
H.C. and L.Y. are supported by the U.S. Department of Energy, Office of Science, Accelerated Research in Quantum Computing Centers, Quantum Utility through Advanced Computational Quantum Algorithms, grant no. DE-SC0025572.
S.O.S. is supported by the U.S. Department of Energy, Office of Science, National Quantum Information Science Research Centers, Quantum Science Center. C.R. is supported by the DFG under the grant TRR 352–Project-ID 470903074. J.J is supported by DOE QSA grant $\#$FP00010905. Y.Z. is supported by the National
Science Foundation, grant no. PHY-2317110.

\bibliographystyle{ieeetr}
\bibliography{ref}

\begin{thebibliography}{10}

\bibitem{Martin2020electronic}
R.~M. Martin, {\em Electronic structure: basic theory and practical methods}.
\newblock Cambridge university press, 2020.

\bibitem{Dagotto1994correlated}
E.~Dagotto, ``Correlated electrons in high-temperature superconductors,'' {\em Reviews of Modern Physics}, vol.~66, no.~3, p.~763, 1994.

\bibitem{SachdevYe1993gapless}
S.~Sachdev and J.~Ye, ``Gapless spin-fluid ground state in a random quantum heisenberg magnet,'' {\em Physical review letters}, vol.~70, no.~21, p.~3339, 1993.

\bibitem{KitaevTalkKITP}
A.~Kitaev, ``A simple model of quantum holography.''
\newblock talks at KITP, April and May 2015.

\bibitem{o2022intractability}
B.~O’Gorman, S.~Irani, J.~Whitfield, and B.~Fefferman, ``Intractability of electronic structure in a fixed basis,'' {\em PRX Quantum}, vol.~3, no.~2, p.~020322, 2022.

\bibitem{kempe2006complexity}
J.~Kempe, A.~Kitaev, and O.~Regev, ``The complexity of the local hamiltonian problem,'' {\em Siam journal on computing}, vol.~35, no.~5, pp.~1070--1097, 2006.

\bibitem{hybertsen1986electron}
M.~S. Hybertsen and S.~G. Louie, ``Electron correlation in semiconductors and insulators: {Band} gaps and quasiparticle energies,'' {\em Physical Review B}, vol.~34, pp.~5390--5413, Oct. 1986.
\newblock Publisher: American Physical Society.

\bibitem{aryasetiawan1998gw}
F.~Aryasetiawan and O.~Gunnarsson, ``The {GW} method,'' {\em Reports on Progress in Physics}, vol.~61, p.~237, Mar. 1998.

\bibitem{onida2002electronic}
G.~Onida, L.~Reining, and A.~Rubio, ``Electronic excitations: density-functional versus many-body {Green}'s-function approaches,'' {\em Reviews of Modern Physics}, vol.~74, pp.~601--659, June 2002.
\newblock Publisher: American Physical Society.

\bibitem{BenfattoGiulianiMastropietro2003low}
G.~Benfatto, A.~Giuliani, and V.~Mastropietro, ``Low temperature analysis of two-dimensional fermi systems with symmetric fermi surface,'' in {\em Annales Henri Poincar{\'e}}, vol.~4, pp.~137--193, Springer, 2003.

\bibitem{benfatto2006fermi}
G.~Benfatto, A.~Giuliani, and V.~Mastropietro, ``Fermi liquid behavior in the 2d {Hubbard} model at low temperatures,'' in {\em Annales Henri Poincar{\'e}}, vol.~7, pp.~809--898, Springer, 2006.

\bibitem{giuliani2009rigorous}
A.~Giuliani and V.~Mastropietro, ``Rigorous construction of ground state correlations in graphene: Renormalization of the velocities and ward identities,'' {\em Physical Review B—Condensed Matter and Materials Physics}, vol.~79, no.~20, p.~201403, 2009.

\bibitem{LiWallerbergerGull2020diagrammatic}
J.~Li, M.~Wallerberger, and E.~Gull, ``Diagrammatic {Monte Carlo} method for impurity models with general interactions and hybridizations,'' {\em Physical Review Research}, vol.~2, no.~3, p.~033211, 2020.

\bibitem{castro2018equivalence}
E.~Castro, ``Equivalence between the {Arquès}-{Walsh} sequence formula and the number of connected {Feynman} diagrams for every perturbation order in the fermionic many-body problem,'' {\em Journal of Mathematical Physics}, vol.~59, p.~023503, Feb. 2018.

\bibitem{Prokofev1998DiagMC}
N.~V. Prokof'ev and B.~V. Svistunov, ``Polaron problem by diagrammatic quantum {Monte Carlo},'' {\em Physical Review Letters}, vol.~81, no.~12, pp.~2514--2517, 1998.

\bibitem{Prokofev2007BoldDiagMC}
N.~V. Prokof'ev and B.~V. Svistunov, ``Bold diagrammatic {Monte Carlo}: A generic sign-problem tolerant technique for polaron models and interacting many-body systems,'' {\em Physical Review Letters}, vol.~99, no.~25, p.~250201, 2007.

\bibitem{VanHoucke2010DiagMC}
K.~Van~Houcke, E.~Kozik, N.~Prokof'ev, and B.~Svistunov, ``Diagrammatic {Monte Carlo},'' {\em Physics Procedia}, vol.~6, pp.~95--105, 2010.

\bibitem{Kozik2010DiagMCFermions}
E.~Kozik, K.~Van~Houcke, E.~Gull, L.~Pollet, N.~Prokof'ev, B.~Svistunov, and M.~Troyer, ``Diagrammatic {Monte Carlo} for correlated fermions,'' {\em EPL (Europhysics Letters)}, vol.~90, no.~1, p.~10004, 2010.

\bibitem{Rossi2017determinant}
R.~Rossi, ``Determinant diagrammatic {Monte Carlo} algorithm in the thermodynamic limit,'' {\em Physical review letters}, vol.~119, no.~4, p.~045701, 2017.

\bibitem{RossiProkofevEtAl2017polynomial}
R.~Rossi, N.~Prokof'ev, B.~Svistunov, K.~Van~Houcke, and F.~Werner, ``Polynomial complexity despite the {F}ermionic sign,'' {\em Europhysics Letters}, vol.~118, no.~1, p.~10004, 2017.

\bibitem{HastingsWen2005quasiadiabatic}
M.~B. Hastings and X.-G. Wen, ``Quasiadiabatic continuation of quantum states: The stability of topological ground-state degeneracy and emergent gauge invariance,'' {\em Physical Review B—Condensed Matter and Materials Physics}, vol.~72, no.~4, p.~045141, 2005.

\bibitem{Hastings2007quasi}
M.~Hastings, ``Quasi-adiabatic continuation in gapped spin and fermion systems: Goldstone’s theorem andflux periodicity,'' {\em Journal of Statistical Mechanics: Theory and Experiment}, vol.~2007, no.~05, p.~P05010, 2007.

\bibitem{Hastings2010quasi}
M.~B. Hastings, ``Quasi-adiabatic continuation for disordered systems: Applications to correlations, lieb-schultz-mattis, and hall conductance,'' {\em arXiv preprint arXiv:1001.5280}, 2010.

\bibitem{BachmannMichalakisEtAl2012automorphic}
S.~Bachmann, S.~Michalakis, B.~Nachtergaele, and R.~Sims, ``Automorphic equivalence within gapped phases of quantum lattice systems,'' {\em Communications in Mathematical Physics}, vol.~309, no.~3, pp.~835--871, 2012.

\bibitem{BravyiHastingsMichalakis2010topological}
S.~Bravyi, M.~B. Hastings, and S.~Michalakis, ``Topological quantum order: stability under local perturbations,'' {\em Journal of mathematical physics}, vol.~51, no.~9, 2010.

\bibitem{FirankoGoldsteinArad2024area}
R.~Firanko, M.~Goldstein, and I.~Arad, ``Area law for steady states of detailed-balance local {Lindbladians},'' {\em Journal of Mathematical Physics}, vol.~65, no.~5, 2024.

\bibitem{chen2023efficient}
C.-F. Chen, M.~J. Kastoryano, and A.~Gily{\'e}n, ``An efficient and exact noncommutative quantum {Gibbs} sampler,'' {\em arXiv preprint arXiv:2311.09207}, 2023.

\bibitem{kirkwood1935statistical}
J.~G. Kirkwood, ``Statistical mechanics of fluid mixtures,'' {\em The Journal of chemical physics}, vol.~3, no.~5, pp.~300--313, 1935.

\bibitem{brandao2019finite}
F.~G. Brandao and M.~J. Kastoryano, ``Finite correlation length implies efficient preparation of quantum thermal states,'' {\em Communications in Mathematical Physics}, vol.~365, no.~1, pp.~1--16, 2019.

\bibitem{rouze2024efficient}
C.~Rouz{\'e}, D.~Stilck~Fran{\c{c}}a, E.~Onorati, and J.~D. Watson, ``Efficient learning of ground and thermal states within phases of matter,'' {\em Nature Communications}, vol.~15, no.~1, p.~7755, 2024.

\bibitem{capel2025decay}
{\'A}.~Capel, M.~Moscolari, S.~Teufel, and T.~Wessel, ``From decay of correlations to locality and stability of the gibbs state,'' {\em Communications in Mathematical Physics}, vol.~406, no.~2, p.~43, 2025.

\bibitem{kimura2025clustering}
Y.~Kimura and T.~Kuwahara, ``Clustering theorem in 1d long-range interacting systems at arbitrary temperatures,'' {\em Communications in Mathematical Physics}, vol.~406, no.~3, pp.~1--43, 2025.

\bibitem{kliesch2014locality}
M.~Kliesch, C.~Gogolin, M.~J. Kastoryano, A.~Riera, and J.~Eisert, ``Locality of temperature,'' {\em Physical review x}, vol.~4, no.~3, p.~031019, 2014.

\bibitem{GentileMastropietro2001renormalization}
G.~Gentile and V.~Mastropietro, ``Renormalization group for one-dimensional fermions. a review on mathematical results,'' {\em Physics Reports}, vol.~352, no.~4-6, pp.~273--437, 2001.

\bibitem{PedraSalmhofer2008determinant}
W.~A. d.~S. Pedra and M.~Salmhofer, ``Determinant bounds and the matsubara {UV} problem of many-fermion systems,'' {\em Communications in mathematical physics}, vol.~282, no.~3, pp.~797--818, 2008.

\bibitem{YinLucas2023polynomial}
C.~Yin and A.~Lucas, ``Polynomial-time classical sampling of high-temperature quantum gibbs states,'' {\em arXiv preprint arXiv:2305.18514}, 2023.

\bibitem{BakshiLiuMoitraTang2024high}
A.~Bakshi, A.~Liu, A.~Moitra, and E.~Tang, ``High-temperature {G}ibbs states are unentangled and efficiently preparable,'' {\em arXiv preprint arXiv:2403.16850}, 2024.

\bibitem{RamkumarCaiTongJiang2025high}
A.~Ramkumar, Y.~Cai, Y.~Tong, and J.~Jiang, ``High-temperature fermionic gibbs states are mixtures of gaussian states,'' {\em arXiv preprint arXiv:2505.09730}, 2025.

\bibitem{Mann2021efficient}
R.~L. Mann and T.~Helmuth, ``Efficient algorithms for approximating quantum partition functions,'' {\em Journal of Mathematical Physics}, vol.~62, Feb. 2021.

\bibitem{helmuth2023efficient}
T.~Helmuth and R.~L. Mann, ``Efficient algorithms for approximating quantum partition functions at low temperature,'' {\em Quantum}, vol.~7, p.~1155, 2023.

\bibitem{Lesniewski1987effective}
A.~Lesniewski, ``{Effective action for the Yukawa2 quantum field theory},'' {\em Communications in Mathematical Physics}, vol.~108, no.~3, pp.~437--467, 1987.

\bibitem{abdesselam1997explicitfermionictreeexpansions}
A.~Abdesselam and V.~Rivasseau, ``Explicit fermionic tree expansions,'' 1997.

\bibitem{Salmhofer2000}
M.~Salmhofer and C.~Wieczerkowski, ``Positivity and convergence in fermionic quantum field theory,'' {\em Journal of Statistical Physics}, vol.~99, p.~557–586, Apr. 2000.

\bibitem{BenfattoGallavottiEtAl1994beta}
G.~Benfatto, G.~Gallavotti, A.~Procacci, and B.~Scoppola, ``Beta function and schwinger functions for a many fermions system in one dimension. anomaly of the fermi surface,'' {\em Communications in mathematical physics}, vol.~160, no.~1, pp.~93--171, 1994.

\bibitem{Metzner2012functional}
W.~Metzner, M.~Salmhofer, C.~Honerkamp, V.~Meden, and K.~Sch{\"o}nhammer, ``Functional renormalization group approach to correlated fermion systems,'' {\em Reviews of Modern Physics}, vol.~84, no.~1, pp.~299--352, 2012.

\bibitem{Giuliani2021gentle}
A.~Giuliani, V.~Mastropietro, and S.~Rychkov, ``Gentle introduction to rigorous renormalization group: a worked fermionic example,'' {\em Journal of High Energy Physics}, vol.~2021, no.~1, pp.~1--112, 2021.

\bibitem{Husemann2012frequency}
C.~Husemann, K.-U. Giering, and M.~Salmhofer, ``{Frequency-dependent vertex functions of the $(t, t')$ Hubbard model at weak coupling},'' {\em Physical Review B—Condensed Matter and Materials Physics}, vol.~85, no.~7, p.~075121, 2012.

\bibitem{Giuliani_2009}
A.~Giuliani and V.~Mastropietro, ``The two-dimensional hubbard model on the honeycomb lattice,'' {\em Communications in Mathematical Physics}, vol.~293, p.~301–346, Sept. 2009.

\bibitem{MastropietroPorta2022multi}
V.~Mastropietro and M.~Porta, ``Multi-channel luttinger liquids at the edge of quantum hall systems,'' {\em Communications in Mathematical Physics}, vol.~395, no.~3, pp.~1097--1173, 2022.

\bibitem{Salmhofer2019renormalization}
M.~Salmhofer, ``Renormalization in condensed matter: Fermionic systems--from mathematics to materials,'' {\em Nuclear Physics B}, vol.~941, pp.~868--899, 2019.

\bibitem{PortaMastropietroGiuliani2023response}
M.~Porta, V.~Mastropietro, and A.~Giuliani, ``Response functions of many-body condensed matter systems,'' {\em arXiv preprint arXiv:2308.11282}, 2023.

\bibitem{Mastropietro2014weyl}
V.~Mastropietro, ``Weyl semimetallic phase in an interacting lattice system,'' {\em Journal of Statistical Physics}, vol.~157, no.~4, pp.~830--854, 2014.

\bibitem{Giuliani2011ground}
A.~Giuliani, ``{The ground state construction of the two-dimensional Hubbard model on the honeycomb lattice},'' {\em arXiv preprint arXiv:1102.3881}, 2011.

\bibitem{GiulianiMastropietroPora2010anomalous}
A.~Giuliani, V.~Mastropietro, and M.~Porta, ``Anomalous behavior in an effective model of graphene with coulomb interactions,'' {\em Annales Henri Poincar{\'e}}, vol.~11, no.~8, pp.~1409--1452, 2010.

\bibitem{Mastropietro2008luttinger}
V.~Mastropietro, ``Luttinger liquid fixed point for a two-dimensional flat fermi surface,'' {\em Physical Review B—Condensed Matter and Materials Physics}, vol.~77, no.~19, p.~195106, 2008.

\bibitem{Benfatto2009ultraviolet}
G.~Benfatto, ``On the ultraviolet problem for the 2d weakly interacting fermi gas,'' in {\em Annales Henri Poincar{\'e}}, vol.~10, pp.~1--17, Springer, 2009.

\bibitem{brydges1978new}
D.~Brydges and P.~Federbush, ``A new form of the mayer expansion in classical statistical mechanics,'' {\em Journal of Mathematical Physics}, vol.~19, no.~10, pp.~2064--2067, 1978.

\bibitem{Brydges1984short}
D.~C. Brydges, ``A short course on cluster expansions,'' {\em Les Houches}, 1984.

\bibitem{Battle1984ANO}
G.~Battle, G.~Battle, P.~G. Federbush, and P.~G. Federbush, ``A note on cluster expansions, tree graph identities, extra 1/n! factors!!!,'' {\em Letters in Mathematical Physics}, vol.~8, pp.~55--57, 1984.

\bibitem{CancesFaulstichEtAl2025analysis}
E.~Canc{\`e}s, F.~M. Faulstich, A.~Kirsch, E.~Letournel, and A.~Levitt, ``Analysis of density matrix embedding theory around the non-interacting limit,'' {\em Communications on Pure and Applied Mathematics}, vol.~78, no.~8, pp.~1359--1410, 2025.

\bibitem{BravyiDiVincenzoLoss2008polynomial}
S.~Bravyi, D.~DiVincenzo, and D.~Loss, ``Polynomial-time algorithm for simulation of weakly interacting quantum spin systems,'' {\em Communications in mathematical physics}, vol.~284, no.~2, pp.~481--507, 2008.

\bibitem{ChenKastoryanoBrandaoGilyen2023quantum}
C.-F. Chen, M.~J. Kastoryano, F.~G. Brand{\~a}o, and A.~Gily{\'e}n, ``Quantum thermal state preparation,'' {\em arXiv preprint arXiv:2303.18224}, 2023.

\bibitem{RouzeFrancaAlhambra2024optimal}
C.~Rouz{\'e}, D.~S. Fran{\c{c}}a, and {\'A}.~M. Alhambra, ``Optimal quantum algorithm for {G}ibbs state preparation,'' {\em arXiv preprint arXiv:2411.04885}, 2024.

\bibitem{TongZhan2025fast}
Y.~Tong and Y.~Zhan, ``Fast mixing of weakly interacting fermionic systems at any temperature,'' {\em PRX Quantum}, vol.~6, no.~3, p.~030301, 2025.

\bibitem{DingZhanPreskillLin2025end}
Z.~Ding, Y.~Zhan, J.~Preskill, and L.~Lin, ``End-to-end efficient quantum thermal and ground state preparation made simple,'' {\em arXiv preprint arXiv:2508.05703}, 2025.

\bibitem{SmidMeisterBertaBondesan2025polynomial}
{\v{S}}.~{\v{S}}m{\'\i}d, R.~Meister, M.~Berta, and R.~Bondesan, ``{Polynomial Time Quantum Gibbs Sampling for Fermi-Hubbard Model at any Temperature},'' {\em arXiv preprint arXiv:2501.01412}, 2025.

\bibitem{Smíd2025rapidmixingquantumgibbs}
{\v{S}}.~{\v{S}}m{\'\i}d, R.~Meister, M.~Berta, and R.~Bondesan, ``Rapid mixing of quantum gibbs samplers for weakly-interacting quantum systems,'' {\em arXiv preprint arXiv:2510.04954}, 2025.

\bibitem{RivasseauWang2023honeycomb}
V.~Rivasseau and Z.~Wang, ``Honeycomb hubbard model at van hove filling,'' {\em Communications in Mathematical Physics}, vol.~401, no.~3, pp.~2569--2642, 2023.

\bibitem{Kashima2014renormalization}
Y.~Kashima, ``Renormalization group analysis of multi-band many-electron systems at half-filling,'' {\em arXiv preprint arXiv:1405.0655}, 2014.

\bibitem{Kashima2015zero}
Y.~Kashima, ``The zero-temperature limit of the free energy density in many-electron systems at half-filling,'' {\em arXiv preprint arXiv:1508.07543}, 2015.

\bibitem{Wang2019nonfermi}
Z.~Wang, ``Non-fermi liquid behaviors in the hubbard model on the honeycomb lattice,'' {\em arXiv preprint arXiv:1907.12553}, 2019.

\bibitem{Penrose1967}
O.~Penrose, ``Convergence of fugacity expansions for classical systems,'' {\em Journal of Mathematical Physics}, vol.~8, no.~1, pp.~98--100, 1967.

\bibitem{Ruelle1963}
D.~Ruelle, ``Correlation functions of classical gases,'' {\em Annals of Physics}, vol.~25, no.~1, pp.~109--120, 1963.

\bibitem{Ruelle1969}
D.~Ruelle, {\em Statistical Mechanics: Rigorous Results}.
\newblock New York: W. A. Benjamin, 1969.

\bibitem{Dobrushin1968}
R.~L. Dobrushin, ``The description of a random field by means of conditional probabilities and conditions of its regularity,'' {\em Theory of Probability and Its Applications}, vol.~13, no.~2, pp.~197--224, 1968.

\bibitem{KoteckyPreiss1986}
R.~Koteck{\'y} and D.~Preiss, ``Cluster expansion for abstract polymer models,'' {\em Communications in Mathematical Physics}, vol.~103, no.~3, pp.~491--498, 1986.

\bibitem{FernandezProcacci2007}
R.~Fern{\'a}ndez and A.~Procacci, ``Cluster expansion for abstract polymer models. new bounds from an old approach,'' {\em Communications in Mathematical Physics}, vol.~274, no.~1, pp.~123--140, 2007.

\bibitem{FernandezProcacciScoppola2007}
R.~Fern{\'a}ndez, A.~Procacci, and B.~Scoppola, ``The analyticity region of the hard-sphere gas: Improved bounds,'' {\em Journal of Statistical Physics}, vol.~128, no.~5, pp.~1139--1143, 2007.

\bibitem{Procacci1998}
A.~Procacci, ``A remark on high temperature polymer expansion for lattice systems with infinite range pair interactions,'' {\em Letters in Mathematical Physics}, 1998.

\bibitem{BricmontKupiainenLeplaideur1996}
J.~Bricmont, A.~Kupiainen, and R.~Leplaideur, ``High-temperature expansions and dynamical systems,'' {\em Communications in Mathematical Physics}, vol.~178, no.~3, pp.~703--732, 1996.

\bibitem{HaahKothariTang2022optimal}
J.~Haah, R.~Kothari, and E.~Tang, ``Optimal learning of quantum hamiltonians from high-temperature gibbs states,'' in {\em 2022 IEEE 63rd Annual Symposium on Foundations of Computer Science (FOCS)}, pp.~135--146, IEEE, 2022.

\bibitem{SanchezSegoviaSchneiderAlhambra2025}
J.~Sánchez‐Segovia, J.~T. Schneider, and Álvaro M.~Alhambra, ``High-temperature partition functions and classical simulability of long-range quantum systems,'' {\em arXiv preprint arXiv:2504.20901}, 2025.

\bibitem{LeeYang1952I}
T.~D. Lee and C.~N. Yang, ``Statistical theory of equations of state and phase transitions. i. theory of condensation,'' {\em Physical Review}, vol.~87, pp.~404--409, 1952.

\bibitem{LeeYang1952II}
T.~D. Lee and C.~N. Yang, ``Statistical theory of equations of state and phase transitions. ii. lattice gas and ising model,'' {\em Physical Review}, vol.~87, pp.~410--419, 1952.

\bibitem{HarrowMehrabanSoleimanifar2020classical}
A.~W. Harrow, S.~Mehraban, and M.~Soleimanifar, ``Classical algorithms, correlation decay, and complex zeros of partition functions of quantum many-body systems,'' in {\em Proceedings of the 52nd Annual ACM SIGACT Symposium on Theory of Computing}, pp.~378--386, 2020.

\bibitem{Hastings2004decay}
M.~B. Hastings, ``Decay of correlations in fermi systems at nonzero temperature,'' {\em Physical review letters}, vol.~93, no.~12, p.~126402, 2004.

\bibitem{lin2016decayestimatesdiscretizedgreens}
L.~Lin and J.~Lu, ``Decay estimates of discretized green's functions for schr\"odinger type operators,'' 2016.

\bibitem{prufer1918neuer}
H.~Prufer, ``Neuer beweis eines satzes uber per mutationen,'' {\em Archiv derMathematik und Physik}, vol.~27, pp.~742--744, 1918.

\bibitem{CChen2000}
H.~C.~Chen and Y.~L.~Wang, ``An efficient algorithm for generating pr\"{u}fer codes from labelled trees,'' {\em Theory of Computing Systems}, vol.~33, p.~97–105, Jan. 2000.

\bibitem{Charret:1996zy}
I.~C. Charret, S.~M. de~Souza, and M.~T. Thomaz, ``{A Note on moments of Gaussian Grassmann multivariable integrals},'' {\em Braz. J. Phys.}, vol.~26, pp.~720--724, 1996.

\bibitem{bravyi2002fermionic}
S.~B. Bravyi and A.~Y. Kitaev, ``Fermionic quantum computation,'' {\em Annals of Physics}, vol.~298, no.~1, pp.~210--226, 2002.

\bibitem{bravyi2004lagrangian}
S.~Bravyi, ``Lagrangian representation for fermionic linear optics,'' {\em arXiv preprint quant-ph/0404180}, 2004.

\bibitem{sims2016decay}
R.~Sims and S.~Warzel, ``{Decay of determinantal and Pfaffian correlation functionals in one-dimensional lattices},'' {\em Communications in Mathematical Physics}, vol.~347, no.~3, pp.~903--931, 2016.

\bibitem{araki1971quasifree}
H.~Araki, ``{On quasifree states of CAR and Bogoliubov automorphisms},'' {\em Publications of the Research Institute for Mathematical Sciences}, vol.~6, no.~3, pp.~385--442, 1971.

\bibitem{bach1994generalized}
V.~Bach, E.~H. Lieb, and J.~P. Solovej, ``{Generalized Hartree-Fock theory and the Hubbard model},'' {\em Journal of statistical physics}, vol.~76, no.~1, pp.~3--89, 1994.

\bibitem{doi:10.1137/1.9781611975949}
L.~N. Trefethen, {\em Approximation Theory and Approximation Practice, Extended Edition}.
\newblock Philadelphia, PA: Society for Industrial and Applied Mathematics, 2019.

\end{thebibliography}

\end{document}